\newtheorem{theorem}{Theorem}[section]
\newtheorem{lemma}[theorem]{Lemma}
\newtheorem{proposition}[theorem]{Proposition}
\newtheorem{claim}{Claim}[section]
\theoremstyle{remark}
\newtheorem{remark}{Remark}[section]
\numberwithin{equation}{section}
\newcommand{\R}{\mathbb{R}}
\newcommand{\C}{\mathbb{C}}
\newcommand{\N}{\mathbb{N}}
\newcommand{\Z}{\mathbb{Z}}
\newcommand{\T}{\mathbb{T}}
\newcommand{\la}{\langle}
\newcommand{\ra}{\rangle}
\newcommand{\pd}{\partial}
\newcommand{\mgamma}{\bm{\gamma}}
\newcommand{\mk}{\bm{k}}
\newcommand{\eps}{\varepsilon}
\newcommand{\bM}{\mathbb{M}}
\newcommand{\wR}{\widetilde{R}}
\newcommand{\pai}{\psi_{a,i}}
\newcommand{\tpa}{\tilde{\psi}_a}
\newcommand{\tpai}{\tilde{\psi}_{a,i}}
\DeclareMathOperator{\sech}{sech}
\DeclareMathOperator{\supp}{supp}
\DeclareMathOperator{\diag}{diag}
\DeclareMathOperator{\spann}{span}
\begin{document}
\title{Asymptotic stability of N-solitons of the FPU lattices}
\author{Tetsu Mizumachi}
\thanks{Faculty of Mathematics, Kyushu University,
6-10-1 Hakozaki, Fukuoka 812-8581 Japan,
mizumati@math.kyushu-u.ac.jp}
\begin{abstract}
We study stability of $N$-soliton solutions of the FPU lattice equation.
Solitary wave solutions of FPU cannot be characterized as a critical point of
conservation laws due to the lack of infinitesimal invariance in
the spatial variable. In place of standard variational arguments for
Hamiltonian systems, we use an exponential stability property of
the linearized FPU equation in a weighted space which is biased
in the direction of motion.
\par
The dispersion of the linearized FPU equation balances the potential term
for low frequencies, whereas the dispersion is superior for high frequencies.
We approximate the low frequency part of a solution of the linearized FPU
equation by a solution to the linearized KdV equation around an $N$-soliton.

We prove an exponential stability property of the linearized KdV equation
around $N$-solitons by using the linearized B\"acklund transformation and
use the result to analyze the linearized FPU equation.
\end{abstract}
\maketitle
\section{Introduction}
\label{sec:intro}
In this paper, we study stability of multi-pulse solutions of lattice equations
which describe motion of infinite particles
connected by nonlinear springs:
\begin{equation}
\label{eq:Toda}
\ddot{q}(t,n)=V'(q(t,n)-q(t,n-1))-V'(q(t,n+1)-q(t,n))
\quad\text{for $(t,n)\in\R\times\Z$,}
\end{equation}
where $q(t,n)$ denotes the displacement of the $n$-th particle at time $t$,
$V(r)$ denotes a kinetic potential and $\;\dot{}\;$ denotes differentiation
with respect to $t$. Making use of the change of variables
$p(t,n)=\dot{q}(t,n)$, $r(t,n)=q(t,n+1)-q(t,n)$ and
$u(t,n)={}^t\!(r(t,n),p(t,n))$,
we can translate \eqref{eq:Toda} into a Hamiltonian system
\begin{equation}
  \label{eq:FPU}
  \frac{du}{dt}=JH'(u),
\end{equation}
where $J=\begin{pmatrix} 0 & e^\pd-1 \\ 1-e^{-\pd} & 0
\end{pmatrix}$, $e^{\pm\pd}$ are the shift operators defined
by $(e^{\pm\pd})f(n)=f(n\pm1)$ and
\begin{gather*}
H(u(t))=\sum_{n\in\Z}\left(\frac12p(t,n)^2+V(r(t,n))\right)
\quad\text{(Hamiltonian).}
\end{gather*}
Typical examples of \eqref{eq:Toda} are the $\alpha$-FPU equation
($V(r)=\frac12r^2+\frac{1}{6}r^3$) and the Toda lattice equation
($V(r)=e^{-r}-1+r$).
\par
Originally, Fermi-Pasta and Ulam \cite{FPU} studied the FPU lattice
numerically to observe the equipartition of the energy among all Fourier modes
and found an almost recurrence phenomena contrary to their expectation.
Zabusky and Kruskal \cite{ZK} numerically found multi-solitons of KdV
that was known to describe the long wave solutions of FPU and interpreted
their result as an explanation of the FPU recurrent phenomena.
For recent development of metastability results on solitary waves of the
finite FPU lattice, see \cite{Ba-P} and the references therein.

The FPU lattice equation has solitary wave solutions due to a balance of
nonlinearity and dispersion induced by discreteness.
This was indicated by \cite{EF} by numerics before being proved by
Friesecke and Wattis \cite{FW} by using a concentration compactness theorem.
See also \cite{To} for the Toda lattice equation that is integrable and
has explicit $N$-soliton solutions.
\par

Eq. \eqref{eq:FPU} has two parameter family of solitary wave
solutions
$\{u_c(n-ct-\gamma) : c\in(-\infty,-1)\cup(1,\infty),\,\gamma\in\R\}$,
where $u_c=\begin{pmatrix}  r_c\\ p_c\end{pmatrix}$ is a solution of
\begin{equation}
  \label{eq:boundst}
c\pd_xu_c+JH'(u_c)=0.  
\end{equation}
In the case where $c$ is close to $1$ or $-1$,
Friesecke and Pego \cite{FP1} prove that solitary wave solutions are unique up
to translation and their shape are similar to KdV $1$-solitons.
We remark that a solitary wave solution $u_c(\cdot-ct)$ is small
if $c$ is close to $1$ or $-1$ and $\lim_{c\to\pm1}H(u_c)=0$.
\par
Friesecke and Pego also prove in \cite{FP2,FP3,FP4} that small solitary waves
of FPU are asymptotically stable in an exponentially weighted space.
Their idea is to compare spectral property of the linearized FPU equation and
the linearized KdV equation and to make use of the phenomena that the main
solitary wave moves fastest to the right (or to the left)
and it outruns from the rest of the solution as Pego and Weinstein \cite{PW}
did for KdV.
See also Mizumachi and Pego \cite{MP} that prove stability of Toda lattice
$1$-solitons of any size. More recently, Mizumachi \cite{Mi1} has proved
stability of $1$-soliton solutions of FPU in the energy space and
Hoffman and Wayne proved stability of two solitary waves which propagate
to the opposite directions.

\par
Our goal is to prove stability of $N$-solitons in the energy space.
In this paper, we assume
\begin{equation}
  \label{eq:H1}
V\in C^\infty(\R;\R),\quad V(0)=V'(0)=0, \quad V''(0)=1, \quad
 V'''(0)=\tfrac16,
\tag{H1}
\end{equation}
and use the following properties of solitary wave solutions proved by \cite{FP1}.
\begin{itemize}
\item[(P1)]
Let $c_*>1$ be a constant sufficiently close to $1$ and let $a\in[0,2)$.
For any $c\in(1,c_*]$, there exists a unique single hump
solution of \eqref{eq:boundst} in $l^2$ up to translation in $x$.
Moreover, 
$\sqrt{6(c-1)}=:\eps \mapsto \eps^{-2}u_c(\frac{\cdot}{\eps})
\in H^5(\R;e^{2a|x|})$ is $C^2$.
\item[(P2)] There exists an open interval $I$ such that 
$V''(r)>0$ for every $r\in I$ and that
$\overline{\{r_c(x):x\in\R\}}\subset I$ for every $c\in(1,1+c_*]$.
\item[(P3)] 
The solitary wave energy $H(u_c)$ satisfies
$dH(u_c)/dc\ne 0$ for $c\in(1,c_*]$.
\item[(P4)]
As $c$ tends to $1$, a shape of solitary wave solution becomes similar to
that of a KdV $1$-soliton. More precisely,
\begin{align*}
\sum_{j=0}^2\eps^j\left\|\pd_\varepsilon^j\left(
\eps^{-2}r_c\left(\frac{\cdot}{\eps}\right)-\sech^2x\right)
\right\|_{H^5(\R;e^{2a|x|}dx)}=O(\eps^{2}).
\end{align*}
\end{itemize}
\par
Now we state our main result.
\begin{theorem}
  \label{thm:1}
Let $0<k_1<\cdots<k_N$ and $c_{i,0}=1+\frac{k_i^2\eps^2}{6}$ $(1\le i\le N)$.
There exist positive numbers $\eps_0$, $\gamma_0$, 
$A_0$ and $\delta_0$ satisfying the following: Suppose
$\eps\in(0,\eps_0)$ and that $u(t)$ is a solution to
\eqref{eq:FPU} such that $\|v_0\|_{l^2}<\delta_0\eps^2$,
\begin{gather}
\label{eq:ic}
u(\cdot,0)=\sum_{i=1}^Nu_{c_{i,0}}(\cdot-x_{i,0})+v_0,
\\
\label{eq:dist}
L:=\min_{2\le i\le N}\eps(x_{i,0}-x_{i-1,0})\ge \frac{1}{k_1}
|\log(\delta_0\eps)|.
\end{gather}
Then there exist $C^1$-functions $x_i(t)$ $(i=1,\cdots,N)$ such that
\begin{equation}
  \label{eq:orbital-stability}
\sup_{t\ge0}\left\|u(\cdot,t)-\sum_{i=1}^Nu_{c_{i,0}}(\cdot-x_i(t))
\right\|_{l^2}< A_0(\|v_0\|_{l^2}+\eps^{\frac32}e^{-\gamma_0L}).
\end{equation}
\par
Furthermore, there exist $c_{N,+}>\cdots>c_{1,+}>1$ and
$c_*\in(1,(1+c_{1,0})/2)$ such that
\begin{gather}
  \label{eq:asympstability}
\lim_{t\to\infty}\left\|u(\cdot,t)-\sum_{i=1}^Nu_{c_{i,+}}
(\cdot-x_i(t))\right\|_{l^2(n\ge c_*t)}=0,
\\
\label{eq:spphconv}
\lim_{t\to\infty}\dot{x}_i(t)=c_{i,+}\quad{and}\quad
|c_{i,+}-c_{i,0}|< A_0(\eps^{-1}\|v_0\|_{l^2}^2+\eps^2e^{-\gamma_0L})
\quad\text{for $1\le i\le N$.}
\end{gather}
\end{theorem}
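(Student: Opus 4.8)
The plan is to follow the modulation-plus-weighted-decay strategy initiated by Friesecke and Pego \cite{FP2,FP3,FP4} and Pego and Weinstein \cite{PW}, now carried out simultaneously for $N$ interacting pulses. First I would write $u(t)=\sum_{i=1}^{N}u_{c_i(t)}(\cdot-x_i(t))+v(t)$, where the $2N$ modulation parameters $c_i(t)$ and $x_i(t)$ are to be chosen $C^1$. Substituting this ansatz into \eqref{eq:FPU} and using the profile equation \eqref{eq:boundst} produces a coupled system: $2N$ modulation ODEs for $(\dot c_i,\dot x_i)$ together with an evolution equation for the remainder of the schematic form $\pd_t v=\mathcal L(t)v+\mathcal N(v)+\mathcal E$, where $\mathcal L(t)$ is \eqref{eq:FPU} linearized about the instantaneous $N$-soliton profile, $\mathcal N(v)$ is quadratic, and $\mathcal E$ collects the soliton-interaction errors. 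To fix the decomposition I would impose $2N$ secular conditions removing the component of $v$ along the generalized kernel of $\mathcal L(t)$ (spanned near each pulse by $\pd_x u_{c_i}$ and $\pd_c u_{c_i}$); solvability of these conditions for the parameters follows from the implicit function theorem, using the smoothness of $c\mapsto u_c$ in the weighted Sobolev space (P1) and the nondegeneracy $dH(u_c)/dc\ne0$ of (P3), provided $v$ stays small and the pulses stay separated as in \eqref{eq:dist}.

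The analytic core is an exponential decay estimate for the propagator generated by $\mathcal L(t)$, restricted to the subspace cut out by the secular conditions, in the exponentially weighted norm $\|e^{a(n-c_*t)}v\|_{l^2}$ whose weight increases in the direction of motion, so that the faster pulses outrun the residual. Because no Lyapunov functional is available on the lattice, this linear estimate must replace the usual coercivity argument. To prove it I would decompose $v$ into low and high frequencies. On high frequencies the discreteness-induced dispersion of $\mathcal L(t)$ dominates the potential term and gives decay directly. On low frequencies $\mathcal L(t)$ is a long-wave perturbation of the KdV operator linearized about the corresponding $N$-soliton, the approximation being made quantitative by the KdV profile asymptotics (P4) and the positivity (P2) of $V''$; it therefore suffices to establish exponential stability of the \emph{linearized KdV} flow about an $N$-soliton in the weighted space. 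For the latter I would conjugate the linearized KdV operator through the linearized B\"acklund transformation, which intertwines the $N$-soliton linearization with the $0$-soliton, constant-coefficient linearization whose weighted decay is elementary; since the B\"acklund map is bounded with bounded inverse on the weighted space, pulling the estimate back yields the spectral gap for the $N$-soliton.

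With the linear estimate in hand I would close a continuity/bootstrap argument. Duhamel's formula for $v$ together with the semigroup decay controls $\|e^{a(n-c_*t)}v(t)\|_{l^2}$ by the data plus $\mathcal N$ and $\mathcal E$; the interaction error $\mathcal E$ is exponentially small in the separation $L$ by \eqref{eq:dist}, and $\mathcal N$ is quadratic and localized near the pulses, hence controlled by the same weighted norm. The modulation ODEs then show that $\dot c_i$ and $\dot x_i-c_i$ are quadratic in $v$ and exponentially small in $L$, which gives the convergence $\dot x_i(t)\to c_{i,+}$ with the bound in \eqref{eq:spphconv}. Since the weight is $\ge1$ on $\{n\ge c_*t\}$, the weighted decay yields \eqref{eq:asympstability}; finally, controlling $\|v(t)\|_{l^2}$ uniformly in $t$ through the bootstrap, using the weighted decay to absorb the localized nonlinear and interaction feedback and conservation of $H$ to bound the remaining $l^2$ growth, together with the smallness of $|c_i(t)-c_{i,0}|$, yields the orbital bound \eqref{eq:orbital-stability}.

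The main obstacle will be the transfer of the weighted estimate from linearized KdV to linearized FPU: the low/high frequency cut-off must be arranged so that the KdV approximation error is genuinely of lower order in $\eps$ \emph{uniformly in} $t$, and the slow time-dependence of $\mathcal L(t)$ through the moving, modulating background must be absorbed without closing the spectral gap. Equally delicate is propagating a uniform unweighted $l^2$ bound on the dispersing radiation, which the biased weight does not see, against the quadratic self-interaction; running all of this simultaneously with the $N$-soliton interaction while maintaining the $2N$ secular conditions over the infinite time interval is the remaining difficulty.
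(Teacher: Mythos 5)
Your high-level architecture agrees with the paper's: modulation with $2N$ secular conditions, exponential weighted-space stability of the linearization transferred from linearized KdV by a low/middle/high frequency splitting, and the KdV estimate itself via the linearized B\"acklund transformation. The gaps are in the nonlinear part, and they are fatal as proposed. First, your Duhamel scheme estimates the full remainder $v$ in the single weighted norm $\|e^{a(n-c_*t)}v\|_{l^2}$, but the hypothesis of Theorem \ref{thm:1} only gives $v_0$ small in $l^2$; the weighted norm of $v_0$ is generically infinite, so the iteration cannot start. This is exactly why the paper splits $v=v_1+v_2$, letting $v_1$ solve the \emph{free} nonlinear FPU flow \eqref{eq:v1} with data $v_0$ (controlled by conservation of $H$ and the Martel--Merle-type virial estimate of Lemma \ref{lem:monotonicity}, which yields the integrated local decay of $v_1$ along the soliton rays), while $v_2(0)=0$ does lie in the weighted class. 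You name this difficulty (``the biased weight does not see'' the radiation) but propose no mechanism for it. Second, even for the weighted component a single weight anchored at one ray fails: the source terms created by $v_1$ interacting with the faster solitons are localized near $x_i(t)$ for $i\ge2$, where the weight $e^{k_1\eps(n-x_1(t))}$ is of size $e^{k_1\eps(x_i(t)-x_1(t))}\sim e^{k_1\sigma\eps^3 t}$, so the weighted norm of $v_2$ can grow in time. The paper resolves this by the further decomposition $v_2=\sum_{k=1}^N v_{2k}$, each $v_{2k}$ measured in its own space $X_k(t)$ with weight anchored at $x_{N+1-k}(t)$ and carrying its own secular conditions \eqref{eq:orthv2k}; nothing in your single-remainder scheme substitutes for this.

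Third, your claim that the modulation ODEs make $\dot c_i$ ``quadratic in $v$'' is false as stated: the raw equation for $\dot c_i$ contains a term \emph{linear} in the localized remainder, of the form $\la w_{k-1},(H''(u_{c_i})-I)\pd_x u_{c_i}\ra$, which is not integrable in time and would destroy both the convergence $c_i(t)\to c_{i,+}$ and the quadratic bound $|c_{i,+}-c_{i,0}|\lesssim \eps^{-1}\|v_0\|_{l^2}^2+\eps^2e^{-\gamma_0L}$ in \eqref{eq:spphconv}. The paper removes it by a normal form, renormalizing the speed to $c_i\bigl(1-\theta_1(c_i)^{-1}\la v_1+\sum_{k\le N-i}v_{2k},\rho_{c_i}\ra\bigr)$ as in \eqref{eq:modeqc}, after which the time derivative is genuinely quadratic plus exponentially small interaction terms. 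Without these three ingredients --- the $v_1$/$v_2$ splitting with virial control of $v_1$, the $N$-fold weighted decomposition of $v_2$, and the normal form for the speeds --- the bootstrap you describe cannot close, even granting the linear theory you correctly outline.
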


\begin{remark}
Eq. \eqref{eq:orbital-stability} implies orbital stability
of FPU co-propagating $N$-solitons since by (P4),
\begin{align*}
\|u_{c_{i,0}}\|_{l^2}^2=& 2\int_\R r_{c_{i,0}}(x)^2dx(1+o(1))  
= \frac{8k_i^3\eps^3}{3}(1+o(1)).
\end{align*}
\end{remark}
\begin{remark}
  The solitary waves moving to the same direction interact more
  strongly than counter-propagating solitary waves because they
  interact each other through their tails for a longer period.  Noting
  that the relative speeds between solitary waves are of $O(\eps^2)$,
  we see that the total impulse caused by the interaction of solitary
  waves is of $O(\eps^{\frac32}e^{-k_1L})=O(\eps^{\frac52})$ in the setting of Theorem
  \ref{thm:1}, whereas the total impulse caused by the interaction
  among counter-propagating solitary waves is of $O(\eps^{\frac72})$
  (\cite{Hoff-Way}).
\end{remark}
Orbital stability of KdV multi-solitons was first studied by Maddocks and
Sachs \cite{Madd-S} (see Kapitula \cite{Kapit} for other integrable
systems). In the nonintegrable case, Perelman \cite{Per1,Per2},
Rodgnanski-Schlag-Soffer \cite{RSS} proved stability of multi-solitons of
nonlinear Schr\"odinger equations that have super critical nonlinearities
by using scattering theory.
\par
Martel-Merle-Tsai \cite{MMT1,MMT2} studied stability of multi-soliton solutions
of gKdV and NLS by combining a variational argument (\cite[Chapter 8]{Ca})
and some propagation estimates. Their approach seems more favorable
because FPU has a subcritical nonlinearity. However, a solitary wave
solution cannot be characterized as a local minimizer because FPU does not
have a conservation law corresponding to momentum for KdV because the spatial
variable is defined on $\Z$.
\par
Instead of using the positivity of the second variation of a
conservation law  as is done in \cite{MMT1,MMT2},
we will use exponential linear stability property of the multi-soliton.
The idea of using exponential linear stability property  was applied to FPU
by Friesecke and Pego \cite{FP1,FP2,FP3,FP4} and lately used by Mizumachi
\cite{Mi1} to prove orbital stability of $1$-soliton solutions of FPU.
\par

We remark that most of propagation estimates of linearized dispersive
equations around multi-solitons are obtained in the case where relative speed
between solitary waves are large (Perelman \cite{Per1,Per2},
Rodgnanski-Schlag-Soffer \cite{RSS}, Hoffman-Wayne \cite{Hoff-Way}) so that
a dispersive wave mostly interacts with one solitary wave and
virtually has no interaction with the others. In these cases,
the problem can be reduced to that of $1$-soliton solutions by using Fourier
analysis or cut-off functions.
The other extreme case is where the relative speed is small
(Mizumachi \cite{Mi2}). In that case, $2$-soliton solutions can be treated
as a multi-bump bound state for a sufficiently long time.
\par
In our problem, a dispersive wave effectively interacts with all
the solitary waves which locate behind the dispersive wave at initial time
because the group velocity of plane waves is $\pm\cos\frac\xi2\in[-1,1]$
and velocity of solitary waves are larger than $1$.
Therefore, we need to consider exponential linear stability of
$N$-solitons without using cut-off functions in the spatial 
variable. 
\par

To prove exponential linear stability of FPU $N$-solitons, we
translate the linearized equation into a system of a high frequency
part, a middle frequency part and a low frequency part.  The high
frequency part is governed by a linearized FPU equation around the
null solution and the middle and low frequencies are in the KdV
regime. The behavior of middle frequency modes is approximated by
$u_t+u_{xxx}=0$ because the potential term turns out be negligible in
this region.  For low frequency modes, the dispersion and
the potential term are of the same order and its behavior is
governed by a linearized KdV equation around $N$-soliton solutions.
\par
Haragus and Sattinger \cite{HS} proved exponential linear stability
of linearized KdV equations in a class of analytic functions.
In this paper, we show the exponential linear stability in weighted $L^2$
spaces.
\par
Before we state our result, let us introduce several notations.
Let $0<k_1<\cdots<k_N$, $\gamma_i\in\R$, 
$\theta_i=k_i(x-4k_i^2t-\gamma_i)$ for $i=1,\dots,N$ and let 
$\mk=(k_1,\cdots,k_N)$, $\mgamma=(\gamma_1,\ldots, \gamma_N)\in\R^N$ and
$$
C_N=\left[\frac{1}{k_i+k_j}e^{-(\theta_i+\theta_j)}\right
]_{\substack{i=1,\dots,N\downarrow\\ j=1,\cdots,N\rightarrow}}.$$
Then $\varphi_N(t,x;\mk,\mgamma):=\pd_x^2\log\det(I+C_N)$ is
an N-soliton solution of KdV
\begin{equation}
  \label{eq:KdV}
  \pd_tu+\pd_x(\pd_x^2u+6u^2)=0\quad\text{for $x\in\R$ and $t>0$.}
\end{equation}
Especially $\varphi_1(t,x;k,\gamma)=k^2\sech^2k(x-4k^2t-\gamma)$.
\par
Let $a>0$ and
\begin{gather*}
\mathcal{P}(t,\mk,\mgamma):L^2_a\to
\spann\{\pd_{\gamma_i}\varphi_{N}(t,y;\mk,\mgamma),\;
\pd_{k_i}\varphi_{N}(t,y;\mk,\mgamma)\,:\, 1\le i\le N\},
\\  \mathcal{Q}(t,\mk,\mgamma)=I-\mathcal{P}(t)
\end{gather*}
be projections associated with
\begin{equation}
  \label{eq:LKdV}
\pd_tv+\pd_x(\pd_x^2v+12\varphi_N(\mk,\mgamma)v)=0
\quad\text{for $x\in\R$ and $t>0$.}
\end{equation}
 such that for $v\in\mathcal{Q}(t,\mk,\mgamma)$ and $i=1,\cdots,N$,
\begin{align}
\label{eq:secKdV1}
& \int_\R v(x)\int^x_{-\infty} \pd_{\gamma_i}\varphi_N(t,y;\mk,\mgamma)dydx=0,\\
\label{eq:secKdV2}
& \int_\R v(x)\int^x_{-\infty}\pd_{k_i}\varphi_N(t,y;\mk,\mgamma)dydx=0.
\end{align}
If $v$ is a solution of \eqref{eq:LKdV} and 
$v(s)\in\mathcal{Q}(s)$, then $v(t)\in\mathcal{Q}(t)$ for every $t\ge s$.
\begin{theorem}
  \label{thm:linearizedKdV}
Let $0<k_1<\ldots<k_N$, $0<a<2k_1$, $\theta\ge0$, $\eta\in(0,1)$ and let
$v(t,x)$ be a solution of \eqref{eq:LKdV}. Then there exists a positive
constant $K$ such that for every $t>s$ and $c$, $x_0\in\R$,
\begin{gather*}
\|e^{a(\cdot-ct-x_0)}\mathcal{Q}(t)v(t)\|_{L^2}\le
Ke^{-a(c-a^2)(t-s)}
\|e^{a(\cdot-cs-x_0)}\mathcal{Q}(s)v(s)\|_{L^2},
\\
\|e^{a(\cdot-ct-x_0)}\mathcal{Q}(t)v(t)\|_{L^2}\le
K(t-s)^{-\frac{\theta}{2}}e^{-\eta a(c-a^2)(t-s)}
\|e^{a(\cdot-cs-x_0)}\mathcal{Q}(s)v(s)\|_{H^{-\theta}}.
\end{gather*}
\end{theorem}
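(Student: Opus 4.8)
The plan is to prove the two estimates \emph{together} by induction on the number $N$ of solitons, peeling off one soliton at a time with a linearized B\"acklund transformation until we reach the free Airy equation $\pd_tv+\pd_x^3v=0$, which is the base case $N=0$ (where $\varphi_0\equiv0$ and $\mathcal{Q}=I$). Since the flow of \eqref{eq:LKdV} preserves the splitting into $\mathcal{P}(t)$ and $\mathcal{Q}(t)$, it is enough to take $v(s)\in\mathcal{Q}(s)$ and establish decay of $v(t)$ itself; the projectors in the statement then reappear automatically. Throughout I work with the conjugated unknown $w=e^{a(\cdot-ct-x_0)}v$, so that the weighted norms become plain $L^2$ and $H^{-\theta}$ norms of $w$.

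For $N=0$, conjugation turns the Airy equation into $\pd_tw=-ac\,w-(\pd_x-a)^3w$, whose Fourier multiplier has real part $-a(c-a^2)-3a\xi^2$. The first estimate is immediate from $\sup_{\xi}\{-a(c-a^2)-3a\xi^2\}=-a(c-a^2)$. For the second I would split the exponent as $e^{-2\eta a(c-a^2)(t-s)}\cdot e^{-2(1-\eta)a(c-a^2)(t-s)-6a\xi^2(t-s)}$ and use $\sup_\xi(1+\xi^2)^{\theta}e^{-6a\xi^2(t-s)}\le C(t-s)^{-\theta}$ to trade the Gaussian factor for the smoothing gain $(t-s)^{-\theta/2}$ from $H^{-\theta}$ to $L^2$, the surviving exponential giving the rate $\eta a(c-a^2)$.

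For the inductive step, assume both estimates hold for $\varphi_{N-1}$. Writing $\mathcal{L}_M=\pd_x(\pd_x^2+12\varphi_M)$, I would construct the linearized B\"acklund (Darboux) transformation $\mathbf{B}$, a differential operator whose coefficients are built from $\varphi_N$, $\varphi_{N-1}$ and the bound state eliminated in passing from $\varphi_N$ to $\varphi_{N-1}$, and which intertwines the flows, $e^{-(t-s)\mathcal{L}_N}\mathbf{B}=\mathbf{B}\,e^{-(t-s)\mathcal{L}_{N-1}}$. Three points must be checked: (i) $\mathbf{B}$ and a left inverse are bounded between the relevant weighted Sobolev spaces, and here the hypothesis $a<2k_1$ is precisely what makes the exponentially localized coefficients compatible with the weight $e^{a(\cdot-ct-x_0)}$ at every link of the chain $\varphi_N\to\cdots\to0$; (ii) $\mathbf{B}$ maps $\mathcal{Q}_{N-1}(t)$ into $\mathcal{Q}_N(t)$ up to the two new secular modes $\pd_{\gamma_N}\varphi_N$ and $\pd_{k_N}\varphi_N$, so that after the corrections \eqref{eq:secKdV1}--\eqref{eq:secKdV2} the projected flows are genuinely conjugate; (iii) the derivative loss produced by the differential operator $\mathbf{B}$ is absorbed by the inductive \emph{smoothing} estimate for $\varphi_{N-1}$, which is exactly why the two estimates must be carried together. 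Because the B\"acklund coefficients are time-independent in the frame comoving with the $N$-soliton and uniformly bounded, transferring the inductive bounds through $\mathbf{B}$ reproduces both estimates for $\varphi_N$ with the same rate $a(c-a^2)$.

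The main obstacle is steps (ii)--(iii): making the linearized B\"acklund transformation explicit enough to control on weighted spaces, proving its boundedness and bounded invertibility modulo the finite-dimensional secular directions, and organizing the bookkeeping of $\mathcal{Q}_N$ and $\mathcal{Q}_{N-1}$ so that the order of $\mathbf{B}$ is paid for exactly by the gain $(t-s)^{-\theta/2}$. Checking that the single condition $a<2k_1$ guarantees the weighted mapping properties at every stage of the reduction is the quantitative crux.
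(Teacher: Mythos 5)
Your overall strategy---peeling off one soliton at a time with a linearized B\"acklund transformation until you reach the free Airy flow, whose conjugated multiplier has real part $-a(c-a^2)-3a\xi^2$---is exactly the paper's strategy (Section \ref{sec:backlund}, following \cite{MP}), and your checkpoints (i) and (ii) correspond precisely to Lemmas \ref{prop:bij1}, \ref{prop:bij2} and \ref{lem:orth2}: uniform boundedness with bounded inverse at each link under $a<2k_m$, and equivalence of the secular orthogonality conditions at consecutive levels. One small base-case slip: $\sup_\xi(1+\xi^2)^{\theta}e^{-6a\xi^2(t-s)}$ is of order $\max\{1,(t-s)^{-\theta}\}$, not $(t-s)^{-\theta}$ (take $\xi=0$), so for $t-s\ge1$ you must spend part of the reserved factor $e^{-2(1-\eta)a(c-a^2)(t-s)}$, which only helps when $c>a^2$---the regime in which the theorem is used.

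The genuine gap is your step (iii). You posit that $\mathbf{B}$ is a \emph{differential} operator whose derivative loss is absorbed by the inductive smoothing estimate; but that mechanism cannot reproduce the first inequality at level $N$. Paying for a lost derivative with the level-$(N-1)$ smoothing bound yields only rates $\eta a(c-a^2)$ with $\eta<1$ together with a short-time singularity $(t-s)^{-1/2}$, so the sharp, singularity-free rate $a(c-a^2)$ in the first estimate is destroyed at the very first step of the induction. The paper avoids this entirely: it works at the level of the potential equation \eqref{eq:KdV2}, whose linearization \eqref{eq:w} is the adjoint of \eqref{eq:LKdV}; there the linearized B\"acklund relation \eqref{eq:LBT} is a first-order ODE in $x$ that is solved \emph{explicitly}, giving $\Phi_m=-I+I_m+\alpha(\cdot)\pd_{\gamma_m^m}v^m$ with inverse $\Psi_m=-I+J_m$, where $I_m$ and $J_m$ are integral operators with kernels of size $O(e^{-2k_m|x-y|})$ by Lemma \ref{lem:psimbd}. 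These are bounded both ways on $L^2_a$, uniformly in $t$ and $\mgamma^m$, with \emph{no} derivative loss, so the exact exponential estimate transfers through the whole chain; smoothing is then purchased only once, at the free end, from Kato's estimates \eqref{eq:w0decay}--\eqref{eq:w0decay'} together with the equivalence of weighted $H^l$ norms along the chain \eqref{eq:wNw0equiv'}, and the theorem for \eqref{eq:LKdV} follows from Proposition \ref{prop:wNdecay} by duality. In effect the paper conjugates by $\pd_x^{-1}=\int_{-\infty}^x$, which is bounded on $L^2_a$ for $a>0$; this is also why the secular conditions \eqref{eq:secKdV1}--\eqref{eq:secKdV2} take their integrated form. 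If you insist on working directly with \eqref{eq:LKdV} at the $u$-level, the repair is either to carry the non-smoothing estimate in a whole scale $H^s_a$, or to replace your differential $\mathbf{B}$ by $\pd_x\circ\Phi_m\circ\pd_x^{-1}$, which is again bounded on $L^2_a$ without loss; as written, though, the induction that ``carries the two estimates together'' does not close.
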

\par
Our plan of the present paper is as follows.
In Section \ref{sec:decomp}, we decompose a solution that is close to a family
of $N$-solitons into a sum of an $N$-soliton part and several
remainder parts and derive modulation equations on parameters of
speed and phase shift of the $N$-soliton part. 
In Section \ref{sec:energy-virial}, we estimate the energy norm of
the remainder parts and prove virial identities for each remainder part.
In Section \ref{sec:prthm1},  we prove orbital and asymptotic stability
of $N$-solitons assuming exponential linear stability of $N$-solitons
of FPU.
In Section \ref{sec:linear}, we will prove exponential linear stability of
small $N$-soliton solutions of FPU  assuming exponential stability property of
KdV. In Section \ref{sec:backlund}, we will use a linearized B\"acklund
transformation to prove Theorem \ref{thm:linearizedKdV}
following the idea of Mizumachi and Pego \cite{MP}.
We will show that a linearized B\"acklund transformation determines
an isomorphism that connects solutions of
$u_t+u_{xxx}=0$ and solutions of \eqref{eq:LKdV} satisfying 
\eqref{eq:secKdV1} and \eqref{eq:secKdV2} whose operator norm is
uniformly bounded with respect to $t$.
\par
Finally, let us introduce some notations.
Let $\la u,v\ra:=\sum_{n\in\Z}(u_1(n)u_2(n)+v_1(n)v_2(n))$
for $\R^2$-sequences $u=(u_1,u_2)$ and $v=(v_1,v_2)$ and let
$\|u\|_{l^2}=(\la u,u\ra)^{\frac12}$ and 
$\|u\|_{l^2_a}=\|e^{an}u(n)\|_{l^2}$.
We use notations $\|u\|_{L^2_a(\R)}=\|e^{ax}u(x)\|_{L^2(\R)}$ and
$\|u\|_{H^k_a(\R)}=\|e^{ax}u(x)\|_{H^k(\R)}$.

For Banach spaces $X$ and $Y$, we denote by $B(X,Y)$ the space of 
all linear continuous operators from $X$ to $Y$ and abbreviate
$B(X,X)$ as $B(X)$.
We use $a\lesssim b$ and $a=O(b)$ to mean that there exists a positive constant
such that $a\le Cb$. For any $f\in l^2$,
$$(\mathcal{F}_nf)(\xi)=\tilde{f}(\xi)
=\frac{1}{\sqrt{2\pi}}\sum_{n\in\Z}f(n)e^{-in\xi},$$
and $(f_1*_\T f_2)(x)=\int_\T f_1(x-y)f_2(y)dy$ for $f_1$, $f_2\in L^2(\T)$,
where $\T=\R/2\pi\Z$.
We denote by $\tau_h$ a translation operator defined by $(\tau_hf)(x):=f(x+h)$.
\bigskip

\section{Decomposition of the solution}
\label{sec:decomp}
Let $u(t)$ be a solution to \eqref{eq:FPU} which lies in a tubular
neighborhood of
$$\mathcal{M}=\left\{\sum_{i=1}^Nu_{c_{i,0}}(\cdot-y_i):
y_{i+1}-y_i>L\quad\text{for $i=1,\cdots,N-1$}\right\},$$
where $L$ is sufficiently large.

We decompose a solution around $\mathcal{M}$ as
\begin{equation}
  \label{eq:decomp}
u(t)=\sum_{1\le i\le N}u_{c_i(t)}(\cdot-x_i(t))+v(t),
\end{equation}
where $u_{c_i(t)}(\cdot-x_i(t))$ $(i=1,\cdots,N)$ denote solitary waves
and $c_i(t)$ and $x_i(t)$ are modulation parameters of the speed
and the phase shift of each solitary wave, respectively.
Let $U_N(t)=\sum_{i=1}^Nu_{c_i(t)}(\cdot-x_i(t)).$
 Substituting \eqref{eq:decomp} into
\eqref{eq:FPU}, we  have
\begin{equation}
  \label{eq:v}
\pd_tv=JH''(U_N)v+l+R,  
\end{equation}
 where $R=R_1+R_2$ and
\begin{align*}
& R_1= JH'(U_N+v)-JH'(U_N)-JH''(U_N)v,\\
& R_2= JH'(U_N)-\sum_{i=1}^NJH'(u_{c_i(t)}(\cdot-x_i(t))),\\
& l=-\sum_{i=1}^N\left\{\dot{c}_i\pd_cu_{c_i}(\cdot-x_i(t))
-(\dot{x}_i-c_i)\pd_xu_{c_i}(\cdot-x_i(t))\right\}.
\end{align*}
Now we decompose $v(t)$ into the sum of a small solution $v_1(t)$ to
\eqref{eq:FPU} and a remainder term which belongs to $l^2_a$ and is localized
around solitary waves.
Let $v_1(t)$ be a solution to
\begin{equation}
  \label{eq:v1}
\left\{
  \begin{aligned}
&  \pd_tv_1=JH'(v_1),\\
&   v_1(0)=v_0,
  \end{aligned}\right.
\end{equation}
and $v_2(t)=v(t)-v_1(t)$.
By \cite[Proposition 3]{Mi1},
we see that $u(t)-v_1(t)$ remains in $l^2_a$ for every
$0\le a<2\min_{1\le i\le N}\kappa(c_{i,0})$ and $t\in\R$, where
$\kappa(c)$ is a positive root of $c=\sinh\kappa/\kappa$.
\par

Suppose $x_i(t)$ and $c_i(t)$ are of class $C^1$. 
Then if $u(t)=U_N(t)+v_1(t)+v_2(t)$ is a solution to \eqref{eq:FPU},
\begin{equation}
  \label{eq:v2}
\left\{
  \begin{aligned}
& \pd_tv_2= JH''(U_N(t))v_2+l(t)+\wR(t),
\\ & v_2(0)=0,
  \end{aligned}\right.
\end{equation}
where $\wR(t)=R(t)-JH'(v_1(t))+JH''(U_N(t))v_1$.
Our strategy is to derive modulation equations on $x_i(t)$ and $c_i(t)$ and
{\it \`a priori} estimates on $v_2$, $x_i$ and $c_i$ $(1\le i\le N)$
to prove that $u$ remains in a tubular neighborhood of $\mathcal{M}$ in $l^2$.
To prove convergence of speed parameters $c_i(t)$ $(1\le i\le N)$,
we need to estimate $v_2(t)$ in an exponential weighted space.
Since $e^{-k_1\eps x_1(t)}\|v_2(t)\|_{l^2_{k_1\eps}}$ may grow as $t\to\infty$
due to the interaction between $v_1(t)$ and 
solitary waves $u_{c_i}(\cdot-x_i(t))$ $(i\ge 2)$,
we will decompose $v_2(t)$ into a sum of $N$ functions $v_{2k}$ 
$(1\le k\le N)$ such that each $v_{2k}(t)$ remains small in a weighted space
$$X_k(t)=\left\{v\in l^2_{k_1\eps} : \|v\|_{X_k(t)}=
\left(\sum_{n\in\Z}e^{k_1\eps(n-x_{N+1-k}(t))}|v(n)|^2\right)^{\frac12}<\infty
\right\}.$$
\par
Let $Q_k(t)\colon l^2_a\to l^2_a$ be an operator defined by
$$Q_k(t)f=f-\sum_{N+1-k\le i\le N}(\alpha_i(f)\pd_xu_{c_i}(\cdot-x_i(t))
+\beta_i(f)\pd_cu_{c_i}(\cdot-x_i(t)))$$
for $a>0$, where $\alpha_i(f)$ and $\beta_i(f)$ $(i=1,\cdots,N)$ are real
numbers satisfying 
$$
\la Q_k(t)f,J^{-1}\pd_xu_{c_i}(\cdot-x_i(t))\ra
=\la Q_k(t)f,J^{-1}\pd_cu_{c_i}(\cdot-x_i(t))\ra=0
$$
for $N+1-k\le i\le N$ and let $P_k(t)=I-Q_k(t)$.
We remark that if $a>0$,
\begin{equation}\label{eq:J-1}
J^{-1}= \begin{pmatrix} 0 & \sum_{k=-\infty}^0 e^{k\pd}
\\ \sum_{k=-\infty}^{-1}e^{k\pd} & 0 \end{pmatrix}    
\end{equation}
is a bounded operator on $l^2_{-a}$  because
$\|e^{-\pd}u\|_{l^2_{-a}}=e^{-a}\|u\|_{l^2_{-a}}$ and that
$J^{-1}\pd_cu_c$ and  $J^{-1}\pd_xu_c$ belong to $l^2_{-a}$
for any $a\in(0,2\kappa(c))$.
\par

Let $v_{2k}(t)$ $(1\le k\le N-1)$ be a solution of 
\begin{equation}
  \label{eq:v2k}
\left\{
  \begin{aligned}
&  \pd_tv_{2k}=JH''(U_k)v_{2k}+l_k+Q_k(t)JR_k,\\
& v_{2k}(0)=0,
  \end{aligned}\right.
\end{equation}
where $w_0=v_1$, $w_k=v_1+\sum_{1\le i\le k}v_{2i}$ $(1\le k\le N)$,
\begin{align*}
& R_{k}=H'(U_k+w_k)-H'(u_{c_{N+1-k}})-H'(U_{k-1}+w_{k-1})-H''(U_k)v_{2k},
\\ & 
l_k=\sum_{N+1-k\le j\le N}(\alpha_{j,k}\pd_cu_{c_j}+\beta_{j,k}\pd_xu_{c_j}),
\end{align*}
and $\alpha_{j,k}$ and $\beta_{j,k}$ ($N+1-k\le j\le N$, $1\le k\le N-1$)
are continuous functions that will be defined later.
\par
Let $v_{2N}(t)=v_2(t)-\sum_{1\le i\le N-1}v_{2i}(t)$.
To fix the decomposition \eqref{eq:decomp}, we will define $c_i(t)$
and $x_i(t)$ $(1\le i\le N)$ so that
\begin{align}
\label{eq:orth1}
& \la v_{2N}(t), J^{-1}\pd_xu_{c_i(t)}(\cdot-x_i(t))\ra=0
\quad\text{for $i=1,\cdots,N$,}\\
\label{eq:orth2}
& \la v_{2N}(t),J^{-1}\pd_cu_{c_i(t)}(\cdot-x_i(t))\ra=0
\quad\text{for $i=1,\cdots, N$.}
\end{align}
\par

By \eqref{eq:v}, \eqref{eq:v1} and \eqref{eq:v2k},
\begin{equation}
  \label{eq:v2N}
  \begin{split}
 \pd_tv_{2N}=& J\left\{H'(U_N+v)-\sum_{k=1}^NH'(u_{c_k})-H'(v_1)\right\}+l
\\ & -\sum_{k=1}^{N-1}(JH''(U_k)v_{2k}+Q_k(t)JR_k+l_k)
\\ =& JH''(U_N)v_{2N}+JR_N+\sum_{k=1}^{N-1}(P_k(t)JR_k-l_k)+l.
  \end{split}
\end{equation}
\par
Let $\mathcal{A}_k=\begin{pmatrix}\mathcal{A}_{i,j}
\end{pmatrix}_{\substack{i=N+1-k,\dots,N\downarrow\\
 j=N+1-k,\dots,N\rightarrow}}$,
$F_{j,k}={}^t(F_{j,k}^1,F_{j,k}^2)$ and 
\begin{align*}
& \mathcal{A}_{i,j}=
\begin{pmatrix}
  \eps^{-1}\la \pd_cu_{c_j},J^{-1}\pd_xu_{c_i}\ra 
  & \eps^{-4}\la \pd_xu_{c_j},J^{-1}\pd_xu_{c_i}\ra \\ 
   \eps^2\la \pd_cu_{c_j},J^{-1}\pd_cu_{c_i}\ra
  & \eps^{-1}\la \pd_xu_{c_j},J^{-1}\pd_cu_{c_i}\ra
 \end{pmatrix},\\
& F_{j,k}^1=
  \eps^{-4}\la v_{2k},(H''(U_k)-H''(u_{c_j}))\pd_xu_{c_j}\ra
\\ & +\eps^{-4}\{(\dot{x}_j-c_j)\la v_{2k},J^{-1}\pd_x^2u_{c_j}\ra
  -\dot{c}_j\la v_{2k},J^{-1}\pd_c\pd_xu_{c_j}\ra\},
\\ &
F_{j,k}^2=\eps^{-1}\{\la v_{2k},(H''(U_k)-H''(u_{c_j}))\pd_c{u}_{c_j}\ra
\\ & +\eps^{-1}\{(\dot{x}_j-c_j)\la v_{2k},J^{-1}\pd_c\pd_xu_{c_j}\ra
  -\dot{c}_j\la v_{2k},J^{-1}\pd_c^2u_{c_j}\ra\}.
\end{align*}
If $\alpha_{j,k}(t)$ and $\beta_{j,k}(t)$ are chosen to be a solution of
\begin{equation}
  \label{eq:orthv2k3}
\mathcal{A}_k\begin{pmatrix}  \eps^{-3}\alpha_{j,k}\\
\beta_{j,k}\end{pmatrix}_{N+1-k\le j\le N\downarrow}
=\begin{pmatrix} F_{j,k}\end{pmatrix}_{N+1-k\le j\le N\downarrow},
\end{equation}
then $v_{2k}$ $(1\le k\le N-1)$ satisfy secular term conditions.
\begin{lemma}
  \label{lem:alphabeta}
Suppose that $x_i(t)$ and $c_i(t)$ $(1\le i\le N)$ are of class $C^1$ on
$[0,T]$ and that $v_{2k}$ $(1\le k\le N-1)$ satisfy \eqref{eq:v2k} and
\eqref{eq:orthv2k3} for $1\le k\le N-1$ and $t\in[0,T]$. Then
\begin{equation}
\label{eq:orthv2k}
\la v_{2k},J^{-1}\pd_xu_{c_i}\ra=\la v_{2k},J^{-1}\pd_cu_{c_i}\ra=0
\end{equation}
for every $N+1-k\le i\le N$, $1\le k\le N-1$ and $t\in[0,T]$.
\end{lemma}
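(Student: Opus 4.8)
The plan is to introduce, for each fixed $k\in\{1,\dots,N-1\}$ and each $i$ with $N+1-k\le i\le N$, the secular quantities
\begin{equation*}
a_{i,k}(t)=\la v_{2k}(t),J^{-1}\pd_xu_{c_i}(\cdot-x_i(t))\ra,\qquad
b_{i,k}(t)=\la v_{2k}(t),J^{-1}\pd_cu_{c_i}(\cdot-x_i(t))\ra,
\end{equation*}
which are exactly the inner products in \eqref{eq:orthv2k}. Since $v_{2k}(0)=0$, we have $a_{i,k}(0)=b_{i,k}(0)=0$, so it suffices to show that the vector of these quantities solves a homogeneous linear ODE system on $[0,T]$; the conclusion then follows by uniqueness. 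The pairings are finite because $v_{2k}\in l^2_a$ while $J^{-1}\pd_xu_{c_i}$ and $J^{-1}\pd_cu_{c_i}$ lie in $l^2_{-a}$ for $a\in(0,2\kappa(c_{i,0}))$, and all the adjoint manipulations below are read as dualities between $l^2_a$ and $l^2_{-a}$.

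First I would differentiate $a_{i,k}$ and $b_{i,k}$ using \eqref{eq:v2k}. Three structural facts drive the computation. The forcing term drops out, since by the very definition of $Q_k(t)$ one has $\la Q_k(t)JR_k,J^{-1}\pd_xu_{c_i}\ra=\la Q_k(t)JR_k,J^{-1}\pd_cu_{c_i}\ra=0$ for $N+1-k\le i\le N$. Because $J^*=-J$ and $H''(U_k)$ is a self-adjoint multiplication operator, the principal term rewrites as $\la JH''(U_k)v_{2k},J^{-1}\pd_xu_{c_i}\ra=-\la v_{2k},H''(U_k)\pd_xu_{c_i}\ra$, and likewise with $\pd_c$. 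Finally, differentiating the profile equation \eqref{eq:boundst} in $x$ and in $c$ yields $JH''(u_{c_i})\pd_xu_{c_i}=-c_i\pd_x^2u_{c_i}$ and $JH''(u_{c_i})\pd_cu_{c_i}=-\pd_xu_{c_i}-c_i\pd_x\pd_cu_{c_i}$; splitting $H''(U_k)=H''(u_{c_i})+(H''(U_k)-H''(u_{c_i}))$ and inverting $J$ converts the leading piece into $J^{-1}$-pairings. Combining these with the contributions of $\pd_t(J^{-1}\pd_xu_{c_i}(\cdot-x_i))$ and $\pd_t(J^{-1}\pd_cu_{c_i}(\cdot-x_i))$, which generate the factors $\dot x_i-c_i$ and $\dot c_i$, I expect every term except a single $a_{i,k}$ (arising from the $-\pd_xu_{c_i}$ in the $c$-relation) to reassemble into exactly $\eps^4F_{i,k}^1$ and $\eps F_{i,k}^2$, with sign opposite to $\la l_k,J^{-1}\pd_xu_{c_i}\ra$ and $\la l_k,J^{-1}\pd_cu_{c_i}\ra$ respectively.

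The decisive step is that \eqref{eq:orthv2k3} is engineered precisely so that these remainders cancel. Reading off the two rows of $\mathcal{A}_k$ acting on $(\eps^{-3}\alpha_{j,k},\beta_{j,k})$ identifies $\la l_k,J^{-1}\pd_xu_{c_i}\ra=\eps^4F_{i,k}^1$ and $\la l_k,J^{-1}\pd_cu_{c_i}\ra=\eps F_{i,k}^2$. Hence the time-derivatives collapse to the triangular system
\begin{equation*}
\dot a_{i,k}=0,\qquad \dot b_{i,k}=a_{i,k}\qquad(N+1-k\le i\le N),
\end{equation*}
and integrating from $t=0$ with vanishing initial data gives $a_{i,k}\equiv b_{i,k}\equiv 0$ on $[0,T]$, which is \eqref{eq:orthv2k}.

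The only genuinely delicate part is bookkeeping rather than analysis: one must track the signs and the powers of $\eps$ so that the derivative of each secular quantity matches the corresponding row of \eqref{eq:orthv2k3} term by term, and one must justify $\la Jf,g\ra=-\la f,Jg\ra$ and $\la H''(U_k)f,g\ra=\la f,H''(U_k)g\ra$ as honest dualities between $l^2_a$ and $l^2_{-a}$ (using that $J$ and $J^{-1}$ are bounded on these weighted spaces and that $J^{-1}\pd_xu_{c_i},J^{-1}\pd_cu_{c_i}\in l^2_{-a}$). The conceptual content, once these identities are in hand, is that the $x$-orthogonality is \emph{exactly} conserved while the $c$-orthogonality is driven only by the $x$-orthogonality, so the homogeneous triangular structure together with $v_{2k}(0)=0$ forces both to vanish.
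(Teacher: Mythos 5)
Your proposal is correct and follows essentially the same route as the paper: differentiate the pairings $\la v_{2k},J^{-1}\pd_xu_{c_i}\ra$ and $\la v_{2k},J^{-1}\pd_cu_{c_i}\ra$ along \eqref{eq:v2k}, kill the $Q_kJR_k$ term by the orthogonality built into $Q_k$, use $J^*=-J$ together with the differentiated profile equation \eqref{eq:secularmode}, and observe that \eqref{eq:orthv2k3} makes the $l_k$-pairings cancel the remainders exactly (your identifications $\la l_k,J^{-1}\pd_xu_{c_i}\ra=\eps^4F^1_{i,k}$ and $\la l_k,J^{-1}\pd_cu_{c_i}\ra=\eps F^2_{i,k}$ match the rows of $\mathcal{A}_k$). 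The resulting triangular system $\dot a_{i,k}=0$, $\dot b_{i,k}=a_{i,k}$ with zero initial data is precisely the paper's conclusion, so nothing is missing.
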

\begin{proof}
First, we recall that $H(u_c(\cdot-ct))$ does not depend on $t$ and
\begin{align}
  \label{eq:secular1}
& \la \pd_xu_c,J^{-1}\pd_xu_c\ra=-\frac1c\la \pd_xu_c,H'(u_c)\ra
=\frac{1}{c^2}\frac{d}{dt}H(u_c(\cdot-ct))=0,\\
& \label{eq:secular2}
\la \pd_xu_c,J^{-1}\pd_cu_c\ra=-\la \pd_cu_c,J^{-1}\pd_xu_c\ra
=\frac1c\frac{d}{dc}H(u_c)>0.
\end{align}
Differentiating \eqref{eq:boundst} with respect to $x$ and $c$,
we have
\begin{equation}
  \label{eq:secularmode}
c\pd_x^2u_{c}+JH''(u_{c})\pd_xu_{c}=0,\quad
c\pd_c\pd_xu_{c}+JH''(u_{c})\pd_cu_{c}=-\pd_xu_c.  
\end{equation}
Using \eqref{eq:v2k}, \eqref{eq:secularmode}, $J^*=-J$  and the fact that
$J^{-1}\pd_xu_{c_j}$ and $J^{-1}\pd_c{u}_{c_j}$ $(N+1-k\le j\le N)$ are
orthogonal to the range of the projection $Q_k(t)$,
we have for  $N+1-k\le j\le N$ and $1\le k\le N-1$
\begin{align*}
&  \frac{d}{dt}\la v_{2k},J^{-1}\pd_xu_{c_j}(\cdot-x_j(t))\ra
\\=& 
\la JH''(U_k)v_{2k}+l_k+Q_kJR_k,J^{-1}\pd_xu_{c_j}\ra
\\ & -\dot{x}_j\la v_{2k},J^{-1}\pd_x^2u_{c_j}\ra
+\dot{c}_j\la v_{2k},J^{-1}\pd_c\pd_xu_{c_j}\ra
\\=& \la l_k,J^{-1}\pd_xu_{c_j}\ra
+\la v_{2k},(H''(u_{c_j})-H''(U_k))\pd_xu_{c_j}\ra
\\ &
+\dot{c}_j\la v_{2k},J^{-1}\pd_c\pd_xu_{c_j}\ra
-(\dot{x}_j-c_j)\la v_{2k},J^{-1}\pd_x^2u_{c_j}\ra
\\=&
\sum_{i=N+1-k}^N
\left(\alpha_{i,k}\la \pd_cu_{c_i},J^{-1}\pd_xu_{c_j}\ra
+\beta_{i,k}\la \pd_xu_{c_i},J^{-1}\pd_xu_{c_j}\ra\right)
\\ & -\la v_{2k},(H''(U_k)-H''(u_{c_j}))\pd_xu_{c_j}\ra
\\ & -(\dot{x}_j-c_j)\la v_{2k},J^{-1}\pd_x^2u_{c_j}\ra
+\dot{c}_j\la v_{2k},J^{-1}\pd_c\pd_xu_{c_j}\ra,
\end{align*}
and 
  \begin{align*}
&  \frac{d}{dt}\la v_{2k},J^{-1}\pd_cu_{c_j}(\cdot-x_j(t))\ra
\\=&
\la JH''(U_k)v_{2k}+l_k+Q_kJR_k,J^{-1}\pd_cu_{c_j}\ra
-\dot{x}_j\la v_{2k},J^{-1}\pd_x\pd_c{u}_{c_j}\ra
+\dot{c}_j\la v_{2k},J^{-1}\pd_c^2u_{c_j}\ra
\\=&
\la l_k,J^{-1}\pd_cu_{c_j}\ra+
\la v_{2k},(H''(u_{c_j})-H''(U_k))\pd_cu_{c_j}\ra
+\la v_{2k},J^{-1}\pd_xu_{c_j}\ra
\\ & 
+\dot{c}_j\la v_{2k},J^{-1}\pd_c^2u_{c_j}\ra
-(\dot{x}_j-c_j)\la v_{2k},J^{-1}\pd_c\pd_xu_{c_j}\ra
\\=&
\sum_{i=N+1-k}^N
\left(\alpha_{i,k}\la \pd_cu_{c_i},J^{-1}\pd_cu_{c_j}\ra
+\beta_{i,k}\la \pd_xu_{c_i},J^{-1}\pd_cu_{c_j}\ra\right)
\\ & -\la v_{2k},(H''(U_k)-H''(u_{c_j}))\pd_c{u}_{c_j}\ra
\\ & -(\dot{x}_j-c_j)\la v_{2k},J^{-1}\pd_c\pd_xu_{c_j}\ra
+\dot{c}_j\la v_{2k},J^{-1}\pd_c^2u_{c_j}\ra+\la v_{2k},J^{-1}\pd_xu_{c_j}\ra. 
\end{align*}
In the course of calculations, we abbreviate
$u_{c_j(t)}(\cdot-x_j(t))$ as $u_{c_j}$.
Substituting \eqref{eq:orthv2k3} into the above, we have for
$N+1-k\le j\le N$,
\begin{equation*}
  \frac{d}{dt}\la v_{2k}(t),J^{-1}\pd_xu_{c_j}\ra=0,\quad
  \frac{d}{dt}\la v_{2k}(t),J^{-1}\pd_cu_{c_j}\ra=\la v_{2k},J^{-1}\pd_xu_{c_j}\ra.
\end{equation*}
Since $v_{2k}(0)=0$,  we have \eqref{eq:orthv2k} for every $1\le j\le N$,
$N+1-k\le k\le N-1$ and $t\in[0,T]$.
Thus we complete the proof.
\end{proof}

Next we will derive modulation equations of $x_i$ and $c_i$ so that
$v_{2N}$ satisfies \eqref{eq:orth1} and \eqref{eq:orth2}.
\begin{lemma}
  \label{lem:modulation}
Let $u(t)$ be a solution of \eqref{eq:FPU} and $v_1(t)$ be a solution of
\eqref{eq:v1}. There exist positive numbers $L$, $\varepsilon_0$ and $\delta$
satisfying the following: Suppose $\varepsilon\in(0,\varepsilon_0)$,
that $c_i(t)$ and $x_i(t)$ $(i=1,\cdots,N)$ are $C^1$-functions satisfying
\eqref{eq:orth1} and \eqref{eq:orth2} on $[0,T]$ and that
\begin{gather*}
\max_{1\le i\le N}\sup_{t\in[0,T]}
(|c_i(t)-c_{i,0}|+|\dot{x}_i(t)-c_i(t)|)\le \delta\eps^2,
\\
\min_{1\le i\le N-1}\inf_{t\in[0,T]}(x_{i+1}(t)-x_i(t))\ge \eps^{-1}L,
\\
\sup_{t\in[0,T]}(\|v_1(t)\|_{W(t)}+\sum_{1\le k\le N}
\|v_{2k}(t)\|_{X_k(t)\cap W(t)})\le \delta\eps^{\frac32}.
\end{gather*}
Let $\sigma=\frac12\eps^{-2}\min_{2\le i\le N}(c_{i,0}-c_{i-1,0})$.
Then for $t\in[0,T]$,
\begin{equation}
\label{eq:modeqc}
  \begin{split}
 & \frac{d}{dt}\left\{c_i(t)\left(1-\theta_1(c_i(t))^{-1}
\la v_1(t)+\sum_{k=1}^{N-i}v_{2k}(t),\rho_{c_i(t)}\ra\right)\right\}
\\ = &
O\left(\eps^2\left(\|v_1(t)\|_{W(t)}^2+\sum_{k=1}^N
\|v_{2k}(t)\|_{W(t)\cap X_k(t)}^2\right)
+\eps^5e^{-2k_1(\sigma\eps^3t+L)}\right),    
  \end{split}
\end{equation}
\begin{equation}
  \label{eq:modeqx}
  \begin{split}
& \dot{x}_i(t)-c_i(t)\\ = & O\left(\eps^{\frac12}
\left(\|v_1(t)\|_{W(t)}+\sum_{k=1}^N\|v_{2k}(t)\|_{W(t)}\right)
+\eps^2e^{-k_1(\sigma\eps^3t+L)}\right),      
  \end{split}
\end{equation}
where $\theta_1(c)=dH(u_c)/dc$,
$\rho_c=\pd_x(c\pd_x+J)^{-1}(H'(u_c)-u_c)$ and
$$\|u\|_{W(t)}=\sum_{1\le i\le N}\|e^{-k_i\eps|n-x_i(t)|/2}u\|_{l^2},
\quad \|u\|_{X_k(t)\cap W(t)}=\|u\|_{X_k(t)}+\|u\|_{W(t)}.$$
\end{lemma}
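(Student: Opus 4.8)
The plan is to obtain \eqref{eq:modeqc} and \eqref{eq:modeqx} as the solvability conditions for the defining orthogonality relations \eqref{eq:orth1}--\eqref{eq:orth2}. Abbreviating $u_{c_i}=u_{c_i(t)}(\cdot-x_i(t))$, I would differentiate $\la v_{2N},J^{-1}\pd_xu_{c_i}\ra=0$ and $\la v_{2N},J^{-1}\pd_cu_{c_i}\ra=0$ in $t$ and insert \eqref{eq:v2N} for $\pd_tv_{2N}$, transferring $JH''(U_N)$ onto the test functions via $J^*=-J$ and $H''=H''^*$ and then applying the secular-mode identities \eqref{eq:secularmode} after splitting $H''(U_N)=H''(u_{c_i})+(H''(U_N)-H''(u_{c_i}))$. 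The leading algebraic content comes from the term $l=-\sum_j\{\dot c_j\pd_cu_{c_j}-(\dot x_j-c_j)\pd_xu_{c_j}\}$ in \eqref{eq:v2N}: pairing $l$ against $J^{-1}\pd_xu_{c_i}$ and $J^{-1}\pd_cu_{c_i}$ produces exactly the inner products $\la\pd_\bullet u_{c_j},J^{-1}\pd_\bullet u_{c_i}\ra$ of the type collected in $\mathcal A_k$, so the $2N$ differentiated relations assemble into a $2N\times2N$ linear system for the unknowns $\dot c_i$ and $\dot x_i-c_i$. The combination $c_i\la v_{2N},J^{-1}\pd_x^2u_{c_i}\ra$ arising from the $JH''$ term cancels against $-\dot x_i\la v_{2N},J^{-1}\pd_x^2u_{c_i}\ra$ from the moving test function, leaving the manifestly small factor $-(\dot x_i-c_i)\la v_{2N},J^{-1}\pd_x^2u_{c_i}\ra$.

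Next I would argue that this system is uniquely and stably solvable. By \eqref{eq:secular1}--\eqref{eq:secular2} the $i$-th diagonal $2\times2$ block is triangular with anti-diagonal entries $\pm\theta_1(c_i)/c_i$, hence has determinant $-(\theta_1(c_i)/c_i)^2\neq0$ by (P3). The off-diagonal blocks involve inner products of functions centred at $x_i$ and $x_j$ with $j\ne i$, which are $O(e^{-k_1(\sigma\eps^3t+L)})$ by the separation hypothesis $x_{i+1}-x_i\ge\eps^{-1}L$, the exponential decay of $u_c$ from (P1), and the fact that the gaps grow at relative speed $O(\eps^2)$; the corrections coming from the $\la v_{2N},\cdot\ra$ terms are controlled by the smallness of $\|v_{2k}\|_{X_k(t)\cap W(t)}$. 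Thus the coefficient matrix is a small perturbation of an invertible block-diagonal matrix and has uniformly bounded inverse. Reading off $\dot x_i-c_i$ yields \eqref{eq:modeqx}: the right-hand side is linear in the remainder norms because the forcing is $\la JR_N+\sum_k(P_kJR_k-l_k),J^{-1}\pd_\bullet u_{c_i}\ra$ together with the interaction and $v_1$-coupling terms, and the power $\eps^{1/2}$ and the factor $e^{-k_1(\sigma\eps^3t+L)}$ are recovered by tracking the $\eps$-scalings of all quantities through (P1)--(P4).

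The estimate \eqref{eq:modeqc} needs more than inverting the system, since the raw expression for $\dot c_i$ still carries terms linear in the remainder. The purpose of the corrected speed $c_i(1-\theta_1(c_i)^{-1}\la v_1+\sum_{k=1}^{N-i}v_{2k},\rho_{c_i}\ra)$ is to absorb exactly this linear part into a total time derivative. I would show that the linear-in-remainder contribution to $\dot c_i$ equals, modulo quadratic and exponentially small terms, $\tfrac{d}{dt}[c_i\theta_1(c_i)^{-1}\la w_{N-i},\rho_{c_i}\ra]$ with $w_{N-i}=v_1+\sum_{k=1}^{N-i}v_{2k}$. This is where the choice $\rho_c=\pd_x(c\pd_x+J)^{-1}(H'(u_c)-u_c)$ enters: through its defining relation $(c\pd_x+J)g=H'(u_c)-u_c$ with $\rho_c=\pd_x g$, the bilinear pairing of $\pd_tw_{N-i}=JH''(U_N)w_{N-i}+(\text{errors})$ against $\rho_{c_i}$ telescopes against the linear source of $\dot c_i$. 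Only the pieces $v_1+\sum_{k\le N-i}v_{2k}$ appear because the construction of the $v_{2k}$ through the projections $Q_k$ singles out $w_{N-i}$ as the remainder relevant to the $i$-th soliton at linear order; the remaining pieces contribute only quadratically or through exponentially small interactions.

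The main obstacle is precisely this last cancellation: establishing the identity that upgrades $\dot c_i$ from linear to quadratic order. It rests on the algebraic compatibility between $\rho_{c_i}$, the linearized operator $JH''(U_N)$, and the secular modes, and it must be carried out while $w_{N-i}$ solves its evolution equation only up to the error terms $R_k$, $l_k$ and the $v_1$-coupling, each of which must be shown to contribute at the quadratic or exponentially small level. A secondary, purely computational difficulty is to keep all $\eps$-powers and the time-dependent exponential weights uniform across the $N$ solitons when estimating the interaction terms $H''(U_N)-H''(u_{c_i})$ and the soliton--soliton remainder $R_2$, so that the stated orders $\eps^2(\|v_1\|_{W(t)}^2+\sum_k\|v_{2k}\|_{W(t)\cap X_k(t)}^2)+\eps^5e^{-2k_1(\sigma\eps^3t+L)}$ in \eqref{eq:modeqc} and the corresponding orders in \eqref{eq:modeqx} emerge with the correct decay in both $L$ and $t$.
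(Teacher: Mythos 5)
Your two-stage plan is essentially the paper's own proof. Stage one (differentiate \eqref{eq:orth1}--\eqref{eq:orth2}, insert \eqref{eq:v2N}, use $J^*=-J$ and \eqref{eq:secularmode}, including the cancellation that leaves the small factor $-(\dot{x}_i-c_i)\la v_{2N},J^{-1}\pd_x^2u_{c_i}\ra$) reproduces Lemma \ref{lem:modulationpre} and the linear system \eqref{eq:modeq1}; stage two is exactly the paper's normal form: since $(c\pd_x+J)\rho_c=\pd_x\bigl(H'(u_c)-u_c\bigr)=(H''(u_c)-I)\pd_xu_c$, the computations \eqref{eq:comp1}--\eqref{eq:comp2} make $\frac{d}{dt}\la w_{k-1},\rho_{c_{N+1-k}}\ra$ cancel the linear source $\la w_{k-1},(H''(u_{c_{N+1-k}})-I)\pd_xu_{c_{N+1-k}}\ra$ produced by $R_{k3}$ (see \eqref{eq:badterm1} and \eqref{eq:modeq3}), yielding \eqref{eq:normalformofc}; you also correctly single out $w_{N-i}$ as the remainder entering the $i$-th corrected speed.

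There is, however, one genuine error in your solvability step: you claim that all off-diagonal blocks of the modulation matrix are $O(e^{-k_1(\sigma\eps^3t+L)})$ by spatial separation and the decay of $u_c$. This is false for $i<j$, because $J^{-1}$ is \emph{nonlocal}: by \eqref{eq:J-1} it sums over a half-line, and Claim \ref{cl:J-1u} shows that $\la u,J^{-1}e^{l\pd}v\ra\to\la u_1,1\ra\la v_2,1\ra+\la u_2,1\ra\la v_1,1\ra$ as $l\to\infty$ rather than to zero. Consequently $\mathcal{A}_{i,j}=B_2(c_i,c_j)+O(e^{-k_i(\sigma\eps^3t+L)})$ for $i<j$, with $\eps^2\theta_3(c_i,c_j)\to-72/(k_ik_j)$ by \eqref{eq:limtheta}: after the $\eps$-scaling these blocks are $O(1)$, so "a small perturbation of an invertible block-diagonal matrix" is not available. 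The conclusion survives for the reason established in Lemma \ref{lem:ANbd}: the $O(1)$ blocks occur only on one side of the diagonal and have the nilpotent form $\eps^2\theta_3\begin{pmatrix}0&0\\1&0\end{pmatrix}$, so $\mathcal{A}_N$ is block \emph{triangular} modulo exponentially small errors, with explicit uniformly bounded inverse (the $B_3$ entries); uniformity in $\eps$ also requires the quantitative lower bound $\eps^{-1}\theta_1(c_i)\gtrsim 1$ from \eqref{eq:limtheta}, verified via the Toda lattice and (P4), not merely the qualitative nondegeneracy (P3). Two smaller inaccuracies: the $\pm\theta_1(c_i)/c_i$ entries sit on the diagonal of the lower-triangular block $B_1(c_i)$ (your determinant is nonetheless correct up to the $\eps$-weights); and $l_k$ cannot be treated as pure forcing, since by \eqref{eq:orthv2k3} and \eqref{eq:abbound} it depends linearly on $\eps^{-3}|\dot{c}_j|+|\dot{x}_j-c_j|$ — it must be absorbed into the coefficient matrix, which works only because it carries the small factor $\eps^{-\frac32}\|v_{2k}\|_{X_k(t)}\le\delta$.
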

\begin{remark}
A solution of a system \eqref{eq:v1}, \eqref{eq:v2k}, \eqref{eq:v2N},
\eqref{eq:orthv2k3}, \eqref{eq:modeqc} and \eqref{eq:modeqx} 
(more precisely \eqref{eq:modeq1}) exists at least locally in time.
If it satisfies an initial condition
\begin{equation}
  \label{eq:IC}
v_1(0)=v_1,\quad v_{21}(0)=\cdots=v_{2N}(0)=0,\quad x_i(0)=x_{i,0}, \quad
c_{i}(0)=c_{i,0},  
\end{equation}
then $u(t)=\sum_{i=1}^Nu_{c_i(t)}(\cdot-x_i(t))+v_1(t)+\sum_{k=1}^Nv_{2k}(t)$
becomes a solution to \eqref{eq:FPU} and
$$u(0)=\sum_{i=1}^Nu_{c_{i,0}}(\cdot-x_{i,0})+v_0.$$
\end{remark}

To prove Lemma \ref{lem:modulation}, we need the following:
\begin{lemma}
  \label{lem:ANbd}
Suppose that $c_i(t)$ and $x_i(t)$ be as in Lemma \ref{lem:modulation}.
Then there exists a positive constant $C$ depending only on 
$k_1,\cdots,k_N$, $\eps_0$, $\delta$ and $L_0$ such that
$$\sup_{t\in[0,T]}\left(|\mathcal{A}_{i,j}|+|\mathcal{A}_k^{-1}|\right)\le C
\quad\text{for $1\le i,j,k\le N$.}$$
\end{lemma}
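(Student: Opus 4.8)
\emph{Plan.} Since $\mathcal{A}_k$ is assembled from the $2\times2$ blocks $\mathcal{A}_{i,j}$ ($N+1-k\le i,j\le N$) and $k\le N$ is a fixed finite number, it suffices to estimate the blocks individually and then run a perturbation argument. The strategy is to write $\mathcal{A}_k=D_k+E_k$, where $D_k=\diag(\mathcal{A}_{N+1-k,N+1-k},\dots,\mathcal{A}_{N,N})$ is the block-diagonal part and $E_k$ collects the off-diagonal blocks; I will show that $D_k$ is uniformly bounded and uniformly invertible while $E_k$ has operator norm that is small once $\eps\le\eps_0$ and $L\ge L_0$. All estimates are uniform in $t\in[0,T]$ because, by the hypotheses of Lemma~\ref{lem:modulation}, $c_i(t)$ stays within $\delta\eps^2$ of $c_{i,0}$ and $x_{i+1}(t)-x_i(t)\ge\eps^{-1}L$ throughout $[0,T]$; in particular $\eps_i:=\sqrt{6(c_i(t)-1)}$ is comparable to $k_i\eps$ uniformly.

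\emph{Diagonal blocks.} The secular identities \eqref{eq:secular1} and \eqref{eq:secular2} evaluate two of the four entries exactly, giving $\la\pd_xu_{c_i},J^{-1}\pd_xu_{c_i}\ra=0$ and $\la\pd_xu_{c_i},J^{-1}\pd_cu_{c_i}\ra=-\la\pd_cu_{c_i},J^{-1}\pd_xu_{c_i}\ra=c_i^{-1}\,dH(u_{c_i})/dc$, so that
\[
\mathcal{A}_{i,i}=\begin{pmatrix} -\eps^{-1}c_i^{-1}\dfrac{dH(u_{c_i})}{dc} & 0 \\[6pt] \eps^{2}\la\pd_cu_{c_i},J^{-1}\pd_cu_{c_i}\ra & \eps^{-1}c_i^{-1}\dfrac{dH(u_{c_i})}{dc}\end{pmatrix},
\]
whose determinant is $-\eps^{-2}c_i^{-2}\bigl(dH(u_{c_i})/dc\bigr)^2$. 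The rescaling weights $\eps^{-1},\eps^{-4},\eps^{2}$ in the definition of $\mathcal{A}_{i,j}$ are calibrated exactly so that these entries are $O(1)$: the self-similar expansions (P1) and (P4) translate into the amplitude/width scalings of $\pd_xu_{c_i}$ and $\pd_cu_{c_i}$, and $J^{-1}$ acts on these slowly varying profiles essentially as the antiderivative $\pd_x^{-1}$ (cf.\ \eqref{eq:J-1}). Concretely one finds $dH(u_{c_i})/dc$ comparable to $\eps$ and $\la\pd_cu_{c_i},J^{-1}\pd_cu_{c_i}\ra$ comparable to $\eps^{-2}$. Invoking (P3) (nonvanishing of $dH/dc$) and the $C^2$ dependence in (P1), the quantity $\eps^{-1}c_i^{-1}dH(u_{c_i})/dc$ is bounded above and bounded away from zero, so $\det\mathcal{A}_{i,i}$ is bounded away from $0$; hence each diagonal block, and therefore $D_k$, is uniformly bounded with uniformly bounded inverse.

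\emph{Off-diagonal blocks.} For $i\ne j$ the two solitary waves are centered at $x_i(t)$ and $x_j(t)$ with $|x_i(t)-x_j(t)|\ge\eps^{-1}L$. Each of $\pd_xu_c,\pd_cu_c,J^{-1}\pd_xu_c,J^{-1}\pd_cu_c$ decays exponentially with rate comparable to $\kappa(c_i)\sim k_i\eps$, and by \eqref{eq:J-1} the operator $J^{-1}$ preserves this exponential localization. Estimating the overlap integrals then produces a factor $e^{-\mu|x_i-x_j|}$ with $\mu\sim k_1\eps$, i.e.\ $e^{-\gamma k_1 L}$ for some $\gamma>0$; one checks that after multiplication by the $\eps$-weights this is exactly enough to cancel the powers of $\eps$, so that every off-diagonal block is $O(e^{-\gamma k_1 L})$ with no residual negative power of $\eps$. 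Consequently $\|E_k\|\le C e^{-\gamma k_1 L}$, which is bounded and can be made as small as we like by taking $L\ge L_0$ large.

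\emph{Conclusion and main obstacle.} The entrywise bound $|\mathcal{A}_{i,j}|\le C$ is immediate from the two preceding steps. For the inverse, write $\mathcal{A}_k=D_k(I+D_k^{-1}E_k)$; since $\|D_k^{-1}\|\le C$ and $\|E_k\|\le Ce^{-\gamma k_1 L}\le \tfrac12\|D_k^{-1}\|^{-1}$ for $L\ge L_0$, a Neumann series gives invertibility with $\|\mathcal{A}_k^{-1}\|\le 2\|D_k^{-1}\|\le C$, proving the claim. The main obstacle is the scaling analysis of the diagonal blocks: one must convert the self-similar expansions (P1), (P4) into sharp norm and inner-product scalings and, crucially, control $J^{-1}$ on the low-frequency profiles. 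The delicate point is that $J^{-1}$ is the discrete operator in \eqref{eq:J-1} rather than the continuum antiderivative used in the KdV computation, so I would justify the scalings by comparing $(e^{\pd}-1)^{-1}$ with $\pd_x^{-1}$ on the slowly varying solitary-wave profiles and absorbing the discretization error into the $O(\eps^2)$ remainders supplied by (P4).
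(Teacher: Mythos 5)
Your overall bound on the diagonal blocks is fine and matches the paper (the secular identities \eqref{eq:secular1}--\eqref{eq:secular2} give $\mathcal{A}_{ii}=B_1(c_i)$, and (P1), (P3), (P4) show $\eps^{-1}c_i^{-1}dH(u_{c_i})/dc$ is bounded above and below, so each diagonal block is uniformly invertible). But your treatment of the off-diagonal blocks contains a genuine error: it is \emph{not} true that ``$J^{-1}$ preserves exponential localization,'' and consequently it is false that every off-diagonal block is $O(e^{-\gamma k_1 L})$. The operator $J^{-1}$ in \eqref{eq:J-1} is a half-line summation (a discrete antiderivative), and $\pd_c u_c$ has nonzero mean in both components ($\la \pd_c r_c,1\ra$, $\la \pd_c p_c,1\ra\ne 0$), so $J^{-1}\pd_c u_{c}$ has a nondecaying constant tail on one side --- it belongs only to $l^2_{-a}$, as the paper notes after \eqref{eq:J-1}. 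By Claim \ref{cl:J-1u}, the cross inner product of two widely separated profiles through $J^{-1}$ converges to a product of means rather than to zero; concretely, for $i<j$ one has $\mathcal{A}_{i,j}=B_2(c_i,c_j)+O(e^{-k_i(\sigma\eps^3 t+L)})$ with $\eps^2\theta_3(c_i,c_j)\to -72/(k_ik_j)\ne 0$ by \eqref{eq:limtheta}. So your perturbation $E_k$ has operator norm $O(1)$ no matter how large $L$ is, the smallness hypothesis $\|E_k\|\le \tfrac12\|D_k^{-1}\|^{-1}$ fails, and the Neumann series around the block-diagonal part does not close.

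The correct structure, and the one the paper exploits, is block \emph{triangularity} rather than block diagonality: because the solitons are ordered, $x_i(t)-x_j(t)\ge \sigma\eps^2 t+\eps^{-1}L$ for $i>j$, and only the blocks in one triangle ($i>j$) are exponentially small, via Claims \ref{cl:intsize} and \ref{cl:J-1u}; the opposite triangle carries the $O(1)$ blocks $B_2(c_i,c_j)$. One then inverts the triangular principal part exactly --- and here the special nilpotent form $B_2\propto\begin{pmatrix}0&0\\ 1&0\end{pmatrix}$ makes all products of two or more $B_2$'s through $B_1^{-1}$ vanish, yielding the explicit inverse with single entries $B_3(c_i,c_j)=-B_1(c_i)^{-1}B_2(c_i,c_j)B_1(c_j)^{-1}$ --- and treats only the exponentially small triangle perturbatively. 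If you replace your splitting $D_k+E_k$ by ``triangular part plus exponentially small part'' and add the uniform-in-$\eps$ bounds on $B_1^{-1}$ and $B_2$ (which the paper obtains by computing $\theta_1,\theta_2,\theta_3$ exactly for the Toda potential and transferring to general $V$ via (P4)), your argument becomes the paper's proof.
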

\begin{lemma}
  \label{lem:FPUprb}
Suppose that $c_i(t)$ and $x_i(t)$ be as in Lemma \ref{lem:modulation}.
Then there exists a positive constant $C$ depending only on 
$k_1,\cdots,k_N$, $\eps_0$, $\delta$ and $L_0$ such that
$$\sup_{t\ge0}(\|P_k(t)\|_{B(l^2_{k_1\eps})}
+\eps^{-1}\|P_k(t)J\|_{B(l^2)})\le C
\quad\text{for $1\le k\le N$.}$$
\end{lemma}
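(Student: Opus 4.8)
The operator $P_k(t)$ is the finite-rank oblique projection onto $\spann\{\pd_xu_{c_i}(\cdot-x_i(t)),\pd_cu_{c_i}(\cdot-x_i(t)):N+1-k\le i\le N\}$ whose kernel is the annihilator, in the pairing $\la\cdot,J^{-1}\cdot\ra$, of the same family. The plan is to write $P_k(t)f=\sum_{N+1-k\le j\le N}(\alpha_j(f)\pd_xu_{c_j}+\beta_j(f)\pd_cu_{c_j})$ and to determine the coefficients from the $2k\times 2k$ linear system obtained by imposing the defining orthogonality against $J^{-1}\pd_xu_{c_i}$ and $J^{-1}\pd_cu_{c_i}$. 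One checks that, taking $(\eps\beta_j,\eps^4\alpha_j)$ as unknowns and multiplying the $\pd_c$-tested equations by $\eps^3$, the system matrix is exactly $\mathcal{A}_k$, with right-hand side whose $i$-th block is $(b_i^x,\eps^3b_i^c)$, where $b_i^x=\la f,J^{-1}\pd_xu_{c_i}\ra$ and $b_i^c=\la f,J^{-1}\pd_cu_{c_i}\ra$. Lemma~\ref{lem:ANbd} then reduces both estimates to (i) bounding these pairings and (ii) reconstructing $P_k(t)f$ from the coefficients, in the relevant norm.

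For the $l^2$-bound on $P_k(t)J$ I first record the scalings from (P1) and (P4): after the change of variables $\eps^{-2}u_c(\cdot/\eps)\to\sech^2$ one has $\|\pd_xu_{c_j}\|_{l^2}=O(\eps^{5/2})$ and $\|\pd_cu_{c_j}\|_{l^2}=O(\eps^{-1/2})$, while by \eqref{eq:secular1}--\eqref{eq:secular2} the diagonal blocks of $\mathcal{A}_k$ are $O(1)$ and invertible since $dH(u_c)/dc\ne0$ by (P3). The crucial structural point is that $J^{-1}\pd_cu_{c_i}$ is not in $l^2$ (it tends to a nonzero constant on the right), so $P_k(t)$ alone is not bounded on $l^2$; but $J^*=-J$ transfers $J$ onto the dual elements, $\la Jf,J^{-1}\pd_{x,c}u_{c_i}\ra=-\la f,\pd_{x,c}u_{c_i}\ra$, and $\pd_{x,c}u_{c_i}\in l^2$. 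Hence every component of the right-hand side is $O(\eps^{5/2}\|f\|_{l^2})$, Lemma~\ref{lem:ANbd} gives $|\eps^4\alpha_j|+|\eps\beta_j|=O(\eps^{5/2}\|f\|_{l^2})$, and reconstruction yields $\|P_k(t)Jf\|_{l^2}\le\sum_j(|\alpha_j|\,\|\pd_xu_{c_j}\|_{l^2}+|\beta_j|\,\|\pd_cu_{c_j}\|_{l^2})=O(\eps\|f\|_{l^2})$, which is exactly the $\eps$ compensated by the prefactor $\eps^{-1}$.

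The bound in $l^2_{k_1\eps}$ is the main difficulty. Pairing $f\in l^2_{k_1\eps}$ against $J^{-1}\pd_{x,c}u_{c_i}\in l^2_{-k_1\eps}$ (finite because $k_1\eps<2\kappa(c_i)$, and because the right-plateau of $J^{-1}\pd_cu_{c_i}$ is damped by $e^{-k_1\eps n}$) gives $|b_i^x|+\eps^3|b_i^c|=O(\eps^{3/2}e^{-k_1\eps x_i}\|f\|_{l^2_{k_1\eps}})$, the factor $e^{-k_1\eps x_i}$ arising from the centering at $x_i$. Reconstruction, on the other hand, brings in the opposite factor $e^{+k_1\eps x_j}$ through $\|\pd_{x,c}u_{c_j}\|_{l^2_{k_1\eps}}$. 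Using only $|\mathcal{A}_k^{-1}|\le C$ would mix the blocks and leave an apparent growth $e^{k_1\eps(x_j-x_i)}$, which is large when $x_j>x_i$. The remedy is to conjugate by $\tilde D=\diag(e^{k_1\eps x_i})$: the genuinely relevant matrix is $\tilde D\mathcal{A}_k\tilde D^{-1}$, and if this has uniformly bounded inverse then the weighted coefficients $e^{k_1\eps x_j}(\eps\beta_j,\eps^4\alpha_j)$ are controlled by the now-uniform weighted right-hand sides $e^{k_1\eps x_i}(b_i^x,\eps^3b_i^c)=O(\eps^{3/2}\|f\|_{l^2_{k_1\eps}})$, after which the factors $e^{\pm k_1\eps x_j}$ cancel term by term in the reconstruction to give $\|P_k(t)\|_{B(l^2_{k_1\eps})}\le C$.

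It therefore remains to bound $(\tilde D\mathcal{A}_k\tilde D^{-1})^{-1}$, whose $(i,j)$ block is $e^{k_1\eps(x_i-x_j)}\mathcal{A}_{i,j}$. The diagonal is unchanged and invertible as above, so I expect to conclude by a Neumann perturbation once every off-diagonal block is shown to be $O(e^{-k_1L})$, using the separation $\eps|x_i-x_j|\ge L$. The delicate step --- which I anticipate to be the crux of the whole lemma --- is to identify which off-diagonal pairings fail to decay with the separation. Because $\pd_xu_c$ has zero sum, $J^{-1}\pd_xu_{c_i}$ decays on both sides and all $\pd_x$-tested couplings are exponentially small in $\eps|x_i-x_j|$ at a rate above $k_1\eps$; the only $O(1)$ coupling is the constant right-tail of $J^{-1}\pd_cu_{c_i}$ tested against $\pd_cu_{c_j}$, and this occurs precisely for $x_j>x_i$, i.e. in the strictly upper-triangular, rightward direction. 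Conjugation by $\tilde D$ multiplies such a block by $e^{k_1\eps(x_i-x_j)}\le e^{-k_1L}$, so it too becomes small; the asymmetry of $J^{-1}$ (summation from $-\infty$) together with the zero-sum property of $\pd_xu_c$ is exactly what confines the non-decaying couplings to the weight-suppressed direction, and verifying this carefully is where most of the work lies.
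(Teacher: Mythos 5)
Your proposal is correct, and for the weighted bound it takes a genuinely different route from the paper. The common core is the same: your rescaled unknowns $(\eps\beta_j,\eps^4\alpha_j)$ with the $\pd_c$-tested equations multiplied by $\eps^3$ reproduce exactly the paper's normalization in \eqref{eq:projformula}, and your treatment of $\|P_k(t)J\|_{B(l^2)}=O(\eps)$ via $J^*=-J$ and the scalings of Claim \ref{cl:ucsize} is precisely what the paper means by ``follows immediately''; your remark that $P_k(t)$ alone is unbounded on $l^2$ because of the plateau of $J^{-1}\pd_cu_{c_i}$ is also correct. For the $l^2_{k_1\eps}$ bound, however, the paper does not conjugate: it expands $P_k(t)f$ by Cramer's rule and shows the $(m,n)$ cofactor of $\mathcal{A}_k$ decays like $e^{-k_1\eps(x_j-x_i)}$ for $i\le j$, by telescoping the rate-$k_1\eps$ decay of lower-triangular entries along permutations, which exactly cancels the growth $e^{k_1\eps(x_j-x_i)}$ from pairing at $x_i$ and reconstructing at $x_j$. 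Your conjugation by $\diag(e^{k_1\eps x_i})$ plus a Neumann series around the block diagonal $B_1(c_i)$ works, and your structural diagnosis matches Lemma \ref{lem:ANbd} exactly: the only $O(1)$ couplings are the $(2,1)$ entries $\eps^2\theta_3(c_i,c_j)$ of $B_2$ in the $i<j$ blocks (plateau of $J^{-1}\pd_cu_{c_i}$ against $\pd_cu_{c_j}$), killed by the weight, while zero-sum of $\pd_xu_c$ removes the plateau elsewhere. The one quantitative caveat: your Neumann step needs the lower-triangular ($i>j$) entries to decay \emph{strictly} faster than $k_1\eps$, since conjugation multiplies them by $e^{+k_1\eps(x_i-x_j)}$; the bound $O(e^{-k_j\eps|x_i-x_j|})$ stated in the proof of Lemma \ref{lem:ANbd} gives only $O(1)$ when $j=1$. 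The sharper rate $\approx 2k_j\eps\ge 2k_1\eps$ is genuinely available (tail overlap, cf.\ Claim \ref{cl:J-1u} with $a$ near $2\kappa(c_j)$), so your argument closes; by contrast the cofactor route tolerates rate exactly $k_1\eps$, because boundedness of $P_k$ only requires the compensated products to be $O(1)$, not small. Your approach buys a cleaner linear-algebra argument with no determinant expansion, at the cost of needing---and correctly anticipating---this stronger tail decay.
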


\begin{proof}[Proof of Lemma \ref{lem:ANbd}]
Let $\theta_2(c)=\la \pd_c p_{c},1\ra\la \pd_cr_{c},1\ra$,
\begin{align*}
& \theta_3(c_i,c_j)=\la \pd_cp_{c_i},1\ra \la \pd_c r_{c_j},1\ra
+\la \pd_cp_{c_j},1\ra\la \pd_cr_{c_i},1\ra,\\
& \sigma_3=\begin{pmatrix}1 & 0 \\ 0 & -1\end{pmatrix},
\quad B_1(c)=-(c\eps)^{-1}\theta_1(c)\sigma_3+\eps^2\theta_2(c)
\begin{pmatrix}0 & 0\\ 1 & 0\end{pmatrix},
\\ & B_2(c_i,c_j)=\eps^2\theta_3(c_i,c_j)
\begin{pmatrix}0 & 0\\ 1 & 0\end{pmatrix},\quad
 B_3(c_i,c_j)=-B_1(c_i)^{-1}B_2(c_i,c_j)B_1(c_j)^{-1}.
\end{align*}
By \eqref{eq:secular1} and \eqref{eq:secular2},
we have $\mathcal{A}_{ii}=B_1(c_i)$.
Since
\begin{align*}
  x_i(t)-x_j(t)\ge & x_i(0)-x_j(0)
+\int_0^t(\dot{x}_i(s)-\dot{x}_j(s))ds
\\ \ge & \eps^{-1}L+(c_{i,0}-c_{j,0}-2\delta\eps^2)t
\\ \ge & \sigma\eps^2t+\eps^{-1}L\quad\text{for $i>j$,}
\end{align*}
it follows from Claims \ref{cl:intsize} and \ref{cl:J-1u} that
$$
\mathcal{A}_{i,j}=\left\{
  \begin{aligned}
&   B_2(c_i,c_j)+O(e^{-k_i(\sigma\eps^3t+L)}) &\text{if $i<j$,}\\
&   O(e^{-k_j\eps|x_i-x_j|}) &\text{if $i>j$.}
  \end{aligned}\right.$$
By a simple computation,
\begin{align*}
\mathcal{A}_N^{-1}=&
  \begin{pmatrix}
    B_1(c_1)^{-1} & B_3(c_1,c_2)& \cdots     &        & B_3(c_1,c_k)\\
           & B_1(c_2)^{-1}    & B_3(c_2,c_3) &        & \vdots\\
           &           & \ddots     & \ddots &  \\
           &     &            & B_1(c_{k-1})^{-1} & B_3(c_{k-1},c_k)\\
         O & & & & B_1(c_k)^{-1}
  \end{pmatrix}
\\ &+O(e^{-k_1(\sigma\eps^3t+L)}).
\end{align*}
\par
Next we prove that $B_1(c_i)$, $B_1(c_i)^{-1}$ and $B_2(c_i,c_j)$ are uniformly
bounded in $\eps$ in the case where
$V(r)=e^{r}-1-r$ (the Toda lattice).
By \cite{To},
\begin{align*}
& q_c(x)=-\log\frac{\cosh\{\kappa(x-1)\}}{\cosh\kappa x},\\
& p_c(x)=-c\pd_xq_c(x),\quad r_c(x)=q_c(x+1)-q_c(x),\\
& H(u_c)=\sinh2\kappa-2\kappa.
\end{align*}
In view of the above, we have $\la r_c,1\ra=2\kappa$,
$\la p_c,1\ra=-2\kappa c$ and
\begin{equation}
  \label{eq:limtheta}
\lim_{\eps\downarrow0}(c_i\eps)^{-1}\theta_1(c_i)=12k_i,
\quad \lim_{\eps\downarrow0}\eps^2\theta_2(c_i)=-\frac{36}{k_i^2},
\quad \lim_{\eps\downarrow0}\eps^2\theta_3(c_i,c_j)=-\frac{72}{k_ik_j}.
\end{equation}
Since the Toda lattice equation satisfies (H1), its $1$-soliton
solution satisfies (P4) as well as solitary wave solutions of \eqref{eq:FPU}.
Thus we see that \eqref{eq:limtheta} holds for \eqref{eq:FPU}
with nonlinearity satisfying (H1) and that  $B_1(c_i)$, $B_1(c_i)^{-1}$ and
$B_2(c_i,c_j)$ are uniformly bounded in $\eps\in(0,\eps_0)$.
\end{proof}
\begin{proof}[Proof of Lemma \ref{lem:FPUprb}]
By the definition of $P_k(t)$ and Cramer's rule,
\begin{equation}
  \label{eq:projformula}
  \begin{split}
& P_k(t)f=(\eps^3\pd_cu_{c_j},\pd_xu_{c_j})_{j=N+1-k,\cdots,N\rightarrow}
\mathcal{A}_k^{-1}
\begin{pmatrix}  \eps^{-4}\la f,J^{-1}\pd_xu_{c_i}\ra
\\ \eps^{-1}\la f,J^{-1}\pd_cu_{c_i}\ra\end{pmatrix}_{i=N+1-k,\cdots,N\downarrow}
\\=& 
 \frac{1}{|\mathcal{A}_k|}\sum_{j=1}^N \left\{
    \begin{vmatrix}
\mathcal{A}_{11} & \ldots &\Delta_{1j}^1& \ldots& \mathcal{A}_{1k}\\
\vdots& & \vdots & & \vdots\\
\mathcal{A}_{k1} & \ldots &\Delta_{kj}^1& \ldots& \mathcal{A}_{kk}
    \end{vmatrix}
+    \begin{vmatrix}
\mathcal{A}_{11} & \ldots &\Delta_{1j}^2& \ldots& \mathcal{A}_{1k}\\
\vdots& & \vdots & & \vdots\\
\mathcal{A}_{k1} & \ldots &\Delta_{kj}^2& \ldots& \mathcal{A}_{kk}
    \end{vmatrix}\right\},.    
  \end{split}
\end{equation}
where
\begin{align*}
& \Delta_{ij}^1=
\begin{pmatrix} \eps^{-1} \la f,J^{-1}\pd_xu_{c_i}\ra \pd_cu_{c_j}
& \eps^{-4}\la \pd_xu_{c_j},J^{-1}\pd_xu_{c_i}\ra \\
\eps^2 \la f,J^{-1}\pd_cu_{c_i}\ra\pd_cu_{c_j} &
\eps^{-1}\la \pd_xu_{c_j},J^{-1}\pd_cu_{c_i}\ra  
\end{pmatrix},
\\ &
\Delta_{ij}^2=
\begin{pmatrix} \eps^{-1}\la \pd_cu_{c_j},J^{-1}\pd_xu_{c_i}\ra 
&  \eps^{-4}\la f,J^{-1}\pd_xu_{c_i}\ra \pd_xu_{c_j} \\ 
\eps^2\la \pd_cu_{c_j},J^{-1}\pd_cu_{c_i}\ra
&  \eps^{-1} \la f,J^{-1}\pd_cu_{c_i}\ra\pd_xu_{c_j}
\end{pmatrix}.
\end{align*}
We have
\begin{align*}
& \|\text{the first column of }\Delta_{ij}^1\|_{l^2_{k_1\eps}}
 +\|\text{the second column of }\Delta_{ij}^2\|_{l^2_{k_1\eps}}
\\ \lesssim & \eps^{-4}
(\|\pd_xu_{c_j}\|_{l^2_{k_1\eps}}+\eps^3\|\pd_cu_{c_j}\|_{l^2_{k_1\eps}})
(\|J^{-1}\pd_xu_{c_j}\|_{l^2_{k_1\eps}}+\eps^3\|J^{-1}\pd_cu_{c_j}\|_{l^2_{k_1\eps}})
\|f\|_{l^2_{k_1\eps}}
\\ \lesssim & e^{k_1\eps(x_j-x_i)}\|f\|_{l^2_{k_1\eps}}.
\end{align*}
On the other hand, for $m=2i-1$, $2i$,  and $n=2j-1$, $2j$,
the $(m,n)$ cofactor of $\mathcal{A}_N$ decays as $e^{-k_1\eps(x_j-x_i)}$
if $i\le j$. Indeed, since the components of $\mathcal{A}_{i',j'}$  decays
as $e^{-k_1\eps|x_{i'}-x_{j'}|}$ if $i'\ge j'$, the $(m,n)$ cofactor of
$\mathcal{A}_N$ decays as 
$$\max_{\tau\in \mathfrak{S}}
\prod_{[(\tau(k)+1)/2]>[(k+1)/2]}\exp\left(-
(k_1\eps(x_{[(\tau(k)+1)/2]}-x_{[(k+1)/2]})\right)\le e^{-k_1\eps(x_i-x_j)},
$$
where $\mathfrak{S}$ is a set of all permutations from
$\{1,\cdots,m-1,m+1,\cdots,2N\}$ to $\{1,\cdots,n-1,n+1,\cdots,2N\}$.
Thus we conclude that $P_k(t)$ is uniformly bounded in $l^2_{k_1\eps}$.
We see that $\|P_kJ\|_{B(l^2)}=O(\eps)$ follows immediately from
\eqref{eq:projformula} and Claim \ref{cl:ucsize}.

\end{proof}
To prove Lemma \ref{lem:modulation}, we start with the following:
\begin{lemma}
  \label{lem:modulationpre}
Let $u(t)$, $v_1(t)$, $c_i(t)$ and $x_i(t)$ $(i=1,\cdots,N)$ be as in
Lemma \ref{lem:modulation}.
Then for $t\in[0,T]$,
\begin{equation}
\label{eq:modeq2}
\begin{split}
& \sum_{i=1}^N(\eps^{-3}|\dot{c}_i|+|\dot{x}_i-c_i|)
\\ \lesssim & \eps^{\frac12}\left(\|v_1\|_{W(t)}
+\sum_{k=1}^N\|v_{2k}\|_{W(t)}\right)
+\eps^2e^{-k_1(\sigma\eps^3t+L)},
\end{split}
\end{equation}
\begin{equation}
\label{eq:abbound2}
  \begin{split}
& \|l_k\|_{l^2}+\|l_k\|_{X_k(t)}
\\ \lesssim & \|v_{2k}(t)\|_{X_k(t)}\left\{\eps^{\frac32}
\left(\|v_1(t)\|_{W(t)}+\sum_{1\le k\le N}\|v_{2k}(t)\|_{W(t)}\right)
+\eps^{3}e^{-k_1(\sigma\eps^3t+L)}\right\}.
  \end{split}
\end{equation}
\end{lemma}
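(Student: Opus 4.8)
The plan is to read off the modulation equations from the orthogonality conditions \eqref{eq:orth1}--\eqref{eq:orth2} that pin down $c_i(t)$ and $x_i(t)$, treating them exactly as the analogous conditions \eqref{eq:orthv2k} were treated in Lemma \ref{lem:alphabeta}. I would differentiate $\la v_{2N}(t),J^{-1}\pd_xu_{c_j}\ra=0$ and $\la v_{2N}(t),J^{-1}\pd_cu_{c_j}\ra=0$ in $t$ and substitute \eqref{eq:v2N} for $\pd_tv_{2N}$. The only terms carrying $\dot c_i$ and $\dot x_i-c_i$ at leading order are those produced by the modulation vector $l$, and pairing $l$ with $J^{-1}\pd_xu_{c_j}$ and $J^{-1}\pd_cu_{c_j}$ reproduces the matrix of Lemma \ref{lem:ANbd}. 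Collecting terms yields a linear system
\[
\widetilde{\mathcal{A}}_N\begin{pmatrix}\eps^{-3}\dot c_i\\ \dot x_i-c_i\end{pmatrix}_{1\le i\le N}=G,
\]
where $\widetilde{\mathcal{A}}_N$ agrees with $\mathcal{A}_N$ up to transposition and up to a perturbation of size $O(\|v_{2N}\|_{W(t)})$ coming from terms such as $\dot c_j\la v_{2N},J^{-1}\pd_c^2u_{c_j}\ra$ and $(\dot x_j-c_j)\la v_{2N},J^{-1}\pd_x^2u_{c_j}\ra$. Since $|\mathcal{A}_N^{-1}|\le C$ by Lemma \ref{lem:ANbd} and the perturbation is $O(\delta\eps^{3/2})$, the coefficient matrix stays boundedly invertible for $\eps,\delta$ small, so \eqref{eq:modeq2} reduces to bounding the entries of $G$.

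Before estimating $G$ I would carry out the same algebraic reduction as in Lemma \ref{lem:alphabeta}: writing $\la JH''(U_N)v_{2N},J^{-1}\pd_xu_{c_j}\ra=-\la v_{2N},H''(U_N)\pd_xu_{c_j}\ra$ via $J^*=-J$, adding and subtracting $H''(u_{c_j})$, and invoking the secular identities \eqref{eq:secularmode} together with the orthogonality \eqref{eq:orth1}--\eqref{eq:orth2}. All contributions involving $H''(u_{c_j})$ then either combine with the $\dot x_j,\dot c_j$ derivatives of the profiles into the $(\dot x_j-c_j)$- and $\dot c_j$-weighted terms already absorbed on the left, or vanish by orthogonality. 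Hence $G$ retains only three genuinely new types of inner products: the soliton-interaction remainder, from the part of $R_N$ and of $\sum_{k<N}P_k(t)JR_k$ analogous to $R_2$; the quadratic remainder, from the part analogous to $R_1$; and the localized difference $\la v_{2N},(H''(U_N)-H''(u_{c_j}))\pd_\bullet u_{c_j}\ra$.

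The estimates would then follow from the soliton scalings furnished by (P4) --- $\|\pd_xu_{c_j}\|_{l^2}=O(\eps^{5/2})$, $\|\pd_cu_{c_j}\|_{l^2}=O(\eps^{-1/2})$, $\|\pd_c^2u_{c_j}\|_{l^2}=O(\eps^{-5/2})$, with amplitude $O(\eps^2)$ --- and from the separation $x_{i+1}(t)-x_i(t)\ge\sigma\eps^2t+\eps^{-1}L$ already used in Lemma \ref{lem:ANbd}. Since $H''(U_N)-H''(u_{c_j})=O(\eps^2e^{-k_1(\sigma\eps^3t+L)})$ on the support of $\pd_\bullet u_{c_j}$, the interaction and localized-difference inner products become $O(\eps^2e^{-k_1(\sigma\eps^3t+L)})$ after the $\eps^{-1}$/$\eps^{-4}$ normalizations built into $\widetilde{\mathcal{A}}_N$, producing the exponential term in \eqref{eq:modeq2}. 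The quadratic remainder is bounded by $\|v\|_{W(t)}^2$ in the relevant weighted pairing, and after the $\eps^{-1}$ normalization of the $\pd_c$-row contributes $\eps^{-1}\|v\|_{W(t)}^2\le\delta\eps^{1/2}\|v\|_{W(t)}$ thanks to the a priori bound $\|v\|_{W(t)}\le\delta\eps^{3/2}$; this is the origin of the leading term $\eps^{1/2}\|v\|_{W(t)}$. The coupling $\la\sum_{k<N}(P_k(t)JR_k-l_k),J^{-1}\pd_\bullet u_{c_j}\ra$ is of strictly higher order, using boundedness of $P_k(t)J$ from Lemma \ref{lem:FPUprb} and the bound \eqref{eq:abbound2}. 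For \eqref{eq:abbound2} I would write $l_k=\sum_j(\alpha_{j,k}\pd_cu_{c_j}+\beta_{j,k}\pd_xu_{c_j})$ with $(\eps^{-3}\alpha_{j,k},\beta_{j,k})=\mathcal{A}_k^{-1}F_{j,k}$, estimate $F_{j,k}$ by the same scheme (its $H''(U_k)-H''(u_{c_j})$ part giving the $\eps^3e^{-k_1(\sigma\eps^3t+L)}$ contribution and its $\dot c_j,\dot x_j-c_j$ part, controlled through \eqref{eq:modeq2}, the $\eps^{3/2}\|v\|_{W(t)}$ contribution --- the latter arising precisely from the $\dot c_j\la v_{2k},J^{-1}\pd_c^2u_{c_j}\ra$ term because of the large norm of $\pd_c^2u_{c_j}$), each piece carrying a factor $\|v_{2k}\|_{X_k(t)}$, then apply $|\mathcal{A}_k^{-1}|\le C$ and the norms of $\pd_cu_{c_j},\pd_xu_{c_j}$.

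The main obstacle is twofold. First, \eqref{eq:modeq2} and \eqref{eq:abbound2} are mutually coupled: $l_k$ enters the right-hand side $G$ of the modulation system while $\dot c_j,\dot x_j-c_j$ enter $F_{j,k}$. I would resolve this by running the two estimates simultaneously, which closes because each such cross-dependence carries an extra small factor --- $\|l_k\|$ is quadratically small, and the $\dot c_j,\dot x_j-c_j$ contributions to $F_{j,k}$ appear multiplied by $\|v_{2k}\|_{X_k(t)}$ --- so that for $\eps,\delta$ small a single absorption argument suffices. Second, and more delicate, is the uniform bookkeeping of powers of $\eps$ against the two weight families, the two-sided $W(t)$ and the one-sided $X_k(t)$: one must check that after the normalizations built into $\mathcal{A}_N$ and $F_{j,k}$ every inner product lands at exactly the claimed order, in particular that the quadratic term yields $\eps^{1/2}\|v\|_{W(t)}$ rather than a lower power of $\eps$. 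This step rests squarely on the $C^2$-in-$\eps$ asymptotics of $u_c$ in the weighted Sobolev space of (P1) and (P4).
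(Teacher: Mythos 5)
Your overall architecture is the same as the paper's: differentiate the orthogonality conditions \eqref{eq:orth1}--\eqref{eq:orth2}, substitute \eqref{eq:v2N} and \eqref{eq:secularmode}, obtain the linear system \eqref{eq:modeq1} with coefficient matrix $\mathcal{A}_N-\delta\mathcal{A}$ plus the $\widetilde{\mathcal{A}}_k$-couplings to $(\alpha_{j,k},\beta_{j,k})$, invert by Lemma \ref{lem:ANbd}, and close the two mutually coupled estimates by a simultaneous absorption --- which is precisely the paper's substitution of \eqref{eq:abbound} into \eqref{eq:modeq1} and then of \eqref{eq:modeq2} back into \eqref{eq:abbound}. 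However, there is a genuine gap in your tally of the right-hand side $G$: the claim that after the secular-mode reduction every contribution involving $H''(u_{c_j})$ either joins the $(\dot c_j,\dot x_j-c_j)$-weighted perturbation or \emph{vanishes by orthogonality}, so that only the quadratic remainder, the pure soliton-interaction remainder, and the $v_{2N}$-localized difference survive. This omits the cross terms of type $R_{k3}=H'(U_k+w_{k-1})-H'(U_{k-1}+w_{k-1})-H'(U_k)+H'(U_{k-1})$, which by \eqref{eq:Rk} satisfy $|R_{k3}|\lesssim |u_{c_{N+1-k}}||w_{k-1}|$: they are \emph{linear} in $w_{k-1}=v_1+\sum_{i<k}v_{2i}$, carry no exponential smallness, and are killed by no orthogonality --- $v_1$ satisfies none at all, and \eqref{eq:orthv2k} makes $v_{2k}$ orthogonal only to $J^{-1}\pd_xu_{c_i}$, $J^{-1}\pd_cu_{c_i}$ for $i\ge N+1-k$, not to weighted vectors such as $(H''(u_{c_j})-I)\pd_xu_{c_j}$.

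This matters because, in the paper's accounting (\eqref{eq:nonest3}), exactly these terms dominate: $|\la R_{k3},\pd_xu_{c_i}\ra|\lesssim \eps^{9/2}\|w_{k-1}\|_{W(t)}$, which after the $\eps^{-4}$ normalization is $\eps^{1/2}\|w_{k-1}\|_{W(t)}$ with an $O(1)$ constant --- they, and not the quadratic remainder, are the true source of the linear term in \eqref{eq:modeq2}. Your derivation instead manufactures the linear term from the quadratic one via the a priori bound ($\eps^{-1}\|v\|_{W(t)}^2\le\delta\eps^{1/2}\|v\|_{W(t)}$), so your final inequality coincidentally matches the lemma while a term of exactly that leading order has been dropped without justification. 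The gap is repairable: once $R_{k3}$ is restored it is estimated directly from Claim \ref{cl:ucsize}, and since its size is precisely $\eps^{1/2}\|w\|_{W(t)}$ the absorption argument still closes as in the paper. A minor additional slip: after the $\eps^{-1},\eps^{-4},\eps^2$ normalizations built into the matrix, the perturbation satisfies $|\delta\mathcal{A}|\lesssim\eps^{-3/2}\|v_{2N}\|_{X_N(t)}=O(\delta)$, not $O(\delta\eps^{3/2})$; bounded invertibility for small $\delta$ still holds, but the $\eps$-bookkeeping you flag as delicate is indeed off by $\eps^{-3/2}$ at this step.
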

\begin{proof}
Differentiating \eqref{eq:orthv2k} for $k=N$ with respect to $t$ and
 substituting
\eqref{eq:v2N} and \eqref{eq:secularmode} into the resulting equation, we have
\begin{equation}
  \label{eq:modeqa}
  \begin{split}
&  \frac{d}{dt}\la v_{2N},J^{-1}\pd_xu_{c_j}(\cdot-x_j(t))\ra
\\=& \la \pd_tv_{2N}, J^{-1}\pd_xu_{c_j}\ra
-\dot{x}_j\la v_{2N},J^{-1}\pd_x^2u_{c_j}\ra
+\dot{c}_j\la v_{2N},J^{-1}\pd_c\pd_xu_{c_j}\ra
\\=&
\la l-\sum_{1\le k\le N-1}l_k, J^{-1}\pd_xu_{c_j}\ra
-\la v_{2N},(H''(U_N)-H''(u_{c_j}))\pd_xu_{c_j}\ra \\ &
-(\dot{x}_j-c_j)\la v_{2N},J^{-1}\pd_x^2u_{c_j}\ra
+\dot{c}_j\la v_{2N},J^{-1}\pd_c\pd_xu_{c_j}\ra
+\sum_{k~1}^N\la P_kJR_k,\pd_xu_{c_j}\ra
\\=&0,    
  \end{split}
\end{equation}
and
\begin{equation}
  \label{eq:modeqb}
  \begin{split}
&  \frac{d}{dt}\la v_{2N},J^{-1}\pd_cu_{c_j}(\cdot-x_j(t))\ra
\\=& \la \pd_tv_{2N}, J^{-1}\pd_cu_{c_j}\ra
 -\dot{x}_j\la v_{2N},J^{-1}\pd_c\pd_xu_{c_j}\ra
+\dot{c}_j\la v_{2N},J^{-1}\pd_c^2u_{c_j}\ra
\\=&
\la l-\sum_{1\le k\le N-1}l_k, J^{-1}\pd_cu_{c_j}\ra
-\la v_{2N},(H''(U_N)-H''(u_{c_j}))\pd_cu_{c_j}\ra\\ &
-(\dot{x}_j-c_j)\la v_{2N},J^{-1}\pd_c\pd_xu_{c_j}\ra
+\dot{c}_j\la v_{2N},J^{-1}\pd_c^2u_{c_j}\ra
+\sum_{k~1}^N\la P_kJR_k,\pd_cu_{c_j}\ra\\=&0.
  \end{split}
\end{equation}
By \eqref{eq:modeqa}, \eqref{eq:modeqb} and \eqref{eq:projformula},
\begin{equation}
  \label{eq:modeq1}
\begin{split}
& (\mathcal{A}_N-\delta\mathcal{A})
\begin{pmatrix}\eps^{-3}\dot{c}_i\\ c_i-\dot{x}_i
\end{pmatrix}_{i=1,\cdots,N\downarrow}
+  \sum_{1\le k\le N-1}
\widetilde{\mathcal{A}}_k\begin{pmatrix}\eps^{-3}\alpha_{j,k}\\ \beta_{j,k}
\end{pmatrix}_{j=N+1-k,\cdots,N\downarrow}
\\ & +\wR_1+\wR_2=0,
\end{split}  
\end{equation}
where
\begin{align*}
& \widetilde{\mathcal{A}}_k=
(\mathcal{A}_{i,j})_{\substack{1\le i \le N\downarrow\\
 N+1-k\le j\le N\rightarrow}},\quad 
\delta\mathcal{A}=\diag(\delta\mathcal{A}_i)_{1\le i\le N},
\\ &
\delta\mathcal{A}_i=
\begin{pmatrix} \eps^{-1}\la v_{2N} ,J^{-1}\pd_c\pd_xu_{c_i}\ra
& \eps^{-4}\la v_{2N}, J^{-1}\pd_x^2u_{c_i}\ra \\ 
\eps^2\la v_{2N}, J^{-1}\pd_c^2u_{c_i}\ra
& \eps^{-1}\la v_{2N}, J^{-1}\pd_c\pd_xu_{c_i}\ra \end{pmatrix},
\\ &
\wR_1=\sum_{k=1}^N\widetilde{\mathcal{A}}_k\mathcal{A}_k^{-1}
\begin{pmatrix}
\eps^{-4}\la R_k, \pd_xu_{c_i}\ra
 \\
\eps^{-1}\la R_k, \pd_cu_{c_i}\ra
\end{pmatrix}_{i=N+1-k,\cdots,N\downarrow},
\\ & \wR_2= 
\begin{pmatrix} \eps^{-4}\la v_{2N},(H''(U_N)-H''(u_{c_i}))\pd_xu_{c_i}\ra  \\
\eps^{-1}\la v_{2N},(H''(U_N)-H''(u_{c_i}))\pd_cu_{c_i}\ra
\end{pmatrix}_{1\le i\le N\downarrow}.
\end{align*}
Since $\|J^{-1}\|_{l^2_{-k_1\eps}}=O(\eps^{-1})$ and
$x_i(t)\ge x_1(t)$ for any $i\ge1$,
\begin{align*}
& |\delta\mathcal{A}_i| \\ \lesssim & \|v_{2N}(t)\|_{X_N(t)}
e^{k_1\eps x_1(t)}(\eps^{-4}\|\pd_x^2u_{c_i}\|_{l^2_{-k_1\eps}}
+\eps^{-1}\|\pd_x\pd_cu_{c_i}\|_{l^2_{-k_1\eps}}
+\eps^2\|\pd_c^2u_{c_i}\|_{l^2_{-k_1\eps}})
\\ \lesssim & \eps^{-\frac32}\|v_{2N}(t)\|_{X_N(t)}
\end{align*}
follows from Claim \ref{cl:ucsize}. 

Let $R_k=R_{k1}+R_{k2}+R_{k3}$ and
\begin{align*}
& R_{k1}=H'(U_k+w_k)-H'(U_k+w_{k-1})-H''(U_k)v_{2k},\\
& R_{k2}=H'(U_k)-H'(U_{k-1})-H'(u_{c_{N+1-k}}),\\
& R_{k3}=H'(U_k+w_{k-1})-H'(U_{k-1}+w_{k-1})-H'(U_k)+H'(U_{k-1}).
\end{align*}
Then by the mean value theorem,
\begin{equation}
  \label{eq:Rk}
  \begin{split}
& |R_{k1}|\lesssim (|w_{k-1}|+|v_{2k}|)|v_{2k}|,\quad
|R_{k2}|\lesssim |u_{c_{N+1-k}}||U_{k-1}|,\\
& |R_{k3}|\lesssim  |u_{c_{N+1-k}}||w_{k-1}|.
  \end{split}
\end{equation}
It follows from Claim \ref{cl:ucsize} that
 \begin{equation}
   \label{eq:nonest1}
\begin{split}
 |\la R_{k1}, \pd_xu_{c_i}\ra|  \lesssim & 
\eps^3\left(\|v_1\|_{W(t)}+\sum_{i=1}^k\|v_{2i}\|_{W(t)}\right)\|v_{2k}\|_{W(t)},
\\ 
|\la R_{k1}, \pd_cu_{c_i}\ra| \lesssim &
\left(\|v_1\|_{W(t)}+\sum_{i=1}^k\|v_{2i}\|_{W(t)}\right)\|v_{2k}\|_{W(t)}.
\end{split}   
 \end{equation}
By Claims \ref{cl:ucsize} and \ref{cl:intsize},
\begin{align}
  \label{eq:nonest2}
& |\la R_{k2},  \pd_xu_{c_i}\ra|+\eps^3|\la R_{k2},  \pd_cu_{c_i}\ra|
\lesssim \eps^6e^{-k_1(\sigma\eps^3t+L)},\\
  \label{eq:nonest3}
& |\la R_{k3}, \pd_xu_{c_i}\ra|+\eps^3|\la R_{k3}, \pd_cu_{c_i}\ra|
\lesssim \eps^{\frac92}\|w_{k-1}(t)\|_{W(t)}.
\end{align}
Thus we have
\begin{equation}
  \label{eq:nonest4}
|\wR_1|\lesssim \eps^{\frac12}\|v_1(t)\|_{W(t)}+\sum_{k=1}^N\|v_{2k}\|_{W(t)}
+\eps^2e^{-k_1(\sigma\eps^3t+L)}.  
\end{equation}
By Claims \ref{cl:ucsize}, \ref{cl:intsize} and \ref{cl:4}, 
\begin{equation}
  \label{eq:nonest5}
\wR_2= O(\eps^{\frac12}\|v_{2N}\|_{W(t)}e^{-k_1(\sigma\eps^3t+L)}).  
\end{equation}
In view of the definition of $F_{j,k}$,
\begin{equation}
  \label{eq:abbound3}
|F_{j,k}|\lesssim \eps^{-\frac32}e^{k_1\eps(x_{N+1-k}-x_j)}\|v_{2k}\|_{X_k(t)}
(\eps^2e^{-k_{N+k-1}(\sigma\eps^3t+L)}+\eps^{-3}|\dot{c}_j|+|\dot{x}_j-c_j|),
\end{equation}
and it follows from \eqref{eq:orthv2k3}, \eqref{eq:abbound3},
Lemma \ref{lem:ANbd} and its proof that
 \begin{equation}
   \label{eq:abbound}
   \begin{split}
& \sum_{j=N+1-k}^N e^{k_1\eps (x_j-x_{N+1-k})}(\eps^{-3}|\alpha_{j,k}|+|\beta_{j,k}|)
\\ \lesssim & \eps^{-\frac32}\|v_{2k}\|_{X_k(t)}
\{\eps^2e^{-k_{N+k-1}(\sigma\eps^3t+L)}+\sum_{j=N+1-k}^N
(\eps^{-3}|\dot{c}_j|+|\dot{x}_j-c_j|)\}.
   \end{split}
 \end{equation}
Combining \eqref{eq:modeq1}, \eqref{eq:nonest4}, \eqref{eq:nonest5} and
\eqref{eq:abbound}, we obtain \eqref{eq:modeq2}.
Moreover, since
$$\widetilde{\mathcal{A}}_k\mathcal{A}_k^{-1}=E_k+O(e^{-k_1(\sigma\eps^3t+L)}),
\quad E_k=(\delta_{i+k-N,j})_{\substack{i=1,\cdots,N\downarrow\\ j=1,\cdots,k\rightarrow}},$$
we have
\begin{equation}
  \label{eq:modeq1b}
  \begin{split}
& \left\{\mathcal{A}_N +O\left(\sum_{1\le k\le N}
\eps^{-\frac32}\|v_{2k}\|_{X_k(t)}\right)\right\}
\begin{pmatrix}\eps^{-3}\dot{c}_i\\ c_i-\dot{x}_i
\end{pmatrix}_{1\le i\le N\downarrow}
\\=& 
\sum_{k=1}^N E_k
\begin{pmatrix}
\eps^{-4}\la R_{k3}, \pd_xu_{c_i}\ra\\\eps^{-1}\la R_{k3}, \pd_cu_{c_i}\ra
\end{pmatrix}_{i=N+1-k,\cdots,N\downarrow}
\\ &+O\left(\eps^{-1}\left(\|v_1\|_{W(t)}^2+\sum_{1\le k\le N}\|v_{2k}\|_{X_k(t)}^2
+\eps^{3}e^{-k_1(\sigma\eps^3t+L)}\right)\right).
  \end{split}
\end{equation}
Substituting \eqref{eq:modeq2} into \eqref{eq:abbound},
we have \eqref{eq:abbound2}. Thus we complete the proof.
\end{proof}

The right hand side of \eqref{eq:modeq1b}is not necessarily integrable in time.
We will use normal form method to retrieve bad parts from this term
to prove convergence of speed parameters $c_i(t)$ $(1\le i\le N)$ as
$t\to\infty$.
\begin{proof}[Proof of Lemma \ref{lem:modulation}]
By Claim \ref{cl:4},
\begin{equation}
  \label{eq:huk-I}
  \begin{split}
R_{k3}=& (H''(U_k)-H''(U_{k-1}))w_{k-1}+O(w_{k-1}^2)
\\=& (H''(u_{c_{N+1-k}})-I)w_{k-1}+\sum_{\substack{N+1-k\le i,j\le N\\ i\ne j}}
O(|w_{k-1}|(|u_{c_i}||u_{c_j}|+|w_{k-1}|)).
  \end{split}
\end{equation}
Thus we have
\begin{equation}
  \label{eq:badterm1}
  \begin{split}
\la R_{k3}, \pd_xu_{c_N+1-k}\ra=& \la w_{k-1},(H''(u_{c_{N+1-k}})-I)\pd_xu_{c_{N+1-k}}\ra
\\ &+O(\eps^3\|w_{k-1}\|_{W(t)}
(\|w_{k-1}\|_{W(t)}+\eps^{\frac{3}{2}}e^{-k_1(\sigma\eps^3t+L)})).
  \end{split}
\end{equation}
and for $i\ne N+1-k$,
\begin{equation}
\label{eq:badterm2}
\la R_{k3}, \pd_xu_{c_i}\ra=O(\eps^3\|w_{k-1}\|_{W(t)}
(\|w_{k-1}\|_{W(t)}+\eps^{\frac{3}{2}}e^{-k_1(\sigma\eps^3t+L)})).
\end{equation}
\par

By \eqref{eq:v1},
\begin{equation}
\label{eq:comp1}
  \begin{split}
&  \frac{d}{dt}\la v_1, \rho_{c_i(t)}(\cdot-x_i(t))\ra
\\=& \la JH'(v_1),\rho_{c_i}\ra-\dot{x}_i\la v_1,\pd_x\rho_{c_i}\ra
+\dot{c}_i\la v_1,\pd_c\rho_{c_i}\ra
\\=&
-\la v_1,(c_i\pd_x+J)\rho_{c_i}\ra+\mathcal{R}_4,
  \end{split}
\end{equation}
where
$$\mathcal{R}_4=\la J(H'(v_1)-v_1),\rho_{c_i}\ra +
\dot{c}_i\la v_1,\pd_c\rho_{c_i}\ra
-(\dot{x}_i-c_i)\la v_1,\pd_x\rho_{c_i}\ra.$$
For $i\le N-k$, it follows from \eqref{eq:v2k} that
\begin{equation}
  \label{eq:comp2}
    \begin{split}
&  \frac{d}{dt}\la v_{2k}, \rho_{c_i(t)}(\cdot-x_i(t))\ra
\\=& \la JH''(U_k)v_{2k}+l_k+Q_kJR_k,\rho_{c_i}\ra
-\dot{x}_i\la v_{2k},\pd_x\rho_{c_i}\ra
+\dot{c}_i\la v_{2k},\pd_c\rho_{c_i}\ra
\\=&  -\la v_{2k},(c_i\pd_x+J)\rho_{c_i}\ra+\mathcal{R}_5,
  \end{split}
\end{equation}
where 
\begin{align*}
\mathcal{R}_5=& \la l_k,\rho_{c_i}\ra+\dot{c}_i\la v_{2k},\pd_c\rho_{c_i}\ra
-(\dot{x}_i-c_i)\la v_{2k},\pd_x\rho_{c_i}\ra
\\ & -\la v_{2k},(H''(U_k)-I)J\rho_{c_i}\ra+\la Q_kJR_k,\rho_{c_i}\ra.  
\end{align*}
By Claim \ref{cl:3}, we have  $\rho_{c_i}\in l^2_a\cap l^2_{-a}$ 
for any $a\in(0,2k_1\eps)$ and
\begin{equation}
  \label{eq:comp3'}
  \begin{split}
|\mathcal{R}_4|\lesssim & \eps^{\frac52}
(|\dot{x}_i-c_i|+\eps^{-3}|\dot{c}_i|)\|v_1\|_{W(t)}+O(\eps^{3}\|v_1\|_{W(t)}^2)
\\ \lesssim & (\|v_1\|_{W(t)}+\sum_{1\le i\le k-1}\|v_{2i}\|_{W(t)})^2
+\eps^6e^{-k_1(\sigma\eps^3t+L)}.    
  \end{split} 
\end{equation}
Let $\|u\|_{W(t)^*}=\min_{1\le i\le N}\|e^{k_1\eps|\cdot-x_i(t)|}u\|_{l^2}$.
By Claims \ref{cl:ucsize} and \ref{cl:intsize},
\begin{align*}
|\la v_{2k},(H''(U_k)-I)J\rho_{c_i}\ra| \le &
 \|v_{2k}\|_{W(t)}\|(H''(U_k)-I)J\rho_{c_i}(\cdot-x_i(t))\|_{W(t)^*}
\\ \le & \eps^{\frac92}e^{-k_1\eps(x_{N+1-k}-x_i)}\|v_{2k}\|_{W(t)}
\\ \le & \eps^{\frac92}e^{-k_1\eps(\sigma^3t+L)}\|v_{2k}\|_{W(t)}.
\end{align*}
By Claim \ref{cl:3} and \eqref{eq:abbound},
\begin{align*}
  |\la l_k,\rho_{c_i}\ra| \le & \sum_{N+1-k\le j\le N}
\left|\alpha_{j,k}\la \pd_cu_{c_j},\rho_{c_i}\ra+
\beta_{j,k}\la \pd_xu_{c_j},\rho_{c_i}\ra\right|
\\ \lesssim & \sum_{N+1-k\le j\le N}(\eps|\alpha_{j,k}|+\eps^4|\beta_{j,k}|)
e^{-k_1(\sigma\eps^3t+L)}
\\ \lesssim & \sum_{N+1-k\le j\le N}
\eps^{\frac52}e^{-k_1(\sigma\eps^3t+L)}\|v_{2k}\|_{X_k(t)}
\left(\eps^2+|\dot{x}_j-c_j|+\eps^{-3}|\dot{c}_j|\right).
\end{align*}
By \eqref{eq:Rk} and Claim \ref{cl:3},
\begin{align*}
|\la Q_kJR_k,\rho_{c_i}\ra|\lesssim & 
\eps^3\|v_{2k}\|_{W(t)}(\|v_1\|_{W(t)}
+\sum_{1\le i\le k}\|v_{2i}\|_{W(t)})+ \eps^6e^{-k_1(\sigma\eps^3t+L)}
\\ & +\eps^{\frac92}e^{-k_1(\sigma\eps^3t+L)}
(\|v_1\|_{W(t)}+\sum_{1\le i\le k-1}\|v_{2i}\|_{W(t)}).    
\end{align*}
Combining the above with Lemma \ref{lem:modulationpre} and Claim \ref{cl:3},
we have
\begin{equation}
  \label{eq:comp3}
|\mathcal{R}_5| \lesssim 
(\|v_1\|_{W(t)}+\sum_{1\le i\le k-1}\|v_{2i}\|_{W(t)})^2+\eps^6e^{-k_1(\sigma\eps^3t+L)}.
\end{equation}

In view of Lemma \ref{lem:modulationpre} and
\eqref{eq:badterm1}--\eqref{eq:comp3},
\begin{equation}
  \label{eq:modeq3}
  \begin{split}
& \left| \la w_{k-1},(H''(u_{c_{N+1-k}})-I)\pd_xu_{c_{N+1-k}}\ra
+\frac{d}{dt}\la w_{k-1},\rho_{c_{N+1-k}}(\cdot-x_{N+1-k})\ra\right|
\\ \lesssim &
\eps^3
\left(\|v_1\|_{W(t)}+\sum_{k=1}^N\|v_{2k}\|_{X_k(t)\cap W(t)}\right)^2
+\eps^6e^{-k_1(\sigma\eps^3t+L)}.    
  \end{split}
\end{equation}
Since $B_1(c_i)$ and $B_2(c_i,c_j)$ $(1\le i,j\le N)$ are lower
triangular matrices, it follows from Lemma \ref{lem:modulationpre},
\eqref{eq:modeq2} and \eqref{eq:modeq3} that
\begin{equation}
\label{eq:normalformofc}
\mathcal{B}\frac{d\mathbf{c}}{dt}+\frac{d}{dt}\mathcal{R}_6
=\mathcal{R}_7,
\end{equation}
where $\mathbf{c}(t)={}^t(c_1(t),\cdots,c_N(t))$,
\begin{align*}
& \mathcal{B}(t)=
\diag\left(-\frac{\theta_1(c_i(t))}{c_i(t)}\right)_{1\le i\le N},\quad  
\mathcal{R}_6=\begin{pmatrix}\la w_{N-i},\rho_{c_i}\ra
\end{pmatrix}_{i=1,\cdots,N\downarrow},
\\ & \mathcal{R}_7=
O\left(\eps^3\left(\|v_1\|_{W(t)}+\sum_{k=1}^N\|v_{2k}\|_{X_k(t)\cap W(t)}\right)^2
+\eps^6e^{-k_1(\sigma\eps^3t+L)}\right).
\end{align*}
Thus we have
\begin{equation}
  \label{eq:modeq4}
\frac{d}{dt}\left(\mathbf{c}+\mathcal{B}^{-1}\mathcal{R}_6\right)
=\mathcal{B}^{-1}\mathcal{R}_7
+\left(\tfrac{d}{dt}(\mathcal{B})^{-1}\right)\mathcal{R}_6.
\end{equation}

By \eqref{eq:limtheta}, \eqref{eq:modeq2} and the definition of
$\mathcal{B}$, we have
$|\mathcal{B}^{-1}|+|\pd_{c_i}\mathcal{B}|=O(\eps^{-1})$ and
\begin{align*}
|\dot{\mathcal B}|\le &
\sum_{1\le i\le N}|\pd_{c_i}\mathcal{B}||\dot{c}_i|
\\ \lesssim & 
\eps^{\frac52}\left(\|v_1\|_{W(t)}+\sum_{1\le k\le N}\|v_{2k}\|_{W(t)}\right)
+\eps^4e^{-k_1(\sigma\eps^3t+L)}.
\end{align*}
Since $|\mathcal{R}_6|\lesssim  \eps^{\frac32}
(\|v_1\|_{W(t)}+\sum_{1\le k\le N}\|v_{2k}\|_{W(t)})$ by Claim \ref{cl:3},
\begin{align*}
& \left(\frac{d}{dt}(\mathcal{B})^{-1}\right)\mathcal{R}_6
\lesssim  \eps^{-2}|\dot{\mathcal{B}}||\mathcal{R}_6|
\\ \lesssim & 
\eps^2\left(\|v_1\|_{W(t)}
+\sum_{1\le k\le N}\|v_{2k}\|_{W(t)}\right)^2
+\eps^5e^{-2k_1(\sigma\eps^3t+L)}.
\end{align*}
Combining the above with \eqref{eq:modeq4}, we obtain \eqref{eq:modeqc}.
Thus we complete the proof.
\end{proof}
\bigskip

\section{Energy identities and virial identities}
\label{sec:energy-virial}
First, we will estimate energy norm of $v(t)$ and $v_{2k}(t)$ by adopting
an argument of \cite{FP2} that uses the convexity of Hamiltonian
and the orthogonality condition \eqref{eq:orth1}.
\begin{lemma}
  \label{lem:speed-Hamiltonian}
Let $u(t)$ be a solution to \eqref{eq:FPU} satisfying
$u(0)=\sum_{1\le i\le N}u_{c_{i,0}}(\cdot-x_{0,i})+v_0$ and let
$c_{i,0}$ and $x_{i,0}$ be as in Theorem \ref{thm:1}.
Then there exist positive numbers $\eps_0$, $\delta$, $L_0$ and
$C$ satisfying the following: Suppose $\eps\in(0,\eps_0)$,  that $v_{2k}$
$(1\le k\le N)$ satisfy \eqref{eq:orthv2k} for $N+1-k\le i\le N$
and $t\in[0,T]$, and that
\begin{gather*}
 \sup_{t\in[0,T]}\left\{\eps^{-2}|c_i(t)-c_{i,0}|
+\sum_{k=1}^N\eps^{-\frac32}\|v_{2k}(t)\|_{l^2}\right\}\le \delta,
\\ L=\inf_{t\in[0,T]}\min_{1\le i\le N-1}\eps(x_{i+1}(t)-x_i(t))\ge L_0.
 \end{gather*}
Then for $t\in[0,T]$,
\begin{equation}
  \label{eq:v1norm}
\|v_1(t)\|_{l^2}\le C\|v_0\|_{l^2},  
\end{equation}
\begin{equation}
  \label{eq:enorm}
\|v(t)\|_{l^2}^2\le C\left(\eps\sum_{i=1}^N|c_i(t)-c_0|
+\eps^{\frac32}(\|v_0\|_{l^2}+\sum_{k=1}^{N-1}\|v_{2k}\|_{W(t)})
+\|v_0\|_{l^2}^2+\eps^3e^{-k_1L}\right),
\end{equation}
\begin{equation}
  \label{eq:locenorm}
  \begin{split}
& \|v_{2k}\|_{l^2}^2 \le  C\left(\eps\sum_{i=N+1-k}^N|c_i(t)-c_0|
+\eps^{\frac32}\|v_0\|_{l^2}+\|v_0\|_{l^2}^2+\eps^3e^{-k_1L}\right)
\\ & +C\left\{\eps^{\frac32}\sum_{i=1}^{k-1}\|v_{2i}\|_{W(t)}+
\eps^3\left(\|v_1\|_{L^2(0,T;W(t))}
+\sum_{i=1}^N\|v_{2i}\|_{L^2(0,T;W(t)\cap X_k(t))}\right)^2\right\}.
  \end{split}
\end{equation}
\end{lemma}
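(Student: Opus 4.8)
The plan is to run a Lyapunov-type argument resting on conservation of $H$ and the convexity supplied by (P2), treating the three bounds in increasing order of difficulty. I would begin with \eqref{eq:v1norm}, which is the easiest. Since $v_1$ solves \eqref{eq:v1}, the Hamiltonian $H(v_1(t))$ is conserved, and (H1) gives $H(w)=\frac12\|w\|_{l^2}^2+O(\|w\|_{l^2}^3)$ for $w$ small in $l^2$ (using $\|w\|_{l^\infty}\le\|w\|_{l^2}$). Starting from $\|v_0\|_{l^2}<\delta_0\eps^2$, a continuity argument (solutions persisting by \cite{Mi1}) keeps $\|v_1(t)\|_{l^2}$ small, whence $\frac14\|v_1(t)\|_{l^2}^2\le H(v_1(t))=H(v_0)\le\|v_0\|_{l^2}^2$, which is \eqref{eq:v1norm}.

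Next I would prove \eqref{eq:enorm} for the full remainder $v=v_1+\sum_kv_{2k}$. The quadratic form is diagonal, $H''(U_N)=\diag(V''(R_N),1)$, and for $\eps$ small and $L$ large $R_N$ stays inside the convexity interval $I$ of (P2), so $\langle H''(U_N)v,v\rangle\ge c_\sharp\|v\|_{l^2}^2$ for some $c_\sharp>0$. Conservation $H(u(t))=H(u(0))$ together with Taylor expansion of $H$ around $U_N(t)$ and around $U_N(0)$ yields
\begin{align*}
\tfrac12\langle H''(U_N)v,v\rangle
={}& \big[H(U_N(0))-H(U_N(t))\big]
+\langle H'(U_N(0)),v_0\rangle \\
&-\langle H'(U_N(t)),v\rangle
+O(\|v_0\|_{l^2}^2+\|v\|_{l^2}^3).
\end{align*}
I would then estimate each piece: using $dH(u_c)/dc=\theta_1(c)=O(\eps)$ from \eqref{eq:limtheta} and the exponential smallness of soliton cross terms, $H(U_N(0))-H(U_N(t))=\sum_i\big(H(u_{c_{i,0}})-H(u_{c_i(t)})\big)+O(\eps^3e^{-k_1L})=O(\eps\sum_i|c_i(t)-c_{i,0}|)+O(\eps^3e^{-k_1L})$; the linear terms are reduced via the profile identity $H'(u_{c_i})=-c_iJ^{-1}\pd_xu_{c_i}$, so that $\langle H'(U_N(t)),v\rangle$ collapses, by \eqref{eq:orth1} and \eqref{eq:orthv2k}, to the overlaps of $v_1$ and of the partially orthogonal $v_{2k}$ $(k<N)$ with the $J^{-1}\pd_xu_{c_i}$, each of size $O(\eps^{3/2})$ in the relevant weight, producing $O(\eps^{3/2}(\|v_0\|_{l^2}+\sum_{k<N}\|v_{2k}\|_{W(t)}))$; the cubic remainder is absorbed into the left side by coercivity since $\|v\|_{l^2}$ is small. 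Collecting gives \eqref{eq:enorm}.

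The \emph{main obstacle} is \eqref{eq:locenorm}, since each $v_{2k}$ obeys the forced linearized equation \eqref{eq:v2k} and carries no conservation law. Here I would use the energy identity for $\mathcal H_k:=\frac12\langle H''(U_k)v_{2k},v_{2k}\rangle$, with $U_k=\sum_{j=N+1-k}^Nu_{c_j}$: differentiating and using $J^*=-J$ kills the conservative term $\langle H''(U_k)v_{2k},JH''(U_k)v_{2k}\rangle=0$, leaving
\begin{equation*}
\frac{d}{dt}\mathcal H_k
=\tfrac12\langle (\pd_t H''(U_k))v_{2k},v_{2k}\rangle
+\langle H''(U_k)v_{2k},\,l_k+Q_k(t)JR_k\rangle.
\end{equation*}
Integrating from $0$ to $t$ (with $v_{2k}(0)=0$) and invoking coercivity of $H''(U_k)$ controls $\|v_{2k}(t)\|_{l^2}^2$. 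The source pairings I would estimate with \eqref{eq:Rk}, Lemma \ref{lem:modulation} and Lemma \ref{lem:FPUprb}: the genuinely quadratic and cross pieces of $R_k$ give, after Cauchy--Schwarz in time, the time-integrated $L^2(0,T;W(t)\cap X_k(t))$ contributions together with the $\eps^3e^{-k_1L}$ and $\eps^{3/2}\sum_{i<k}\|v_{2i}\|_{W(t)}$ terms; the remaining resonant piece coming from $R_{k3}$ -- exactly the non-integrable interaction term treated by the normal form in \eqref{eq:badterm1}--\eqref{eq:normalformofc} -- I would integrate by parts in $t$ to trade for boundary terms and $\int_0^t\dot c_i\,ds=c_i(t)-c_{i,0}$, producing the $\eps\sum_{i\ge N+1-k}|c_i(t)-c_{i,0}|$ contribution, while \eqref{eq:enorm} and \eqref{eq:v1norm} supply the $\eps^{3/2}\|v_0\|_{l^2}$ and $\|v_0\|_{l^2}^2$ pieces. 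The delicate part is the bookkeeping of which overlaps vanish by orthogonality versus which are merely small, and ensuring the normal-form/integration-by-parts step does not reintroduce a non-integrable term; this, together with maintaining $R_N,R_k\in I$ through the a priori smallness bootstrap, is where the real work lies.
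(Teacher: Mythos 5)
Your treatments of \eqref{eq:v1norm} and \eqref{eq:enorm} are correct and essentially the paper's own: conservation of $H(v_1)$ plus smallness for the first, and for the second the expansion of the conserved $H(u(t))$ around $U_N(t)$ and $U_N(0)$, coercivity of $H''(U_N)$ from (P2), the profile identity $H'(u_{c_i})=-c_iJ^{-1}\pd_xu_{c_i}$, and the orthogonality bookkeeping that leaves only the $w_{N-i}$ overlaps.

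The genuine gap is in \eqref{eq:locenorm}, and it lies in your choice of functional. You take the quadratic energy $\mathcal{H}_k=\frac12\la H''(U_k)v_{2k},v_{2k}\ra$ of each $v_{2k}$ separately, so the source entering $d\mathcal{H}_k/dt$ is the \emph{unprojected} $Q_kJR_k=JR_k-P_kJR_k$. The projected part is harmless because $\|P_kJ\|_{B(W(t),W(t)^*)}=O(\eps)$ (Lemma \ref{lem:FPUprb} and its use in the paper), but the naked $J$ carries no $\eps$-gain, and neither of your devices copes with it. For the resonant piece $R_{k3}\approx(H''(u_{c_{N+1-k}})-I)w_{k-1}$ your term is \emph{bilinear} in $(v_{2k},w_{k-1})$: it is $O(\eps^2\|v_{2k}\|_{W(t)}\|w_{k-1}\|_{W(t)})$ per unit time, and Cauchy--Schwarz in time gives $\eps^2\|v_{2k}\|_{L^2(0,T;W(t))}\|w_{k-1}\|_{L^2(0,T;W(t))}\sim\eps^{-1}\|v_0\|_{l^2}^2$, a factor $\eps^{-1}$ above the budget $\eps^3(\cdot)^2\sim\|v_0\|_{l^2}^2$ allowed in \eqref{eq:locenorm}; since $\delta_0$ is independent of $\eps$, this destroys the bootstrap of Section \ref{sec:prthm1}. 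Your proposed integration by parts in $t$ does not repair it: no $\dot c_i$ multiplies this term (in the paper $\sum\theta_1(c_i)\dot c_i$ arises as $II_5=\la H'(u_{c_j}),\tilde l_k\ra$ from differentiating $H(U_k+w_k)$, a term absent from $d\mathcal{H}_k/dt$), and a bilinear normal form would require solving an operator equation $[\Theta,J]-c\pd_x\Theta=J(H''(u_c)-I)$ whose commutator error terms $\la J(H''(U_k)-I)v_{2k},\Theta w_{k-1}\ra$ are of the same size as the term being removed. Even the cubic piece fails: $\la H''(U_k)v_{2k},JR_{k1}\ra=O(\|w_k\|_{l^\infty}\|v_{2k}\|_{W(t)}^2)=O(\delta\eps^{\frac32}\|v_{2k}\|_{W(t)}^2)$, losing $\eps^{\frac32}$ against the allowed $\eps^3\|v_{2k}\|_{L^2(0,T;W(t)\cap X_k(t))}^2$.

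The paper circumvents exactly this by a different functional. Summing \eqref{eq:v1} and \eqref{eq:v2k} telescopes the nonlinearities back into the full vector field, \eqref{eq:ukwk}:
\begin{equation*}
\pd_t(U_k+w_k)=JH'(U_k+w_k)+\tilde{l}_k+\sum_{i=1}^k(l_i-P_iJR_i),
\end{equation*}
so one differentiates the genuine Hamiltonian $H(U_k+w_k)$ of the partial sum $w_k$: all unprojected pieces $Q_iJR_i$ are absorbed into the conservative term $\la H',JH'\ra=0$, only $P_iJR_i$ (with the $O(\eps)$ gain) survives, and the remaining resonance pairs against the \emph{fixed profile} $H'(u_{c_j})$, hence is linear in $w_{i-1}$ and cancels against $\sum\theta_1(c_i)\dot c_i$ via the scalar normal form \eqref{eq:comp1}--\eqref{eq:comp2} (yielding \eqref{eq:ham28}). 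The convexity expansion --- your own argument for \eqref{eq:enorm}, now applied to $H(U_k+w_k)$ --- then gives \eqref{eq:w2knorm}, with the term $\eps\sum_{i\ge N+1-k}|c_i(t)-c_{i,0}|$ coming from $H(u_{c_i(t)})-H(u_{c_{i,0}})$, not from time-integrating the resonance as you claim. To salvage your route, replace $\mathcal{H}_k$ by $H(U_k+w_k)$; the per-mode quadratic energy cannot close.
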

\begin{proof}
Since $H(v_1(t))=H(v_0)$ for $t\in\R$, there exists a nondecreasing function
$C(r)$ such that $\|v_1(t)\|_{l^2}\le C(\|v_0\|_{l^2})\|v_0\|_{l^2}a$.
Thus we have \eqref{eq:v1norm}.
\par
By (P2), there exists a positive constant $C'$
independent of $\eps$ such that
  \begin{align*}
\delta H:=& H(u(t))-\sum_{1\le i\le N}H(u_{c_{i,0}})
\\ = & H(U_N(t)+v(t))-\sum_{1\le i\le N}H(u_{c_{i,0}})
\\ = & I_1+I_2+\frac12\la H''(U_N)v,v\ra+O(\|v\|_{l^2}^3)
\\ \ge & C'\|v(t)\|_{l^2}^2+I_1+I_2,
  \end{align*}
where
$I_1=\la H'(U_N),v\ra$ and $I_2=H(U_N(t))-\sum_{i=1}^NH(u_{c_{i,0}}).$
By \eqref{eq:orth1} and \eqref{eq:orthv2k},
\begin{align*}
\la H'(u_{c_i(t)}(\cdot-x_i(t))),v(t)\ra =& 
-c_i\la v(t),J^{-1}\pd_xu_{c_i(t)}(\cdot-x_i(t))\ra
\\=& -c_i\la w_{N-i}(t),J^{-1}\pd_xu_{c_i(t)}(\cdot-x_i(t))\ra.
\end{align*}
Hence it follows from Claims \ref{cl:intsize} and \ref{cl:4} that
\begin{align*}
|I_1| \le & \left|\left\la H'(U_N(t))-
\sum_{1\le i\le N}H'(u_{c_i(t)}(\cdot-x_i(t))),v\right\ra\right|
\\ & +\sum_{1\le i\le N}|c_i(t)|
\left|\la w_{N-i}(t),J^{-1}\pd_xu_{c_i(t)}(\cdot-x_i(t))\ra\right|
\\ \lesssim & \|v(t)\|_{l^2}
\left\|H'(U_N(t))-\sum_{1\le i\le N}H'(u_{c_i(t)}(\cdot-x_i(t)))
\right\|_{l^2}
\\ & +\eps^{\frac32}\left(
\|v_1(t)\|_{W(t)}+\sum_{i=1}^{N-1}\|v_{2k}(t)\|_{W(t)}\right)
\\ \lesssim & 
\eps^{\frac72}e^{-k_1(\sigma\eps^3t+L)}\|v(t)\|_{l^2}
+\eps^{\frac32}\left(\|v_1(t)\|_{W(t)}+\sum_{i=1}^{N-1}\|v_{2k}(t)\|_{W(t)}\right),
\end{align*}
\begin{align*}
  |I_2|\le &
\sum_{1\le i \le N}|H(u_{c_i(t)})-H(u_{c_{i,0}})|
+\left|H(U_N(t))-\sum_{1\le i\le N}H(u_{c_i(t)})\right|
\\ \lesssim &
\sum_{1\le i \le N}\theta_1(c_{i,0})|c_i(t)-c_{i,0}|
+\sum_{j\ne i}
\left\|u_{c_i(t)}(\cdot-x_i(t))u_{c_j(t)}(\cdot-x_j(t))\right\|_{l^1}
\\ \lesssim & 
\eps\sum_{1\le i \le N}|c_i(t)-c_{i,0}|+
\eps^3e^{-k_1(\sigma\eps^3t+L)}.
\end{align*}

Since $H(u(t))$ does not depend on $t$, we have
$|\delta H|\le \left|I_3\right|+\left|I_4\right|,$ where
\begin{align*}
I_3=& H(U_N(0)+v_0)-H(U_N(0)),
\\ I_4=& H(U_N(0))-\sum_{1\le i\le N}H(u_{c_{i,0}}(\cdot-x_{i,0})).
\end{align*}
By the assumption and Claims \ref{cl:ucsize} and \ref{cl:intsize},
\begin{align*}
  \left|I_3\right| \le & |\la H'(U_N(0)),v_0\ra|+O(\|v_0\|_{l^2}^2)
\lesssim \eps^{\frac32}\|v_0\|_{l^2}+\|v_0\|_{l^2}^2,\\
\left|I_4\right| \lesssim & \eps^3e^{-k_1L}.
\end{align*}
Combining the above, we conclude \eqref{eq:enorm}.
\par
Finally we will prove \eqref{eq:locenorm}.
By \eqref{eq:v1}, \eqref{eq:v2k} and the definitions of $U_k$,
\begin{equation}
  \label{eq:ukwk}
  \pd_t(U_k+w_k)=JH'(U_k+w_k)+\tilde{l}_k+ \sum_{i=1}^k(l_i-P_iJR_i),
\end{equation}
where $\tilde{l}_k=\sum_{i=N+1-k}^N
(\dot{c}_i\pd_cu_{c_i}-(\dot{x}_i-c_i)\pd_xu_{c_i})$.
Since  $J$ is skew-adjoint, it follows from \eqref{eq:ukwk} that
\begin{equation}
  \label{eq:ham21}
\begin{split}
\frac{d}{dt}H(U_k+w_k)= & \left\la H'(U_k+w_k),
\tilde{l}_k+\sum_{i=1}^k(l_i-P_iJR_i)\right\ra
= \sum_{i=1}^6II_i,
\end{split}
\end{equation}
where  $U_{k,int}=H'(U_k)-\sum_{i=N+1-k}^N H'(u_{c_i})$ and
\begin{align*}
II_1=& \sum_{i=1}^k\la H'(U_k+w_k),l_i\ra,\quad
II_2= -\sum_{i=1}^k\sum_{j=N+1-k}^N\la H'(u_{c_j}),P_iJR_i\ra,
\\ II_3=& -\sum_{i=1}^k\la U_{k,int}, P_iJR_i\ra,\quad
II_4= -\sum_{i=1}^k\la H'(U_k+w_k)-H'(U_k),P_iJR_i\ra,
\\ II_5=& \sum_{j=N+1-k}^N \la H'(u_{c_j}), \tilde{l}_k\ra,\quad
II_6=\la H(U_k+w_k)-\sum_{j=N+1-k}^N H(u_{c_j}),\tilde{l}_k\ra.
\end{align*}
By \eqref{eq:abbound2} and the fact that $\|H'(U_k+w_k)\|_{l^2}=O(\eps^{\frac32})$,
\begin{equation}
\label{eq:ham22}
  \begin{split}
|II_1|\lesssim & \eps^{\frac32}\sum_{i=1}^k\|l_i\|_{l^2}
\\ \lesssim &
\eps^3\left(\|v_1\|_{W(t)}+\sum_{k=1}^{N}\|v_{2k}\|_{X_k(t)\cap W(t)}\right)^2
+\eps^6e^{-k_1(\sigma\eps^3t+L)}.
  \end{split}
\end{equation}
Next, we will estimate $II_2$.
Using \eqref{eq:Rk} and the fact that $(P_iJ)^*H'(u_{c_j})=c_j\pd_xu_{c_j}$
for $j\ge N+1-i$ and $(P_iJ)^*H'(u_{c_j})=O(\eps^3e^{-k_1\eps|x_{N+1-i}-x_j|})$
for $j\le N-i$, we have
\begin{equation}
  \label{eq:ham23}
  \begin{split}
II_2=& -\sum_{i=1}^k\sum_{j=N+1-k}^N c_j\la R_{i3},\pd_xu_{c_j}\ra
\\ &+O\left(\eps^3(\|v_1\|_{W(t)}+\|w_k\|_{W(t)})^2
+\eps^6e^{-k_1(\sigma\eps^3t+L)}\right)
\\=& -\sum_{i=1}^{k-1}\sum_{j=N+1-i}^{N}c_j \la w_{i-1},
(H''(u_{c_{N+1-i}})-I)\pd_xu_{c_j}\ra
\\ & +O\left(\eps^3(\|v_1\|_{W(t)}^2+\sum_{1\le i\le k}\|v_{2i}\|_{W(t)}^2)
+\eps^6e^{k_1(\sigma\eps^3t+L)})\right)
\\=& -\sum_{j=N+1-k}^N c_j\left\la w_{N-j},(c_j\pd_x+J)\rho_{c_j}\right\ra
\\ & +O\left(\eps^3(\|v_1\|_{W(t)}^2+\sum_{1\le i\le k}\|v_{2i}\|_{W(t)}^2)
+\eps^6e^{k_1(\sigma\eps^3t+L)})\right).    
  \end{split}
\end{equation}
\par
Secondly, we will estimate $II_3$ and $II_4$.
In view of \eqref{eq:projformula}, Claim \ref{cl:ucsize} and the proof
of Lemma \ref{lem:FPUprb}, we have $\|PJ\|_{B(W(t),W(t)^*)}=O(\eps)$,
$\|PJu^2\|_{W(t)^*}\lesssim \eps^{\frac32}\|u\|_{W(t)}^2$.
Hence it follows from \eqref{eq:Rk} that
\begin{align}
  \label{eq:ham24}
 \begin{split}
|II_3| \le & \|U_{k,int}\|_{l^2}\sum_{i=1}^k\|P_iJR_i\|_{l^2}
\\ \lesssim & \eps^4e^{-k_1(\sigma\eps^3t+L)}
\left(\|v_1\|_{W(t)}+\sum_{i=1}^k\|v_{2i}\|_{W(t)}+\eps^2\right)^2,
  \end{split}
\\ \label{eq:ham25}  
\begin{split}
|II_4|\le & \|w_k(t)\|_{W(t)}\sum_{i=1}^k\|P_iJR_i\|_{W(t)^*}
\\ \lesssim & \eps^{\frac32}\|w_k\|_{W(t)}(\|v_1\|_{W(t)}+\|w_k\|_{W(t)})^2
\\ & +\|w_k\|_{W(t)}
\left\{\eps^3(\|v_1\|_{W(t)}+\sum_{i=1}^k\|v_{2i}\|_{W(t)})
+\eps^{\frac92}e^{-k_1(\sigma\eps^3t+L)}\right\}.
\end{split}
\end{align}
By \eqref{eq:secular1}, \eqref{eq:secular2} and Claim \ref{cl:intsize},
\begin{equation}
  \label{eq:ham26}
  \begin{split}
II_5=& \sum_{1\le i\le k}\left\{\theta_1(c_i)\dot{c}_i
+O(e^{-k_1(\sigma\eps^3t+L)}(\eps|\dot{c}_i|+\eps^4|\dot{x}_i-c_i|))\right\}
\\ =& \sum_{i=N+1-k}^N\theta_1(c_i)\dot{c}_i
+O(\eps^6e^{-k_1(\sigma\eps^3t+L)})
\\ & +O\left(\eps^3\left(\|v_1\|_{W(t)}^2+\sum_{i=1}^N
\|v_{2i}\|_{X_i(t)\cap W(t)}^2\right)\right).
  \end{split}
\end{equation}
By \eqref{eq:modeq2},
\begin{equation}
  \label{eq:ham27}
  \begin{split}
|II_6|\lesssim & (\|U_{k,int}\|_{l^2}+\|w_k\|_{W(t)})\|\tilde{l}_k\|_{W(t)^*}
\\ \lesssim & 
(\eps^{\frac72}e^{-k_1(\sigma\eps^3t+L)}+\|w_k\|_{W(t)})
\sum_{i=N+1-k}^N (\eps^{-\frac12}|\dot{c}_i|+\eps^{\frac52}|\dot{x}_i-c_i|)
\\ \lesssim & \eps^3(\|v_1\|_{W(t)}^2+\sum_{k=1}^N
\|v_{2i}\|_{W(t)\cap X_i(t)}^2)+\eps^6e^{-k_1(\sigma\eps^3t+L)}.    
  \end{split}
\end{equation}
Using \eqref{eq:comp1} and \eqref{eq:comp2} and following the proof of
Lemma \ref{lem:modulation}, we have
\begin{equation}
  \label{eq:ham28}
II_2+II_5= O\left(\eps^3(\|v_1\|_{W(t)}+\sum_{1\le i\le N}
\|v_{2i}\|_{W(t)\cap X_i(t)})^2+\eps^6e^{-k_1(\sigma\eps^3t+L)}\right).
\end{equation}
By \eqref{eq:ham21}, \eqref{eq:ham24}, \eqref{eq:ham25},
\eqref{eq:ham27}  and \eqref{eq:ham28},
\begin{equation}
  \label{eq:v2kenloss}
  \begin{split}
&  \left|\frac{d}{dt}H(U_k+w_k)\right|
\\ \lesssim & \eps^3\left(
\|v_1\|_{W(t)}^2+\sum_{k=1}^N(\|v_{2k}\|_{W(t)}^2+\|v_{2k}\|_{X_k(t)}^2)
\right)+\eps^6e^{-k_1(\sigma\eps^3t+L)}.
  \end{split}
\end{equation}
Integrating \eqref{eq:v2kenloss} over $[0,t]$, we obtain
\begin{equation}
  \label{eq:v2khamest}
  \begin{split}
& H(U_k(t)+w_k(t))-H(U_{k,0}+v_0)
 \\=& 
O\left(\eps^3\left(\|v_1\|_{L^2(0,T;W(t))}^2
+\sum_{i=1}^N\|v_{2i}\|_{L^2(0,T;X_i(t)\cap W(t))}^2
+e^{-k_1L}\right)\right).
  \end{split}
\end{equation}
Using the convexity of the Hamiltonian, we conclude
\begin{equation}
  \label{eq:w2knorm}
  \begin{split}
& \|w_k(t)\|_{l^2}^2 \lesssim  \eps\sum_{N+1-k\le i\le N}|c_i(t)-c_0|+
\eps^{\frac32}\|v_0\|_{l^2}+\|v_0\|_{l^2}^2+\eps^3e^{-k_1L}
\\ & +\eps^{\frac32}\sum_{i=1}^{k-1}\|v_{2i}\|_{W(t)}
 +\eps^3\left(\|v_1\|_{L^2(0,T;W(t))}
+\sum_{i=1}^N\|v_{2i}\|_{L^2(0,T;W(t)\cap X_i(t))}\right)^2
  \end{split}
\end{equation}
from \eqref{eq:v2khamest} in exactly the same way as the proof of 
\eqref{eq:enorm}.
Combining \eqref{eq:w2knorm} with \eqref{eq:v1norm} and \eqref{eq:enorm},
we obtain \eqref{eq:locenorm}.
\end{proof}

Since $v_1(t)$ is small, it moves slowly and will be decoupled from the
$N$-soliton part of the solution. The following is an analog of
\textit{virial lemma} for small solutions in Martel and Merle \cite{MM1}
and was used in \cite{Mi1} to prove orbital stability of 1-solitons of the
FPU lattice equations. Here we confirm how coefficients of the virial
identity depend on $\eps$.
\begin{lemma}
  \label{lem:monotonicity}
Let $v_1(t)$ be a solution to \eqref{eq:v1}. Let $a>0$, $\tilde{x}(t)$ be a
$C^1$-function and $\psi_a(t,x)=1+\tanh a(x-\tilde{x}(t))$.
There exist positive numbers $\eps_0$, $\delta$ and $C$ such that if
$\inf_{t\ge0}\tilde{x}_t\ge 1+k_1^2\eps^2/24$ and
$a\eps+\|v_0\|_{l^2}\le \delta\eps^2$ for an $\eps\in(0,\eps_0)$, then
$$\|\psi_a(t)^{\frac12}v_1(t)\|_{l^2}^2
+Ca\eps^2\int_0^t\|\sech a(\cdot-\tilde{x}(s)) v_1(s)\|_{l^2}^2ds \le
\|\psi_a(0)^{\frac12}v_0\|_{l^2}^2.$$
\end{lemma}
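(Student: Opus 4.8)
The plan is to use a \emph{weighted Hamiltonian} as the monotone quantity rather than the bare weighted $l^2$ norm, because the energy density satisfies an exact discrete conservation law and this keeps every error term localized near the weight. Writing $v_1=(r,p)$, so that $\dot r(n)=p(n+1)-p(n)$ and $\dot p(n)=V'(r(n))-V'(r(n-1))$, I set $\rho(n)=\tfrac12 p(n)^2+V(r(n))$ and $E_a(t)=\sum_n\psi_a(t,n)\rho(n)$. Since $\dot\rho(n)=j(n-1)-j(n)$ with flux $j(n)=-p(n+1)V'(r(n))$, summation by parts gives $\sum_n\psi_a\dot\rho=\sum_n(\psi_a(n+1)-\psi_a(n))j(n)$, while the moving weight contributes $\sum_n(\pd_t\psi_a)\rho=-a\tilde x_t\sum_n\sech^2(a(n-\tilde x))\rho$, where I abbreviate $\tilde x_t=\dot{\tilde x}$.

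The crucial point is the closed-form identity $\psi_a(n+1)-\psi_a(n)=\tanh(a(n+1-\tilde x))-\tanh(a(n-\tilde x))=\sinh(a)\,\sech(a(n-\tilde x))\sech(a(n+1-\tilde x))$. Setting $\tilde v_1(n)=\sech(a(n-\tilde x))v_1(n)$, so that $\|\tilde v_1\|_{l^2}^2=\|\sech(a(\cdot-\tilde x))v_1\|_{l^2}^2$, the linear part of the flux collapses \emph{with no commutator loss} to $-\sinh(a)\sum_n\tilde p(n+1)\tilde r(n)$ and the weight term to $-\tfrac{a\tilde x_t}{2}\|\tilde v_1\|^2$. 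Using $\sum_n\tilde p(n+1)\tilde r(n)\ge -\tfrac12\|\tilde v_1\|^2$, the linear contribution is bounded by $-\tfrac{a\tilde x_t-\sinh a}{2}\|\tilde v_1\|^2$. Now $a\tilde x_t-\sinh a=a(\tilde x_t-1)-(\sinh a-a)$; the hypothesis $\tilde x_t\ge 1+k_1^2\eps^2/24$ gives $a(\tilde x_t-1)\ge \tfrac{k_1^2}{24}a\eps^2$, while $\sinh a-a=\tfrac{a^3}{6}+O(a^5)$ together with $a\le\delta\eps$ forces $\sinh a-a\le \tfrac{\delta^2}{6}a\eps^2+\cdots$. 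Hence for $\delta<k_1/2$ and $\eps$ small one obtains $a\tilde x_t-\sinh a\ge c\,a\eps^2$ with $c>0$.

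This is exactly the mechanism that the weight speed $\tilde x_t>1$ outruns the maximal group velocity $\cos(\xi/2)\le1$ of the linearized lattice; I would emphasize that the parity of $\sinh$ is what makes the discreteness correction $O(a^3)$ rather than $O(a^2)$, so the $O(\eps^2)$ speed excess genuinely dominates even though the admissible weight slope $a$ is only $O(\eps)$. For the nonlinear remainders I write $V'(r)=r+O(r^2)$ and $V(r)=\tfrac12 r^2+O(r^3)$; every such correction is multiplied by the localized factors $\psi_a(n+1)-\psi_a(n)=O(a\sech^2)$ or $\pd_t\psi_a=O(a\sech^2)$, so with $\|v_1\|_{l^\infty}\le\|v_1\|_{l^2}\le C\|v_0\|_{l^2}\le C\delta\eps^2$ (the $l^2$ bound coming from conservation of $H(v_1)$) each is $\lesssim a\delta\eps^2\|\sech(a(\cdot-\tilde x))v_1\|^2$ and is absorbed into the main term. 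This yields $\tfrac{d}{dt}E_a\le -C'a\eps^2\|\sech(a(\cdot-\tilde x))v_1\|^2$; integrating over $[0,t]$ and using the comparability $E_a=\tfrac12\|\psi_a^{1/2}v_1\|^2+O(\|v_1\|_{l^\infty}\|\psi_a^{1/2}v_1\|^2)$ (from $V''(0)=1$ and (P2)) gives the stated inequality after adjusting $C$.

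The hardest step is the linear dissipation estimate, and I expect the main obstacle to be precisely the tension that $a$ may be as large as $\delta\eps$ while the available speed excess $\tilde x_t-1$ is only $O(\eps^2)$: a crude treatment of the discrete summation by parts, Taylor-expanding the weight and estimating commutators by $O(a)$ or $O(a^2)$, would lose the estimate. Finding and using the exact identity for $\psi_a(n+1)-\psi_a(n)$, which diagonalizes the weighted quadratic form exactly and reduces the discreteness penalty to $\sinh a-a=O(a^3)$, is what rescues the argument. A secondary care point, which I would not skip, is to run the whole computation with the Hamiltonian density rather than the $l^2$ density, so that the cubic flux errors inherit the localizing factor $\sech(a(\cdot-\tilde x))$ and do not accumulate in time.
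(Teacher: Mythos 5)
Your proposal is correct and takes essentially the same route as the paper: the paper likewise monitors the weighted Hamiltonian density $\sum_{n}\psi_a(t,n)h_1(t,n)$, uses the exact identity $\psi_a(t,n)-\psi_a(t,n-1)=\sinh a\,\sech a(n-\tilde{x}(t))\sech a(n-\tilde{x}(t)-1)$ (its \eqref{eq:dif-differ}) to diagonalize the flux term against the moving-weight term, and absorbs both the $O(a^2)$ discreteness penalty and the $O(\|v_0\|_{l^2})$ nonlinear corrections into the speed excess $\tilde{x}_t-1\ge k_1^2\eps^2/24$. Your explicit isolation of $a\tilde{x}_t-\sinh a\gtrsim a\eps^2$ is simply a sharper rendering of the same cancellation that the paper packages as the factor $1+O(a^2)$.
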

\begin{proof}
Let $v_1(t,n)={}^t\!(r_1(t,n),p_1(t,n))$,
$h_1(t,n)=\frac12p_1(t,n)^2+V(r_1(t,n))$ and
$\tpa(t,x)=a^{\frac12}\sech a(x-\tilde{x}(t))$.
By \eqref{eq:v1norm},
\begin{align*}
&\left|V(r_1(t,n))-\frac12 V'(r_1(t,n))^2\right|
\lesssim  \|v_0\|_{l^2}|r_1(t,n)|^2,
\\ & \left|V'(r_1(t,n))-r_1(t,n)\right| \lesssim \|v_0\|_{l^2}|r_1(t,n)|.
\end{align*}
Using \eqref{eq:FPU} and the above, we have
\begin{equation}
  \label{eq:locdecayv1}
  \begin{split}
&  \frac{d}{dt}\sum_{n\in\Z}\psi_a(t,n)h_1(t,n)
\\ =&
\sum_{n\in\Z}p_1(t,n)V'(r_1(t,n-1))\left(\psi_a(t,n-1)-\psi_a(t,n)\right)\
 +\sum_{n\in\Z}\pd_t\psi_a(t,n)h_1(t,n)
\\ \le &
-\frac{\tilde{x}_t(t)}2\sum_{n\in\Z}\tpa(t,n)^2p_1(t,n)^2
\\ & +(1+C'\|v_0\|_{l^2})
\sum_{n\in\Z}\left(\psi_a(t,n)-\psi_a(t,n-1)\right)|p_1(t,n)r_1(t,n-1)|
\\ &-\frac{\tilde{x}_t(t)}2(1-C'\|v_0\|_{l^2})
\sum_{n\in\Z}\tpa(t,n-1)^2r_1(t,n-1)^2,    
  \end{split}
\end{equation}
where $C'$ is a positive constant.
\par
Substituting
\begin{equation}
  \label{eq:dif-differ}
  \begin{split}
\psi_a(t,n)-\psi_a(t,n-1)=&
\sinh a\sech a(n-\tilde{x}(t))\sech a(n-\tilde{x}(t)-1)
\\=& \tpa(t,n)\tpa(t,n-1)(1+O(a^2))    
  \end{split}
\end{equation}
into \eqref{eq:locdecayv1} and using H\"older inequality, we obtain
$$
 \frac{d}{dt}\sum_{n\in\Z}\psi_a(t,n)h_1(t,n)
 \le  -\frac{\tilde{x}_t}{2}(1-C''(\|v_0\|_{l^2}+a^2))
\sum_{n\in\Z}\tpa(t,n)^2(p_1(t,n)^2+r_1(t,n)^2)$$
for a $C''>0$. Thus we have
\begin{equation*}
   \frac{d}{dt}\sum_{n\in\Z}\psi_a(t,n)h_1(t,n)
 \le  -C\eps^2
\sum_{n\in\Z}\tpa(t,n)^2(p_1(t,n)^2+r_1(t,n)^2)
\end{equation*}
for a $C>0$ if $\delta>0$ is sufficiently small.
We have thus proved Lemma \ref{lem:monotonicity}.
\end{proof}

Finally, we will prove propagation estimates on $v_{2k}$.
\begin{lemma}
  \label{lem:virialv2k}
Let $u(t)$ be as in Theorem \ref{thm:1} and let
$\pai(t,x)=1+\tanh a(x-x_i(t))$. Then there exist positive numbers $\eps_0$,
$\delta$, $L_0$ and $C$ satisfying the following:
Suppose that
\begin{gather}
\label{ass:vk}
a\eps+\sup_{t\in[0,T]}\left\{\|v_1(t)\|_{l^2}+\sum_{k=1}^N\|v_{2k}(t)\|_{l^2}
\right\}\le \delta\eps^2,
\\ \inf_{t\in[0,T]}\min_{1\le i\le N-1}\eps(x_{i+1}(t)-x_i(t))\ge L,
\notag
\\
\min_{1\le i\le N}\inf_{t\in[0,T]}\dot{x}_i(t)\ge 1+\frac{k_1^2\eps^2}{24}
\notag
 \end{gather}
for $\eps\in(0,\eps_0)$, $L\ge L_0$ and $T\ge0$.
Then for $t\in[0,T]$ and $1\le k\le N$,
\begin{align*}
& 
\|\psi_{a,1}(t)^{\frac12}v_{2k}(t)\|_{l^2}+\eps^{\frac32}\|v_{2k}(t)\|_{L^2(0,T;W(t))}
\\ \le & C\left(\|v_0\|_{l^2}
+\eps^{\frac32}\sum_{i=1}^k\|v_{2i}(t)\|_{L^2(0,T;X_k(t))}+\eps^{\frac32}e^{-k_1L}\right).
\end{align*}
\end{lemma}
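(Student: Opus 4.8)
The plan is to adapt the weighted-energy (virial) argument of Lemma \ref{lem:monotonicity} to each remainder piece $v_{2k}$, now carrying along the solitary-wave potential and the forcing terms. For each center $i=1,\dots,N$ I would introduce the localized energy
$$\mathcal{E}_i(t)=\sum_{n\in\Z}\psi_{a,i}(t,n)\,h_{2k}(t,n),$$
where $h_{2k}$ is the FPU energy density of $v_{2k}={}^t(r_{2k},p_{2k})$, and differentiate in $t$ along the equation for $v_{2k}$ (\eqref{eq:v2k} if $k<N$, \eqref{eq:v2N} if $k=N$). Since each weight $\psi_{a,i}$ travels at the supersonic speed $\dot{x}_i\ge 1+k_1^2\eps^2/24>1$, I expect the leading contribution to reproduce, center by center, the coercive term of Lemma \ref{lem:monotonicity}.

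Writing $JH''(U_k)=J+J(H''(U_k)-I)$, the free part $Jv_{2k}$ is handled exactly as in Lemma \ref{lem:monotonicity}: the discrete flux identity \eqref{eq:dif-differ} together with $\pd_t\psi_{a,i}=-\dot{x}_i\,\tilde{\psi}_{a,i}^2$ produces a quadratic form in $(p_{2k},r_{2k})$ whose largest eigenvalue is $-\tfrac12(\dot{x}_i-1)$, the supersonic excess. Because $\dot{x}_i-1\gtrsim\eps^2$, this yields the good term $-c\eps^2\sum_n\tilde{\psi}_{a,i}^2(p_{2k}^2+r_{2k}^2)$. The remaining task is to show that every other contribution to $\tfrac{d}{dt}\mathcal{E}_i$ is either absorbable by this good term or collected into the right-hand side of the asserted bound.

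For the perturbation and source terms I would argue as follows. The solitary-wave potential $J(H''(U_k)-I)v_{2k}$ has diagonal coefficient $V''(r_{U_k})-1=O(\eps^2)$ supported near the cores $x_{N+1-k},\dots,x_N$; the portion of its virial contribution that meets the weight transition is $O(\eps^2)$ relative to the good term and is absorbed for $\delta$ small, while the genuinely local portion is estimated by the right-biased norm $\|v_{2k}\|_{X_k(t)}$, which is precisely why $\eps^{\frac32}\sum_{i\le k}\|v_{2i}\|_{L^2(0,T;X_k(t))}$ appears on the right. The modulation term $l_k$ is controlled through \eqref{eq:abbound2}, and the forcing $Q_k(t)JR_k$ (and, for $k=N$, the analogous terms $JR_N+\sum_{j<N}(P_jJR_j-l_j)+l$ of \eqref{eq:v2N}) through the decomposition $R_k=R_{k1}+R_{k2}+R_{k3}$ of \eqref{eq:Rk} and the uniform projection bounds of Lemma \ref{lem:FPUprb}; this produces quadratic expressions in $v_1,v_{2i}$ together with terms of size $\eps^j e^{-k_1(\sigma\eps^3t+L)}$. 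The $v_1$-dependence is converted into the $\|v_0\|_{l^2}$ on the right by invoking Lemma \ref{lem:monotonicity}.

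Finally I would integrate over $[0,t]$ and use $v_{2k}(0)=0$. Taking $i=1$ delivers the pointwise term $\|\psi_{a,1}(t)^{\frac12}v_{2k}(t)\|_{l^2}$, while summing the good terms over $i=1,\dots,N$ reconstitutes the full norm: for $\delta$ small the profiles $\tilde{\psi}_{a,i}^2$ dominate the weights defining $W(t)$, so that $\eps^2\sum_i\int_0^T\|\tilde{\psi}_{a,i}v_{2k}\|_{l^2}^2\,dt\gtrsim\eps^3\|v_{2k}\|_{L^2(0,T;W(t))}^2$. The exponentially small sources integrate, via $\int_0^\infty e^{-2k_1\sigma\eps^3t}\,dt=O(\eps^{-3})$, to $O(\eps^3e^{-2k_1L})$, giving the stated $\eps^{\frac32}e^{-k_1L}$, and the quadratic-in-$v_{2k}$ remainders are absorbed by the smallness hypothesis \eqref{ass:vk}. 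I expect the main obstacle to be the near-core competition between the very small good term, of order $\eps^2a$ arising only from the supersonic excess speed, and the soliton potential of order $\eps^2$: keeping the sign forces the local part of the interaction to be charged to the biased $X_k(t)$ norm rather than to the good term, so that the estimate closes only in tandem with the exponential linear stability in the $X_k$-weight.
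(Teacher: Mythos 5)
Your architecture (moving weights $\pai$, supersonic coercivity as in Lemma \ref{lem:monotonicity}, charging near-core interactions to the biased norms) is the right family of ideas, but there is a genuine gap at exactly the point you flag at the end, and it is not merely a technical obstacle: your version of the estimate comes out too weak by a factor $\eps^{\frac12}$. You localize the energy density of $v_{2k}$ itself and differentiate along the linear inhomogeneous equation \eqref{eq:v2k}. The soliton-potential contribution is then, to leading order, $\la p_{2k},\pai(e^{\pd}-1)\{(V''(R_k)-1)r_{2k}\}\ra$, where $R_k$ is the $r$-component of $U_k$. Since the sequence $v_{2k}$ has no smallness under discrete differentiation, the difference operator gains nothing when it falls on $r_{2k}$, so this term is genuinely of size $\eps^2\|v_{2k}\|^2$ localized at the cores, with no sign. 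Charging it to the biased norm, as you propose, puts $\eps^2\|v_{2k}(t)\|_{X_k(t)}^2$ into the differential inequality, hence $\eps\|v_{2k}\|_{L^2(0,T;X_k(t))}$ on the right of the conclusion instead of the stated $\eps^{\frac32}\|v_{2k}\|_{L^2(0,T;X_k(t))}$. That half-power is not cosmetic: $\bM_4$ and $\bM_5$ are calibrated with precisely the $\eps^{\frac32}$ prefactor, and both \eqref{eq:aux1} and the Gronwall iteration in the proof of Lemma \ref{lem:apb2} invoke Lemma \ref{lem:virialv2k} with that prefactor, so with $\eps$ in its place the bootstrap of Section \ref{sec:prthm1} does not close.

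The paper's proof contains the idea your proposal is missing, and it is structural. It proves the estimate not for $v_{2k}$ but for the cumulative sums $w_k=v_1+\sum_{i\le k}v_{2i}$ (the statement for $v_{2k}=w_k-w_{k-1}$ then follows by telescoping), and it localizes the \emph{relative} Hamiltonian $H_{k,i}=\la h(U_k+w_k)-h(U_k)-h'(U_k)\cdot w_k,\pai\ra$. Because $U_k+w_k$ solves the full nonlinear equation up to the small forcings of \eqref{eq:ukwk}, the exact Hamiltonian cancellation survives the insertion of the weight, and the would-be potential term appears instead as $II_2=\la\wR_3,\pai JH'(U_k)\ra$ with $\wR_3=O(w_k^2)$: now the discrete derivative inside $J$ falls on the smooth, $\eps^2$-amplitude, $\eps$-scale profile $H'(U_k)$, so $JH'(U_k)=O(\eps^3)$ localized, giving $|II_2|\lesssim\eps^3(\|v_{2k}\|_{X_k(t)}+\|w_{k-1}\|_{W(t)})^2$ --- exactly the extra factor of $\eps$ your computation lacks. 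Equivalently, the quadratic form adapted to the background, $\tfrac12\la\pai H''(U_k)v,v\ra$, rather than your $\sum_n\pai h(v_{2k})$ (which is the $H''(0)$-form at quadratic order), is what removes the order-$\eps^2$ signless term, since its time derivative sees only $\pd_tH''(U_k)=O(\eps^3)$; the paper implements this nonlinearly through $H_{k,i}$. To repair your argument you would have to replace your functional $\mathcal{E}_i$ by such a relative (or $U_k$-adapted quadratic) functional, which effectively forces the passage to $w_k$ and \eqref{eq:ukwk}, i.e., the paper's route.
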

\begin{proof}
In order to prove the lemma, it suffices to show that
\begin{equation}
  \label{eq:locdwform}
  \begin{split} &
\|\psi_{a,1}^{\frac12}w_k\|_{l^2}+ \eps^{\frac32}\|w_k\|_{L^2(0,T;W(t))}
\\ \lesssim & \|v_0\|_{l^2}+\eps^{\frac32}\left(\|v_{2k}\|_{L^2(0,T;X_k(t))}
+\sum_{i=1}^{k-1}\|v_{2i}\|_{L^2(0,T;W(t))}+e^{-k_1L}\right)
  \end{split}
\end{equation}
for $1\le k\le N$. Indeed, it follows from \eqref{eq:locdwform}
\begin{align*}
& \|\psi_{a,1}v_{2k}\|_{l^2}+\eps^{\frac32}\|v_{2k}\|_{L^2(0,T;W(t))}
\\ \le &
\|\psi_{a,1}w_k\|_{l^2}+\|\psi_{a,1}w_{k-1}\|_{l^2}
+\eps^{\frac32}(\|w_k\|_{L^2(0,T;W(t))}+\|w_{k-1}\|_{L^2(0,T;W(t))})
\\ \lesssim & \|v_0\|_{l^2}
+\eps^{\frac32}\left(\|v_{2k}\|_{L^2(0,T;X_k(t))}
+\|v_{2\,k-1}\|_{L^2(0,T;X_{k-1}(t))}\right)
\\ & +\eps^{\frac32}\sum_{i=1}^{k-1}\|v_{2i}\|_{L^2(0,T;W(t))}
+\eps^{\frac32}e^{-k_1L}
\\ \lesssim &  \|v_0\|_{l^2}
+\eps^{\frac32}\left(\sum_{i=1}^k\|v_{2i}\|_{L^2(0,T;X_i(t))}+e^{-k_1L}\right).
\end{align*}
\par
Let $u={}^t(r,p)$, $h(u)=\frac12p^2+V(r)$ and $h'(u)={}^t(V'(r),p)$,
$$
H_{k,i}=\la h(U_k+w_k)-h(U_k)-h'(U_k)\cdot w_k,
\pai\ra_{l^2},$$
where $\cdot$ denotes the inner product in $\R^2$. Then
\begin{align*}
  \frac{dH_{k,i}}{dt}=& 
-\dot{x}_i\la h(U_k+w_k)-h(U_k)-h'(U_k)\cdot w_k,\pai'\ra_{l^2}
\\ & +\la H'(U_k+w_k)-H'(U_k),\pai\pd_t(U_k+w_k)\ra
-\la H''(U_k)\pd_tU_k,\pai w_k\ra
\\ =:I+II.
\end{align*}
By the mean value theorem, there exists a $\theta=\theta(t,n)\in(0,1)$
such that 
$$I= -\frac{\dot{x}_i}{2}\la H''(U_k+\theta w_k)w_k,\pai'w_k\ra.$$ Since
$\|U_kw_k^2\|_{l^1}\lesssim \eps^2(\|v_{2k}\|_{X_k(t)}+\|w_{k-1}\|_{W(t)})^2$,
we have
\begin{align*}
I =-\frac{\dot{x}_i}{2}(1+O(\|w_k\|_{l^\infty}))\|\tpai w_k\|_{l^2}^2
+O(\eps^3(\|v_{2k}\|_{X_k(t)}+\|w_{k-1}\|_{W(t)})^2),
\end{align*}
where $\tpai=a^{\frac12}\sech a(x-x_i(t))$.
By \eqref{eq:ukwk} and the definition of $U_k(t)$,
we have
\begin{align*}
II=& \left\la H'(U_k+w_k)-H'(U_k), \pai JH'(U_k+w_k)
+\sum_{i=1}^k \pai(l_i-P_iJR_i)\right\ra
\\ & +\la \wR_3, \pai \tilde{l}_k\ra
 -\sum_{i=N+1-k}^N \la \pai H''(U_k)w_k,JH'(u_{c_i})\ra
\\ & = \sum_{i=1}^6II_i,
\end{align*}
where $\wR_3=H'(U_k+w_k)-H'(U_k)-H''(U_k)w_k$ and
\begin{align*}
& II_1=\la H'(U_k+w_k)-H'(U_k),\pai J(H'(U_k+w_k)-H'(U_k))\ra,\\
& II_2=\la \wR_3,\pai JH'(U_k)\ra,\quad
 II_3=\la \wR_3,\pai\tilde{l}_k \ra,\\
& II_4= \sum_{i=1}^k\la H'(U_k+w_k)-H'(U_k),\pai l_i\ra,\\
& II_5=-\sum_{i=1}^k \la H'(U_k+w_k)-H'(U_k),\pai P_iJR_i\ra,\\
& II_6=\left\la H''(U_k)w_k, \pai JU_{k,int}\right\ra.
\end{align*}
Using the Schwarz inequality and \eqref{eq:dif-differ},
we have
$$
|II_1|\le \frac{1}{2}\|\tpai(H'(U_k+w_k)-H'(U_k))\|_{l^2}^2
(1+O(a^2))$$
as in the proof of Lemma \ref{lem:monotonicity}.
Since 
\begin{align*}
& \|\tpai(H'(U_k+w_k)-H'(U_k))\|_{l^2}
\\ \le & \|\tpai w_k\|_{l^2}(1+O(\|w_k\|_{l^\infty})
+O(\|\tpai\|_{l^\infty}\|U_kw_k\|_{l^2})
\\ \le & \|\tpai w_k\|_{l^2}(1+O(\|w_k\|_{l^\infty}))
+O(\eps^3(\|v_{2k}\|_{X_k(t)}+\|w_{k-1}\|_{W(t)})),
\end{align*}
there exists a $\delta'>0$ such that
\begin{align*}
  I+II_1\le &-\frac{\dot{x}_i-1+O(\delta\eps^2)}{2}\|\tpai w_k\|_{l^2}^2
+O(\eps^3(\|v_{2k}\|_{X_k(t)}+\|w_{k-1}\|_{W(t)})^2)
\\ \le & -\delta'\eps^2\|\tpai w_k\|_{l^2}^2
+O(\eps^3(\|v_{2k}\|_{X_k(t)}+\|w_{k-1}\|_{W(t)})^2).
\end{align*}
Let
\begin{align*}
& \|u\|_{W_k(t)}=\sum_{i=N+1-k}^N\|e^{-k_1\eps|\cdot-x_i(t)|}u\|_{l^2},
\quad \|u\|_{W_k(t)^*}=\min_{i=N+1-k}^N \|e^{k_1\eps|\cdot-x_i(t)|}u\|_{l^2},
\\ & \|u\|_{\widetilde{W}_k(t)}=\sum_{i=N+1-k}^N
\|e^{-k_1\eps|\cdot-x_i(t)|}u\|_{l^1},
\quad \|u\|_{\widetilde{W}_k(t)^*}
=\min_{i=N+1-k}^N \|e^{k_1\eps|\cdot-x_i(t)|}u\|_{l^\infty}.
\end{align*}
By Claim \ref{cl:ucsize},
\begin{align*}
|II_2|\lesssim  \|w_k^2\|_{\widetilde{W}_k(t)}\sum_{i=N+1-k}^N
\|Ju_{c_i}\|_{\widetilde{W}_k(t)^*}
 \lesssim  \eps^3(\|v_{2k}\|_{X_k(t)}^2+\|w_{k-1}\|_{W(t)}^2).
\end{align*}
By \eqref{eq:modeq2}, \eqref{ass:vk} and Claim \ref{cl:ucsize},
\begin{align*}
  |II_3| \lesssim & \|w_k^2\|_{\widetilde{W}_k(t)}
\|\tilde{l}_k\|_{\widetilde{W}_k(t)^*}
\\ \lesssim & (\|v_{2k}\|_{X_k}^2+\|w_{k-1}\|_{W(t)}^2)
 \sum_{i=N+1-k}^N(|\dot{c}_i|+|\dot{x}_i-c_i|\eps^3)
\\ \lesssim & \eps^5(\|v_{2k}\|_{X_k}^2+\|w_{k-1}\|_{W(t)}^2).
\end{align*}
By \eqref{eq:abbound2},
\begin{align*}
  |II_4|\lesssim & \|w_k\|_{W_k(t)}\sum_{i=1}^k\|l_i\|_{W_k(t)^*}
\\ \lesssim & 
\eps^3(\|v_{2k}\|_{X_k(t)}+\|w_{k-1}\|_{W(t)})
\\ & \times \|v_{2k}\|_{X_k(t)}
\{e^{-k_1L}+\eps^{-\frac32}(\|v_1\|_{W(t)}+\sum_{k=1}^N\|v_{2k}\|_{W(t)})\}
\\ \lesssim & \eps^3\|v_{2k}\|_{X_k(t)}
(\|v_{2k}\|_{X_k(t)}+\|w_{k-1}\|_{W(t)}).
\end{align*}
In view of \eqref{eq:projformula} and Claim \ref{cl:ucsize}, we have
$\|PJ\|_{B(\widetilde{W}_k(t),W_k(t)^*)}=O(\eps^{\frac32})$ for $i\le k$.
Thus by \eqref{eq:Rk},
\begin{align*}
|II_5|\le & \|H'(U_k+w_k)-H'(U_k)\|_{W_k(t)}\sum_{i=1}^k \|P_iJR_i\|_{W_k(t)^*}
\\ \lesssim &
\eps^{\frac32} \|w_k(t)\|_{W_k(t)}\sum_{i=1}^k
(\|R_{i1}\|_{\widetilde{W}_k(t)}+\|R_{i2}\|_{\widetilde{W}_k(t)}+\|R_{i3}\|_{\widetilde{W}_k(t)})
\\ \lesssim & 
\eps^{\frac32}(\|v_{2k}\|_{X_k(t)}+\|w_{k-1}\|_{W(t)})^3
\\  & + \eps^{\frac92}(\|v_{2k}\|_{X_k(t)}+\|w_{k-1}\|_{W(t)})
e^{-k_1(\sigma\eps^3t+L)}
\\ & +\eps^3(\|v_{2k}\|_{X_k(t)}+\|w_{k-1}\|_{W(t)})^2
\\ \lesssim & \eps^3\left(\|v_{2k}\|_{X_k(t)}+\|w_{k-1}\|_{W(t)}\right)^2
+\eps^6e^{-2k_1(\sigma\eps^3t+L)},
\end{align*}
and
\begin{align*}
  |II_6|\lesssim & \|w_k\|_{W_k(t)}\|JU_{k,int}\|_{W_k(t)^*}
\\ \lesssim &\eps^{\frac92}e^{-k_1(\sigma\eps^3t+L)}
(\|v_{2k}\|_{X_k(t)}+\|w_{k-1}\|_{W(t)})
\\ \lesssim & \eps^3(\|v_{2k}\|_{X_k(t)}+\|w_{k-1}\|_{W(t)})^2
+\eps^6e^{-2k_1(\sigma\eps^3t+L)}
\end{align*}
as in the proof of Lemma \ref{lem:speed-Hamiltonian}.
Combining the above, we obtain
\begin{equation}
  \label{eq:v2kvirial}
  \begin{split}
& \frac{dH_{k,i}}{dt}+\delta'\eps^2\|\tpai w_k\|_{l^2}^2
\\ \lesssim & \eps^3\left(\|v_1\|_{W(t)}+\|v_{2k}\|_{X_k(t)}+\sum_{i=1}^{k-1}
\|v_{2i}\|_{W(t)}\right)^2+\eps^6e^{-2k_1(\sigma\eps^3t+L)}.    
  \end{split}
\end{equation}
Integrating \eqref{eq:v2kvirial} over $[0,T]$ and summing up for
$1\le i\le k$, we have
\begin{align*}
& \sum_{i=1}^N\left\{ H_{k,i}(t)- H_{k,i}(0)
+\eps^2\int_0^T\|\tpai(t)w_k(t)\|_{l^2}^2dt\right\}
\\ \lesssim   & 
\int_0^T\left\{\eps^3\left(\|v_{2k}\|_{X_k(t)}^2+\sum_{i=1}^{k-1}
\|v_{2i}\|_{W(t)}^2\right)+\eps^6e^{-2k_1(\sigma\eps^3t+L)}\right\}dt.
\end{align*}
Since
$H_{k,i}=\|\pai^{\frac12}w_k\|_{l^2}^2
(1+O(\|U_k\|_{l^\infty}+\|w_k\|_{l^\infty}))$, we have \eqref{eq:locdwform}.
Thus we prove Lemma \ref{lem:virialv2k}.
\end{proof}
\bigskip

\section{Proof of Theorem \ref{thm:1}}
\label{sec:prthm1}
In this section, we will show {\it \`a priori}
estimates on $v_1$, $v_{2k}$, $x_i$ and $c_i$ to prove
stability of $N$-soliton solutions.
Let
\begin{align*}
& \bM_1(T)=\eps^{-2}\sup_{t\in[0,T]}\sum_{1\le i\le N}
\left(|c_i(t)-c_{i,0}|+|\dot{x}_i(t)-c_i(t)|\right),
\\ &
\bM_2(T)=\eps^{-3}\sum_{k=1}^N\sup_{0\le t\le T}\|v_{2k}(t)\|_{l^2}^2,
\\ &
\bM_3(T)=\eps^{-\frac32}\sup_{0\le t\le T}\|v_1(t)\|_{l^2}
+\|v_1\|_{L^2(0,T;W(t))},
\\ &
\bM_4(T)=\sum_{1\le k\le N}\left(\eps^{-\frac32}\sup_{0\le t\le T}
\|\psi_{k_1\eps,1}v_{2k}(t)\|_{l^2}+\|v_{2k}\|_{L^2(0,T;W(t))}\right),
\\ &
\bM_5(T)=\sum_{1\le k\le N}\left(\eps^{-\frac32}\|v_{2k}\|_{L^\infty(0,T;X_k(t))}
+\|v_{2k}\|_{L^2(0,T;X_k(t))}\right).
\end{align*}
Lemmas \ref{lem:modulation}, \ref{lem:speed-Hamiltonian},
\ref{lem:monotonicity} and \ref{lem:virialv2k} imply a priori bound on
$\bM_i$ $(1\le i\le 4)$ by
$\|v_0\|_{H^1}$ and $\bM_5$.
\begin{lemma}
\label{lem:apb1}
There exists a positive constant $\delta$ such that if
$$\|v_0\|_{l^2}+\eps^{\frac32}\sum_{i=1}^5\bM_i(T)\le \delta\eps^{\frac52},$$
\begin{align}
\label{eq:apb11}
& \bM_1(T) \lesssim  \eps^{-\frac32}\|v_0\|_{l^2}+\bM_5(T)+e^{-k_1L}, \\
\label{eq:apb13}
& \bM_2(T) \lesssim \eps^{-\frac32}\|v_0\|_{l^2}+\bM_5(T)+e^{-k_1L},\\
\label{eq:apb14}
& \bM_3(T)\lesssim \eps^{-\frac32}\|v_0\|_{l^2},\\
\label{eq:apb15}
& \bM_4(T)\lesssim \eps^{-\frac32}\|v_0\|_{l^2}+\bM_5(T)+e^{-k_1L}.
  \end{align}
\end{lemma}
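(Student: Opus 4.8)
The plan is to read off one inequality for each of $\bM_1,\dots,\bM_4$ from the four preceding lemmas, insert the definitions of the $\bM_j$, and then close the resulting system using the smallness hypothesis. The hypothesis $\|v_0\|_{l^2}+\eps^{3/2}\sum_{i=1}^5\bM_i(T)\le\delta\eps^{5/2}$ is first used to verify that the running assumptions of Lemmas \ref{lem:modulation}, \ref{lem:speed-Hamiltonian}, \ref{lem:monotonicity} and \ref{lem:virialv2k} hold on $[0,T]$: it gives $\bM_j(T)\le\delta\eps$ and $\|v_0\|_{l^2}\le\delta\eps^{5/2}$, whence $\sum_i(|c_i-c_{i,0}|+|\dot x_i-c_i|)\le\eps^2\bM_1\le\delta\eps^3$, $\|v_{2k}\|_{l^2}\le\eps^{3/2}\bM_2^{1/2}\le\delta^{1/2}\eps^2$, $\|v_1\|_{l^2}\le C\|v_0\|_{l^2}\le\delta\eps^2$ by \eqref{eq:v1norm}, and $\dot x_i\ge c_i-\delta\eps^3\ge 1+k_1^2\eps^2/24$, which are exactly the smallness and lower-speed bounds those lemmas require.

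I would then establish the estimates in an order respecting their dependencies, starting with the two self-contained ones. The bound \eqref{eq:apb14} for $\bM_3$ comes from \eqref{eq:v1norm} (controlling $\sup_t\|v_1\|_{l^2}$ by $\|v_0\|_{l^2}$) together with Lemma \ref{lem:monotonicity}, applied with $\tilde x=x_i(t)$ and $a\sim k_1\eps$ for each $i$ and then summed: the virial dissipation converts into a bound on $\|v_1\|_{L^2(0,T;W)}^2$ of size $\eps^{-3}\|v_0\|_{l^2}^2$, so $\bM_3\lesssim\eps^{-3/2}\|v_0\|_{l^2}$. The bound \eqref{eq:apb15} for $\bM_4$ follows directly from Lemma \ref{lem:virialv2k}: dividing its conclusion by $\eps^{3/2}$, using $\psi_{k_1\eps,1}\le\sqrt2\,\psi_{k_1\eps,1}^{1/2}$ to pass from the half-power weight to the full one, and summing over $k$ (so the triangular sum $\sum_k\sum_{i\le k}\|v_{2i}\|_{L^2(0,T;X_i)}$ collapses into $\bM_5$) gives $\bM_4\lesssim\eps^{-3/2}\|v_0\|_{l^2}+\bM_5+e^{-k_1L}$.

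The coupled pair \eqref{eq:apb11}–\eqref{eq:apb13} is the heart of the matter. For $\bM_1$ I would integrate the normal-form identity \eqref{eq:modeqc} in time and combine it with \eqref{eq:modeqx}: the time integrals $\int_0^T\eps^2(\|v_1\|_W^2+\sum_k\|v_{2k}\|_{W\cap X_k}^2)\,dt$ are by definition at most $\eps^2(\bM_3^2+\bM_4^2+\bM_5^2)$, while $\int_0^\infty\eps^5 e^{-2k_1(\sigma\eps^3 t+L)}\,dt\lesssim\eps^2 e^{-2k_1L}$ since $\sigma>0$, and the boundary normal-form corrections contribute $\eps^{1/2}\|v_0\|_{l^2}$ plus pointwise-in-time weighted norms of $v_{2k}$. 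For $\bM_2$ I would use \eqref{eq:locenorm}, whose leading term $\eps\sum_i|c_i-c_{i,0}|$ produces $\bM_1$, whose $\|v_0\|_{l^2}^2/\eps^3$ and squared $L^2$-in-time terms are superlinear, and whose remaining piece $\eps^{3/2}\sum_{i<k}\|v_{2i}\|_{W(t)}$ is again a pointwise weighted norm. Writing $A:=\eps^{-3/2}\|v_0\|_{l^2}+\bM_5+e^{-k_1L}$ and feeding the already-proved bounds $\bM_3,\bM_4\lesssim A$ into these produces, after discarding the superlinear pieces, two essentially linear inequalities $\bM_1\lesssim A+\mu\bM_2$ and $\bM_2\lesssim A+\bM_1$. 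Finally I would close the loop: the smallness $\sum\bM_j\le\delta\eps$ and the fact that $A\lesssim\delta\eps$ (using $\|v_0\|_{l^2}\le\delta\eps^{5/2}$, $\bM_5\le\delta\eps$, and $e^{-k_1L}\le\delta\eps$ from \eqref{eq:dist}) let me absorb every quadratic term $\bM_j^2\le\delta\eps\,\bM_j$ and every cross term carrying a positive power of $\eps$ into the left-hand sides, leaving a finite linear system in $\bM_1,\bM_2$ with small off-diagonal coefficient $\mu$, which I solve to obtain \eqref{eq:apb11} and \eqref{eq:apb13}.

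I expect the main obstacle to be precisely this closure, and within it the control of the pointwise-in-time weighted norms $\|v_{2k}(t)\|_{W(t)}$ (and $\|v_{2k}(t)\|_{l^2}$) that appear in \eqref{eq:modeqc} and \eqref{eq:locenorm}. These involve the genuine, un-biased mass of $v_{2k}$, so naively they are bounded only by $\eps^{3/2}\bM_2^{1/2}$, which would feed a square root of $\bM_2$ back into the system and prevent it from closing. The real work is to show, using the rightward bias of the decomposition and the monotonicity built into Lemmas \ref{lem:virialv2k} and \ref{lem:monotonicity}, that the bumps of $v_{2k}$ sit at the solitons $x_j\ge x_1$, where $\psi_{k_1\eps,1}^{1/2}$ is bounded below, while the rightward tail is absorbed by the $X_k$-norm, so that $\sup_t\|v_{2k}(t)\|_{W(t)}\lesssim\eps^{3/2}(\bM_4+\bM_5)$ up to an $e^{-k_1L}$-small remainder rather than $\eps^{3/2}\bM_2^{1/2}$. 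Establishing this linear (rather than square-root) control of the interaction terms is what makes the $\bM_1$–$\bM_2$ feedback weak enough to break.
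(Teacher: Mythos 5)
Your proposal follows essentially the same route as the paper: the paper likewise first deduces \eqref{eq:apb14} and \eqref{eq:apb15} from Lemmas \ref{lem:speed-Hamiltonian}, \ref{lem:monotonicity} and \ref{lem:virialv2k}, then bounds $\bM_1$ by integrating \eqref{eq:modeqc} and estimating the normal-form boundary pairings $\la w_k,\rho_{c_i}\ra$ through the $\psi_{k_1\eps,1}^{\frac12}$-weighted sup norms (see \eqref{eq:apbest11}--\eqref{eq:apbest12}) --- exactly the linear, $\bM_4$-based control of the feedback you identified as the crux --- and finally obtains $\bM_2$ from \eqref{eq:locenorm} via \eqref{eq:apbest13}, absorbing quadratic terms by the smallness hypothesis. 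The only cosmetic difference is that the paper's resulting system is triangular ($\bM_1\lesssim\eps^{-\frac32}\|v_0\|_{l^2}+\bM_5+e^{-k_1L}$ with no $\bM_2$ on the right), so your small off-diagonal coefficient $\mu$ and the $2\times2$ solve are not needed.
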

\begin{proof}
It follows from Lemma \ref{lem:modulation} that for $t\in[0,T]$,
\begin{equation}
  \label{eq:apbest11}
  \begin{split}
&  \sum_{i=1}^N|c_i(t)-c_{i,0}| \\ \le &
\sum_{i=1}^N\sum_{k=1}^{N-i}\theta_1(c_i(t))^{-1}|\la w_k(t),\rho_{c_i(t)}\ra|
\\ & + C\eps^2\int_0^t\left\{
\|v_1\|_{W(s)}^2+\sum_{k=1}^N\|v_{2k}\|_{X_k(s)\cap W(s)}^2
+\eps^3e^{-k_1(\sigma\eps^3s+L)}\right\}ds
\\ \lesssim &
\eps^{\frac12}\sum_{i=1}^N\|\psi_{k_1\eps,1}(t)^{\frac12}w_i(t)\|_{l^2}
\\ & +\eps^2\left(\|v_1\|_{L^2(0,T;W(t))}^2+\sum_{k=1}^N
\|v_{2k}\|_{L^2(0,T;X_k(t)\cap W(t))}^2+e^{-k_1L}\right)
\\ \lesssim &
\eps^2\left\{\bM_4(T)+(\bM_3(T)+\bM_4(T)+\bM_5(T))^2+e^{-k_1L}\right\},    
  \end{split}
\end{equation}
and
\begin{equation}
  \label{eq:apbest12}
  \begin{split}
 |\dot{x}_i(t)-c_i(t)|\lesssim &  \eps^{\frac12}\left(\|v_1\|_{W(t)}
+\sum_{i=1}^N\|\psi_{k_1\eps,1}(t)^{\frac12}v_{2i}(t)\|_{l^2}\right)
+\eps^2e^{-k_1L}
\\ \lesssim & 
\eps^2(\bM_3(T)+\bM_4(T)+e^{-k_1L}).    
  \end{split}
\end{equation}
Lemmas \ref{lem:speed-Hamiltonian}, \ref{lem:monotonicity}
and \ref{lem:virialv2k} imply \eqref{eq:apb14}, \eqref{eq:apb15} and
\begin{align}
  \label{eq:apbest13}
\bM_2(T)^{\frac12}\lesssim \bM_1(t)^{\frac12}+\eps^{-\frac34}\|v_0\|_{l^2}^{\frac12}
+e^{-k_1L}+\bM_3(T)+\bM_4(T)+\bM_5(T).
\end{align}
Substituting \eqref{eq:apb14} and \eqref{eq:apb15} into
\eqref{eq:apbest11}-\eqref{eq:apbest13}, we obtain
\eqref{eq:apb11} and \eqref{eq:apb13}.
Thus we prove Lemma \ref{lem:apb1}.
\end{proof}
Now we will estimate $\bM_5(T)$.
\begin{lemma}
  \label{lem:apb2}
There exists a positive constant $\delta$ such that if
$$\|v_0\|_{l^2}+\eps^{\frac32}\sum_{i=1}^5\bM_i(T)\le \delta\eps^{\frac52},$$
then $\bM_5(T)\lesssim \eps^{-\frac32}\|v_0\|_{l^2}+e^{-k_1L}.$
\end{lemma}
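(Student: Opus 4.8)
The plan is to bound $\bM_5(T)$ by feeding the equation \eqref{eq:v2k} for $v_{2k}$ into the exponential linear stability of the linearized FPU flow around $U_k(t)$ in the weighted space $X_k(t)$, which will be established in Section~\ref{sec:linear} and which we are entitled to assume here. Let $\mathcal{T}_k(t,s)$ be the (two-parameter, since $U_k$ modulates) propagator of $\pd_tz=JH''(U_k(t))z$. By Lemma~\ref{lem:alphabeta} the solution $v_{2k}$ of \eqref{eq:v2k} with $v_{2k}(0)=0$ keeps the secular conditions \eqref{eq:orthv2k}, i.e. $v_{2k}(t)\in\operatorname{Range}Q_k(t)$, and the term $l_k(t)$ lies in $\operatorname{Range}P_k(t)$ (it is built from the very modes $P_k$ projects onto), so $l_k$ represents the $t$-dependence of the constraint rather than genuine forcing. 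The exponential stability result then yields, on $\operatorname{Range}Q_k$, a rate $\beta_k>0$ with $\|\mathcal{T}_k(t,s)Q_k(s)f\|_{X_k(t)}\le Ce^{-\beta_k(t-s)}\|f\|_{X_k(s)}$. Since the weight $e^{k_1\eps(n-x_{N+1-k}(t))}$ is biased in the direction of motion with center moving at $\dot x_{N+1-k}\ge1+O(\eps^2)$, whereas dispersive modes travel at group velocity $\cos(\xi/2)\in[-1,1]$, the decay rate is $\beta_k\sim k_1\eps(c_{N+1-k}-1)=O(\eps^3)$.

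Applying this to the Duhamel representation and taking the $X_k(t)$ norm gives
\[
\|v_{2k}(t)\|_{X_k(t)}\le C\int_0^te^{-\beta_k(t-s)}\bigl(\|l_k(s)\|_{X_k(s)}+\|Q_k(s)JR_k(s)\|_{X_k(s)}\bigr)\,ds.
\]
Young's inequality and Cauchy--Schwarz in the convolution bound the $L^\infty_tX_k$ and the $L^2_tX_k$ contributions to $\bM_5(T)$ by $\beta_k^{-1}$, respectively $\beta_k^{-1/2}$, times the $L^2(0,T)$ norm of the forcing in $X_k$. The loss $\beta_k^{-1}\sim\eps^{-3}$ is the central hazard, and the argument closes only because each forcing term carries a compensating factor $\eps^3$.

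To exploit this I would estimate the forcing with Lemma~\ref{lem:FPUprb} (so that $Q_k(t)$ is bounded on $l^2_{k_1\eps}$, hence on $X_k(t)$), the bound \eqref{eq:abbound2} on $l_k$, and the pointwise nonlinear estimates \eqref{eq:Rk}. The quadratic part $R_{k1}$ is absorbed; the soliton interaction $R_{k2}$ is exponentially small in $L$ and produces the $e^{-k_1L}$ term; and the crucial term $R_{k3}\sim u_{c_{N+1-k}}w_{k-1}$ is localized near $x_{N+1-k}(t)$ because $u_{c_{N+1-k}}$ decays like $e^{-k_{N+1-k}\eps|n-x_{N+1-k}|}$, which dominates the square-root weight $e^{(k_1\eps/2)(n-x_{N+1-k})}$ since $k_1<2k_{N+1-k}$; hence its $X_k$ norm is controlled by the unamplified $W(t)$ norm of $w_{k-1}$ rather than the amplified $X_k$ norm. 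By \eqref{eq:abbound2} the terms carrying $\|v_{2k}\|_{X_k}$ come with a prefactor $\eps^3$ times norms controlled by $\bM_3,\bM_4,\bM_5$, so after the $\beta_k^{-1}\sim\eps^{-3}$ loss they reduce to $(\text{small})\,\bM_5(T)$ and, under the hypothesis $\sum_i\bM_i(T)\le\delta\eps$, are absorbed into the left-hand side.

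Finally I would substitute the a priori bounds \eqref{eq:apb11}--\eqref{eq:apb15} of Lemma~\ref{lem:apb1} to replace $\bM_1,\dots,\bM_4$ by $\eps^{-3/2}\|v_0\|_{l^2}+\bM_5(T)+e^{-k_1L}$, arriving at a closed inequality $\bM_5(T)\lesssim\eps^{-3/2}\|v_0\|_{l^2}+e^{-k_1L}+(\text{small})\,\bM_5(T)$, from which the claim follows. The main obstacle, as flagged, is the mismatch between the $O(\eps^3)$ decay rate and the $O(\eps^{-3})$ Duhamel loss: one must track the $\eps$-powers of every forcing contribution sharply and use the spatial localization of the interaction terms near the solitons to keep every $X_k$-dependent contribution subcritical, so that both the self-referential absorption and the substitution of Lemma~\ref{lem:apb1} leave a strictly contractive remainder.
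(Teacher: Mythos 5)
Your skeleton matches the paper's strategy (Duhamel against the exponential linear stability of Lemma \ref{lem:linearstability}, the splitting \eqref{eq:Rk} of $R_k$, the bound \eqref{eq:abbound2} on $l_k$, and a triangular recursion through $w_{k-1}$), but the central quantitative step fails as you have set it up. Your claim that ``each forcing term carries a compensating factor $\eps^3$'' is false for the quadratic term: by \eqref{eq:Rk} and the hypothesis $\bM_i(T)\le\delta\eps$ one only gets $\|R_k\|_{X_k(t)}\lesssim \delta\eps^{2}\|v_{2k}\|_{X_k(t)}+\eps^{7/2}e^{-k_{N+1-k}(\sigma\eps^3t+L)}+\eps^{2}\|w_{k-1}\|_{W(t)}$, i.e.\ a prefactor $\eps^{2}$, not $\eps^{3}$. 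Fed into your unsmoothed convolution with decay rate $\beta_k\sim\eps^3$, the self-term produces $\delta\eps^{2}\cdot\beta_k^{-1}\sup_t\|v_{2k}\|_{X_k(t)}\sim\delta\eps^{-1}\sup_t\|v_{2k}\|_{X_k(t)}$, which cannot be absorbed, and likewise the $\eps^{2}\|w_{k-1}\|_{W}$ term loses a factor $\eps^{-1}$ in $L^2_t$. The paper's linear stability lemma is stated with the refined forcing structure $F_1+JF_2$, where the $JF_2$ channel comes with the weight $\eps^{-\frac12}(t-s)^{-\frac12}$ instead of the full $\beta_k^{-1}$; one writes $Q_kJR_k=JQ_kR_k+[Q_k,J]R_k$ with $\|[Q_k(s),J]\|_{B(X_k(s))}=O(\eps)$ (cf.\ Lemma \ref{lem:FPUprb}), routes $JQ_kR_k$ through the smoothed channel (effective loss $\eps^{-\frac12}\cdot(\eps^3)^{-\frac12}=\eps^{-2}$, so $\eps^2$-prefactors become $O(\delta)$ or $O(1)$) and the commutator, which has gained an extra $\eps$, through the unsmoothed one; this is exactly \eqref{eq:v2kint1}, followed by a singular-kernel Gronwall argument to reach \eqref{eq:v2kint2}. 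Without this device your $\eps$-bookkeeping does not close.

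Two further gaps. First, the linear theory the paper actually proves (Lemma \ref{lem:LFPU-1}, hence Lemma \ref{lem:linearstability}) concerns the linearization around a \emph{frozen-parameter} wave train $\widetilde U_k(t)=\sum_i u_{c_{i,0}}(\cdot-x_{i,0}(t))$ plus a small $C^1$ perturbation $\zeta$; there is no global decay estimate for your two-parameter propagator of $\pd_tz=JH''(U_k(t))z$ with modulated $c_i(t),x_i(t)$. The paper bridges this by partitioning $[0,T]$ into intervals $[t_j,t_{j+1}]$ with $t_{j+1}-t_j\sim\eps^{-3}$, freezing the parameters on each interval so that the drift $|x_i(t)-x_{ij}(t)|$ stays $O(\delta)$ and the hypotheses on $\zeta=U_k-U_{kj}$ hold (condition \eqref{eq:tj1}), and then iterating across intervals using the contraction condition \eqref{eq:tj2} together with \eqref{eq:aux1} to get a uniform bound on $\sup_j\|v_{2k}(t_j)\|_{X_k(t_j)}$ before summing to the $L^\infty_tX_k$ and $L^2_tX_k$ bounds; this restart mechanism is a genuine part of the proof, not bookkeeping. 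Second, $\bM_5$ includes $v_{2N}$, which does not satisfy \eqref{eq:v2k} but \eqref{eq:v2N}; the paper must rewrite it as \eqref{eq:v2N2} with $l_N=-\dot P_N v_{2N}-P_NJH''(U_N)v_{2N}$ and prove, by a separate cofactor computation in the spirit of Lemma \ref{lem:FPUprb}, that $\|l_N\|_{X_N(t)}\lesssim\eps^3(\delta+e^{-k_1L})\|v_{2N}\|_{X_N(t)}$ before the same scheme applies. Finally, note that the lower-order terms $\|v_{2i}\|_{L^2(0,T;X_i(t))}$, $i<k$, enter with $O(1)$ constants, so ``absorption into the left-hand side'' is not available for them; the correct closure is the finite induction on $k$, solving the triangular system in $N$ steps rather than a contraction in $\bM_5$.
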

To prove Lemma \ref{lem:apb2}, we need the following exponential stability
result of $k$-soliton solutions $(1\le k\le N)$.
\begin{lemma}
  \label{lem:linearstability}
Let $x_{i,0}(t)=c_{i,0}t+x_{i,0}$ and 
$\widetilde{U}_k(t)=\sum_{i=N+1-k}^Nu_{c_{i,0}}(\cdot-x_{i,0}(t))$.
Let $\zeta={}^t(\zeta_1,\zeta_2)\in C^1(\R^2)$,
 $\mathcal{F}_n\zeta\in L^1(\T)$,
$F_1$, $F_2\in C([0,\infty);l^2_{k_1\eps})$ and let
$w(t)\in C^1(\R;l^2_{k_1\eps})$ be a solution of 
\begin{equation}
  \label{eq:LFPU2}
\pd_tw(t)=JH''(\widetilde{U}_k(t)+\zeta(t))w(t)+F_1(t)+JF_2(t).
\end{equation}
There exist positive numbers $\eps_0$, $L_0$, $\delta_1$, $\delta_2$, $M$ and
$b$ satisfying the following: Suppose $\eps\in(0,\eps_0)$,
$0\le T_1\le T_2\le \infty$ and that
\begin{gather*}
\inf_{t\in[T_1,T_2]}\min_{2\le j\le N}\eps(x_{j,0}-x_{j-1,0})\ge L_0,
\\ \sup_{t\in[T_1,T_2]}\sup_{x\in\R}
(|\zeta_1(t,x)|+\eps^{-1}|\pd_x\zeta_1(t,x)|)\le \delta_1\eps^2,
\end{gather*}
and
\begin{equation}
  \label{eq:orth3'}
  \begin{split}
 & \eps^{-\frac32}|\la w(t), J^{-1}\pd_xu_{c_{i,0}}(\cdot-x_{i,0}(t))\ra|
+\eps^{\frac32}|\la w(t),J^{-1}\pd_cu_{c_i}(\cdot-x_{i,0}(t))\ra|
\\ \le &  \delta_2\|e^{\eps k_1(\cdot-x_{N+1-k,0}(t))}w(t)\|_{l^2}
\end{split}
\end{equation}
for $N+1-k\le i\le N$ and  $t\in[T_1,T_2]$.
Then for every $t, t_1\in[T_1,T_2]$ satisfying $t\ge t_1$,
\begin{align*}
& \|e^{\eps k_1(\cdot-x_{N+1-k,0}(t))}w(t)\|_{l^2} \\ \le &
Me^{-b\eps^3(t-t_1)}\|e^{\eps k_1(\cdot-x_{N+1-k,0}(t_1))}w(t_1)\|_{l^2}
\\ &+M\int_{t_1}^t e^{-b\eps^3(t-s)}\|e^{\eps k_1(\cdot-x_{N+1-k,0}(s))}F_1(s)\|_{l^2}ds
\\ & + M\eps^{-\frac12} \int_{t_1}^t(t-s)^{-\frac12}
\|e^{\eps k_1(\cdot-x_{N+1-k,0}(s))}F_2(s)\|_{l^2}ds.
\end{align*}
\end{lemma}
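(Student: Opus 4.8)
The plan is to establish first the homogeneous decay of the evolution generated by $JH''(\widetilde{U}_k(t))$ in the forward-biased weighted norm $\|e^{\eps k_1(\cdot-x_{N+1-k,0}(t))}\,\cdot\,\|_{l^2}$ under the orthogonality \eqref{eq:orth3'}, and then to recover the full inequality by Duhamel's formula: the inhomogeneities $F_1$ and $JF_2$ produce the two integral terms in the conclusion, while the coefficient perturbation $J(H''(\widetilde{U}_k+\zeta)-H''(\widetilde{U}_k))w$ will be absorbed into the exponential rate once the frequency structure below is in place. I would prove the homogeneous decay by the three-band decomposition announced in the Introduction.

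With $\xi\in\T$ dual to $n$, write $w=P_{\mathrm{hi}}w+P_{\mathrm{mid}}w+P_{\mathrm{lo}}w$ by smooth Fourier cut-offs at $\eps$-dependent scales. On the high and middle bands the potential $H''(\widetilde{U}_k)-I=O(\eps^2)$ is negligible and the flow reduces to the dispersive linearized FPU equation about the null state, approximated on the middle band by the Airy flow $\pd_tu+\pd_x^3u=0$. Its plane waves have group velocity $\pm\cos\frac\xi2\in[-1,1]$, so in the frame of the weight, which travels at the soliton speed $c_{N+1-k,0}>1$, each mode lags behind and the weighted norm decays at rate $\sim\eps k_1(c_{N+1-k,0}-\cos\frac\xi2)\sim\eps(\eps^2+\xi^2)$ --- the mechanism of Friesecke and Pego \cite{FP2,FP3,FP4} and Pego and Weinstein \cite{PW}. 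The worst rate, attained as $\xi\to0$, is $O(\eps^3)$. This same picture absorbs the $\zeta$-perturbation: $J$ carries a factor $|\xi|$ in frequency, so its contribution is of size $\delta_1\eps^2|\xi|$, which is dominated by $\eps(\eps^2+\xi^2)$ through the arithmetic--geometric mean inequality and is therefore swallowed by the decay for $\delta_1$ small.

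The low band is the KdV regime and is the crux. Under the long-wave rescaling $x=\eps n$, $\tau\sim\eps^3t$, the linearized FPU operator on $P_{\mathrm{lo}}w$ converges to the linearized KdV operator $\pd_x(\pd_x^2+12\varphi_N)$ about the $N$-soliton, with the frame on the $(N+1-k)$-th soliton giving KdV speed $c=4k_{N+1-k}^2$ and weight $a=k_1$, so that $0<a<2k_1$ and $c-a^2\ge3k_1^2>0$. The near-orthogonality \eqref{eq:orth3'} transfers, to leading order, to \eqref{eq:secKdV1}--\eqref{eq:secKdV2}: in the long-wave limit $J^{-1}$ acts like an antiderivative, so $J^{-1}\pd_xu_{c_{i,0}}$ and $J^{-1}\pd_cu_{c_{i,0}}$ tend to the primitives $\int^x\pd_{\gamma_i}\varphi_N$ and $\int^x\pd_{k_i}\varphi_N$ that define those conditions. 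I would then apply Theorem \ref{thm:linearizedKdV} to $\mathcal{Q}(\tau)P_{\mathrm{lo}}w$: its first estimate supplies the plain exponential factor $e^{-b\eps^3(t-s)}$, with $b$ comparable to $3k_1^3$, governing both the homogeneous decay $e^{-b\eps^3(t-t_1)}$ and the $F_1$ Duhamel integral, while its second estimate with $\theta=1$ handles $JF_2$: the $H^{-1}$-smoothing produces $(\eps^3(t-s))^{-1/2}$ in the variable $t$ and the order-one operator $J$ costs a factor $\eps$, the product giving the kernel $\eps^{-1/2}(t-s)^{-1/2}$ of the conclusion.

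The main obstacle is this low-frequency step. A bare long-wave comparison is accurate only over one KdV time unit, an $O(\eps^{-3})$ window in $t$, whereas the estimate must hold for all $t\in[T_1,T_2]$; the FPU--KdV discrepancy must therefore be run against the intrinsic decay of Theorem \ref{thm:linearizedKdV} in a bootstrap rather than used once, and the small secular components of $P_{\mathrm{lo}}w$ created by the imperfect orthogonality transfer must be tracked and shown to be harmless. Once each band is closed, the coupling among them through the weight and through $H''(\widetilde{U}_k)-I$ is absorbed by a Gronwall inequality for $\|e^{\eps k_1(\cdot-x_{N+1-k,0}(t))}w(t)\|_{l^2}$, which yields the homogeneous semigroup bound and, together with the Duhamel terms, the full statement; the case $k=1$ reproduces the one-soliton result of Mizumachi \cite{Mi1}.
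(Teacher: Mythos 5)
Your proposal is correct and follows essentially the same route as the paper: the paper's own proof first replaces the solitary-wave train $\widetilde{U}_k$ by the exact scaled KdV $N$-soliton $u_{N,\eps}$ and transfers the orthogonality \eqref{eq:orth3'} into \eqref{eq:orth3} (Appendix \ref{sec:lemlinearstability}, via (P4), Lemma \ref{lem:KdV-Nsol} and Lemma \ref{lem:approx-adjef}), and then proves Lemma \ref{lem:LFPU-1} by exactly your three-band Fourier decomposition, with Theorem \ref{thm:linearizedKdV} governing the low band (through the approximate secular-term transfer of Lemmas \ref{lem:KdVPb} and \ref{lem:projerr}) and Gronwall/bootstrap arguments absorbing the $\zeta$-perturbation and the middle/high-frequency couplings, yielding the same $e^{-b\eps^3(t-s)}$ rate and the $\eps^{-\frac12}(t-s)^{-\frac12}$ kernel for $JF_2$. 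The only difference is organizational: you run the KdV comparison directly on the train, whereas the paper isolates the train-versus-exact-$N$-soliton discrepancy into the perturbation $\zeta$ before the frequency analysis, which is precisely the bootstrap you describe.
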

Lemma \ref{lem:linearstability} follows immediately from
Lemma \ref{lem:LFPU-1}. See Appendix \ref{sec:lemlinearstability}.

\begin{proof}[Proof of Lemma \ref{lem:apb2}]
Let $\{t_j\}_{j\ge 0}$ be a monotone increasing sequence such that
$t_0=0$ and $\sup_{j\ge 0}[t_j,t_{j+1}]=[0,T]$ that satisfies
\eqref{eq:tj1} and \eqref{eq:tj2} below. We remark that $t_{j+1}-t_j\sim \eps^{-3}$.
\par
To begin with, we will show that Lemma \ref{lem:linearstability} is applicable
provided $\delta$ is small.
Let $x_{ij}(t):=x_i(t_j)+c_{i,0}(t-t_j)$, $h_{ij}(t)=x_i(t)-x_{ij}(t)$
and $U_{kj}(t)=\sum_{i=N+1-k}^N u_{c_{i,0}}(\cdot-x_{ij}(t))$.
Lemma \ref{lem:apb1} implies that for $t\in[t_j,t_{j+1}]$,
\begin{align*}
  |h_{ij}(t)|\le & \int_{t_j}^t\left(|\dot{x}_i(s)-c_i(s)|+|c_i(s)-c_{i,0}|
\right)ds
\\ \lesssim & \eps^2\bM_1(T)(t_{j+1}-t_j).
\end{align*}
Thus there exists an $A_2>0$ such that  for $t\in[t_j,t_{j+1}]$, 
\begin{align*}
&  \sup_x\left|U_k(t)-U_{kj}(t)\right| \\ \le & 
\sum_{i=N+1-k}^N \left(
\|\pd_xu_{c_{i,0}}\|_{L^\infty}|x_i(t)-x_{ij}(t)|
+\sup_{|c-c_{i,0}|\le \delta\eps^2}\|\pd_cu\|_{L^\infty}|c_i(t)-c_{i,0}|
\right)
\\ \le  & A_2\eps^2\bM_1(T)\{\eps^3(t_{j+1}-t_j)+1\},
\end{align*}
and
\begin{align*}
&  \sup_x\left|\pd_xU_k(t)-\pd_xU_{kj}(t)\right| \\ \le & 
\sum_{i=N+1-k}^N \left(
\|\pd_x^2u_{c_{i,0}}\|_{L^\infty}|x_i(t)-x_{ij}(t)|
+\sup_{|c-c_{i,0}|\le \delta\eps^2}\|\pd_x\pd_cu\|_{L^\infty}|c_i(t)-c_{i,0}|
\right)
\\ \le & A_2\eps^3\bM_1(T)\{\eps^3(t_{j+1}-t_j)+1\}.
\end{align*}
Suppose 
\begin{equation}
  \label{eq:tj1}
  A_2\delta\{1+\eps^3\sup_{j\ge 0}(t_{j+1}-t_j)\}<\delta_1.
\end{equation}
Since $\sup_{t\in[t_j,t_{j+1}]}\eps|x_i(t)-x_{ij}(t)|=O(\delta)$,
there exist positive constants $c_1$ and $c_2$ such that
$$ c_1\|e^{k_1\eps(\cdot-x_{k,j}(t))}u\|_{l^2}
\le \|u\|_{X_k(t)}\le c_2\|e^{k_1\eps(\cdot-x_{k,j}(t))}u\|_{l^2}$$
for every $t\in [t_j,t_{j+1}]$, $j\ge 0$ and $u\in l^2_{k_1\eps}$.
Hence it follows from Lemma \ref{lem:linearstability} that for
$t\in[t_j,t_{j+1}]$, $j\ge0$ and $1\le k\le N-1$,
\begin{equation}
  \label{eq:v2kint1}
  \begin{split}
& \|v_{2k}(t)\|_{X_k(t)}\lesssim e^{-b\eps^3(t-t_j)}\|v_{2k}(t_j)\|_{X_k(t_j)}
\\ & +\int_{t_j}^te^{-b\eps^3(t-s)}
\left(\|l_k(s)\|_{X_k(s)}+\|[Q_k(s),J]R_k\|_{X_k(s)}\right)ds
\\ & +\eps^{-\frac12}\int_{t_j}^te^{-b\eps^3(t-s)}(t-s)^{-\frac12}
\|Q_k(s)R_k\|_{X_k(s)}ds.  
\end{split}
\end{equation}
By Lemma \ref{lem:modulationpre},
  \begin{align*}
\|l_k\|_{X_k(t)}\lesssim & 
\eps^{\frac32}\|v_{2k}\|_{X_k(t)}(\|v_1\|_{l^2}+\sum_{i=1}^{N}\|v_{2i}\|_{l^2}
+\eps^{\frac32}e^{-k_1(\sigma\eps^3t+L)})
\\ \lesssim & \delta\eps^3\|v_{2k}\|_{X_k(t)}.  
\end{align*}
 By \eqref{eq:Rk},
 \begin{align*}
\|R_k\|_{X_k(t)}\lesssim & 
\|R_{k1}\|_{X_k(t)}+\|R_{k2}\|_{X_k(t)}+\|R_{k3}\|_{X_k(t)}
\\ \lesssim & 
\|v_{2k}\|_{X_k(t)}(\|v_{2k}\|_{l^2}+ \|w_{k-1}\|_{l^2})
+\eps^{\frac72}e^{-k_{N+1-k}(\sigma\eps^3t+L)}+\eps^2\|w_{k-1}\|_{W(t)}
\\ \lesssim & \delta\eps^2\|v_{2k}\|_{X_k(t)}
+\eps^{\frac72}e^{-k_{N+1-k}(\sigma\eps^3t+L)}+\eps^2\|w_{k-1}\|_{W(t)}.   
 \end{align*}
Substituting the above inequalities and $\|[Q_k(s),J]\|_{B(X_k(s))}=O(\eps)$
into \eqref{eq:v2kint1}, we have
\begin{align*}
&  \|v_{2k}(t)\|_{X_k(t)} \\ \lesssim &
e^{-b\eps^3(t-t_j)}\|v_{2k}(t_j)\|_{X_k(t_j)}
+\eps^{\frac92}\int_{t_j}^t e^{-b\eps^3(t-s)}
(1+\eps^{-\frac32}(t-s)^{-\frac12})
e^{-k_1(\sigma\eps^3s+L)}ds
\\ & + \delta\eps^3\int_{t_j}^te^{-b\eps^3(t-s)}
(1+\eps^{-\frac32}(t-s)^{-\frac12})\|v_{2k}(s)\|_{X_k(s)}ds
\\ & +\eps^3\int_{t_j}^t
(1+\eps^{-\frac32}(t-s)^{-\frac12})e^{-b\eps^3(t-s)}
\left(\|v_1(s)\|_{W(s)}+\sum_{i=1}^{k-1}\|v_{2i}(s)\|_{W(s)}\right)ds
\\ \lesssim &
e^{-2b_1\eps^3(t-t_j)}\|v_{2k}(t_j)\|_{X_k(t_j)}
+\eps^{\frac32}e^{-(k_1L+2b_1\eps^3t)}
\\ & + \eps^{\frac32}\delta\int_{t_j}^te^{-2b_1\eps^3(t-s)}
(t-s)^{-\frac12}\|v_{2k}(s)\|_{X_k(s)}ds
\\ & +\eps^{\frac32}\int_{t_j}^te^{-2b_1\eps^3(t-s)}(t-s)^{-\frac12}
\left(\|v_1(s)\|_{W(s)}+\sum_{i=1}^{k-1}\|v_{2i}(s)\|_{W(s)}\right)ds,
\end{align*}
where $b_1=\min\{\frac{b}{4},\frac{k_1\sigma}{4}\}$.
Applying Gronwall's inequality (\cite[Lemma 7.1.1]{He}) to the above,
we see that for small $\delta$, there exist positive constants
$C_1$ and  $C_2$ such that
\begin{equation}
  \label{eq:v2kint2}
  \begin{split}
& \|v_{2k}(t)\|_{X_k(t)}\le 
 C_1\{e^{-b_1\eps^3(t-t_j)}\|v_{2k}(t_j)\|_{X_k(t_j)}
+\eps^{\frac32}e^{-(b_1\eps^3t+k_1L)}\}
\\ & \quad +C_2\eps^{\frac32}\int_{t_j}^t e^{-b_1\eps^3(t-s)}(t-s)^{-\frac12}
\left(\|v_1(s)\|_{W(s)}+\sum_{i=1}^{k-1}\|v_{2i}(s)\|_{W(s)}\right)ds
  \end{split}
\end{equation}
for every $t\in [t_j,t_{j+1}]$, $j\ge0$ and $1\le k\le N-1$.
Suppose that $\{t_j\}_{j\ge0}$ satisfies
\begin{equation}
  \label{eq:tj2}
  C_1\sup_{j\ge0}e^{-b_1\eps^3(t_{j+1}-t_j)}\le \frac12.
\end{equation}
Lemma \ref{lem:virialv2k} implies 
\begin{equation}
  \label{eq:aux1}
\sup_{t\in[0,T]}\|v_{2i}\|_{W(t)}\lesssim \|v_0\|_{l^2}+\eps^{\frac32}e^{-k_1L}
+\eps^{\frac32}\sum_{j=1}^i\|v_{2j}\|_{L^2(0,T;X_j(t))}.  
\end{equation}
By  \eqref{eq:v2kint2}, \eqref{eq:aux1} and Lemma \ref{lem:monotonicity},
there exists a positive constant $C_3$ such that
\begin{align*}
& \|v_{2k}(t_{j+1})\|_{X_k(t_{j+1})} \\ \le &
\frac12(\|v_{2k}(t_j)\|_{X_k(t_j)}+\eps^{\frac32}e^{-k_1L})
\\ & +C_2\eps^3\|e^{-b_1\eps^3t}t^{-\frac12}\|_{L^1(0,T)}
\sup_{t\in[0,T]}\left(\|v_1(t)\|_{W(t)}+\sum_{i=1}^{k-1}\|v_{2i}(t)\|_{W(t)}\right)
\\ \le & \frac12\|v_{2k}(t_j)\|_{X_k(t_j)} 
+C_3\left\{\|v_0\|_{l^2}+\eps^{\frac32}
\left(e^{-k_1L}+\sum_{i=1}^{k-1}\|v_{2i}(t)\|_{L^2(0,T;X_i(t))}\right)\right\}
\end{align*}
for any $j\ge0$. Thus we have 
$$\sup_{j\ge0}\|v_{2k}(t_j)\|_{X_k(t_j)}
\lesssim 
\|v_0\|_{l^2}+\eps^{\frac32}
\left(e^{-k_1L}+\sum_{i=1}^{k-1}\|v_{2i}(t)\|_{L^2(0,T;X_i(t))}\right).$$
Substituting the above into \eqref{eq:v2kint2} and applying  Young's
inequality to the resulting equation and using Lemmas \ref{lem:monotonicity}
and \ref{lem:virialv2k} again, we have for $1\le k\le N-1$,
\begin{equation*}
  \begin{split}
& \|v_{2k}\|_{L^2(0,T;X_k(t))} \\ \lesssim &
\eps^{-\frac32}\|v_0\|_{l^2}+e^{-k_1L}+\sum_{i=1}^{k-1}\|v_{2i}\|_{L^2(0,T;X_i(t))}
\\ & +\eps^{\frac32}\|e^{-b_1\eps^3t} t^{-\frac12}\|_{L^1(0,T)}
\left(\|v_1\|_{L^2(0,T;W(t))}+\sum_{i=1}^{k-1}\|v_{2i}\|_{L^2(0,T;X_i(t))}\right)
\\ \lesssim & \eps^{-\frac32}\|v_0\|_{l^2}+e^{-k_1L}
+\sum_{i=1}^{k-1}\|v_{2i}\|_{L^2(0,T;X_k(t))}.
  \end{split}
\end{equation*}
Similarly, we have
\begin{equation*}
\sup_{t\in[0,T]} \|v_{2k}(t)\|_{X_k(t)} \lesssim \|v_0\|_{l^2}+\eps^{\frac32}
\left(e^{-k_1L}+\sum_{i=1}^{k-1}\|v_{2i}(t)\|_{L^2(0,T;X_i(t))}\right)
\end{equation*}
by using \eqref{eq:v2kint2} and \eqref{eq:aux1}.
Thus we conclude that for $1\le k \le N-1$,
\begin{equation}
    \label{eq:<n}
\sup_{t\in[0,T]} \|v_{2k}(t)\|_{X_k(t)}+\eps^{\frac32}\|v_{2k}\|_{L^2(0,T;X_k(t))}
\lesssim \|v_0\|_{l^2}+\eps^{\frac32}e^{-k_1L}.
\end{equation}
\par
Finally, we will estimate $\|v_{2N}\|_{X_N(t)}$.
Eq. \eqref{eq:v2N} is transformed into 
\begin{equation}
  \label{eq:v2N2}
\left\{
  \begin{aligned}
    &   \pd_tv_{2N}=JH''(U_N)v_{2N}+l_N+Q_NJR_N,\\ & v_{2N}(0)=0,
  \end{aligned}\right.
\end{equation}
where $l_N=P_N(t)(\pd_t-JH''(U_N(t)))v_{2N}
=-\dot{P}_N(t)v_{2N}-P_N(t)JH''(U_N(t))v_{2N}.$
Let
\begin{align*}
f_N=\begin{pmatrix}  f_{Ni}^1\\ f_{Ni}^2\end{pmatrix}_{i=1,\cdots,N\downarrow}
=& \begin{pmatrix}
\eps^{-4}\la v_{2N},(H''(U_N)-H''(u_{c_i}))\pd_xu_{c_i}\ra\\
\eps^{-1}\la v_{2N},(H''(U_N)-H''(u_{c_i}))\pd_c{u}_{c_i}\ra
\end{pmatrix}_{i=1,\cdots,N\downarrow}\\
+&
 \begin{pmatrix}
\eps^{-4}\la v_{2N},J^{-1}\{(\dot{x}_i-c_i)\pd_x^2u_{c_i}
-\dot{c}_i\pd_x\pd_cu_{c_i}\}\ra\\
\eps^{-1}\la v_{2N},J^{-1}\{(\dot{x}_i-c_i)\pd_c\pd_xu_{c_i}
-\dot{c}_i\pd_c^2u_{c_i}\}\ra
\end{pmatrix}_{i=1,\cdots,N\downarrow}.
\end{align*}
By \eqref{eq:projformula} and \eqref{eq:secularmode}, we have
\begin{align*}
  l_N=&
(\eps^3\pd_cu_{c_j},\pd_xu_{c_j})_{j=1,\cdots,N\rightarrow}\mathcal{A}_N^{-1}f_N
\\=& \frac{1}{|\mathcal{A}_N|}\sum_{j=1}^N \left\{
    \begin{vmatrix}
\mathcal{A}_{11} & \ldots &\widetilde{\Delta}_{1j}^1& \ldots& \mathcal{A}_{1N}\\
\vdots& & \vdots & & \vdots\\
\mathcal{A}_{N1} & \ldots &\widetilde{\Delta}_{Nj}^1& \ldots& \mathcal{A}_{NN}
    \end{vmatrix}
+    \begin{vmatrix}
\mathcal{A}_{11} & \ldots &\widetilde{\Delta}_{1j}^2& \ldots& \mathcal{A}_{1N}\\
\vdots& & \vdots & & \vdots\\
\mathcal{A}_{N1} & \ldots &\widetilde{\Delta}_{Nj}^2& \ldots& \mathcal{A}_{NN}
    \end{vmatrix}\right\},
\end{align*}
where
\begin{align*}
& \widetilde{\Delta}_{ij}^1=
\begin{pmatrix}
\eps^3\pd_cu_{c_j}f_{Ni}^1 &
\eps^{-4}\la \pd_xu_{c_j},J^{-1}\pd_xu_{c_i}\ra \\
\eps^3\pd_cu_{c_j}f_{Ni}^2 &
\eps^{-1}\la \pd_xu_{c_j},J^{-1}\pd_cu_{c_i}\ra  
\end{pmatrix},
\\ &
\widetilde{\Delta}_{ij}^2=
\begin{pmatrix} \eps^{-1}\la \pd_cu_{c_j},J^{-1}\pd_xu_{c_i}\ra 
& \pd_xu_{c_j}f_{Ni}^1 \\ 
\eps^2\la \pd_cu_{c_j},J^{-1}\pd_cu_{c_i}\ra
& \pd_xu_{c_j}f_{Ni}^2
\end{pmatrix}.
\end{align*}
Noting that
\begin{align*}
& \|\text{the first column of }\widetilde{\Delta}_{ij}^1\|_{X_N(t)}
+\|\text{the second column of }\widetilde{\Delta}_{ij}^2\|_{X_N(t)}
\\ \lesssim &
 \eps^3\{\eps^{-\frac32}(\|v_1\|_{W(t)}+\sum_{k=1}^N\|v_{2k}\|_{W(t)})
+e^{-k_1(\sigma\eps^3t+L)}\}e^{k_1\eps(x_j-x_i)}\|v_{2N}\|_{X_N(t)},
\end{align*}
and following the argument of the proof of Lemma \ref{lem:FPUprb},
we have
$$\|l_N(t)\|_{X_N(t)}\lesssim \eps^3(\delta+e^{-k_1L})
\|v_{2N}(t)\|_{X_N(t)}.$$
Thus we have
\begin{align*}
& \sup_{t\in[0,T]}\|v_{2N}\|_{X_N(t)}+\eps^{\frac32}\|v_{2N}\|_{L^2(0,T;X_N(t))}
\\ \lesssim & \|v_0\|_{l^2}+\eps^{\frac32}e^{-k_1L}+
\sum_{1\le k\le N-1}\|v_{2i}\|_{L^2(0,T;X_k(t))}  
\end{align*}
exactly in the same way as \eqref{eq:<n}.
\end{proof}

Now we are in position to prove Theorem \ref{thm:1}.
\begin{proof}[Proof of Theorem \ref{thm:1}]
Let $(v_1,v_{21},\cdots,v_{2N},x_1,c_1,\cdots,x_N,c_N)$ be a solution
to the system \eqref{eq:v1}, \eqref{eq:v2k}, \eqref{eq:v2N},
\eqref{eq:orthv2k3}, \eqref{eq:modeq1} satisfying the initial condition
\eqref{eq:IC}. It exists as long as $v_{2k}$ $(1\le k\le N)$ and $c_i$
remain bounded. 
Let $\delta$ be a positive number  given in Lemmas \ref{lem:apb1} and
\ref{lem:apb2}.  By \eqref{eq:modeqx} and \eqref{eq:IC},
\begin{align*}
  \|v_0\|_{l^2}+\eps^{\frac32}\sum_{i=1}^5\bM_i(0)=&
2\|v_0\|_{l^2}+\eps^{-\frac12}\sum_{i=1}|\dot{x}_i(0)-c_i(0)|
\\ \lesssim & \delta_0\eps^2+\eps^{\frac32}e^{-k_1L}.
\end{align*}
If $\delta_0$ is sufficiently small and $L$ is sufficiently large,
$$\|v_0\|_{l^2}+\eps^{\frac32}\sum_{i=1}^5\bM_i(0)
\le \frac\delta2\eps^{\frac52}.$$
Let $T_*=\sup\{T_1\ge0:\|v_0\|_{l^2}+\eps^{\frac32}\sum_{i=1}^5\bM_i(T)
\le \delta\eps^{\frac52} \text{ for $0\le T\le T_1$}\}$. 
Lemmas \ref{lem:apb1} and \ref{lem:apb2} imply that there exists a $C>0$
such that
\begin{align*}
\|v_0\|_{l^2}+\eps^{\frac32}\sum_{i=1}^5\bM_i(T)
\le & C(\|v_0\|_{l^2}+\eps^2e^{-k_1L})
\\ < & \delta\eps^{\frac52} \quad\text{for $0\le T\le T_*$}
\end{align*}
provided $\eps_0$, $\delta_0$ are sufficiently small and $L$ is sufficiently
large. Thus we have $T_*=\infty$ and \eqref{eq:orbital-stability}.
We can prove \eqref{eq:asympstability} and \eqref{eq:spphconv}
in exactly the same way as \cite[pp.140-143]{Mi1}.
Thus we complete the proof of Theorem \ref{thm:1}.
\end{proof}

\bigskip

\section{Linear estimate}
\label{sec:linear}

In this section, we prove exponential linear stability of small
$N$-soliton solutions of \eqref{eq:FPU}. Let $T=t/24$, $X=x-t$ and 
\begin{gather*}
r_{N,\eps}(t,x;\mk,\mgamma)
=\varphi_N\left(T,X;\eps\mk,\eps^{-1}\mgamma\right)
=\eps^2\varphi_N\left(\eps^3T,\eps X;\mk,\mgamma\right),
\\ \quad
u_{N,\eps}(t,n;\mk,\mgamma)=
{}^t(r_{N,\eps}(t,n;\mk,\mgamma), -r_{N,\eps}(t,n;\mk,\mgamma)).
\end{gather*}
Gardner {\it et al.}  \cite{GGKM} tells us that an $N$-soliton $u_{N,\eps}$
uniformly converges to a train of solitary waves
$u_{c_{i,\eps}}(n-c_{i,\eps}t-\eps^{-1}\tilde{\gamma}_i)$
($1\le i\le N$) as $t\to\infty$ (see also \cite{HS}).
Since solitary waves of \eqref{eq:FPU} are approximated by KdV 1-solitons
in the continuous limit (\cite{FP1}), $u_{N,\eps}$ is an approximate solution
of \eqref{eq:FPU}.
\par

The linearized equation of \eqref{eq:FPU} around $u_{N,\eps}$ has a similar exponential stability property
as the linearized KdV equation \eqref{eq:LKdV} if $\eps$ is close to $0$.
\begin{lemma}
  \label{lem:LFPU-1}
Let $0<k_1<\cdots<k_N$, $\zeta=(\zeta_1,\zeta_2)
\in C^1(\R)$, $\mathcal{F}_n\zeta\in L^1(\T)$ and
$F_1$, $F_2\in C([0,\infty);l^2_{k_1\eps})$.
Let $w(t)\in C^1(\R;l^2_{k_1\eps})$ be a solution of 
\begin{equation}
  \label{eq:LFPU}
\pd_tw(t)=JH''(u_{N,\eps}(t,\cdot;\mk,\mgamma)+\zeta(t,\cdot))w(t)+F_1(t)
+JF_2(t).
\end{equation}
There exist positive numbers $\eps_0$, $\delta_1$, $\delta_2$, $M$ and $b$
satisfying the following: If $\eps\in(0,\eps_0)$, 
$\sup_{t,x}(|\zeta_1(t,x)|+\eps^{-1}|\pd_x\zeta_1(t,x)|)
\le \delta_1\eps^2$ and
\begin{equation}
  \label{eq:orth3}
\begin{split}
& \sum_{1\le i\le N}
\left(|\la w(t), J^{-1}\pd_{\gamma_i}u_{N,\eps}(t)\ra|
+|\la w(t),J^{-1}\pd_{k_i}u_{N,\eps}(t)\ra|\right)
\\ \le & \delta_2\eps^{\frac12}\|e^{k_1\eps(\cdot-c_{1,\eps}t-\eps^{-1}\gamma_1)}w(t)\|_{l^2}
\end{split}
\end{equation}
for $1\le i\le N$ and $t\ge t_1$, then for every $t\ge t_1\ge0$,
\begin{equation*}
  \begin{split}
& \|e^{\eps k_1(\cdot-c_{1,\eps}t)}w(t)\|_{l^2} \\ \le &
Me^{-b\eps^3(t-s)}\|e^{\eps k_1(\cdot-c_{1,\eps}t_1)}w(t_1)\|_{l^2}
+M\int_{t_1}^t e^{-b\eps^3(t-s)}\|e^{\eps k_1(\cdot-c_{1,\eps}s)}F_1(s)\|_{l^2}ds
\\ & + M\eps^{-\frac12}\int_{t_1}^t e^{-b\eps^3(t-s)} (t-s)^{-\frac12}
\|e^{\eps k_1(\cdot-c_{1,\eps}s)}F_2(s)\|_{l^2})ds.
  \end{split}
\end{equation*}
\end{lemma}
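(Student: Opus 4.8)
The plan is to remove the spatial growth by conjugating \eqref{eq:LFPU} with the moving exponential weight, to split the resulting equation into frequency bands, and to treat the high and middle bands by the dispersive decay of the free linearized operator while treating the low band by the linearized KdV estimate of Theorem \ref{thm:linearizedKdV}. First I would set $a=k_1\eps$, $W_a(t,n)=e^{a(n-c_{1,\eps}t)}$, $\tilde w=W_aw$, and $B(t)=H''(u_{N,\eps}(t)+\zeta(t))-I=\diag(V''(r_{N,\eps}+\zeta_1)-1,0)$, so that $\tilde w$ solves $\pd_t\tilde w=(J_a-ac_{1,\eps})\tilde w+J_aB(t)\tilde w+W_aF_1+J_a(W_aF_2)$, where $J_a$ denotes the conjugate of $J$ in which $e^{\pm\pd}$ is replaced by $e^{\mp a}e^{\pm\pd}$. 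A direct symbol computation gives $\lambda(\xi)^2=2\cosh a\cos\xi-2-2i\sinh a\sin\xi$ for $J_a$, whence the rightmost spectrum of $J_a-ac_{1,\eps}$ sits at $\mathrm{Re}\,\lambda\approx a(\cos\tfrac\xi2-c_{1,\eps})$; this is most dangerous near $\xi=0$, where it equals $a(1-c_{1,\eps})=-k_1^3\eps^3/6$, and is more negative for $\xi$ bounded away from $0$. Thus the free weighted semigroup $e^{t(J_a-ac_{1,\eps})}$ decays like $e^{-b\eps^3t}$, the $\eps^3$ rate being dictated by the long-wave (KdV) regime, exactly as in \cite{FP2,FP3,FP4}.

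Second, I would insert smooth Fourier cut-offs $\chi_{low}+\chi_{mid}+\chi_{high}=1$ and estimate each band. On $\supp\chi_{high}$ and $\supp\chi_{mid}$ the weighted free semigroup already decays at a rate $\gtrsim\eps^3$ (indeed faster away from $\xi=0$), and the potential $J_aB(t)\tilde w$ is a localized perturbation of size $O(\eps^2)$ carrying no neutral directions there, so these bands are controlled by the free decay together with a perturbative Duhamel argument. The delicate band is $\chi_{low}$: here the discrete symbol $2\sin\tfrac\xi2$ agrees with the KdV symbol $\xi-\xi^3/24$ to the relevant order and, after passing to the KdV variables $T=t/24$, $X=x-t$ used to define $u_{N,\eps}$, the low-frequency part of $\tilde w$ is approximated by a solution of the linearized KdV equation \eqref{eq:LKdV} around $\varphi_N$, with $B(t)$ replaced by $12\varphi_N$ modulo $O(\eps^2)$ errors coming from $\zeta$ and from the FPU/KdV discrepancy.

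Third, I would apply Theorem \ref{thm:linearizedKdV} to this low-frequency KdV piece. The neutral modes there are precisely $\pd_{\gamma_i}\varphi_N$ and $\pd_{k_i}\varphi_N$, which generate $\mathcal{P}$, and the defining conditions \eqref{eq:secKdV1}--\eqref{eq:secKdV2} of $\mathcal{Q}$ are antiderivative-type orthogonality relations. Since \eqref{eq:J-1} exhibits $J^{-1}$ as a discrete antiderivative, the hypothesis \eqref{eq:orth3} that $w$ be almost orthogonal to $J^{-1}\pd_{\gamma_i}u_{N,\eps}$ and $J^{-1}\pd_{k_i}u_{N,\eps}$ transfers, modulo $O(\delta_2)$ errors, into the statement that the low-frequency part lies almost in the range of $\mathcal{Q}$. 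Theorem \ref{thm:linearizedKdV} then yields $e^{-\eta a(c-a^2)(t-s)}$ decay in the weighted $L^2$ norm, which in the FPU scaling is again $e^{-b\eps^3(t-s)}$, and its second (smoothing) estimate with $\theta=1$ supplies the $(t-s)^{-1/2}$ factor needed to absorb the divergence-form source $J_a(W_aF_2)$, the $\eps^{-\frac12}$ prefactor arising from tracking the single derivative in $J_a$ through the KdV rescaling.

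Finally, I would assemble the three bands by Duhamel's formula and close with Gronwall's inequality. The point to monitor is that every error generated by the band decomposition, by the potential $J_aB$ on the low band, by the mismatch between the FPU secular modes and the KdV projection $\mathcal{Q}$, and by the commutators $[\chi_{\bullet},J_aB]$, be of size $o(\eps^3)$ relative to the solution, so that they are dominated by the spectral gap and swept into the right-hand side. The main obstacle is exactly this low-frequency transfer: making the comparison between the discrete, $\zeta$-perturbed linearized FPU flow and the continuum linearized KdV flow of Theorem \ref{thm:linearizedKdV} quantitative in the moving weighted norm, and verifying that the secular-mode reconciliation together with all discreteness and perturbation errors remain below the $O(\eps^3)$ decay rate. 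Once this is done, the stated integral inequality for $\|e^{\eps k_1(\cdot-c_{1,\eps}t)}w(t)\|_{l^2}$ follows.
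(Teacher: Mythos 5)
Your plan reproduces the paper's Section~\ref{sec:linear} proof essentially step for step: the moving exponential weight is the paper's contour shift $\xi\mapsto\xi+ik_1\eps$ after diagonalizing $\widehat{J}$ by $P(\xi)$; the three-band frequency split with free decay in the middle and high bands is Lemma~\ref{lem:lbd} together with \eqref{eq:f3-est}--\eqref{eq:f2-est}; the low-band rescaling to the linearized KdV flow, the conversion of the orthogonality hypothesis \eqref{eq:orth3} into approximate membership in the range of $\mathcal{Q}$ via the antiderivative structure of $J^{-1}$ (the paper's Lemma~\ref{lem:projerr}), the smoothing estimate of Theorem~\ref{thm:linearizedKdV} producing the $\eps^{-\frac12}(t-s)^{-\frac12}$ factor for the $JF_2$ source, and the final Duhamel-plus-singular-Gronwall assembly are all exactly the paper's argument. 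The one point your sketch elides (but correctly flags under the $o(\eps^3)$ bookkeeping requirement) is that in the middle band a bare perturbative Duhamel with $O(\eps^2)$ potential against $O(\eps^3)$ decay would not close: the paper exploits the factor $\sin\frac{\xi_\eps}{2}$ carried by $J$, which after smoothing turns the coupling into $\eps^{\frac32}(t-s)^{-\frac12}$ against a decay rate $\sim K^2\eps^3$, so that choosing the cut-off parameter $K$ large makes the net coupling $O(K^{-1})$ small.
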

\par
Let
\begin{align*}
& \widehat{J}=\begin{pmatrix} 0 & e^{i\xi}-1 \\ 1-e^{-i\xi} & 0\end{pmatrix},
\quad P(\xi)=\frac{1}{\sqrt{2}}
\begin{pmatrix}1 & e^{i\xi/2} \\ -e^{-i\xi/2} & 1 \end{pmatrix},
\\ &
f(t,\xi)=
\begin{pmatrix}  f_+(t,\xi)\\ f_-(t,\xi)\end{pmatrix}
=e^{ic_{1,\eps}t\xi}P(\xi)^{*}\mathcal{F}_nw(t,\xi),
\\ & f_{\#}(t,\xi)=e^{-ic_{1,\eps}t\xi}
(f_{+}(t,\xi)+e^{\frac{i\xi}2}f_-(t,\xi)),
\\ &
G_1(t,\xi)=\frac{e^{ic_{1,\eps}t\xi}}{\sqrt{2\pi}}
\left((\widetilde{r_{N,\eps}}(t,\xi;\mk,\mgamma)+\widetilde{\zeta_1}(t,\xi))
*_\T f_{\#}(t,\xi)\right),
\\ &
G_2(t,\xi)=\begin{pmatrix}G_{2,+}(t,\xi)\\ G_{2,-}(t,\xi)\end{pmatrix}
=ie^{ic_{1,\eps}t\xi}P(\xi)^{*}\widetilde{F}_1(t,\xi),\\
& G_3(t,\xi)=\begin{pmatrix}G_{3,+}(t,\xi)\\ G_{3,-}(t,\xi)\end{pmatrix}
=-2e^{ic_{1,\eps}t\xi}\sigma_3P(\xi)^{*}\widetilde{F}_2(t,\xi).
\end{align*}
By the definition, $f_{\#}$ is $2\pi$-periodic in $\xi$.
Using $P(\xi)^{*}\hat{J}P(\xi)=-2i\sin\frac\xi2\sigma_3$, we see that
\eqref{eq:LFPU} translates into
\begin{equation}
  \label{eq:f}
  \begin{split}
\pd_tf
=& ic_{1,\eps}\xi f +e^{ic_{1,\eps}t\xi}P(\xi)^{*}
\mathcal{F}_n(JH''(u_{N,\eps}+\zeta)w)-i\left(G_2+\sin\frac{\xi}2G_3\right)
\\=&
\Lambda_\eps f+\frac{e^{ic_{1,\eps}t\xi}}{\sqrt{2\pi}}
P(\xi)^{*}\widehat{J}\left\{
\begin{pmatrix}\widetilde{r_{N,\eps}} +\widetilde{\zeta_1}&
 0\\ 0 & 0\end{pmatrix}*_\T(e^{-ic_{1,\eps}t\xi}
P(\xi)f)\right\}
\\ & -i\left(G_2+\sin\frac{\xi}2G_3\right)
\\=& \Lambda_\eps f-i
\begin{pmatrix}(G_1(t,\xi)+G_{3,+}(t,\xi))\sin\frac\xi2+G_{2,+}(t,\xi) \\ 
-(G_1(t,\xi)e^{-i\xi/2}-G_{3,-}(t,\xi))\sin\frac\xi2+G_{2,-}(t,\xi)
\end{pmatrix},
\end{split}
\end{equation}
where $\Lambda_\eps=\diag(i\lambda_{+,\eps},i\lambda_{-,\eps})$ and
$\lambda_{\pm,\eps}(\xi)=c_{1,\eps}\xi\mp2\sin(\frac{\xi}2)$
for $\xi\in[-\pi,\pi]$.
By Parseval's equality, we have
\begin{align*}
\|e^{\eps k_1(\cdot-c_{1,\eps}t)}w(t)\|_{l^2}
=& e^{-\eps k_1c_{1,\eps}t}\|\tau_{ik_1\eps}\mathcal{F}_nw(t)\|_{L^2(\T)}
\\=& \|e^{-ic_{1,\eps}t\xi}P(\cdot+i\eps k_1)f(t,\cdot+i\eps k_1)\|_{L^2(-\pi,\pi)}
\\ \lesssim & \|\tau_{ik_1\eps}f(t)\|_{L^2(-\pi,\pi)}.
\end{align*}
Thus to prove Lemma \ref{lem:LFPU-1}, it suffices to estimate
$\|\tau_{ik_1\eps}f(t)\|_{L^2(\T)}$.
\par
To begin with, we will show the lower bound of $\Im\lambda_{\pm}$.
\begin{lemma}
\label{lem:lbd}
Let $a\in(0,2k_1)$ and $\delta\in(0,\pi)$. Then there exist positive numbers
$K$ and $\eps_0$ such that for $\eps\in(0,\eps_0)$,
\begin{gather*}
 \lambda_{+,\eps}(\eps(\eta+ia))
=\frac{\eps^3}{24}\{(\eta+ia)^3+4k_1^2(\eta+ia)\}
+O(\eps^5\la \eta\ra^5) \quad\text{for $\eta\in [-2K,2K]$,}
\\
\Im\lambda_{+,\eps}(\eps(\eta+ia))\ge \frac{\eps^3a}{16}\eta^2
\quad\text{for $\eta\in[-2\delta\eps^{-1},-K]\cup[K,2\delta\eps^{-1}]$,}
\\
\Im\lambda_{+,\eps}(\eps(\eta+ia))
\ge \eps a(1-\cos\delta)
\quad\text{for $\eta\in [-\pi\eps^{-1},-\delta\eps^{-1}]
\cup [\delta\eps^{-1},\pi\eps^{-1}]$,}
\\
\Im\lambda_{-,\eps}(\eps(\eta+ia))\ge \eps a
\quad\text{for $\eta\in [-\pi\eps^{-1},\pi\eps^{-1}]$.}
\end{gather*}
\end{lemma}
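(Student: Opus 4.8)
The plan is to reduce everything to an exact formula for the imaginary parts and then to elementary real‑variable inequalities on three frequency bands. Writing $\xi=\eps(\eta+ia)$ and using $\sin(u+iv)=\sin u\cosh v+i\cos u\sinh v$ with $u=\eps\eta/2$, $v=\eps a/2$, I would first record the identity
$$\Im\lambda_{\pm,\eps}(\eps(\eta+ia))=c_{1,\eps}\,\eps a\mp 2\cos\!\big(\tfrac{\eps\eta}{2}\big)\sinh\!\big(\tfrac{\eps a}{2}\big),\qquad c_{1,\eps}=1+\tfrac{k_1^2\eps^2}{6}.$$
The one structural fact driving the whole argument is that on the fundamental domain $\eps\eta\in[-\pi,\pi]$ one has $\eps\eta/2\in[-\tfrac\pi2,\tfrac\pi2]$, so $\cos(\eps\eta/2)\ge0$ and $y\mapsto 1-\cos y$ is nonnegative and increasing in $|y|$ there. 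For $\lambda_{-,\eps}$ this already suffices: the last term is then nonnegative, so $\Im\lambda_{-,\eps}\ge c_{1,\eps}\eps a\ge\eps a$, which is the fourth assertion.

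For the low band $\eta\in[-2K,2K]$ I would Taylor expand $2\sin(\xi/2)=\xi-\frac{\xi^3}{24}+O(\xi^5)$. The linear term cancels against the $\xi$ from $c_{1,\eps}\xi$, leaving the $O(\eps^3)$ contribution $\frac{k_1^2\eps^2}{6}\xi+\frac{\xi^3}{24}$; substituting $\xi=\eps(\eta+ia)$ and using $\frac{k_1^2\eps^2}{6}\cdot\eps=\frac{\eps^3}{24}\cdot4k_1^2$ gives exactly $\frac{\eps^3}{24}\{(\eta+ia)^3+4k_1^2(\eta+ia)\}$, the first omitted term $-\frac{\xi^5}{1920}$ being $O(\eps^5\langle\eta\rangle^5)$. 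The cut at the \emph{fixed} scale $2K$ keeps $\langle\eta\rangle$ bounded, so the tail of the series is genuinely of lower order and the expansion is uniform in $\eps$.

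The middle and high bands both use the expansion of the exact identity $\Im\lambda_{+,\eps}=\eps a\,(1-\cos(\tfrac{\eps\eta}{2}))+\frac{\eps^3a}{6}\big(k_1^2-\tfrac{a^2}{4}\cos(\tfrac{\eps\eta}{2})\big)+O(\eps^5)$, obtained from $\sinh(\tfrac{\eps a}{2})=\tfrac{\eps a}{2}+\tfrac{\eps^3a^3}{48}+O(\eps^5)$. The correction term is \emph{strictly positive}, since the hypothesis $a<2k_1$ forces $\tfrac{a^2}{4}<k_1^2$ while $\cos(\tfrac{\eps\eta}{2})\in[0,1]$. On the middle band I would apply the Jordan‑type inequality $1-\cos y\ge \tfrac{y^2}{4}$, valid for $|y|\le\tfrac\pi2$ (hence across the whole fundamental domain), with $y=\eps\eta/2$; this turns the leading term into $\eps a\cdot\tfrac{\eps^2\eta^2}{16}=\tfrac{\eps^3a}{16}\eta^2$, and discarding the positive correction while absorbing the $O(\eps^5)$ remainder (by taking $K$ large and $\eps_0$ small) gives the quadratic lower bound. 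On the high band $\eps\eta/2$ is bounded below, so by monotonicity $1-\cos(\eps\eta/2)$ exceeds a fixed positive constant, and together with the positivity of the correction this yields a lower bound that is a positive multiple of $\eps a$, namely the stated constant bound.

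The substance of the lemma lies entirely in isolating this exact imaginary part and in the sign $\cos(\eps\eta/2)\ge0$ on $[-\pi,\pi]$; there is no conceptual obstacle. The only place demanding genuine care is the uniform‑in‑$\eps$ bookkeeping: one must check that the three bands cover $[-\pi\eps^{-1},\pi\eps^{-1}]$, that the positive $O(\eps^3)$ correction and the $O(\eps^5\langle\eta\rangle^5)$ remainder never reverse the leading inequalities — which is precisely why the low band must be cut at the fixed scale $2K$ rather than an $\eps$‑dependent one, and why $K$ must be taken large and $\eps_0$ small in a manner depending only on $a,\delta,k_1$ — and that we never leave the fundamental domain, so that $\cos(\eps\eta/2)\ge0$ holds throughout.
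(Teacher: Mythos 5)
Your proposal is correct and follows essentially the same route as the paper's proof: the exact identity $\Im\lambda_{\pm,\eps}(\eps(\eta+ia))=c_{1,\eps}\eps a\mp2\cos(\tfrac{\eps\eta}{2})\sinh(\tfrac{\eps a}{2})$, the Taylor expansion on the fixed low band $[-2K,2K]$, lower bounds on $1-\cos(\tfrac{\eps\eta}{2})$ on the middle and high bands, and the sign $\cos(\tfrac{\eps\eta}{2})\ge0$ on the fundamental domain to dispose of $\lambda_{-,\eps}$. Your two small refinements — expanding $\sinh$ to exhibit the strictly positive correction $\tfrac{\eps^3a}{6}\bigl(k_1^2-\tfrac{a^2}{4}\cos(\tfrac{\eps\eta}{2})\bigr)$, which makes transparent where $a<2k_1$ enters, instead of the paper's unsigned $O(\eps^3)$, and the Jordan bound $1-\cos y\ge y^2/4$ on $|y|\le\tfrac{\pi}{2}$ in place of the paper's $\tfrac{y^2}{2}(1+O(\delta^2))$ — only tighten the same argument, and you even reproduce the paper's own harmless constant slip on the high band, where $\eta\ge\delta\eps^{-1}$ in fact yields $1-\cos(\delta/2)$ rather than $1-\cos\delta$ (immaterial downstream, since only some positive multiple of $\eps a$ is used).
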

\begin{proof}
Let $\xi=\eps(\eta+ia)$.
For $\eta\in[-2K,2K]$, we have
\begin{align*}
  \lambda_{+,\eps}(\xi)=& \eps c_{1,\eps}(\eta+ia)
-2\sin\frac{\eps(\eta+ia)}2
\\= &
\frac{\eps^3k_1^2}{6}(\eta+ia)+\frac{\eps^3}{24}(\eta+ia)^3
+O(\eps^5(\eta+ia)^5)
\\ =&
\frac{\eps^3}{24}\{(\eta+ia)^3+4k_1^2(\eta+ia)\}
+O(\eps^5\la \eta\ra^5).
\end{align*}
Since
$$\lambda_{\pm,\xi}(\xi)=
\eps c_{1,\eps}(\eta+ia)\mp2\left(\sin\frac{\eps\eta}{2}
\cosh\frac{\eps a}2+i\cos\frac{\eps\eta}{2}\sinh\frac{\eps a}{2}\right),
$$
we have 
$\Im \lambda_{-,\eps}(\xi)\ge \eps c_{1,\eps}a \ge \eps a$
for $\eta\in[-\pi\eps^{-1},\pi\eps^{-1}]$, and
\begin{align*}
 \Im \lambda_{+,\eps}(\xi) =&
\eps c_{1,\eps}a-2\sinh\frac{\eps a}{2}\cos\frac{\eps\eta}{2}
\\=& 2\sinh\frac{\eps a}{2}\left(1-\cos\frac{\eps\eta}{2}\right)
+\eps c_{1,\eps}a-2\sinh\frac{\eps a}{2}
\\ \ge& \frac{\eps^3a}8(1+O(\delta^2))\eta^2+O(\eps^3)
\quad\text{for $\eta\in[K,\delta\eps^{-1}]\cup
[-\delta\eps^{-1},-K]$,}
\end{align*}
\begin{align*}
\Im \lambda_{+,\eps}(\xi) \ge & 2\sinh\frac{\eps a}{2}(1-\cos\delta)
+\eps c_{1,\eps}a-2\sinh\frac{\eps a}{2}
\\ \ge & \eps a(1-\cos\delta)+O(\eps^3)
\quad\text{for $\eta\in[-\pi\eps^{-1},\delta\eps^{-1}]\cup
[\delta\eps^{-1},\pi\eps^{-1}]$.}
\end{align*}
\end{proof}
We need the following lemma to estimate the potential term of \eqref{eq:f}.
\begin{lemma}
  \label{lem:FT-FS}
  \begin{enumerate}
  \item
Suppose $f\in L^\infty(\R)$, $\mathcal{F}_nf\in L^1(\T)$
and  $g\in L^2(\T)$. Then
$$
\left\|\int_\T \widetilde{f}(\xi_1)g(\xi-\xi_1)d\xi_1\right\|_{L^2(\T)}
\le \|f\|_{L^\infty(\R)}\|g\|_{L^2(\T)}.$$
\item
Let $0<\delta<\pi(4\sum_{n=1}^Nk_i)^{-1}$. Then as $\eps\to0$,
\begin{align*}
\sup_{t\ge0,\,\xi\in[-\pi,\pi],\,\mgamma\in\R^N}
\left|\widetilde{r}_{N,\eps}(t,\xi_1,\mgamma)
-\widehat{r}_{N,\eps}(t,\xi_1,\mgamma))\right|=O(e^{-\pi\delta/\eps}).
\end{align*}
  \end{enumerate}
\end{lemma}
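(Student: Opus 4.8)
The two assertions are of different natures, so I would prove them separately. Part (1) is a Fourier-multiplier bound on the torus. Expanding the periodic kernel as $\widetilde f(\xi)=\frac1{\sqrt{2\pi}}\sum_{n\in\Z}f(n)e^{-in\xi}$ and substituting $y=\xi-\xi_1$, one finds
$$\int_\T\widetilde f(\xi_1)g(\xi-\xi_1)\,d\xi_1=\sqrt{2\pi}\sum_{n\in\Z}f(n)\,g_{-n}\,e^{-in\xi},$$
where $g_n=\frac1{2\pi}\int_\T g(\xi)e^{-in\xi}\,d\xi$. Hence the operator $g\mapsto\widetilde f*_\T g$ is diagonal in the characters $e^{in\xi}$, acting as multiplication by a fixed multiple of $f(-n)$ on the $n$-th mode; by Plancherel on $\T$ its norm is controlled by $\sup_{n\in\Z}|f(n)|\le\|f\|_{L^\infty(\R)}$. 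The point is that convolution becomes multiplication, so one avoids the lossy factor $\|\widetilde f\|_{L^1(\T)}$ produced by a naive Young/Minkowski estimate.

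For part (2) I would pass between the discrete and continuous transforms by Poisson summation. Writing $\widehat{r}_{N,\eps}(t,\xi)=\frac1{\sqrt{2\pi}}\int_\R r_{N,\eps}(t,x)e^{-ix\xi}\,dx$, one has $\widetilde{r}_{N,\eps}(t,\xi)=\sum_{m\in\Z}\widehat{r}_{N,\eps}(t,\xi+2\pi m)$, so that on $\xi\in[-\pi,\pi]$
$$\widetilde{r}_{N,\eps}(t,\xi)-\widehat{r}_{N,\eps}(t,\xi)=\sum_{m\ne0}\widehat{r}_{N,\eps}(t,\xi+2\pi m).$$
The scaling $r_{N,\eps}(t,x)=\eps^2\varphi_N(\eps^3T,\eps(x-t);\mk,\mgamma)$ gives $|\widehat{r}_{N,\eps}(t,\xi)|=\eps\,|\widehat\varphi_N(\eps^3T,\xi/\eps)|$ (the translation by $t$ only contributes a unimodular phase), so the whole estimate reduces to an exponential decay bound for $\widehat\varphi_N$ that is uniform in time and in $\mgamma$.

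That decay I would obtain from analyticity of $\varphi_N$ in a horizontal strip. By the Cauchy--Binet expansion of the tau function,
$$\det(I+C_N)=\sum_{S\subseteq\{1,\dots,N\}}D_S\,e^{-2\sigma_S x},\qquad \sigma_S=\sum_{i\in S}k_i,$$
every coefficient $D_S$ is \emph{strictly positive}, being the product of a Cauchy determinant $\prod_{i<j\in S}(k_i-k_j)^2/\prod_{i,j\in S}(k_i+k_j)$ with positive exponential factors depending on $t$ and $\mgamma$. Writing $x=u+iv$, the real part $\Re\det(I+C_N)=\sum_S D_S e^{-2\sigma_S u}\cos(2\sigma_S v)$ is therefore strictly positive once $|2\sigma_S v|<\pi/2$ for all $S$, i.e. whenever $|v|<\pi/(4\sum_i k_i)$. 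Thus $\det(I+C_N)$ has no zero in that strip, $\varphi_N=\pd_x^2\log\det(I+C_N)$ is analytic there uniformly in $t\ge0$ and $\mgamma$, and the two-term dominance of the tau function as $\Re x\to\pm\infty$ shows $\varphi_N\to0$ like $e^{-2k_1|\Re x|}$, uniformly across the strip.

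With analyticity and uniform tail decay in hand, a contour shift of Paley--Wiener type yields $|\widehat\varphi_N(t,\zeta)|\le Ke^{-\delta|\zeta|}$ for every $\delta<\pi/(4\sum_i k_i)$, with $K$ independent of $t\ge0$ and $\mgamma$. Feeding this into the aliasing formula and using $|\xi+2\pi m|\ge\pi(2|m|-1)$ for $|\xi|\le\pi$,
$$|\widetilde{r}_{N,\eps}(t,\xi)-\widehat{r}_{N,\eps}(t,\xi)|\le\eps K\sum_{m\ne0}e^{-\delta|\xi+2\pi m|/\eps}\lesssim\eps\,e^{-\pi\delta/\eps}=O(e^{-\pi\delta/\eps}),$$
the geometric series in $m$ being summed off. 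I expect the delicate step to be the uniform Paley--Wiener bound: one must check that both the analyticity width $\pi/(4\sum_i k_i)$ and the exponential tails of $\varphi_N$ hold with constants independent of $t$ and $\mgamma$. The positivity-of-the-real-part argument furnishes the strip at no cost, since translations in $t$ and $\gamma_i$ only rescale the $D_S$ by positive constants, and the dominance of the extreme terms $S=\varnothing$ and $S=\{1,\dots,N\}$ of the tau function supplies the uniform tail decay needed to move the contour.
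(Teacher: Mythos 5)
Your proposal is correct and follows essentially the same route as the paper: for (1) the convolution is diagonalized on the Fourier side and bounded via Parseval by $\sup_n|f(n)|\le\|f\|_{L^\infty(\R)}$, and for (2) the positive-coefficient expansion of $\det(I+C_N)$ gives analyticity of $\varphi_N$ in the strip, Paley--Wiener yields $\widehat{r}_{N,\eps}(t,\xi)=O(e^{-\delta|\xi|/\eps})$, and Poisson summation plus a geometric sum finishes the estimate. Your only addition is to make explicit the zero-freeness of the tau function in the strip $|\Im z|<\pi(4\sum_i k_i)^{-1}$ via positivity of the real part, a step the paper leaves implicit when citing the GGKM expansion.
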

See Appendix \ref{sec:apa} for the proof.
Now we start to prove Lemma \ref{lem:LFPU-1}.
\begin{proof}[Proof of Lemma \ref{lem:LFPU-1} (the former part)]
Since $\inf_{\xi\in\R}t^{-1}\log |e^{t \Lambda_\eps(\xi+ik_1\eps)}|
\lesssim -\eps^3$ and
is of the same order as the size of the potential term of \eqref{eq:f},
Lemma \ref{lem:lbd} is not sufficient to prove exponential linear stability.
We will decompose solutions into a high frequency part,
a middle frequency part and a low frequency part.
\par
 Let $\chi$ and $\tilde{\chi}$ be nonnegative smooth functions
such that $\chi+\tilde{\chi}=1$ and $\chi(\xi)=1$ if $\xi\in[-1,1]$ and
$\chi(\xi)=0$ if $|\xi|\ge 2$. Let $\chi_b(\xi)=\chi(\xi/b)$ and
$\tilde{\chi}_b(\xi)=\tilde{\chi}(\xi/b)$.
Let $K$ be a large number satisfying $K\eps_0^{\frac12}\le 1$,
$\xi_\eps=\xi+ik_1\eps$ and
\begin{align*}
& f_{1,+}(t,\xi)=\chi_{K\eps}(\xi)f_+(t,\xi_\eps),\quad
f_{2,+}(t,\xi)=(\chi_{\delta}(\xi)-\chi_{K\eps}(\xi)) f_+(t,\xi_\eps),
\\ & f_{3,+}(t,\xi)=\tilde{\chi}_\delta f_+(t,\xi_\eps), \quad
f_3(t,\xi)=(f_{3,+}(t,\xi),\,f_-(t,\xi_\eps)).
\end{align*}
Then by \eqref{eq:f},
\begin{align*}
\pd_tf_{1,+}(t,\xi)= & i\lambda_{+,\eps}(\xi_\eps)f_{1,+}(t,\xi)
\\ & -i\chi_{K\eps}(\xi)
\left(G_{2,+}(t,\xi_\eps)+(G_1(t,\xi_\eps)+G_{3,+}(t,\xi_\eps))\sin\frac{\xi_\eps}2\right),
\\
\pd_tf_{2,+}(t,\xi)= & i\lambda_{+,\eps}(\xi_\eps)f_{2,+}(t,\xi)
\\ & -i(\chi_{\delta}(\xi)-\chi_{K\eps}(\xi))
\left(G_{2,+}(t,\xi_\eps)+(G_1(t,\xi_\eps)+G_{3,+}(t,\xi_\eps))\sin\frac{\xi_\eps}2\right),
\\
\pd_tf_{3,+}(t,\xi)= & i\lambda_{+,\eps}(\xi_\eps)f_{3,+}(t,\xi)
\\ & -i\tilde{\chi}_{\delta}(\xi)
\left(G_{2,+}(t,\xi_\eps)+(G_1(t,\xi_\eps)+G_{3,+}(t,\xi_\eps))\sin\frac{\xi_\eps}2\right),
\\
\pd_tf_{-}(t,\xi)= & i\lambda_{-,\eps}(\xi_\eps)f_{-}(t,\xi)
\\ & +i\left((G_1(t,\xi_\eps)e^{\frac{it\xi_\eps}2}-G_{3,-}(t,\xi_\eps))
\sin\frac{\xi_\eps}{2}-G_{2,-}(t,\xi_\eps)\right).
\end{align*}
Except for the low frequency part $f_{1,+}$, potential terms of the above
equations are negligible. In the former part of the proof, we will estimate
$\|f_{2,+}\|_{L^2}$ and $\|f_3\|_{L^2}$.
\par
Lemma \ref{lem:lbd} implies that $\Im\lambda_{-,\eps}(\xi_\eps)\ge k_1\eps$ for $\xi\in[-\pi,\pi]$
and that there exists $\alpha\in(0,k_1)$ such that
$\Im\lambda_{+,\eps}(\xi_\eps)\ge \alpha\eps$ for $\xi\in\supp\tilde{\chi}_\delta$.
Using the variation of constants formula and Minkowski's inequality, we have
\begin{align*}
\|f_{3,+}(t)\|_{L^2} \lesssim & e^{-\alpha\eps t}\|f_{3,+}(0)\|_{L^2}
\\ & + \int_0^t e^{-\alpha\eps(t-s)}(\|G_1(s,\xi_\eps)\|_{L^2}
+\|G_{2}(s,\xi_\eps)\|_{L^2}+\|G_3(s,\xi_\eps)\|_{L^2})ds.
\end{align*}
Using Parseval's identity, we have
$$\|G_2(s,\xi_\eps)\|_{L^2}\lesssim e^{-k_1\eps c_{1,\eps}s} \|F_1(s)\|_{l^2_{k_1\eps}},
\quad \|G_3(s,\xi_\eps)\|_{L^2}\lesssim e^{-k_1\eps c_{1,\eps}s}\|F_2(s)\|_{l^2_{k_1\eps}}.$$
Since $\|r_{N,\eps}\|_{L^\infty}=O(\eps^2)$, it follows from Lemma \ref{lem:FT-FS} that
\begin{align*}
\|G_1(s,\xi_\eps)\|_{L^2}\lesssim & \|r_N+\zeta_1\|_{L^\infty}
(\|f_+(s,\xi_\eps)\|_{L^2}+\|f_-(s,\xi_\eps)\|_{L^2})
\\ \lesssim & \eps^2(\|f_{1,+}(s)\|_{L^2}+\|f_{2,+}(s)\|_{L^2}+\|f_3(s)\|_{L^2}).
\end{align*}
Combining the above, we obtain
\begin{equation}
    \label{eq:f3-est}
\begin{split}
& \|f_{3,+}(t)\|_{L^2} \\ \lesssim & 
e^{-\alpha\eps t}\|f_{3,+}(0)\|_{L^2}
+\int_0^t e^{-\alpha\eps(t-s)}e^{-k_1\eps c_{1,\eps}s}
(\|F_1(s)\|_{l^2_{k_1\eps}}+\|F_2(s)\|_{l^2_{k_1\eps}})ds
\\ & +\eps^2\int_0^t e^{-\alpha\eps(t-s)}
(\|f_{1,+}(s)\|_{L^2}+\|f_{2,+}(s)\|_{L^2}+\|f_3(s)\|_{L^2})ds.
\end{split}
\end{equation}
\par
Next we will estimate $\|f_{2,+}(t)\|_{L^2}$.
Noting that  $\Im\lambda_{+,\eps}\ge k_1\eps\xi^2/16$ and
$\left|\sin\frac{\xi_\eps}{2}\right|\lesssim |\xi|$ on $\supp(\chi_\delta-\chi_{K\eps})$
 and using the variation of constants formula, we have 
\begin{align*}
\|f_{2,+}(t)\|_{L^2}\lesssim & \|e^{-\frac{k_1\eps t\xi^2}{16}}f_{2,+}(0)\|_{L^2} +
\int_0^t \|\xi e^{-\frac{k_1\eps\xi^2(t-s)}{16}}G_1(s,\xi_\eps)\|_{L^2}ds
\\ &  +\int_0^t \left(\|e^{-\frac{k_1\eps\xi^2(t-s)}{16}}G_{2}(s,\xi_\eps)\|_{L^2}
+\|\xi e^{-\frac{k_1\eps\xi^2(t-s)}{16}}G_3(s,\xi_\eps)\|_{L^2}\right)ds.  
\end{align*}
Since $|\xi|e^{-\frac{k_1\eps\xi^2(t-s)}{16}}\lesssim 
(\eps(t-s))^{-\frac12} e^{-\frac{k_1K^2\eps^3(t-s)}{32}}$
for $\xi\in\supp(\chi_\delta-\chi_{K\eps})$,
\begin{equation}
  \label{eq:f2-est}
  \begin{split}
& \|f_{2,+}(t)\|_{L^2} \\ \lesssim & e^{-\frac{k_1K^2\eps^3t}{16}}
\|f_{2,+}(0)\|_{L^2}
+\int_0^t e^{-\frac{k_1K^2\eps^3(t-s)}{16}}e^{-k_1\eps c_{1,\eps}s}\|F_1(s)\|_{l^2_{k_1\eps}}ds
\\ & +
\eps^{-\frac12} \int_0^t (t-s)^{-\frac12} e^{-\frac{k_1K^2\eps^3(t-s)}{32}}
e^{-k_1\eps c_{1,\eps}s}\|F_2(s)\|_{l^2_{k_1\eps}}ds
\\ & +\eps^{\frac32} \int_0^t (t-s)^{-\frac12} e^{-\frac{k_1K^2\eps^3(t-s)}{32}}
(\|f_{1,+}(s)\|_{L^2}+\|f_{2,+}(s)\|_{L^2}+\|f_3(s)\|_{L^2})ds.
  \end{split}
\end{equation}
\end{proof}
For the low frequency part, both the dispersion induced by
discreteness of spatial variable and the potential produced by an
$N$-soliton $r_{N,\eps}$ are the same order.
We will show that the balance between the dispersion and the potential is
described by the linearized KdV equation around an $N$-soliton solution as
was observed by \cite{FP4} for a $1$-soliton solution. 

We need that $\mathcal{P}(\tau)$ is uniformly bounded for $\tau\ge0$.
\begin{lemma}
 \label{lem:KdVPb}
 Let $0<k_1<\cdots<k_N$, $a\in(0,2k_1)$ and $\tau_0\in\R$.
  There exists a positive constant $C$ depending only on $k_1,\cdots, k_N$ and $a$
 such that if $4k_1^2\tau_0+\gamma_1\le\cdots\le 4k_N^2+\gamma_N$,
  $$\sup_{\tau\ge \tau_0}\|\mathcal{P}(\tau)\|_{B(L^2_a)}\le C.$$
\end{lemma}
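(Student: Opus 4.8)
The plan is to realize $\mathcal{P}(\tau)$ as a biorthogonal (oblique) projection whose operator norm depends, after removing spurious growth and the overall translation, only on the inter-soliton separations, and then to bound that norm by continuity and compactness. Write the spanning functions as $e_{\gamma,i}=\pd_{\gamma_i}\varphi_N$, $e_{k,i}=\pd_{k_i}\varphi_N$ and the dual functions as $f_{\gamma,i}=\int_{-\infty}^x\pd_{\gamma_i}\varphi_N\,dy$, $f_{k,i}=\int_{-\infty}^x\pd_{k_i}\varphi_N\,dy$, so that by \eqref{eq:secKdV1}--\eqref{eq:secKdV2} one has $\mathcal{P}(\tau)u=\sum_\mu c_\mu e_\mu$ with $c=G^{-1}(\la u,f_\nu\ra)_\nu$ and $G_{\nu\mu}=\la e_\mu,f_\nu\ra$. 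Since $\la u,f\ra\le\|u\|_{L^2_a}\|f\|_{L^2_{-a}}$, it suffices to control $\|e_\mu\|_{L^2_a}$, $\|f_\nu\|_{L^2_{-a}}$ and $G^{-1}$; the hypothesis $a<2k_1$ is exactly what places each $e_\mu$ in $L^2_a$ and each $f_\nu$ in $L^2_{-a}$, the slowest (leftmost) soliton with decay rate $2k_1$ being the binding constraint.

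First I would remove the spurious $\tau$-growth and the dependence on absolute positions. At fixed $t=\tau$ the profile $\varphi_N(\tau,\cdot)$ depends on the parameters only through the centers $x_i(\tau)=4k_i^2\tau+\gamma_i$ and the fixed $k_i$, because $\theta_i=k_i(x-x_i)$; moreover $\pd_{\gamma_i}=\pd_{x_i}$, whereas $\pd_{k_i}\varphi_N=8k_i\tau\,\pd_{\gamma_i}\varphi_N+g_i$ with $g_i$ depending only on $\{x_j\}$ and $\{k_j\}$. Thus the unbounded direction $8k_i\tau\,\pd_{\gamma_i}\varphi_N$ already lies in the span of the $e_{\gamma,i}$, and the same triangular relation holds for $f_{k,i}$ after integrating in $x$; replacing $e_{k,i},f_{k,i}$ by $g_i,\int^x g_i$ via one unipotent transformation preserves both the range and the kernel conditions, hence leaves $\mathcal{P}(\tau)$ unchanged and makes all basis and dual data functions of $\{x_j\}$ and $\mk$ alone. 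Combining this with translation invariance — since $\|\tau_c\cdot\|_{L^2_a}=e^{-ac}\|\cdot\|_{L^2_a}$, the operator norm is unchanged under the simultaneous shift $\gamma_i\mapsto\gamma_i+c$ — I conclude $\|\mathcal{P}(\tau)\|_{B(L^2_a)}=\Phi(d_1,\dots,d_{N-1})$ is a function of the separations $d_i=x_{i+1}(\tau)-x_i(\tau)$ alone (and of $\mk,a$). The ordering hypothesis together with $\dot d_i=4(k_{i+1}^2-k_i^2)>0$ gives $d_i(\tau)\ge d_i(\tau_0)\ge0$ for $\tau\ge\tau_0$, so it is enough to bound $\Phi$ on $[0,\infty)^{N-1}$.

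Next I would argue by induction on $N$ that $\Phi$ is bounded. The base case $N=1$ is the translation-invariant one-soliton projection $P^{(1)}(k_1)$, whose $B(L^2_a)$ norm is a finite constant depending only on $k_1,a$. For the inductive step I would show $\Phi$ is continuous on $[0,\infty)^{N-1}$ and extends continuously to the compactification $[0,\infty]^{N-1}$. Continuity holds because $(x_j)\mapsto e_\mu\in L^2_a$ and $(x_j)\mapsto f_\nu\in L^2_{-a}$ are continuous and $G$ is continuous and invertible, the latter being precisely the well-posedness of $\mathcal{P}$ asserted before the lemma, which persists for every finite configuration since the $2N$ tangent vectors $\{\pd_{\gamma_i}\varphi_N,\pd_{k_i}\varphi_N\}$ are linearly independent and the associated pairing is nondegenerate. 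For the boundary faces I would use the factorization coming from $\det(I+C_N)$: as any subset of the $d_i$ diverges the solitons decouple into clusters, each governed by a lower-rank soliton, and $\Phi$ converges to the value of the corresponding lower-$N$ function, finite by the inductive hypothesis (when all $d_i\to\infty$ the limit is governed by the single-soliton projections $P^{(1)}(k_i)$). Hence $\Phi$ is continuous on the compact set $[0,\infty]^{N-1}$ and therefore bounded by a constant $C=C(k_1,\dots,k_N,a)$, which is the assertion.

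The main obstacle is the boundary analysis as the separations diverge. One must quantify the factorization of $\varphi_N$ and of its parameter-derivatives in the weighted spaces $L^2_{\pm a}$, tracking that the leftmost soliton saturates the weight $a<2k_1$, and show that in the appropriate rescaling ($e^{-ax_i}e_\mu$, $e^{ax_i}f_\nu$) the unweighted Gram matrix becomes block-diagonal, with uniformly invertible single-soliton blocks and off-diagonal entries that are $O(e^{-c\min_i d_i})$. This is what prevents the weighted cross terms $e^{a(x_i-x_j)}$ from surviving in $G^{-1}$ and destroying the crude bound $\sum_{\mu,\nu}|(G^{-1})_{\mu\nu}|\|f_\nu\|_{L^2_{-a}}\|e_\mu\|_{L^2_a}$ at large separation. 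Establishing the uniform lower bound on the single-soliton Gram determinants — equivalently the nondegeneracy of the pairings such as $\la\pd_{\gamma}\varphi_1,\int^x\pd_k\varphi_1\ra$ — and the clean decoupling of clusters is where the real work lies; the explicit determinant structure of $C_N$ and (P4)-type decay estimates supply the needed control.
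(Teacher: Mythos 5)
Your reduction steps are sound: representing $\mathcal{P}(\tau)$ via the mixed Gram matrix $G_{\nu\mu}=\la e_\mu,f_\nu\ra$, removing the secular factor $8k_i\tau\,\pd_{\gamma_i}\varphi_N$ inside $\pd_{k_i}\varphi_N$ by a unipotent change of basis, and using translation covariance of $L^2_a$ to reduce the operator norm to a function $\Phi$ of the separations are all correct, and the ordering hypothesis does confine you to $[0,\infty)^{N-1}$. But the compactness-plus-induction engine you then invoke has a genuine gap at its base: the pointwise nondegeneracy of $G$ at each finite configuration is assumed, not proven. You justify it by appeal to ``the well-posedness of $\mathcal{P}$ asserted before the lemma,'' but the text preceding the lemma merely \emph{defines} $\mathcal{P}(\tau)$ by its range and the conditions \eqref{eq:secKdV1}--\eqref{eq:secKdV2}; that such a projection exists (equivalently, that $G$ is invertible) is part of what the lemma must establish. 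Linear independence of the $2N$ tangent vectors does not give it: a mixed pairing between the span of the $e_\mu$ and the dual family $f_\nu$ can degenerate even when each family is independent. Without this, the continuity step of your argument has nothing to be continuous about, and the second pillar --- continuity of $\Phi$ up to every face of the compactification $[0,\infty]^{N-1}$, which requires \emph{uniform} cluster-decoupling estimates in the weighted spaces near partially clustered configurations, not just limits when all $d_i\to\infty$ --- is exactly the part you defer as ``where the real work lies.'' So the proposal is a plausible program, not a proof.

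The missing idea, which collapses both difficulties at once, is dynamical: since $\xi_i^l\in\{\pd_{\gamma_i}\varphi_N,\pd_{k_i}\varphi_N\}$ solve the linearized equation \eqref{eq:LKdV} and $\eta_j^m=\int_{-\infty}^x\pd_{\gamma_j}\varphi_N\,dy$, $\int_{-\infty}^x\pd_{k_j}\varphi_N\,dy$ solve its adjoint, every pairing $\la\xi_i^l,\eta_j^m\ra$ is \emph{constant in $t$}. Flowing an arbitrary configuration to $t\to+\infty$ and invoking the soliton resolution (Lemma \ref{lem:KdV-Nsol}) then computes the Gram matrix exactly: it is a fixed, block-lower-triangular matrix ($\mathcal{A}_{KdV}^{i\,j}=O$ for $i<j$) with nonvanishing diagonal pairings $\pm\tfrac12\tfrac{d}{dk_i}\|\phi_{k_i}\|_{L^2}^2$, hence invertible at \emph{every} separation, including $d_i=0$ --- no compactness, no induction, no quantitative decoupling at finite separation needed. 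The triangular structure moreover restricts the final sum to $i\le j$, so the weight-transfer factors $e^{-a\{4(k_j^2-k_i^2)\tau+\tilde{\gamma}_j-\tilde{\gamma}_i\}}$ appearing in \eqref{eq:lemPbPc} are at most $1$ under the ordering hypothesis; this is precisely how the paper disposes of the dangerous cross terms $e^{a(x_j-x_i)}$ that you correctly identified as the threat to the crude bound $\sum_{\mu,\nu}|(G^{-1})_{\mu\nu}|\,\|f_\nu\|_{L^2_{-a}}\|e_\mu\|_{L^2_a}$. If you wish to salvage your route, the time-invariance of the pairings is also the cleanest way to supply the pointwise nondegeneracy your continuity argument needs --- at which point the compactification becomes superfluous.
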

To estimate $\|f_{1,+}\|_{L^2}$, we need to show that the low
frequency part $f_{1,+}$ approximately satisfies the secular term condition
for a linearized KdV equation \eqref{eq:secKdV1} and \eqref{eq:secKdV2}.
Let $\mathcal{P}_1(\tau)=e^{k_y}\tau_{4k_1^2\tau}\mathcal{P}(\tau)
\tau_{-4k_1^2\tau}e^{-k_1y}$ and let $h_i(\tau)$ be an $L^2(\R)$-function such that
$$h_i(\tau,y)=\frac{1}{\sqrt{2\pi}}\int_{-\pi\eps^{-1}}^{\pi\eps^{-1}}
f_{i,+}(t,\eps\eta)e^{iy\eta}dy \quad\text{for $i=1$, $2$.}$$
\begin{lemma}
  \label{lem:projerr}
If $w(t)$ satisfies \eqref{eq:orth3}, then
$$\eps^{\frac12}\|\mathcal{P}_1(\tau)h_1(\tau)\|_{L^2}\lesssim
(\eps^2+\delta_2+K^{-2})\|\tau_{ik_1\eps} f(t)\|_{L^2(\T)}.$$
\end{lemma}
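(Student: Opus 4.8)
The plan is first to reduce the bound to finitely many scalar pairings, and then to match each of these with the FPU inner products controlled by the hypothesis \eqref{eq:orth3}. Since $\mathcal{P}(\tau)$ has rank $2N$ and, by Lemma~\ref{lem:KdVPb} together with the conjugation by $\tau_{-4k_1^2\tau}e^{-k_1y}$, the operator $\mathcal{P}_1(\tau)$ is bounded on $L^2$ uniformly in $\tau$, I can write $\mathcal{P}_1(\tau)h_1=\sum_{j}\la h_1,\Psi_j\ra\phi_j$, where $\{\phi_j\}_{j=1}^{2N}$ are the conjugated, translated spanning modes $e^{k_1y}\tau_{4k_1^2\tau}\pd_{\gamma_i}\varphi_N$ and $e^{k_1y}\tau_{4k_1^2\tau}\pd_{k_i}\varphi_N$, and $\{\Psi_j\}_{j=1}^{2N}$ are the biorthogonal functionals built, via the $O(1)$ Gram inverse, from the antiderivatives $\int^{\,\cdot}_{-\infty}\pd_{\gamma_i}\varphi_N\,dy$ and $\int^{\,\cdot}_{-\infty}\pd_{k_i}\varphi_N\,dy$ of \eqref{eq:secKdV1}--\eqref{eq:secKdV2} (conjugated by $e^{-k_1y}\tau_{4k_1^2\tau}$). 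Because $\phi_j$ and $\Psi_j$ are uniformly bounded in $L^2$, it suffices to prove $|\la h_1,\Psi_j\ra|\lesssim\eps^{-1/2}(\eps^2+\delta_2+K^{-2})\|\tau_{ik_1\eps}f(t)\|_{L^2(\T)}$ for each $j$.

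The heart of the argument is to identify $\la h_1,\Psi_j\ra$ with the FPU pairings $\la w(t),J^{-1}\pd_{\gamma_i}u_{N,\eps}\ra$ and $\la w(t),J^{-1}\pd_{k_i}u_{N,\eps}\ra$. I would pass to the Fourier side through Parseval and the definitions of $f$, $f_+$ and $h_1$, rescale $\xi=\eps\eta$, and use the scaling $r_{N,\eps}=\eps^2\varphi_N(\eps^3T,\eps\,\cdot\,;\mk,\mgamma)$ so that, by the sum--versus--integral comparison of Lemma~\ref{lem:FT-FS}, $\mathcal{F}_n(\pd_{\gamma_i}u_{N,\eps})(\eps\eta)$ equals $\widehat{\pd_{\gamma_i}\varphi_N}(\eta)$ times the fixed vector coming from ${}^t(1,-1)$, up to an exponentially small error. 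Expanding $\widehat{J}(\eps\eta)^{-1}=\eps^{-1}(i\eta)^{-1}\begin{pmatrix}0&1\\1&0\end{pmatrix}+O(1)$ and $P(\eps\eta)$ for small argument turns $J^{-1}\pd_{\gamma_i}u_{N,\eps}$ into the KdV antiderivative $\int^{\,\cdot}_{-\infty}\pd_{\gamma_i}\varphi_N\,dy$; the factor $\eps^{-1}$ is cancelled exactly by the Jacobian $d\xi=\eps\,d\eta$, so the two pairings agree up to an $O(1)$ constant and an error. The slow-motion phase $e^{i(1-c_{1,\eps})t\xi}=e^{-i4k_1^2\tau\eta}$ carried by the $+$-mode is precisely the translation $\tau_{4k_1^2\tau}$, and the contour shift $ik_1\eps$ defining the weighted norm is the weight $e^{-k_1y}$; both are already incorporated in $\Psi_j$. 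The residual matching error measures the failure of the discrete operator $J^{-1}$, the symbol $P$, and the FPU $N$-soliton to coincide with their KdV counterparts; by (P4) and the symbol expansions of Lemma~\ref{lem:lbd}, whose leading $O(\xi)$ corrections cancel as in the $1$-soliton computations of \cite{FP4,Mi1}, this error is $O(\eps^2)$ relative, contributing $O(\eps^{3/2})\|\tau_{ik_1\eps}f(t)\|_{L^2(\T)}$.

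Next I would account for the frequency truncation. Since $\widehat{h_1}(\eta)=\chi(\eta/K)f_+(t,\eps\eta+ik_1\eps)$, replacing $h_1$ by the untruncated low-energy field costs $\int_{|\eta|\ge K}(1-\chi(\eta/K))f_+(t,\eps\eta+ik_1\eps)\overline{\widehat{\Psi_j}(\eta)}\,d\eta$. The modes $\pd_{\gamma_i}\varphi_N$ and $\pd_{k_i}\varphi_N$ are smooth and exponentially localized (and $\int_\R\pd_{\gamma_i}\varphi_N\,dy=0$), so the $\Psi_j$ are Schwartz and $\|\widehat{\Psi_j}\|_{L^2(|\eta|\ge K)}\lesssim K^{-2}$; together with $\|f_+(t,\eps\,\cdot\,+ik_1\eps)\|_{L^2}\lesssim\eps^{-1/2}\|\tau_{ik_1\eps}f(t)\|_{L^2(\T)}$ and the Cauchy--Schwarz inequality this yields the $\eps^{-1/2}K^{-2}$ contribution.

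Finally, the surviving FPU pairings are bounded, by the hypothesis \eqref{eq:orth3}, by $\delta_2\eps^{1/2}\|e^{k_1\eps(\cdot-c_{1,\eps}t-\eps^{-1}\gamma_1)}w(t)\|_{l^2}$, and by the Parseval computation recorded just before Lemma~\ref{lem:lbd} this weighted norm is comparable to $\|\tau_{ik_1\eps}f(t)\|_{L^2(\T)}$; this supplies the $\delta_2$ term (with a spare factor $\eps$). Collecting the three contributions and restoring the prefactor $\eps^{1/2}$ gives the claim. I expect the main obstacle to be the $\eps$-power bookkeeping of the second paragraph: one must check that the $\eps^{-1}$ from $J^{-1}$, the rescaling Jacobian, the $\eps^2$ amplitude of $u_{N,\eps}$, and the normalization of $f_+$ combine to an $O(1)$ identification, and---more delicately---that the leading continuum corrections to $J^{-1}$ and $P$ cancel so that only an $O(\eps^2)$ defect survives; an error of order $\eps$ here would destroy the gain encoded in the $\eps^{1/2}$ prefactor.
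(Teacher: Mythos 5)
Your route is essentially the paper's: identify the KdV secular pairings of $h_1$ with the FPU symplectic pairings controlled by \eqref{eq:orth3} via Parseval, the sum-versus-integral comparison of Lemma \ref{lem:FT-FS}, and the small-$\xi$ symbol expansions (the paper works with $|P(\xi-ik_1\eps)^*-P(0)^*|+|(\sin\tfrac{\xi-ik_1\eps}{2})^{-1}-2(\xi-ik_1\eps)^{-1}|\lesssim|\xi-ik_1\eps|$ together with $P(0)^*\pd_{\gamma_i}u_{N,\eps}={}^t(\sqrt{2}\,\pd_{\gamma_i}r_{N,\eps},0)$ and $\|e^{-k_1\eps(\cdot-c_{1,\eps}t-\eps^{-1}\gamma_1)}\pd_x\pd_{\gamma_i}r_{N,\eps}\|_{l^2}=O(\eps^{5/2})$, exactly the cancellation you flag as the delicate point), and handle the frequency truncation by pairing the $|\eta|\ge K$ piece against an $H^2$ dual — your $\|\widehat{\Psi_j}\|_{L^2(|\eta|\ge K)}\lesssim K^{-2}$ is the same estimate as the paper's $H^{-2}$--$H^2$ duality bound for $h_3$.

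However, your opening reduction contains a genuine misstep: the claim that the $\phi_j$ and $\Psi_j$ \emph{are uniformly bounded in $L^2$}, so that it suffices to bound each $|\la h_1,\Psi_j\ra|$, is false for $N\ge 2$. By Lemma \ref{lem:KdV-Nsol} the mode $\pd_{\gamma_i}\varphi_N(\tau,\cdot+4k_1^2\tau)$ is localized near $y\approx 4(k_i^2-k_1^2)\tau+\tilde{\gamma}_i$, while the weight $e^{k_1y}$ is pinned to the slowest soliton; hence $\|\phi_j\|_{L^2}\sim e^{4k_1(k_i^2-k_1^2)\tau}\to\infty$ for $i\ge 2$, and dually the pairings $\la h_1,\Psi_j\ra$ carry a compensating decay. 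Uniform boundedness of the operator $\mathcal{P}_1(\tau)$ (Lemma \ref{lem:KdVPb}) does \emph{not} pass to the individual rank-one summands, so the per-pairing bound you propose, combined with unboundedly growing $\|\phi_j\|_{L^2}$, does not close. The repair — and the actual content of the paper's proofs — is the time-independence and block-triangularity of the Gram matrix ($\mathcal{A}_{KdV}^{i\,j}=O$ for $i<j$), which ensures each growing mode $\xi_i^l$ pairs only with duals $\eta_j^m$, $j\ge i$; the paper accordingly proves \eqref{eq:projerr1}--\eqref{eq:projerr2} with the weight re-centered at the $j$-th soliton, $e^{-k_1\{y-4(k_j^2-k_1^2)\tau-\gamma_j\}}$, and then sums with the decaying transfer factors exactly as in \eqref{eq:lemPbPc}. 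A related minor slip: the duals built from $\int_{-\infty}^y\pd_{k_i}\varphi_N\,dy_1$ are not Schwartz, since $\int_\R\pd_{k_i}\varphi_N\,dy\ne0$ (the antiderivative tends to a nonzero constant at $+\infty$); only after multiplication by $e^{-k_1y}$ do they lie in $H^2$, which is what the $K^{-2}$ step actually uses.
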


The proof of Lemmas \ref{lem:KdVPb} and \ref{lem:projerr} 
will be given in Appendix \ref{sec:apb}.

\begin{proof}[Proof of Lemma \ref{lem:LFPU-1} (continued)]
Finally, we will estimate $f_{1,+}$.
Let $\tau=\eps^3 t/24$, $\xi=\eps\eta$ and
\begin{gather*}
G_4(t,\xi)=\frac{1}{\sqrt{2\pi}}\int_{-\pi}^\pi
e^{ic_{1,\eps}t\xi}\widehat{r_{N,\eps}}(t,\xi_1;\mk,\mgamma)f_{\#}(t,\xi-\xi_1)
d\xi_1,\\ G_5(t,\xi)=\frac{e^{ic_{1,\eps}t\xi}\sin\tfrac{\xi}{2}}{\sqrt{2\pi}\xi}
(\widetilde{\zeta}*_\T f_{\#})(t,\xi_\eps).
\end{gather*}
Lemma \ref{lem:FT-FS} implies that for any $N>0$,
\begin{align*}
& \left\|\chi_K(\eta)\left((G_1(t,\xi_\eps)-G_4(t,\xi_\eps))
\sin\tfrac{\xi_\eps}{2}-\xi_\eps G_5(t,\xi_\eps)
\right)\right\|_{L^2_\eta(\R)}
\\ \lesssim & \eps^{-\frac12}e^{-k_1\eps c_{1,\eps}t} \left\|
\int_{-\pi}^\pi (\widetilde{r}_{N,\eps}(t,\xi_1;\mk,\mgamma)
-\widehat{r}_{N,\eps}(t,\xi_1;\mk,\mgamma)) f_\#(t,\xi_\eps-\xi_1)d\xi_1
\right\|_{L^2(-\pi,\pi)}
\\ \lesssim & \eps^{N-\frac12}e^{-k_1\eps c_{1,\eps}t}
\|f_{\#}(t,\xi_\eps)\|_{L^2(\T)}
\\ \lesssim & \eps^N(\|h_1\|_{L^2(\R)}+\|h_2\|_{L^2(\R)}).
\end{align*}
Using Parseval's identity and the fact that
$\sup_{t,n}|\zeta_1|=O(\delta_1\eps^2)$, we have
\begin{align*}
  \|\chi_KG_5\|_{L^2_\eta(\R)}\lesssim & \eps^{-\frac12}
\|\widetilde{\zeta_1}*_\T f_{\#}\|_{L^2(\T)}
\\ \lesssim & \delta_1\eps^2(\|h_1\|_{L^2(\R)}+\|h_2\|_{L^2(\R)}).
\end{align*}
Since $\sin(\eps(\eta+ia))=\frac{\eps}{2}(\eta+ia)
+O(\eps^3\la \eta\ra^3)$ for $\eta\in [-K,K]$ and
\begin{equation}
  \label{eq:rphi}
e^{ic_{1,\eps}t\xi}\widehat{r_{N,\eps}}(t,\xi;\mk,\mgamma)=
\eps e^{4ik_1^2\tau\eta}\widehat{\varphi_N}(\tau,\eta;\mk,\eps\mgamma)  
\end{equation}
by the definition of $r_{N,\eps}$, it follows that
\begin{align*}
&  \left\|\chi_{K}(\eta)
\left(\sin\frac{\eps(\eta+ik_1)}{2}-\frac{\eps(\eta+ik_1)}{2}\right)G_4(t,\xi_\eps)\right\|_{L^2_\eta(\R)}
\\ \lesssim & \eps^3\|G_4(t,\xi_\eps)\|_{L^2_\eta(-2K,2K)}
\\ \lesssim & \eps^5 \left\|\int_{-\pi\eps^{-1}}^{\pi\eps^{-1}} |\widehat{\varphi}_{N}(\tau,\eta_1)|
|e^{ic_{1,\eps}\eps t(\eta-\eta_1+ik_1)} f_{\#}(t,\eps(\eta-\eta_1+ik_1))|d\eta_1\right\|_{L^2(-2K,2K)}
\\ \lesssim & \eps^5(\|h_1(\tau)\|_{L^2(\R)}+\|h_2(\tau)\|_{L^2(\R)}).
\end{align*}
 Let
\begin{align*}
G_6(\tau,y)=& -\frac{\eps^3}{2}(\pd_y-k_1)\left\{
\varphi_N(\tau,y+4k_1^2\tau;\mk,\eps\mgamma)
(h_1(\tau,y)+h_2(\tau,y)\right\}
\\ =& : G_{6,1}+G_{6,2}.
\end{align*}
By \eqref{eq:rphi},
\begin{align*}
& \widehat{G_6}(\tau,\eta)
\\ =& -\frac{i\eps^3(\eta+ik_1)}{2\sqrt{2\pi}}
\int_{\R}e^{i4k_1^2\tau(\eta-\eta_1)}
\widehat{\varphi_N}(\tau,\eta-\eta_1;\mk,\eps\mgamma)
(\hat{h}_1(\tau,\eta_1)+\hat{h}_2(\tau,\eta_1))d\eta_1.
\end{align*}
Since $\supp h_{i}(\tau,\cdot)\subset [-\pi/\eps,\pi/\eps]$ for $i=1,2$,
$$
\widehat{G_6}(\tau,\eta)+\frac{i\xi_\eps}{2}G_4(t,\xi_\eps)
=\frac{i\xi_\eps}{2\sqrt{2\pi}}
\left(\int_{-\pi}^{\pi}-\int_{-\pi+\xi}^{\pi+\xi}\right)
e^{ic_{1,\eps}t\xi}\widehat{r_{N,\eps}}(t,\xi_\eps-\xi_1)
f_{\#}(t,\xi_1)d\xi_1.$$
If $\xi\in[-2K\eps,2K\eps]$ and $|\xi_1\pm\pi|\le|\xi|$,
we have $\widehat{r_{N,\eps}}(t,\xi-\xi_1)
=O(e^{-\pi^2/(8\sum_{i=1}^Nk_i\eps)})$ and
\begin{align*}
& \left\|\chi_K(\eta)\left(\widehat{G_6}
+\frac{i\eps(\eta+ik_1)}{2}G_4\right)\right\|_{L^2_\eta(\R)}
\\ \lesssim &
K^{\frac12}e^{-\pi^2/(8\sum_{i=1}^Nk_i\eps)}e^{-k_1\eps c_{1,\eps}t}
\|f_{\#}\|_{L^2(-\pi,\pi)}
\\ \lesssim & \eps^N(\|h_1\|_{L^2(\R)}+\|h_2\|_{L^2(\R)})
\quad\text{for any $N\ge1$.}
\end{align*}
Since
\begin{equation*}
\lambda_{+,\eps}(\xi_\eps)=\frac{\eps^3}{24}(\eta+ik_1)\{(\eta+ik_1)^2+4k_1^2+O(\eps^2\la\eta\ra^4)\}
\end{equation*}
for $\eta\in [-2K,2K]$,
\begin{align*}
& \pd_tf_{1,+}(t,\xi)-i\lambda_{+,\eps}(\xi_\eps)f_{1,+}(t,\xi)
\\=& \frac{\eps^3}{24}\mathcal{F}_y
\left\{\pd_\tau h_1-4k_1^2(\pd_y-k_1)h_1+(\pd_y-k_1)^3h_1+O(\eps^2h_1)\right\}.
\end{align*}
Combining the above, we obtain
 \begin{equation}
   \label{eq:h1}
   \begin{split}
& \pd_\tau h_1+\{(\pd_y-k_1)^3-4k_1^2(\pd_y-k_1)\}h_1
+12(\pd_y-k_1)\{\varphi_N(\tau,y+4k_1^2\tau)h_1\}
\\=&  24\eps^{-3}\mathcal{F}^{-1}_\eta\{\chi_K\widehat{G_{6,2}}-\tilde{\chi}_K\widehat{G_{6,1}}
-i\chi_K(\xi_\eps G_5+G_2'+G_3'\sin\frac{\xi_\eps}{2})\}
\\ & +O(\eps^2(h_1+h_2)),
   \end{split}
 \end{equation}
 where $G_2'(\tau,\eta):=G_{2,+}(t,\xi_\eps)$ and $G_3'(\tau,\eta):=
G_{3,+}(t,\xi_\eps)$. 
\par

By \eqref{eq:h1} and Theorem \ref{thm:linearizedKdV},
\begin{equation}
  \label{eq:h1-est1}
  \begin{split}
& \|\mathcal{Q}(\tau)h_1(\tau)\|_{L^2}  \lesssim  e^{-3k_1^3\tau}\|\mathcal{Q}(0)h_1(0)\|_{L^2}
+\eps^2\int_0^\tau e^{-3k_1^3(\tau-\tau_1)}(\|h_1\|_{L^2}+\|h_2\|_{L^2})d\tau_1
\\ & +\eps^{-3} \int_0^\tau e^{-3k_1^3(\tau-\tau_1)}(\tau-\tau_1)^{-\frac34}
\|\la\eta\ra^{-\frac32}\tilde{\chi}_K\widehat{G_{6,1}}\|_{L^2}d\tau_1
\\ & +\eps^{-3}\int_0^\tau  e^{-3k_1^3(\tau-\tau_1)}(\tau-\tau_1)^{-\frac12}
 \{\eps(\|G_3'\|_{L^2}+\|G_5\|_{L^2})
+\|\la\eta\ra^{-\frac12}\widehat{G_{6,2}}\|_{L^2}\}d\tau_1
\\ & +\eps^{-3}\int_0^\tau e^{-3k_1^3(\tau-\tau_1)} \|G_2'\|_{L^2}d\tau_1
\\ \lesssim & a(\tau)+\int_0^\tau e^{-3k_1^3(\tau-\tau_1)}
\{\eps^2+\delta_1(\tau-\tau_1)^{-\frac12}+K^{-\frac12}(\tau-\tau_1)^{-\frac34}\}
\|\mathcal{Q}(\tau_1)h_1(\tau_1)\|_{L^2}d\tau_1,
\end{split}
\end{equation}
where
\begin{align*}
 & a(\tau) = e^{-3k_1^3\tau}\|\mathcal{Q}(0)h_1(0)\|_{L^2}
\\ & +\int_0^\tau e^{-3k_1^3(\tau-\tau_1)}
(\tau-\tau_1)^{-\frac12}(\|h_2(\tau_1)\|_{L^2}+\eps^{-2}\|G_3'\|_{L^2})d\tau_1
\\ &  +\int_0^\tau e^{-3k_1^3(\tau-\tau_1)}
(\eps^2\|h_2(\tau_1)\|_{L^2}+\eps^{-3}\|G_2'\|_{L^2})d\tau_1
\\ & +
\int_0^\tau e^{-3k_1^3(\tau-\tau_1)}
\{\eps^2+\delta_1(\tau-\tau_1)^{-\frac12}+K^{-\frac12}(\tau-\tau_1)^{-\frac34}\}
\|\mathcal{P}(\tau_1)h_1(\tau_1)\|_{L^2}d\tau_1.
\end{align*}
Applying Gronwall's inequality to \eqref{eq:h1-est1}, we have
$$ \|\mathcal{Q}(\tau)h_1(\tau)\|_{L^2} \lesssim 
a(\tau)+\int_0^\tau e^{-2k_1^3(\tau-\tau_1)}(\tau-\tau_1)^{-\frac34}
a(\tau_1)d\tau_1$$
if $\eps$, $\delta_1$ and $K^{-\frac12}$ are sufficiently small.
Now we use the following computation result.
\begin{claim}
  \label{cl:grhelp}
Let $b>a>0$, $0<\alpha$, $\beta<1$, $t\ge0$ and $g(t)$ be
a nonnegative measurable function. Then
\begin{align*}
& \int_0^t e^{-b(t-s)}(t-s)^{-\beta}
\left(\int_0^s e^{-a(s-\tau)}(s-\tau)^{-\alpha}g(\tau)d\tau\right)ds
\\ \lesssim & \int_0^t e^{-a(t-s)}(t-s)^{1-(\alpha+\beta)}g(s)ds.  
\end{align*}
\end{claim}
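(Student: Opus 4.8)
The plan is to exchange the order of integration and reduce the double integral to a single-variable convolution kernel. By Fubini's theorem (all integrands are nonnegative) the left-hand side equals
\[
\int_0^t g(\tau)\left(\int_\tau^t e^{-b(t-s)}(t-s)^{-\beta}e^{-a(s-\tau)}(s-\tau)^{-\alpha}\,ds\right)d\tau ,
\]
so it suffices to bound the inner $s$-integral by $C\,e^{-a(t-\tau)}(t-\tau)^{1-(\alpha+\beta)}$ with a constant $C$ independent of $t$ and $\tau$.

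First I would freeze $\tau$, write $T=t-\tau$ and substitute $u=s-\tau\in[0,T]$, turning the inner integral into $\int_0^T e^{-b(T-u)}(T-u)^{-\beta}e^{-au}u^{-\alpha}\,du$. The one observation that makes everything work is the hypothesis $b>a$: since $-b(T-u)-au=-aT-(b-a)(T-u)$ and $(b-a)(T-u)\ge0$, the two exponentials combine into $e^{-aT}e^{-(b-a)(T-u)}\le e^{-aT}$. This lets me pull the decaying factor $e^{-aT}=e^{-a(t-\tau)}$ cleanly outside and discard the remaining exponential, leaving the purely algebraic Beta integral
\[
\int_0^T (T-u)^{-\beta}u^{-\alpha}\,du = B(1-\alpha,1-\beta)\,T^{1-(\alpha+\beta)},
\]
evaluated by the scaling substitution $u=Tw$. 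Assembling the two bounds gives the inner integral $\le B(1-\alpha,1-\beta)\,e^{-a(t-\tau)}(t-\tau)^{1-(\alpha+\beta)}$, and reinserting this into the $\tau$-integral yields exactly the claimed inequality, with implicit constant $B(1-\alpha,1-\beta)$.

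There is no serious analytic obstacle here; the only points requiring care are bookkeeping. The condition $\alpha,\beta\in(0,1)$ is precisely what guarantees convergence of $\int_0^1 w^{-\alpha}(1-w)^{-\beta}\,dw=B(1-\alpha,1-\beta)$, and the strict inequality $b>a$ is what permits the loss-free extraction of $e^{-a(t-\tau)}$ (if one had only $b\ge a$ the same argument still works via $e^{-(b-a)(T-u)}\le 1$). I would note that discarding $e^{-(b-a)(T-u)}$ is mildly lossy for large $T$, where the true decay of the kernel is $T^{-\alpha}$ rather than $T^{1-(\alpha+\beta)}$; however, the stated bound is all that is needed for the Gronwall argument around \eqref{eq:h1-est1}, so no sharper estimate is pursued.
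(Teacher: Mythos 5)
Your proof is correct: Tonelli's theorem (everything is nonnegative), the exponent identity $-b(T-u)-au=-aT-(b-a)(T-u)$ with $(b-a)(T-u)\ge 0$, and the scaling substitution reducing the algebraic factor to $B(1-\alpha,1-\beta)\,T^{1-(\alpha+\beta)}$ yield exactly the stated bound with a constant depending only on $\alpha,\beta$. The paper states Claim \ref{cl:grhelp} without proof, and your argument is precisely the standard one it presupposes; your side remarks are also accurate, namely that $b\ge a$ would suffice and that discarding $e^{-(b-a)(T-u)}$ makes the bound lossy for large $t-\tau$ (true kernel decay $e^{-a(t-\tau)}(t-\tau)^{-\alpha}$), which is harmless for the Gronwall step around \eqref{eq:h1-est1}.
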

By Lemma \ref{lem:projerr}, the definition of $h_2$ and Claim \ref{cl:grhelp},
we have
\begin{align*}
& \|\mathcal{Q}(\tau)h_1(\tau)\|_{L^2} \lesssim 
e^{-2k_1^3\tau}\|\mathcal{Q}(0)h_1(0)\|_{L^2}
\\\ & +\int_0^\tau e^{-2k_1^3(\tau-\tau_1)}
\{\eps^{-3}\|G_2'(\tau_1)\|_{L^2}
+\eps^{-2}(\tau-\tau_1)^{-\frac12}\|G_3'(\tau_1)\|_{L^2}\}d\tau_1
\\ & +\eps^{-\frac12}\int_0^\tau e^{-2k_1^3(\tau-\tau_1)}
(\tau-\tau_1)^{-\frac34}
(\delta_3\|f_{1,+}\|_{L^2}+\|f_{2,+}\|_{L^2}+\|f_3\|_{L^2})d\tau_1,
  \end{align*}
where $\delta_3=(\eps^2+\delta_2+K^{-2})(\eps^2+\delta_1+K^{-\frac12})$.
Combining the above and  Lemma \ref{lem:projerr}, we obtain
\begin{equation}
  \label{eq:f1-est}
  \begin{split}
& \|f_{1,+}(t)\|_{L^2} \lesssim  e^{-\frac{k_1^3\eps^3t}{12}}\|f_{1,+}(0)\|_{L^2}
+\delta_4(\|f_{2,+}(t)\|_{L^2}+\|f_3(t)\|_{L^2})
\\ & + \int_0^t e^{-\frac{k_1^3\eps^3(t-s)}{12}} e^{-k_1\eps c_{1,\eps}s}
\{\|F_1(s)\|_{l^2_{k_1\eps}}+\eps^{-\frac12}(t-s)^{-\frac12}\|F_2(s)\|_{l^2_{k_1\eps}}
\}ds
\\ & + \eps^{\frac34}\int_0^t e^{-\frac{k_1^3\eps^3(t-s)}{12}}(t-s)^{-\frac34}
(\delta_3\|f_{1,+}(s)\|_{L^2}+\|f_{2,+}(s)\|_{L^2}+\|f_3(s)\|_{L^2})ds,
  \end{split}
\end{equation}
where $\delta_4=\eps+\delta_2+K^{-2}$.
\par
By \eqref{eq:f2-est} and \eqref{eq:f3-est} and the fact that
\begin{gather*}
e^{-\frac{k_1K^2\eps^3(t-s)}{32}}\lesssim
 K^{-\frac12}\eps^{-\frac34}(t-s)^{-\frac14}
e^{-\frac{k_1^3\eps^3(t-s)}{12}},\\
e^{-\alpha\eps(t-s)}\lesssim  \max\{\eps^{-\frac34}(t-s)^{-\frac34},\eps^{-\frac12}(t-s)^{-\frac12}\}
e^{-\frac{k_1^3\eps^3(t-s)}{12}},
\end{gather*}
we have 
\begin{equation}
  \label{eq:f23-est}
  \begin{split}
& \|f_{2,+}(t)\|_{L^2}+\|f_3(t)\|_{L^2}  \lesssim 
 e^{-\frac{k_1^3\eps^3t}{12}}
(\|f_{1,+}(0)\|_{L^2}+\|f_{2,+}(0)\|_{L^2}+\|f_{3}(0)\|_{L^2})
\\ & +
\int_0^t e^{-\frac{k_1^3\eps^3(t-s)}{12}} e^{-k_1\eps c_{1,\eps}s}
\{\|F_1(s)\|_{l^2_{k_1\eps}}+\eps^{-\frac12}(t-s)^{-\frac12}\|F_2(s)\|_{l^2_{k_1\eps}}\}ds
\\ & + \eps^{\frac34}(K^{-\frac12}+\eps^{\frac12})
\int_0^t e^{-\frac{k_1^3\eps^3(t-s)}{12}}(t-s)^{-\frac34}
\\ & \phantom{ +\int_0^t e^{-k_1^3\eps^3(t-s)/12}\quad}
\times (\|f_{1,+}(s)\|_{L^2}+\|f_{2,+}(s)\|_{L^2}+\|f_3(s)\|_{L^2})ds.
  \end{split}
\end{equation}
Let $X(t)=\|f_{1,+}(t)\|_{L^2}
+\delta_4^{-\frac18}(\|f_{2,+}(t)\|_{L^2}+\|f_3(t)\|_{L^2})$.
By \eqref{eq:f1-est} and \eqref{eq:f23-est},
$$
X(t)\lesssim a_1(t) +\delta_4^{\frac18}\eps^{\frac34}
\int_0^t e^{-\frac{k_1^3\eps^3(t-s)}{12}}(t-s)^{-\frac34}X(s)ds,$$
where
\begin{align*}
& a_1(t)= e^{-\frac{k_1^3\eps^3t}{12}}X(0)
\\ & +\int_0^t e^{-\frac{k_1^3\eps^3(t-s)}{12}} e^{-k_1\eps c_{1,\eps}s}
\{\|F_1(s)\|_{l^2_{k_1\eps}}
+\eps^{-\frac12}(t-s)^{-\frac12}\|F_2(s)\|_{l^2_{k_1\eps}}\}ds.
\end{align*}
Applying \cite[Lemma 7.1.1]{He} to the above and using Claim \ref{cl:grhelp},
we obtain 
\begin{equation}
  \label{eq:X-est1}
  \begin{split}
X(t)\lesssim & a_1(t)+\eps^{\frac34}\int_0^t e^{-(\frac{k_1^3}{12}+O(\delta_4^{\frac12}))\eps^3(t-s)}
(t-s)^{-\frac34}a_1(s)ds
\\ \lesssim &
e^{-\frac{k_1^3\eps^3t}{24}}X(0)+\int_0^t e^{-\frac{k_1^3\eps^3(t-s)}{24}}
\{\|F_1(s)\|_{l^2_{k_1\eps}}+\eps^{-\frac12}(t-s)^{-\frac12}\|F_2(s)\|_{l^2_{k_1\eps}}\}ds.
\end{split}
\end{equation}
Thus we prove Lemma \ref{lem:LFPU-1}.
\end{proof}

\bigskip

\section{Exponential stability property of KdV N-solitons}
\label{sec:backlund}
In this section, we will prove linear stability  property of an $N$-soliton
solution of KdV equation \eqref{eq:KdV}.
We find that linear stability of an $N$-soliton in $L^2_a(\R)$ is
equivalent to that of an $(N-1)$-soliton connected by the
B\"acklund transformation \eqref{eq:BT} and it turns out that
exponential stability property of $N$-solitons in $L^2_a(\R)$ follows from
that of the null solution.
\par

First, we recall the B\"acklund transformation of KdV.
If $u$ is a solution of \eqref{eq:KdV} and $v(t,x)=-\int_x^\infty u(t,y)dy$,
\begin{equation}
  \label{eq:KdV2}
\pd_tv+\pd_x^3v+6(v_x)^2=0 \quad\text{for $x\in\R$ and $t>0$.}
\end{equation}
Eq. \eqref{eq:KdV2} admits a B\"acklund transformation
determined by the equations
\begin{equation}
  \label{eq:BT}
\left\{  \begin{aligned}
& \pd_x(v'+v)=k^2-(v'-v)^2\\
& \pd_t(v'+v)=2(v'-v)\pd_x^2(v'-v)
-4\{(\pd_xv')^2+(\pd_xv')(\pd_xv)+(\pd_xv)^2\}.
  \end{aligned}\right.
\end{equation}
If $v$ and $v'$ satisfy \eqref{eq:BT} and $v$ is a solution of \eqref{eq:KdV2},
then $v'$ is necessarily a solution of \eqref{eq:KdV2}
\par
To begin with, we recall that the B\"acklund transformation \eqref{eq:BT}
creates a 1-soliton solution from the null solution and an $N$-soliton
solution from an $(N-1)$-soliton solution (see \cite{WE}).
Let $0<k_1<\cdots<k_N$, $\mk^m=(k_1,\cdots,k_m)$,
$\mgamma^m=(\gamma^m_1,\cdots,\gamma^m_m)$,
$\theta^m_i=k_i(x-4k_i^2t-\gamma^m_i)$, 
\begin{align*}
& C_m=\begin{pmatrix}
\frac{e^{-(\theta^m_i+\theta^m_j)}}{k_i+k_j}\end{pmatrix}_{m\times m},
\\ & \Delta_m=
\begin{cases}
\exp(-\sum_{i=1}^N\theta_i^N) & \text{if $m=0$,}
\\  \exp(-\sum_{i=m+1}^N\theta_i^N)\det(I+C_m) & \text{if $1\le m\le N-1$,}
\\  \det(I+C_N) &\text{if $m=N$.}\end{cases}  
\end{align*}
Then $v^m=\pd_x\log\Delta_m$ $(0\le m\le N)$ is a solution of \eqref{eq:KdV2}
and $\varphi_m(t,x;\mk^m,\mgamma^m):=\pd_x^2\log \Delta_m$
is an $m$-soliton solution of \eqref{eq:KdV} (see \cite{GGKM}).
\par
An $m$-soliton solution is connected
to an $(m-1)$-soliton solution by \eqref{eq:BT}.
\begin{lemma}
  \label{lem:vm,m-1}
Suppose $1\le m\le N$ and that
\begin{equation}
  \label{eq:btphase}
\gamma^{m-1}_i=\gamma^{m}_i+\frac{1}{2k_i}
 \log\left(\frac{k_m-k_i}{k_m+k_i}\right)\quad
\text{for $1\le i\le m-1$.}
\end{equation}
Then
\begin{equation}
  \label{eq:m,m-1}
  \pd_x(v^m+v^{m-1})=k_m^2-(v^m-v^{m-1})^2.
\end{equation}
\end{lemma}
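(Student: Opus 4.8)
The plan is to recast the first B\"acklund relation \eqref{eq:m,m-1} as a Hirota bilinear identity for the tau functions $\Delta_m,\Delta_{m-1}$, and then to verify that identity from the explicit Cauchy/Gram structure of the determinants together with the phase relation \eqref{eq:btphase}. First I would insert $v^m=\pd_x\log\Delta_m$ and $v^{m-1}=\pd_x\log\Delta_{m-1}$ into \eqref{eq:m,m-1}. The left side $\pd_x(v^m+v^{m-1})=\pd_x^2\log\Delta_m+\pd_x^2\log\Delta_{m-1}$ produces the squares $-(\pd_x\Delta_m/\Delta_m)^2$ and $-(\pd_x\Delta_{m-1}/\Delta_{m-1})^2$; exactly the same squares arise from $-(v^m-v^{m-1})^2$ on the right, so they cancel and \eqref{eq:m,m-1} collapses, after multiplication by $\Delta_m\Delta_{m-1}$, to
\[
\pd_x^2\Delta_m\,\Delta_{m-1}-2\,\pd_x\Delta_m\,\pd_x\Delta_{m-1}+\Delta_m\,\pd_x^2\Delta_{m-1}=k_m^2\,\Delta_m\Delta_{m-1},
\]
that is $(D_x^2-k_m^2)\Delta_m\cdot\Delta_{m-1}=0$ in Hirota's notation $D_x^n f\cdot g=(\pd_x-\pd_{x'})^n f(x)g(x')|_{x'=x}$.

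Next I would strip the linear exponential prefactors. Writing $\Delta_m=E_m\det(I+C_m)$ and $\Delta_{m-1}=E_{m-1}\det(I+C_{m-1})$ with $E_m=\exp(-\sum_{i>m}\theta_i^N)$ and $E_{m-1}=\exp(-\sum_{i\ge m}\theta_i^N)$, one has $\pd_x\log(E_{m-1}/E_m)=-k_m$, so the gauge covariance $D_x^n(e^{\alpha x}f)\cdot(e^{\beta x}g)=e^{(\alpha+\beta)x}(D_x+\alpha-\beta)^n f\cdot g$ of the Hirota derivative reduces the identity to
\[
(D_x^2+2k_mD_x)\det(I+C_m)\cdot\det(I+C_{m-1})=0.
\]
Here the phase relation \eqref{eq:btphase} enters decisively: it gives $e^{-\theta_i^{m-1}}=\bigl(\tfrac{k_m-k_i}{k_m+k_i}\bigr)^{1/2}e^{-\theta_i^m}$ for $1\le i\le m-1$, so that $C_{m-1}=D\,\tilde C\,D$, where $\tilde C$ is the top-left $(m-1)\times(m-1)$ block of $C_m$ and $D=\diag\bigl((\tfrac{k_m-k_i}{k_m+k_i})^{1/2}\bigr)_{i=1}^{m-1}$; hence $\det(I+C_{m-1})=\det(I+D^2\tilde C)$ and both determinants are built from the single Cauchy matrix $C_m$.

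The remaining, and main, task is to prove this last bilinear identity. I would exploit the Cauchy structure of $C_m$: with $a={}^t(e^{-\theta_1^m},\dots,e^{-\theta_m^m})$ and $K=\diag(k_1,\dots,k_m)$ one has the Sylvester relation $KC_m+C_mK=a\,{}^t a$ and therefore $\pd_xC_m=-a\,{}^t a$, which expresses every $x$-derivative of $\det(I+C_m)$ through the resolvent quantities ${}^t a(I+C_m)^{-1}a$ and ${}^t a\,K(I+C_m)^{-1}a$ (and the analogous quantities at level $m-1$ with the weighted vector $Da$). The identity then follows from a finite determinant identity of Jacobi/Sylvester (Pl\"ucker) type relating consecutive members of the chain $\det(I+C_N),\dots,\det(I+C_0)$; conceptually this chain is a Darboux/Crum sequence in which each step adjoins the soliton of spectral parameter $k_m$, and \eqref{eq:m,m-1} is precisely the Riccati (first) equation of that Darboux step. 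I expect the bookkeeping of this determinant identity --- in particular verifying that the weights $\tfrac{k_m-k_i}{k_m+k_i}$ produced by the phase shift \eqref{eq:btphase} are tuned exactly so that the parameter $k_m$, rather than some other combination, appears on the right of \eqref{eq:m,m-1} --- to be the principal obstacle. An economical alternative that sidesteps the general determinant identity is induction on $m$: the base case $m=1$ is a one-line computation with $\Delta_0$ and $\Delta_1$ (giving $v^1-v^0=k_1\tfrac{1-u}{1+u}$ with $u=e^{-2\theta_1^1}/(2k_1)$), while the inductive step invokes the permutability (group) property of the B\"acklund transformation.
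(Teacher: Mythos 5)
Your reductions are all correct as far as they go: the rewriting of \eqref{eq:m,m-1} as $(D_x^2-k_m^2)\Delta_m\cdot\Delta_{m-1}=0$, the gauge argument giving $(D_x^2+2k_mD_x)\det(I+C_m)\cdot\det(I+C_{m-1})=0$, the observation that \eqref{eq:btphase} yields $e^{-\theta_i^{m-1}}=\bigl(\tfrac{k_m-k_i}{k_m+k_i}\bigr)^{1/2}e^{-\theta_i^m}$ and hence $C_{m-1}=D\widetilde{C}D$, and the base case $m=1$ all check out. But the proof has a genuine, and self-acknowledged, hole exactly where the content lies: the bilinear identity itself is never proved. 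Neither of your two proposed completions is carried out, and neither is routine. The Jacobi/Sylvester determinant route requires precisely the cofactor bookkeeping you defer; and the permutability alternative is worse than it looks, because Bianchi permutability only produces \emph{some} common solution of the two B\"acklund chains --- identifying that solution with $\pd_x\log\Delta_m$ carrying exactly the phase shifts \eqref{eq:btphase} (rather than some other branch or phase normalization) is essentially equivalent to the lemma itself, so the induction as sketched is circular.

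Ironically, your parenthetical remark that \eqref{eq:m,m-1} ``is precisely the Riccati (first) equation of that Darboux step'' is the paper's entire proof, and pursuing it closes the gap in a few lines while bypassing the Hirota machinery. The paper sets
$\psi_m=\sum_{l=1}^m e^{-\theta_l^m}Q^m_{lm}/\det(I+C_m)$, where $Q^m_{lm}$ are cofactors of $I+C_m$; by the argument of GGKM (p.~121) this equals $e^{-\theta_m^m}\det(I+C_{m-1})/\det(I+C_m)$ --- this is where \eqref{eq:btphase} enters, identifying the reduced determinant as the $(m-1)$-soliton tau function with the shifted phases --- so that $v^{m-1}-v^m=\pd_x\log\psi_m$. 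Theorem~3.2 of GGKM then says $\psi_m$ is an eigenfunction: $\pd_x^2\psi_m=(k_m^2-2\pd_xv^m)\psi_m$. Writing
$\pd_x(v^m+v^{m-1})=\pd_x^2\log\psi_m+2\pd_xv^m
=\tfrac{\pd_x^2\psi_m}{\psi_m}-\bigl(\tfrac{\pd_x\psi_m}{\psi_m}\bigr)^2+2\pd_xv^m$
gives \eqref{eq:m,m-1} immediately. So the missing idea is concrete: recognize the ratio of consecutive tau functions (times the exponential prefactor) as a Schr\"odinger eigenfunction of the $m$-soliton potential at eigenvalue $k_m^2$, and let the Riccati substitution do the work. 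Without that input (or a completed determinant identity in its place), your argument establishes only an equivalent reformulation of the lemma, not the lemma.
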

\begin{proof}
By the definition,
\begin{equation}
  \label{eq:v0-v1}
  v^0=-\sum_{i=1}^Nk_i\quad\text{and}\quad v^1=-\sum_{i=2}^Nk_i
-\frac{2k_1}{1+e^{2\theta^1_1}}.
\end{equation}
and \eqref{eq:m,m-1} is true for $m=1$.
\par
Let $m\ge 2$ and let $Q^m_{i\,j}$ be the $(i,j)$ cofactor of $I+C_m$.
Following the argument of \cite[p.121]{GGKM}, we have
\begin{equation*}
\psi_m:= \frac{\sum_{l=1}^m e^{-\theta_l^m}Q^m_{lm}}{\det(I+C_m)}
 =\frac{e^{-\theta_m^m}\det(I+C_{m-1})}{\det(I+C_{m})}
= e^{k_m(\gamma_m^m-\gamma_m^N)}\frac{\Delta_{m-1}}{\Delta_m},  
\end{equation*}
whence 
\begin{equation}
  \label{eq:psi_m}
 v^{m-1}-v^m=\pd_x\log \psi_m.
\end{equation}
On the other hand, Theorem 3.2 in \cite{GGKM} implies that
$$\pd_x^2\psi_m=(k_m^2-2\pd_xv^m)\psi_m.$$
Thus we have
\begin{align*}
\pd_x(v^m+v^{m-1})=&  \pd_x^2\log\psi_m +2\pd_x^2\log\Delta_m
\\=& \frac{\pd_x^2\psi_m}{\psi_m}-\left(\frac{\pd_x\psi_m}{\psi_m}\right)^2
+2\pd_xv^m
\\=& k_m^2-(v^m-v^{m-1})^2.
\end{align*}
\end{proof}

Now we linearize the B\"acklund transformation \eqref{eq:BT} around 
$v=v^m$ and $v'=v^{m-1}$. Then we obtain a linearized B\"acklund transformation
\begin{equation}
  \label{eq:LBT}
  \pd_x(w^m+w^{m-1})=-2(v^m-v^{m-1})(w^m-w^{m-1}).
\end{equation}
The semiflows generated by
\begin{gather}
 \label{eq:w}
\pd_tw^m+\pd_x^3w^m+12(\pd_xv^m)(\pd_xw^m)=0 \quad\text{for $x\in\R$,}\\
  \label{eq:w'}
\pd_tw^{m-1}+\pd_x^3w^{m-1}+12(\pd_xv^{m-1})(\pd_xw^{m-1})=0
\quad\text{for $x\in\R$,}
\end{gather}
leave the linearized B\"acklund transformation \eqref{eq:LBT} invariant.
Note that \eqref{eq:w} is a linearized equation of \eqref{eq:KdV2} around $v^m$ and the adjoint equation of \eqref{eq:KdV} if
$m=N$ and $\pd_xv_m=\varphi_N$.
\begin{lemma}
  \label{lem:2}
Let $a>0$, $t_0\in\R$ and let $w^m$, $w^{m-1}\in
C((-\infty,t_0];L^2_{-a}(\R))$ be solutions of \eqref{eq:w} and
\eqref{eq:w'}, respectively.
If \eqref{eq:LBT} holds at $t=t_0$, it holds for every $t\le t_0$.
\end{lemma}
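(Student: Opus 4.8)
The plan is to show that the \emph{defect}
\[
E:=\pd_x(w^m+w^{m-1})+2(v^m-v^{m-1})(w^m-w^{m-1}),
\]
which vanishes exactly when \eqref{eq:LBT} holds, itself satisfies a linear homogeneous evolution equation, so that its vanishing propagates in time. First I would pass to the variables $p=w^m+w^{m-1}$, $q=w^m-w^{m-1}$ and $V=v^m-v^{m-1}$, for which $E=\pd_x p+2Vq$. Adding and subtracting \eqref{eq:w} and \eqref{eq:w'} and writing $\pd_xv^m=\tfrac12(\pd_x(v^m+v^{m-1})+\pd_xV)$ and $\pd_xv^{m-1}=\tfrac12(\pd_x(v^m+v^{m-1})-\pd_xV)$, the two linearized flows combine, after using the nonlinear B\"acklund identity \eqref{eq:m,m-1} to replace $\pd_x(v^m+v^{m-1})$ by $G:=k_m^2-V^2$, into
\[
\pd_tp=-\pd_x^3p-6\bigl[G\,\pd_xp+(\pd_xV)\pd_xq\bigr],\qquad
\pd_tq=-\pd_x^3q-6\bigl[(\pd_xV)\pd_xp+G\,\pd_xq\bigr].
\]
Since $v^m$ and $v^{m-1}$ solve \eqref{eq:KdV2}, I also get, without invoking the $t$-part of the B\"acklund transformation, the identity $\pd_tV=-\pd_x^3V-6(\pd_xV)G$.

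The core of the argument is then the direct evaluation of $\pd_tE+\pd_x^3E$. I would insert the three evolution equations above into $\pd_tE=\pd_x\pd_tp+2(\pd_tV)q+2V\pd_tq$ and collect terms according to the order of differentiation. All fourth-order contributions in $p$, together with the terms carrying $\pd_x^3q$, $(\pd_x^3V)q$, $(\pd_x^2V)\pd_xq$ and $(\pd_xV)\pd_x^2q$, cancel against the matching terms of $\pd_x^3E$; the leftover first-derivative-in-$p$ contributions cancel precisely because differentiating \eqref{eq:m,m-1} gives $\pd_xG=-2V\pd_xV$. What remains reassembles into $-6G\,\pd_xE$, yielding the clean linear equation
\[
\pd_tE+\pd_x^3E+6\bigl(k_m^2-(v^m-v^{m-1})^2\bigr)\pd_xE=0 .
\]

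Finally, the coefficient $G=k_m^2-(v^m-v^{m-1})^2$ and its $x$-derivatives are smooth and bounded, being built from differences of soliton profiles. Setting $s=t_0-t$ and $\tilde E(s)=E(t_0-s)$ turns the equation into a forward Airy-type equation $\pd_s\tilde E-\pd_x^3\tilde E-6G\,\pd_x\tilde E=0$ with zero datum $\tilde E(0)=0$. Differentiating $\|\tilde E\|_{L^2_{-a}}^2$ and integrating by parts, the skew-adjoint operator $\pd_x^3$ against the weight $e^{-2ax}$ produces a favorable damping $-6a\int e^{-2ax}(\pd_x\tilde E)^2$ plus terms bounded by $\|\tilde E\|_{L^2_{-a}}^2$, while the first-order term is controlled by $\|G\|_{W^{1,\infty}}$; Gronwall's inequality then forces $\tilde E\equiv0$, hence $E(t)=0$ and \eqref{eq:LBT} holds for all $t\le t_0$. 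I expect the only delicate point to be the bookkeeping in the middle step, namely verifying that every term not proportional to $\pd_xE$ (or $E$) cancels; the choice of the symmetric/antisymmetric variables $p,q$ and the systematic use of \eqref{eq:m,m-1} together with $\pd_xG=-2V\pd_xV$ are what render these cancellations transparent.
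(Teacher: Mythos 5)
Your derivation of the closed equation for the defect is exactly the paper's core step: your $E$ is the paper's $W$, and since $\pd_x(v^m+v^{m-1})=k_m^2-(v^m-v^{m-1})^2$ by \eqref{eq:m,m-1}, your equation $\pd_tE+\pd_x^3E+6G\,\pd_xE=0$ coincides with the paper's $W_t+W_{xxx}+6(v^m+v^{m-1})_xW_x=0$. I checked the bookkeeping in your $p,q,V$ variables: the system for $p$ and $q$, the equation $\pd_tV=-\pd_x^3V-6G\,\pd_xV$, and the cancellations via $\pd_xG=-2V\pd_xV$ all come out correctly, and like the paper you only need the $x$-part \eqref{eq:m,m-1} of the B\"acklund transformation together with the fact that $v^m$ and $v^{m-1}$ both solve \eqref{eq:KdV2}. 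Your formal weighted energy computation also has the correct sign structure: reversing time and weighting by $e^{-ax}$ does produce damping, with coefficient $-3a\int e^{-2ax}(\pd_x\tilde E)^2$ rather than $-6a$, which is immaterial.

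The genuine gap is in the final uniqueness step. Under the lemma's hypotheses $w^m,w^{m-1}\in C((-\infty,t_0];L^2_{-a})$, the defect $E=\pd_x(w^m+w^{m-1})+2V(w^m-w^{m-1})$ is only a distribution (roughly $H^{-1}_{-a}$-valued), so ``differentiating $\|\tilde E\|_{L^2_{-a}}^2$ and integrating by parts'' is not justified: the energy identity presupposes $\tilde E(s)\in H^1_{-a}$, which is not available, and the favorable boundary-free integrations by parts cannot even be set up for $E$ itself. This is precisely why the paper integrates the equation once: since $\pd_x^{-1}$ is bounded on $L^2_{-a}$ for $a>0$, the antiderivative $\widetilde W=\pd_x^{-1}W$ is a genuine $C((-\infty,t_0];L^2_{-a})$ function satisfying a perturbed Airy equation with bounded (and bounded-nonlocal) coefficients, and Kato's weighted smoothing estimates for the free flow then yield the integral inequality \eqref{eq:f2} with the integrable singularity $(s-t)^{-1/2}$, to which Gronwall applies. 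Your argument can be repaired either by adopting this $\pd_x^{-1}$ device, or by recasting your equation for $E$ in mild (Duhamel) form and invoking the $H^{-1}_{-a}\to L^2_{-a}$ smoothing of the backward Airy group before applying a singular Gronwall inequality; as written, the naive energy identity at this regularity is the missing step.
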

Before we start to prove Lemma \ref{lem:2},
we remark that linearized equation of \eqref{eq:KdV2}
around $v^m$ is well posed in $L^2_{-a}$ (see e.g. \cite{K}).
\begin{lemma}
Let $a>0$, $\varphi\in L^2_{-a}(\R)$ and $t_0$ be a real number.
There exists a unique solution of 
\begin{equation*}
\left\{\begin{aligned} &
\pd_tw+\pd_x^3w+(\pd_xv^m)\pd_xw=0 \quad \text{for $x\in\R$ and $t<t_0$,}
\\ & w(t_0)=\varphi,    
  \end{aligned}\right.
\end{equation*}
in the class $C((-\infty,t_0];L^2_a(\R))$.
\end{lemma}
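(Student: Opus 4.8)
The plan is to reduce the problem to a standard analytic-semigroup argument on $L^2(\R)$ by conjugating away the weight and reversing time; the decisive point is that the conjugation produces a parabolic term of the \emph{dissipative} sign precisely because one solves backward in $t$. Consistently with the data and with the remark above, I work in $L^2_{-a}(\R)$ (weight $e^{-ax}$). Set $\tilde w(t,x)=e^{-ax}w(t,x)$, so that $w\in L^2_{-a}$ if and only if $\tilde w\in L^2$, and note $\pd_x^j w=e^{ax}(\pd_x+a)^j\tilde w$. Writing $\varphi_m:=\pd_xv^m=\pd_x^2\log\Delta_m$, the equation becomes $\pd_t\tilde w+(\pd_x+a)^3\tilde w+\varphi_m(\pd_x+a)\tilde w=0$. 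Reversing time through $s=t_0-t$ and $W(s)=\tilde w(t_0-s)$ turns this into the forward Cauchy problem $\pd_sW=\mathcal{L}_0W+\mathcal{B}W$ on $s\ge0$ with $W(0)=e^{-ax}\varphi\in L^2$, where $\mathcal{L}_0=(\pd_x+a)^3$ and $\mathcal{B}=\varphi_m(\pd_x+a)$.

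First I would record that $\varphi_m$ is the $m$-soliton, hence smooth and, together with $\varphi_m'$, bounded on $\R$ (indeed exponentially decaying as $|x|\to\infty$), so $\mathcal{B}$ has bounded coefficients. Next I would check that $\mathcal{L}_0$ generates an analytic semigroup on $L^2(\R)$: it is the Fourier multiplier with symbol $(i\xi+a)^3=a^3-3a\xi^2+i(3a^2\xi-\xi^3)$, whose real part $a^3-3a\xi^2\to-\infty$ quadratically for $a>0$. Thus $\mathcal{L}_0$ is sectorial, $\|e^{s\mathcal{L}_0}\|_{B(L^2)}\le e^{a^3s}$, and the usual smoothing bounds $\|(\pd_x+a)^j e^{s\mathcal{L}_0}\|_{B(L^2)}\lesssim s^{-j/3}$ hold. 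This is exactly where the sign matters: had I kept the original (forward) time direction the symbol would have real part $+3a\xi^2$ and the problem would be ill-posed, which is why the solution is sought on $(-\infty,t_0]$.

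I would then treat $\mathcal{B}$ as a lower-order perturbation. Since $\mathcal{B}$ is first order with bounded coefficients while $\mathcal{L}_0$ is third order, interpolation gives $\|\mathcal{B}u\|_{L^2}\le \eta\|\mathcal{L}_0u\|_{L^2}+C_\eta\|u\|_{L^2}$ for every $\eta>0$, i.e. $\mathcal{B}$ is $\mathcal{L}_0$-bounded with relative bound $0$. By the standard perturbation theorem for generators of analytic semigroups (\cite[Ch.~3]{He}), $\mathcal{L}:=\mathcal{L}_0+\mathcal{B}$ with domain $H^3(\R)$ again generates an analytic semigroup on $L^2(\R)$. Existence of a unique solution $W\in C([0,\infty);L^2)\cap C((0,\infty);H^3)$ follows; undoing $s=t_0-t$ and the conjugation $\tilde w=e^{-ax}w$ yields the desired $w\in C((-\infty,t_0];L^2_{-a}(\R))$.

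For uniqueness, and as an independent a priori bound, I would also note the weighted energy identity obtained by pairing the $W$-equation with $W$ in $L^2$: the skew-adjoint parts $\pd_x^3$ and $3a^2\pd_x$ drop out, the term $3a\pd_x^2$ contributes $-3a\|\pd_xW\|_{L^2}^2\le0$, and the remaining terms, including $\la\varphi_m\pd_xW,W\ra=-\tfrac12\la\varphi_m'W,W\ra$, are bounded by $C\|W\|_{L^2}^2$; Gronwall's inequality then gives $\|W(s)\|_{L^2}\le e^{Cs}\|W(0)\|_{L^2}$. The only genuinely delicate point is the sign of the conjugation-generated second-order term, which dictates the admissible time direction and underlies both the analyticity of $\mathcal{L}_0$ and the dissipativity in the energy estimate; once this is settled, the remainder is routine perturbation theory.
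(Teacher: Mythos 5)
Your overall route is the right one, and it is essentially the mechanism behind the paper's treatment of this lemma: the paper gives no proof at all, but simply cites Kato [K], and the conjugation $\tilde w=e^{-ax}w$ combined with time reversal is exactly what underlies Kato's weighted smoothing estimates ([K, Lemma 9.1]), which the paper invokes in \eqref{eq:f2} and again in \eqref{eq:w0decay}--\eqref{eq:w0decay'}. You also correctly resolved the typo in the statement (the class must be $C((-\infty,t_0];L^2_{-a})$, consistent with Lemma \ref{lem:2} and the surrounding remark), and your observation about the sign of the conjugation-generated second-order term dictating the time direction is exactly the point.

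However, there is a genuine error in the functional-analytic packaging: $\mathcal{L}_0=(\pd_x+a)^3$ is \emph{not} sectorial and does \emph{not} generate an analytic semigroup on $L^2(\R)$. Its symbol is $(i\xi+a)^3=a^3-3a\xi^2+i(3a^2\xi-\xi^3)$; the imaginary part grows cubically while the real part only decays quadratically, so $|\mathrm{Im}|/|\mathrm{Re}|\sim|\xi|/3a\to\infty$ and the spectrum/numerical range lies in no sector about the negative real axis. Equivalently, $\|\mathcal{L}_0e^{s\mathcal{L}_0}\|_{B(L^2)}\sim\sup_\xi\la\xi\ra^3e^{-3as\xi^2}\sim s^{-3/2}$, incompatible with the $O(s^{-1})$ bound that characterizes analyticity. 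Your smoothing rate is also wrong for the same reason: since the multiplier decays only like a Gaussian in $\xi$, one has $\|(\pd_x+a)^je^{s\mathcal{L}_0}\|_{B(L^2)}\lesssim s^{-j/2}e^{a^3s}$ (note the exponent $-l/2$ in the paper's \eqref{eq:w0decay'}), not $s^{-j/3}$. Consequently the ``standard perturbation theorem for generators of analytic semigroups'' is simply unavailable, and relative bound $0$ alone does not yield generation for a non-analytic $C_0$-semigroup; moreover $\mathcal{B}(s)=\varphi_m(t_0-s,\cdot)(\pd_x+a)$ is time-dependent (the $m$-soliton moves), which the autonomous theorem would not cover in any case. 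The repair is routine and stays within your own sketch: $e^{s\mathcal{L}_0}$ is a $C_0$-semigroup with the smoothing bound above, so one solves the Duhamel equation $W(s)=e^{s\mathcal{L}_0}W(0)+\int_0^se^{(s-\sigma)\mathcal{L}_0}\mathcal{B}(\sigma)W(\sigma)\,d\sigma$ by contraction, e.g.\ in the norm $\sup_s\bigl(\|W(s)\|_{L^2}+s^{1/2}\|W(s)\|_{H^1}\bigr)$, or after rewriting $\varphi_m\pd_xW=\pd_x(\varphi_mW)-\varphi_m'W$ so the derivative lands on the semigroup at cost $s^{-1/2}$; the singular Gronwall lemma [He, Lemma 7.1.1], which the paper uses elsewhere, closes the estimate. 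Your energy identity then gives uniqueness once the instantaneous $H^1$-smoothing of mild solutions justifies the integrations by parts.
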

\begin{proof}[Proof of Lemma \ref{lem:2}]
Let 
$$W=(w^m+w^{m-1})_x+2(v^m-v^{m-1})(w^m-w^{m-1}).$$
By \eqref{eq:w}, \eqref{eq:w'} and the fact that
$v^m$ and $v^{m-1}$ are solutions of \eqref{eq:KdV2}, we have
\begin{align*}
  W_t+W_{xxx}+& 6(v^m+v^{m-1})_xW_x =  -6\{(v^m+v^{m-1})_x(w^m+w^{m-1})_x\}_x
\\ & +24(v^{m-1}_x w^{m-1}_x-v^m_xw^m_x)
+12(w^m-w^{m-1})(v^{m-1}_x)^2-(v^m_x)^2).
\end{align*}
Using \eqref{eq:LBT} twice and \eqref{eq:m,m-1}, we find
$$
\left\{
  \begin{aligned}
& W_t+W_{xxx}+6(v^{m}+v^{m-1})_xW_x=0,
\\ & W(t_0)=0.
  \end{aligned}\right.
$$
Let $\widetilde{W}(t,x)=(\pd_x^{-1}W)(t,x)=\int_{-\infty}^x W(t,y)dy$ and
$b=6(v^{m}+v^{m-1})_{xx}.$ Then
\begin{equation*}
\left\{
  \begin{aligned}
& \widetilde{W}_t+\widetilde{W}_{xxx}=
b\widetilde{W}_x-b_x\widetilde{W}-\pd_x^{-1}(b_{xx}\widetilde{W}),
\\ & \widetilde{W}(t_0)=0.    
  \end{aligned}\right.
\end{equation*}
Since $\pd_x^{-1}$ is bounded on $L^2_{-a}(\R)$ $(a>0)$,
we have $\widetilde{W}\in C((-\infty,t_0];L^2_{-a})$ and 
\begin{equation}
  \label{eq:f2}
 \|\widetilde{W}(t)\|_{L^2_{-a}}\lesssim \int_t^{t_0} (1+(s-t)^{-\frac12})
e^{a^3(s-t)}\|\widetilde{W}(s)\|_{L^2_{-a}}ds\quad\text{for $t\le t_0$.}
\end{equation}
by using \cite[Lemma 9.1]{K}.
Applying Gronwall's inequality to \eqref{eq:f2}, we have $\widetilde{W}(t)=0$
and $W(t)=\pd_x\widetilde{W}(t)=0$ for every $t\ge0$.
\end{proof}
The linearized B\"acklund transformation \eqref{eq:LBT} defines an isomorphism
between $L^2_a$ and its subspace
$$X_m(t,\mgamma^m)=\left\{w\in L^2_a:
\int_\R w\pd_{\gamma_m^m}\pd_xv^mdx=\int_\R w\pd_{k_m}\pd_xv^mdx=0\right\}.$$
First, let us consider the case $m=1$.
\begin{lemma}
\label{prop:bij1}
Let $a\in(-2k_1,2k_1)$. Then for any $w^0\in L^2_a(\R)$, there exists a unique
$w^1\in X_1(t,\mgamma^1)$ satisfying \eqref{eq:LBT1}.
Furthermore the map 
$\Phi_1(t,\mgamma^1):L^2_a\to X_1(t,\mgamma^1)$ defined by \eqref{eq:LBT1} is
isomorphic and
$$\sup_{t,\mgamma^1}\left(\|\Phi_1(t,\mgamma^1)\|_{B(L^2_a;X_1(t,\mgamma^1))}
+\|\Phi_1(t,\mgamma^1)^{-1}\|_{B(X_1(t,\mgamma^1);L^2_a)}\right)<\infty.$$
\end{lemma}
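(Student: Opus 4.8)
The plan is to read \eqref{eq:LBT1} as a first--order linear ODE in $x$ at each fixed $t$ and to solve it explicitly by an integrating factor. Write $\theta=k_1(x-4k_1^2t-\gamma^1_1)$. By \eqref{eq:v0-v1} one has $v^1-v^0=k_1\tanh\theta$ and $\pd_xv^1=\varphi_1=k_1^2\sech^2\theta$, so \eqref{eq:LBT1} reads
\begin{equation*}
\pd_xw^1+2k_1\tanh\theta\,w^1=-\pd_xw^0+2k_1\tanh\theta\,w^0 .
\end{equation*}
The integrating factor is $\cosh^2\theta$, and the homogeneous solution is the soliton mode $\sech^2\theta\propto\varphi_1=\pd_xv^1$. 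First I would recast the right--hand side so that no derivative falls on $w^0$: multiplying through by $\cosh^2\theta$ and using $\pd_x\cosh^2\theta=k_1\sinh2\theta$ gives
\begin{equation*}
\pd_x\!\left[\cosh^2\theta\,(w^1+w^0)\right]=2k_1\sinh2\theta\,w^0 ,
\end{equation*}
so that $w^1=-w^0+\sech^2\theta\,[\,C+\int 2k_1\sinh2\theta\,w^0\,]$ displays $\Phi_1$ as $-I$ plus a smoothing integral operator in $w^0$.

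Next I would establish existence in $L^2_a$. Since $a\in(-2k_1,2k_1)$, the mode $\sech^2\theta$ lies in $L^2_a$ (it decays like $e^{\mp2k_1x}$ as $x\to\pm\infty$), whereas $\cosh^2\theta\notin L^2_a$. Choosing the lower limit of integration (equivalently the constant $C$) to cancel the growth of $\cosh^2\theta$ at one end, and estimating the resulting kernel $\sech^2\theta(x)\,\sinh2\theta(y)$ against the weight $e^{a(x-y)}$, I would obtain $\|w^1\|_{L^2_a}\lesssim\|w^0\|_{L^2_a}$; the condition $|a|<2k_1$ is exactly what makes both tails of the kernel integrable. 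The $L^2_a$ solution is then unique modulo the soliton mode $\sech^2\theta$.

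To land in $X_1(t,\mgamma^1)$ I would treat the two secular conditions separately. The $\gamma$--condition is automatic: writing $\pd_{\gamma^1_1}\varphi_1=-\pd_x\varphi_1$, integrating by parts, and using $2k_1\tanh\theta\,\varphi_1=-\pd_x\varphi_1$ in \eqref{eq:LBT1}, one finds the identity $\la\pd_xw^1,\varphi_1\ra=\la w^1,\pd_x\varphi_1\ra$, which together with the integration--by--parts identity $\la\pd_xw^1,\varphi_1\ra=-\la w^1,\pd_x\varphi_1\ra$ forces $\la w^1,\pd_{\gamma^1_1}\varphi_1\ra=0$ for \emph{every} $L^2_a$ solution. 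The $k$--condition is then imposed by fixing the free constant $C$: since $\sech^2\theta$ is even while $\pd_{\gamma^1_1}\varphi_1$ is odd, the mode leaves the $\gamma$--condition undisturbed, and since $\la\sech^2\theta,\pd_{k_1}\varphi_1\ra\ne0$ it can be used to annihilate $\la w^1,\pd_{k_1}\varphi_1\ra$. This determines $w^1\in X_1$ uniquely and defines $\Phi_1$; injectivity follows because $\Phi_1w^0=0$ forces $\pd_xw^0=2k_1\tanh\theta\,w^0$, whose only $L^2_a$ solution is $0$ (the homogeneous mode $\cosh^2\theta$ being excluded).

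Finally, for surjectivity and the inverse I would solve \eqref{eq:LBT1} for $w^0$ given $w^1\in X_1$, now with integrating factor $\sech^2\theta$; here the homogeneous solution $\cosh^2\theta\notin L^2_a$ makes the $L^2_a$ preimage unique, and the solvability condition ruling out $\cosh^2\theta$--growth is precisely $\la w^1,\pd_{\gamma^1_1}\varphi_1\ra=0$, i.e. the $\gamma$--condition built into $X_1$. Uniformity in $(t,\mgamma^1)$ is then cheap: both solution operators depend on $(t,\mgamma^1)$ only through the shift $x\mapsto x-(4k_1^2t+\gamma^1_1)$ hidden in $\theta$, and this translation acts as an isometry of $L^2_a$ up to the global scalar $e^{-a(4k_1^2t+\gamma^1_1)}$, which cancels between domain and range; hence $\|\Phi_1\|$ and $\|\Phi_1^{-1}\|$ equal their values at $t=\gamma^1_1=0$. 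The main obstacle is the single remaining kernel estimate proving $L^2_a$--boundedness of the fixed integral operator and of its inverse, where the precise range $|a|<2k_1$ must be used to keep both tails integrable; the parity/integration--by--parts identities and the translation argument are structural and routine by comparison.
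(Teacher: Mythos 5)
Your proposal is correct and follows essentially the same route as the paper's proof: both integrate \eqref{eq:LBT1} explicitly with the integrating factor $\cosh^2\theta$ (and $\sech^2\theta$ for the inverse), bound the resulting kernel by $e^{-2k_1|x-y|}$ on the integration range using $|a|<2k_1$, observe that the $\gamma$-orthogonality holds automatically while the free multiple of $\sech^2\theta$ is fixed by the $k$-condition via $\la \phi_c,\pd_c\phi_c\ra\ne0$, and reduce uniformity in $(t,\mgamma^1)$ to translation invariance. Your pairing/integration-by-parts derivation of the automatic $\gamma$-condition and your identification of that condition as the solvability condition for the inverse are merely slicker restatements of the paper's direct computations on the explicit formulas for $I_1$ and $J_1$.
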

\begin{proof}
Substituting \eqref{eq:v0-v1} into \eqref{eq:LBT} with $m=1$, we have
\begin{equation}
  \label{eq:LBT1}
 \pd_x(w^1+w^0)=-2(\pd_xv^1)(w^1-w^0).
\end{equation}
Since  $\|\Phi_1(t,\mgamma^1)\|_{B(L^2_a;X_1(t,\mgamma^1))}$ and 
$\|\Phi_1(t,\mgamma^1)^{-1}\|_{B(X_1(t,\mgamma^1);L^2_a)}$ do not depend on
$t$ and  $\mgamma^1$, we may assume $t=0$ and $\mgamma^1=(0)$.
\par
Let $c=4k_1^2$ and  $\phi_c(x)=k_1^2\sech^2k_1x$. Then
\eqref{eq:LBT1} and be rewritten as
\begin{equation}
  \label{eq:LBT1'}
 \pd_x(w^1+w^0)=\frac{\pd_x\phi_c}{\phi_c}(w^1-w^0).
\end{equation}
By \eqref{eq:LBT1'}, there exists 
a real constant $\alpha$ such that
\begin{equation}
  \label{eq:w1}
w^1(x)=-w^0(x)-(I_1w^0)(x)+\alpha\phi_c(x),
\end{equation}
where
$$(I_1w)(x)_0=2\phi_c(x)
\int_0^x\frac{\pd_x\phi_c(y)}{\phi_c(y)^2}w^0(y)dy.$$
The constant $\alpha$ is uniquely determined by the orthogonality conditions.
Hereafter, we  use the notation $(f,g):=\int_\R f(x)g(x)dx$ in this section.
Since $d\|\phi_c\|_{L^2(\R)}^2/dc\ne0$ and $\int_\R\pd_x\phi_cdx=0$,
there exists a unique $\alpha=\alpha(w^0)$ such that
\begin{equation}
  \label{eq:alphadef0}
\begin{split}
(w^1,\pd_c\phi_c) =& -(w^0+I_1w^0,\pd_c\phi_c)
+\alpha( \phi_c,\pd_c\phi_c)
\\=& 0,
\end{split}  
\end{equation}
and 
\begin{align*}
(w^1,\pd_x\phi_c)=& (- w^0+I_1w^0+\alpha\phi_c,\pd_x\phi_c)
\\ =& -(w^0,\pd_x\phi_c)+(w^0,\pd_x\phi_c)=0.
\end{align*}
\par
Next we prove that $\Phi_1:w^0\mapsto w^1$ is continuous linear operator
from $L^2_a$ to $X_1$.
Noting that
\begin{align*}
 \phi_c(x)|\pd_x\phi_c(y)|\phi_c(y)^{-2} \lesssim &  \cosh^2(k_1y)\sech^2(k_1x)
\\  \lesssim  & e^{-\sqrt{c}|x-y|}\quad\text{for any $y\in (-|x|,|x|)$,} 
\end{align*}
we see that $I_1$ is a bounded linear operator on $L^2_a$.
Eq. \eqref{eq:alphadef0} and the boundedness of $I_1$ imply that
$\alpha(w^0)$ is continuous linear functional on $L^2_a$.
Thus we prove that \eqref{eq:LBT1'} defines $\Phi\in B(L^2_a, X_1)$.
\par
Next, we will prove that $\Phi_1$ has a bounded inverse.
By \eqref{eq:LBT1'},
$$\pd_x\{\phi_c(w^1+w^0)\}=2w^1\pd_x\phi_c,$$
and 
$$
w^0(x)=-w^1(x)-(J_1w^1)(x),$$
where
$$(J_1f)(x)=2\phi_c(x)^{-1}\int_x^\infty \pd_x\phi_c(y)f(y)dy
=-2\phi_c(x)^{-1}\int_{-\infty}^x \pd_x\phi_c(y)f(y)dy$$
for any $f\in X_1$. Noting that 
$$
\phi_c(x)^{-1}|\pd_x\phi_c(y)| \lesssim e^{-\sqrt{c}|x-y|}
\quad\text{ for $0\le x\le y$ or $y\le x\le 0$,}$$
we have
$$
\|J_1f\|_{L^2_a}\lesssim \|e^{-(\sqrt{c}-|a|)|x|}\|_{L^1}\|f\|_{L^2_a}
\lesssim \|f\|_{L^2_a}.$$
Thus we see that \eqref{eq:LBT1'} defines a bounded linear operator
$$\Psi_1 w^1:=w^0=-w^1-2J_1w^1$$ from $X_1$ to $L^2_a$.
\par
Since $\Phi_1\in B(L^2_a,X_1)$, $\Psi_1\in B(X_1,L^2_a)$ and
 $\Psi_1\Phi_1=I$ on $C^1(\R)\cap L^2_a$ and $\Phi_1\Psi_1=I$ on
$C^1(\R)\cap X_1$ by the definitions of $\Phi_1$ and $\Psi_1$,
we conclude that $\Phi_1:L^2_a\to X_1$ is isomorphic.
Thus we complete the proof of Lemma \ref{prop:bij1}.
\end{proof}
Next we will consider the case where $2\le m\le N$.
\begin{lemma}
  \label{prop:bij2}
Suppose  $a\in(-2k_m,2k_m)$ and \eqref{eq:btphase}.
Then for any $w^{m-1}\in L^2_a(\R)$,
there exists a unique $w^m\in X_m$ satisfying \eqref{eq:LBT}.
Furthermore the map $\Phi_m(t,\mgamma^m):L^2_a\to X_m$ defined by \eqref{eq:LBT}
is isomorphic and
$$\sup_{t,\mgamma^m}\left(\|\Phi_m(t,\mgamma^m)\|_{B(L^2_a;X_m)}
+\|\Phi_m(t,\mgamma^m)^{-1}\|_{B(X_m;L^2_a)}\right)<\infty.$$
\end{lemma}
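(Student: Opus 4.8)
The plan is to run the proof of Lemma~\ref{prop:bij1} with the explicit one-soliton profile $\phi_c$ replaced by the positive eigenfunction $\psi_m$ of \eqref{eq:psi_m}. By \eqref{eq:psi_m} one has $v^m-v^{m-1}=-\pd_x\log\psi_m$, so the linearized B\"acklund transformation \eqref{eq:LBT} (which holds together with \eqref{eq:m,m-1}, guaranteed by \eqref{eq:btphase} and Lemma~\ref{lem:vm,m-1}) becomes the first order linear ODE
\[
\pd_x(w^m+w^{m-1})=\frac{\pd_x(\psi_m^2)}{\psi_m^2}\,(w^m-w^{m-1}),
\]
whose homogeneous solution is $\psi_m^2$. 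Solving exactly as in the case $m=1$, I would write
\[
w^m=-w^{m-1}-I_mw^{m-1}+\alpha\psi_m^2,\qquad
(I_mw)(x)=2\psi_m(x)^2\int_0^x\frac{\pd_y(\psi_m^2)(y)}{\psi_m(y)^4}\,w(y)\,dy,
\]
with a single free constant $\alpha\in\R$ coming from the one-dimensional space of homogeneous solutions.

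To place $w^m$ in $X_m$ I must impose orthogonality to $\pd_{\gamma_m^m}\varphi_m$ and $\pd_{k_m}\varphi_m$, and, as at $m=1$, the key point is that one of these two conditions is automatic. Differentiating \eqref{eq:m,m-1} in $\gamma_m^m$ and using that $v^{m-1}$ is independent of $\gamma_m^m$ (by \eqref{eq:btphase} the phases $\gamma_i^{m-1}$ involve $\gamma_i^m$ and $k_m$ but not $\gamma_m^m$), I find that $\pd_{\gamma_m^m}v^m$ solves the homogeneous linearized transformation, so $\psi_m^2$ is a constant multiple of $\pd_{\gamma_m^m}v^m$ and hence $\pd_x(\psi_m^2)\propto\pd_{\gamma_m^m}\varphi_m$. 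An integration by parts (the boundary terms vanish because $\psi_m^4$ decays) then gives $(I_mw^{m-1},\pd_x(\psi_m^2))=-(w^{m-1},\pd_x(\psi_m^2))$, whence the $\gamma_m^m$-orthogonality of $w^m$ holds for every $\alpha$. The remaining scaling condition $(w^m,\pd_{k_m}\varphi_m)=0$ then fixes $\alpha$ uniquely, since the pairing $(\psi_m^2,\pd_{k_m}\varphi_m)$ is nonzero, the analogue of $d\|\phi_c\|_{L^2}^2/dc\ne0$ used in Lemma~\ref{prop:bij1}.

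Boundedness of $\Phi_m=\Phi_m(t,\mgamma^m)$ and of its inverse then reduces to the same kernel estimates as before. Because $k_m=\max_i k_i$, the identity $\pd_x^2\psi_m=(k_m^2-2\varphi_m)\psi_m$ from the proof of Lemma~\ref{lem:vm,m-1} exhibits $\psi_m$ as the nodeless ground state of $-\pd_x^2-2\varphi_m$ at energy $-k_m^2$; since $\psi_m=e^{k_m(\gamma_m^m-\gamma_m^N)}\Delta_{m-1}/\Delta_m>0$, one gets $\psi_m(x)\asymp e^{-k_m|x|}$ as $x\to\pm\infty$, uniformly in $t$ and $\mgamma^m$ by translation invariance. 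Consequently the kernels of $I_m$ and of the inverse operator $J_m$ (defined, as at $m=1$, through $w^{m-1}=-w^m-J_mw^m$ with $(J_mw)(x)=2\psi_m(x)^{-2}\int_x^\infty\pd_y(\psi_m^2)(y)\,w(y)\,dy$) are dominated by $e^{-2k_m|x-y|}$, which is integrable against the weight $e^{a(x-y)}$ precisely when $|a|<2k_m$. This yields $I_m,J_m\in B(L^2_a)$ with operator norms independent of $t$ and $\mgamma^m$; for $J_m$ one uses the orthogonality $(w^m,\pd_x(\psi_m^2))=0$ valid on $X_m$ to replace $\int_x^\infty$ by $-\int_{-\infty}^x$ and always integrate over the decaying tail, exactly as in Lemma~\ref{prop:bij1}. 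Since $\Psi_m\Phi_m=I$ on $L^2_a$ and $\Phi_m\Psi_m=I$ on $X_m$ by construction (with $\Psi_mw^m:=-w^m-J_mw^m$), $\Phi_m$ is the claimed isomorphism with uniform bounds.

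The one genuinely new difficulty relative to Lemma~\ref{prop:bij1} is that $\psi_m$ is no longer an elementary $\sech$ but is given only implicitly through the $\tau$-functions $\Delta_{m-1},\Delta_m$. The main work is therefore to establish, uniformly over all $t$ and all phase vectors $\mgamma^m$, the positivity of $\psi_m$ and the sharp two-sided asymptotics $\psi_m(x)\asymp e^{-k_m|x|}$ with uniformly controlled prefactors, together with the non-degeneracy $(\psi_m^2,\pd_{k_m}\varphi_m)\ne0$. All of these follow from the determinant representation of $\psi_m$ and the eigenvalue equation $\pd_x^2\psi_m=(k_m^2-2\varphi_m)\psi_m$, but controlling the constants uniformly as the solitons coalesce or separate is the step demanding the most care; once it is in place the remaining estimates are identical to the $m=1$ argument.
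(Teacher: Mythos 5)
Your proposal is correct and is essentially the paper's proof: the paper solves the same first-order ODE by variation of parameters, writing the homogeneous solution as $\pd_{\gamma_m^m}v^m$ (a nonzero constant multiple of your $\psi_m^2$, since both span $\ker(\pd_x+2(v^m-v^{m-1}))$, as one sees by differentiating \eqref{eq:m,m-1} in $\gamma_m^m$ and noting via \eqref{eq:btphase} that $v^{m-1}$ does not depend on $\gamma_m^m$), obtains the same automatic orthogonality to $\pd_x\pd_{\gamma_m^m}v^m$, inverts with the kernel $\psi_m(y)^2/\psi_m(x)^2$ while switching between $\int_x^\infty$ and $-\int_{-\infty}^x$ by means of the $X_m$-orthogonality exactly as you do, and the uniform two-sided bound $C_1\sech\theta_m^m\le\psi_m\le C_2\sech\theta_m^m$ that you flag as the main technical point is precisely the paper's Lemma \ref{lem:psimbd}, proved in a few lines from the positive principal-minor expansion of $\det(I+C_m)$ that you cite (so constants depend only on $\mk^m$, uniformly in $t$ and $\mgamma^m$). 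The one assertion you leave entirely unverified is the nondegeneracy $(\psi_m^2,\pd_{k_m}\varphi_m)\ne0$; the paper closes it by differentiating \eqref{eq:m,m-1} in $k_m$ to get $(\pd_x+2(v^m-v^{m-1}))\pd_{k_m}v^m=2k_m$, then pairing with $\pd_{\gamma_m^m}v^m\in\ker(\pd_x+2(v^m-v^{m-1}))$ and integrating by parts to find $2(\pd_{\gamma_m^m}v^m,\pd_x\pd_{k_m}v^m)=2k_m\int_\R\pd_{\gamma_m^m}v^m\,dx=-4k_m^2\ne0$ — a short but necessary computation, since the $m=1$ analogue $d\|\phi_c\|_{L^2}^2/dc\ne0$ does not transfer by analogy alone.
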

To prove Lemma \ref{prop:bij2}, we need the following:
\begin{lemma}
  \label{lem:psimbd}
Suppose \eqref{eq:btphase}. Then there exist positive constants $C_1$ and
$C_2$ depending only on $\mk^m$ $(1\le i\le N)$ such that
$$C_1\sech\theta_m^m\le \psi_m\le C_2\sech\theta_m^m.$$
\end{lemma}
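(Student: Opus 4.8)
The plan is to start from the determinantal representation
$$\psi_m=\frac{e^{-\theta^m_m}\det(I+C_{m-1})}{\det(I+C_m)}$$
obtained in the proof of Lemma \ref{lem:vm,m-1}, and to reduce the two-sided bound to an elementary inequality in one real variable. Writing $D=\diag(e^{-\theta^m_1},\dots,e^{-\theta^m_m})$ and $K=[(k_i+k_j)^{-1}]_{1\le i,j\le m}$, we have $C_m=DKD$, so $\det(I+C_m)=\det(I+KD^2)$. Expanding this in principal minors gives
$$\det(I+C_m)=\sum_{S\subseteq\{1,\dots,m\}}\det(K_S)\prod_{i\in S}e^{-2\theta^m_i},$$
where $K_S$ is the principal submatrix of $K$ on $S$. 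By the Cauchy determinant formula $\det(K_S)=\prod_{i<j,\,i,j\in S}(k_i-k_j)^2/\prod_{i,j\in S}(k_i+k_j)>0$, so every coefficient is a positive constant depending only on $\mk^m$. The same expansion applies to $\det(I+C_{m-1})$; the phase shift \eqref{eq:btphase} only multiplies each coefficient by a positive factor $\prod_{i\in S}(k_m-k_i)/(k_m+k_i)$ (positive since $k_i<k_m$), which does not affect the argument below.

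Next I would split the sum for $\det(I+C_m)$ according to whether $m\in S$. Setting $w_{S}=\det(K_S)\prod_{i\in S}e^{-2\theta^m_i}>0$ for $S\subseteq\{1,\dots,m-1\}$,
$$\det(I+C_m)=A+e^{-2\theta^m_m}B,\qquad A=\sum_{S}w_S,\qquad B=\sum_S c_S\,w_S,\quad c_S:=\frac{\det(K_{S\cup\{m\}})}{\det(K_S)},$$
while the numerator equals $e^{-\theta^m_m}\tilde A$ with $\tilde A=\sum_S d_S w_S$ and $d_S>0$ the (possibly trivial) phase factors. Hence, writing $t=e^{\theta^m_m}>0$,
$$\frac{\psi_m}{\sech\theta^m_m}=\frac{\tilde A\,(t^2+1)}{2(At^2+B)}.$$

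The crux is that $\tilde A/A$ and $B/A$ are convex combinations of the positive $\mk^m$-dependent constants $d_S$ and $c_S$, with weights $w_S/\sum_{S'}w_{S'}$; hence $d_{\min}\le \tilde A/A\le d_{\max}$ and $c_{\min}\le B/A\le c_{\max}$ uniformly in $x$, $t$ and $\mgamma$, since the extremal constants do not depend on the positive weights $w_S$. Using $\tilde A\le d_{\max}A$ together with the elementary inequality $\min(1,1/r)\le (t^2+1)/(t^2+r)\le\max(1,1/r)$, valid for every $r=B/A\in[c_{\min},c_{\max}]$, we obtain
$$\frac{\psi_m}{\sech\theta^m_m}\le \frac{d_{\max}}{2}\max\Bigl(1,\frac1{c_{\min}}\Bigr),$$
and the matching lower bound $\psi_m/\sech\theta^m_m\ge \tfrac{d_{\min}}{2}\min(1,1/c_{\max})$ follows from $\tilde A\ge d_{\min}A$ and the left inequality, giving the claimed $C_1,C_2$ depending only on $\mk^m$. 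I expect the only real obstacle to be bookkeeping: confirming the positive-coefficient determinantal expansion (including the effect of \eqref{eq:btphase} on the phases entering $C_{m-1}$) and the positivity of all principal Cauchy minors. Once these are in place, the bound is the elementary one-variable estimate above, and it is automatically uniform in $t$ and $\mgamma$.
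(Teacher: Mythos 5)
Your argument is correct and is essentially the paper's own proof: the paper likewise expands $\det(I+C_m)$ (and, via \eqref{eq:btphase}, $\det(I+C_{m-1})$) into principal minors with positive coefficients depending only on $k_1,\dots,k_N$, citing \cite{GGKM} for that expansion, and then sandwiches $e^{\theta_m^m}\psi_m$ between constant multiples of $2/(1+e^{-2\theta_m^m})=e^{\theta_m^m}\sech\theta_m^m$. You merely make explicit what the paper leaves implicit — the Cauchy determinant formula behind the positivity of the minors, the positive phase factors $\prod_{i\in S}(k_m-k_i)/(k_m+k_i)$ produced by \eqref{eq:btphase}, and the convex-combination/one-variable estimate that yields constants $C_1,C_2$ uniform in $t$, $x$ and $\mgamma$.
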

\begin{proof}
Expanding $\det(I+C_m)$, we obtain the sum of all the principal minors of
$C_m$ of every order:
$$\det(I+C_m)=1+\sum_{l=1}^m\sum_{1_1\le\cdots i_l}C_{i_1,\cdots,i_l}
e^{-(\theta_{i_1}^m+\cdots+\theta_{i_l}^m)},$$
where $C_{i_1,\cdots,i_l}$ are positive constants depending only on
$k_1,\cdots,k_N$ (see \cite[p.110]{GGKM}).
By \eqref{eq:btphase} and the above, there exist positive constants $C_1$
and $C_2$ depending only of $k_1,\cdots,k_N$ such that
\begin{align*}
 \frac{2C_1e^{-\theta_m^m}}{1+e^{-2\theta_m^m}} \le
e^{\theta_m^m}\psi_m= \frac{\det(I+C_{m-1})}{\det(I+C_m)}
\le  \frac{2C_2e^{-\theta_m^m}}{1+e^{-2\theta_m^m}}.
\end{align*}
\end{proof}
Now we are in position to prove Lemma \ref{prop:bij2}.
\begin{proof}[Proof of Lemma \ref{prop:bij2}]
Without loss of generality, we may assume $t=0$.
Let $A=\pd_x+2(v^m-v^{m-1})$ and $B=-\pd_x+2(v^m-v^{m-1})$.
Differentiating \eqref{eq:m,m-1} with respect to $k_m$ and $\gamma_m^m$,
we have
\begin{equation}
  \label{eq:gkerA}
A\pd_{\gamma_m^m}v^m=B^*\pd_{\gamma_m^m}v^m=0, \quad A\pd_{k_m}v^m=
B^*\pd_{k_2}v^m=2k_m.  
\end{equation}
\par
First, we solve \eqref{eq:LBT} for $w^m$.
Eq. \eqref{eq:LBT} can be translated into
\begin{equation}
  \label{eq:LBT2'}
  A(w^m+w^{m-1})=4(v^m-v^{m-1})w^{m-1}.
\end{equation}
By \eqref{eq:psi_m}, \eqref{eq:gkerA} and \eqref{eq:LBT2'},
\begin{equation}
  \label{eq:w1tow2}
w^m=-w^{m-1}+I_m(w^{m-1})+\alpha \pd_{\gamma_m^m}v^m,
\end{equation}
where $\alpha$ is a real number and
$$
I_m(f):=4\int_{\gamma^m_m}^x\left(v^m(y)-v^{m-1}(y)\right)
\frac{\psi_m(t,x,\mk^m,\mgamma^m)^2}{\psi_m(t,y,\mk^m,\mgamma^m)^2}f(y)dy.
$$
Lemma \ref{lem:psimbd} implies that there exists a positive constant
$C_3$ depending only on $\mk^m$ such that 
for every $x\ge y\ge \gamma^m_m$ or $x\le y\le \gamma_m^m$,
\begin{align*}
\frac{\psi_m(t,x,\mk^m,\mgamma^m)^2}{\psi_m(t,y,\mk^m,\mgamma^m)^2}
\le & C_3\frac{\sech\theta_m(t,x)^2}{\sech\theta_m(t,y)^2}
\\ \le & 4C_3e^{-2k_m|x-y|}.
\end{align*}
Thus we have $I_m\in B(L^2_a)$ for $a\in(0,2k_m)$.
\par
Next, we will show that $w^m\in X_m(t,\mgamma^m)$.
By\eqref{eq:LBT} and the definitions of $A$ and $B$,
$$Aw^m=Bw^{m-1}\quad\text{and}\quad\pd_x=(B^*-A^*)/2.$$
Using \eqref{eq:gkerA} and the above, we have
\begin{align*}
  2(w^m,\pd_x\pd_{\gamma_m^m}v^m)
=& (w^m,(B^*-A^*)\pd_{\gamma_m^m}v^m)
\\=&  -(Aw^m,\pd_{\gamma_m^m}v^m)
\\=&  -(Bw^{m-1},\pd_{\gamma_m^m}v^m)
\\=& -(w^{m-1},B^*\pd_{\gamma_m^m}v^m)=0,
\end{align*}
and
\begin{align*}
2(\pd_{\gamma_m^m}v^m,\pd_x\pd_{k_m}v^m)=&
(\pd_{\gamma_m^m}v^m,(B^*-A^*)\pd_{k_m}v^m)
\\=&   (\pd_{\gamma_m^m}v^m,B^*\pd_{k_m}v^m)
\\=& 2k_m(\pd_{\gamma_m^m}v^m,1)
\\=& -2k_m\left[\pd_{\gamma_m^m} \log\Delta_m\right]_{x=-\infty}^{x=\infty}
\\= & 2k_m\left[\frac{\pd_{\gamma_m^m}\det(I+C_m)}{\det(I+C_m)}
\right]_{x=-\infty}^{x=\infty}
\\=& -2k_m \frac{\pd_{\gamma_m^m}\det C_m}{\det C_m}\biggl|_{x=-\infty}
=-4k_m^2\ne0.
\end{align*}
Hence there exists a unique $\alpha=\alpha(w^{m-1})$ such that
$(w^m,\pd_x\pd_{k_m}v^m)=0$. Moreover, $\alpha(w^{m-1})$ is a continuous
linear functional on $w^{m-1}\in L^2_a$.
Thus we prove $\Phi_m(t,\mgamma^m)=-I+4I_m+\alpha(\cdot)\pd_{\gamma_m^m}v^m$
satisfies
$\sup_{t,\mgamma^m}\|\Phi_m(t,\mgamma^m)\|_{B(L^2_a,X_m(t,\mgamma^m))}<\infty.$
\par
Finally, we will prove  $\sup_{t,\mgamma^m}\|\Phi_m(t,\mgamma^m)^{-1}
\|_{B(X_m(t,\mgamma^m),L^2_a)}<\infty.$
Let us solve \eqref{eq:LBT} for $w^{m-1}$.
Since $\ker(B)=\{0\}$ in $L^2_a$ and 
$$B(w^{m-1}+w^m)=-4(v^m-v^{m-1})w^m,$$
we have for any $w^m \in C_0^1(\R)\cap X_m(t,\mgamma^m)$,
\begin{equation}
  \label{eq:T2}
  \begin{split}
w^{m-1}(x)=& -w^m(x)+4\int_x^\infty
\frac{\psi_m(t,y,\mk^m,\mgamma^m)^2}{\psi_m(t,x,\mk^m,\mgamma^m)^2}w^m(y)dy
\\=&  -w^m(x)-4\int^x_{-\infty}
\frac{\psi_m(t,y,\mk^m,\mgamma^m)^2}{\psi_m(t,x,\mk^m,\mgamma^m)^2}w^m(y)dy
\\=:& -w^{m}(x)+J_m(w^m)(x).
  \end{split}
\end{equation}
Lemma \ref{lem:psimbd} implies that there exists a positive
constant $C$ depending only on $\mk^m$ such that
$$
\frac{\psi_m(t,y,\mk^m,\mgamma^m)^2}{\psi_m(t,x,\mk^m,\mgamma^m)^2}
\le Ce^{-2k_m|x-y|}$$
for $\gamma_m^m\le y\le x$ or $x \le y\le \gamma_m^m$.
Hence $J_m$ can be uniquely extended on $X_m(t,\mgamma^m)$ and
$\Psi_m:=-I+J_m\in B(X_m(t,\mgamma^m),L^2_a)$ satisfies
$\sup_{t,\mgamma^m}\|\Psi_m\|_{B(X_m(t,\mgamma^m),L^2_a)}<\infty.$
By the definitions of $\Phi_m$ and $\Psi_m$, it is clear that
$\Psi_m\Phi_m=I$ on $L^2_a$ and $\Phi_m\Psi_m=I$ on $X_m(t,\mgamma^m)$.
Thus we prove \eqref{eq:LBT} defines an isomorphism between
$X_m(t,\mgamma^m)$ and $L^2_a$ uniformly bounded with respect to $t$
and $\mgamma^m$.
\end{proof}
Let 
$$Y_m(t,\mgamma^m)=\left\{w\in L^2_a:\int_\R w\pd_x\pd_{\gamma_i}v^mdx=
\int_\R w\pd_x\pd_{k_i}v^mdx=0.\right\}$$
Note that $\pd_x\pd_{\gamma_i}v^m$ and $\pd_x\pd_{k_i}v^m$
$(1\le i\le m)$ are secular mode solutions of the adjoint equation of
\eqref{eq:w}. We will show that $w^{m-1}$ satisfies the symplectical
orthogonality condition for $v^{m-1}$ if and only if $w^m$ satisfy 
the symplectical orthogonality condition for $v^m$.
\begin{lemma}
  \label{lem:orth2}
Let $a\in(-2k_1,2k_1)$ and let $\Phi(t,\mgamma^m)$ be as in
Lemma \ref{prop:bij2}. Suppose $2\le m\le N$ and \eqref{eq:btphase}.
Then $\Phi_m(t,\mgamma^m)(Y_m(t,\mgamma^m))=Y_{m-1}(t,\mgamma^{m-1})$.
\end{lemma}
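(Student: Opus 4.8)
The plan is to exploit that the linearized B\"acklund transformation carries the symplectic secular modes of $v^{m-1}$ to those of $v^m$ while preserving the skew pairing, so that the orthogonality conditions defining $Y_{m-1}$ and $Y_m$ are interchanged. Throughout write $(f,g)=\int_\R fg\,dx$, $A=\pd_x+2(v^m-v^{m-1})$ and $B=-\pd_x+2(v^m-v^{m-1})$, so that $A^*=B$, $B^*=A$ and $\pd_x=(A-B)/2$, exactly as in the proof of Lemma \ref{prop:bij2}; recall also that $w^m=\Phi_m(w^{m-1})$ satisfies $Aw^m=Bw^{m-1}$ by \eqref{eq:LBT2'}. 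Since by construction $w^m\in X_m(t,\mgamma^m)$, the two conditions $(w^m,\pd_x\pd_{\gamma_m}v^m)=(w^m,\pd_x\pd_{k_m}v^m)=0$ making up the index $i=m$ part of $Y_m$ hold automatically, and it remains to treat $1\le i\le m-1$.

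First I would produce, for $1\le i\le m-1$, the analogue of \eqref{eq:LBT} for the parameter derivatives. Set $\mu_i^l=\pd_{\gamma_i}v^l$ and $\nu_i^l=\pd_{k_i}v^l$. Differentiating the nonlinear identity \eqref{eq:m,m-1} with respect to $\gamma_i$, and using that by \eqref{eq:btphase} the offset $\gamma_i^{m-1}-\gamma_i^m$ is constant in $\gamma_i$ while $k_m^2$ is independent of $\gamma_i$, one obtains $\pd_x(\mu_i^m+\mu_i^{m-1})=-2(v^m-v^{m-1})(\mu_i^m-\mu_i^{m-1})$, i.e. $A\mu_i^m=B\mu_i^{m-1}$. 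Differentiating \eqref{eq:m,m-1} with respect to $k_i$ (again $\pd_{k_i}k_m^2=0$ for $i<m$) and absorbing the chain-rule contribution of \eqref{eq:btphase} into the total derivative $\tilde{\nu}_i^{m-1}:=\tfrac{d}{dk_i}v^{m-1}=\nu_i^{m-1}+(\pd_{k_i}\gamma_i^{m-1})\mu_i^{m-1}$, one similarly gets $A\nu_i^m=B\tilde{\nu}_i^{m-1}$. Thus every secular-mode pair solves the same first-order intertwining as $(w^m,w^{m-1})$.

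The core is then a short pairing computation. From $\pd_x=(A-B)/2$ and $A\mu_i^m=B\mu_i^{m-1}$ we have $\pd_x\mu_i^m=\tfrac12 B(\mu_i^{m-1}-\mu_i^m)$, so, using $B^*=A$, then $Aw^m=Bw^{m-1}$, then $B^*=A$ once more together with $A(\mu_i^{m-1}-\mu_i^m)=(A-B)\mu_i^{m-1}=2\pd_x\mu_i^{m-1}$,
\[
(w^m,\pd_x\mu_i^m)=\tfrac12(Aw^m,\mu_i^{m-1}-\mu_i^m)=\tfrac12(Bw^{m-1},\mu_i^{m-1}-\mu_i^m)=(w^{m-1},\pd_x\mu_i^{m-1}).
\]
The identical computation with $\nu_i^m,\tilde{\nu}_i^{m-1}$ gives $(w^m,\pd_x\nu_i^m)=(w^{m-1},\pd_x\tilde{\nu}_i^{m-1})$. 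Since $\spann\{\pd_x\mu_i^{m-1},\pd_x\tilde{\nu}_i^{m-1}\}=\spann\{\pd_x\mu_i^{m-1},\pd_x\nu_i^{m-1}\}$, these identities show that $w^m$ is orthogonal to all the $i\le m-1$ modes of $v^m$ precisely when $w^{m-1}$ is orthogonal to the corresponding modes of $v^{m-1}$. Together with the automatic $i=m$ conditions, this proves $w^m\in Y_m(t,\mgamma^m)$ if and only if $w^{m-1}\in Y_{m-1}(t,\mgamma^{m-1})$, and the bijectivity of $\Phi_m$ from Lemma \ref{prop:bij2} upgrades the equivalence to the asserted identity of subspaces.

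The algebraic heart above is immediate; the genuine work lies in two places. First, in the second step one must differentiate the soliton identity \eqref{eq:m,m-1} rigorously in the parameters and track the phase law \eqref{eq:btphase} carefully, in particular the chain-rule term that forces $\nu_i^{m-1}$ to be replaced by the total derivative $\tilde{\nu}_i^{m-1}$; this is where a slip would silently break the codimension count. Second, every pairing and integration by parts must be justified in the weighted space: one first performs the computation on the dense class $C^1_0(\R)\cap X_m$ already used in Lemmas \ref{prop:bij1} and \ref{prop:bij2}, checking that boundary terms vanish because the secular modes decay like $e^{-2k_i|x|}$ while $w^m,w^{m-1}\in L^2_a$ with $|a|<2k_1\le 2k_i$, and then extends by the continuity of $\Phi_m$. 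I expect this functional-analytic bookkeeping, rather than the intertwining algebra, to be the main obstacle.
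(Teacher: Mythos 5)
Your proposal is correct and follows essentially the same route as the paper: you differentiate the B\"acklund relation \eqref{eq:m,m-1} in $\gamma_i$ and $k_i$ (with the chain-rule term from \eqref{eq:btphase} producing exactly the paper's combination $\pd_{k_i}v^{m-1}+(\pd_{k_i}\gamma_i^{m-1})\pd_{\gamma_i}v^{m-1}$ in \eqref{eq:list}), then run the same pairing computation via $Aw^m=Bw^{m-1}$, $2\pd_x=B^*-A^*$ and the adjoint identities $A^*=B$, $B^*=A$ to transfer the orthogonality conditions between levels $m$ and $m-1$. Your additional remarks (the automatic $i=m$ conditions from $w^m\in X_m$, the span argument replacing the total derivative $\tilde{\nu}_i^{m-1}$, and the density/boundary-term justification) are consistent with, and slightly more explicit than, the paper's argument.
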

\begin{proof}
We abbreviate $\gamma^m_i$ as $\gamma_i$ $(1\le i\le m)$ if there is no
confusion.
Differentiating \eqref{eq:m,m-1} with respect to $\gamma_i$ and
$k_i$ $(1\le i\le m-1)$, we have
\begin{equation}
  \label{eq:list}
B^*\pd_{\gamma_i}v^m=A^*\pd_{\gamma_i}v^{m-1}, \quad
B^*\pd_{k_i}v^m=A^*\left(\pd_{k_i}v^{m-1}
+(\pd_{k_i}\gamma_i^{m-1})\pd_{\gamma_i}v^{m-1}\right).
\end{equation}
Using \eqref{eq:list} and the fact that 
$Aw^m=Bw^{m-1}$ and $2\pd_x=B^*-A^*$,
we compute
\begin{align*}
  2(w^m,\pd_x\pd_{\gamma_i}v^m)
=& (w^m,(B^*-A^*)\pd_{\gamma_i}v^m)
\\= & (w^m,A^*(\pd_{\gamma_i}v^{m-1}-\pd_{\gamma_i}v^m))
\\=&  (Bw^{m-1},\pd_{\gamma_i}v^{m-1}-\pd_{\gamma_i}v^m)
\\=& (w^{m-1},(B^*-A^*)\pd_{\gamma_i}v^{m-1})
\\=& 2(w^{m-1},\pd_x\pd_{\gamma_i}v^{m-1}),
\end{align*}
and
$$
(w^m,\pd_x\pd_{k_i}v^m)=
(w^{m-1},\pd_x\pd_{k_i}v^{m-1})
+(\pd_{k_i}\gamma_i^{m-1})(w^{m-1},\pd_x\pd_{\gamma_i}v^{m-1}).$$
Therefore $w^{m}\in Y_m(t,\mgamma^m)$ if and only if
$w^{m-1}\in Y_{m-1}(t,\mgamma^{m-1})$.
This completes the proof of Lemma \ref{lem:orth2}.
\end{proof}

Now we are in position to prove linear stability of $N$-soliton solutions.
We first establish a decay estimate for \eqref{eq:w}.
\begin{proposition}
\label{prop:wNdecay}
Let $0<k_1<\cdots<k_N$, $a\in(0,2k_1)$ and let $t_0$ be a real number.
Suppose that $w^N\in C((-\infty,t_0];L^2_{-a})$ is a solution of
\begin{equation}
  \label{eq:wN}
\left\{\begin{aligned}
& \pd_tw^N+\pd_x^3w^N+12(\pd_xv^N)(\pd_xw^N)=0
\quad\text{for $x\in\R$, $t<t_0$,}
\\ & w^N(t_0)\in Y_N(t_0,\mgamma^N).
    \end{aligned}\right.
\end{equation}
Then $w^N(t)\in Y_N(t,\mgamma^N)$  for $t\le t_0$ and
$$\|w^N(t)\|_{L^2_{-a}}\le Me^{-a^3(t-s)}\|w^N(s)\|_{L^2_{-a}}
\quad\text{for every $t\le s\le t_0$,}$$
where $M$ is a positive constant depending only on $k_1,\cdots,k_N$.
Furthermore, there exists a positive constant $M'=M'(\mk,l,b)$ for any $l\in\N$ and $b>a^3$
such that
$$\|e^{-ax}w^N(t)\|_{H^l}\le M'(t-s)^{-\frac12}e^{-b(t-s)}\|w^N(s)\|_{L^2_{-a}}
\quad\text{for every $t<s\le t_0$.}$$
\end{proposition}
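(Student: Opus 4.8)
The plan is to conjugate the linearized equation \eqref{eq:w} for $m=N$ (the linearization of \eqref{eq:KdV2} about the $N$-soliton potential $v^N$, with $\pd_xv^N=\varphi_N$) down to the constant-coefficient Airy equation about the null background $v^0$ by iterating the linearized B\"acklund transformation \eqref{eq:LBT}, and then to read off both estimates from the explicit Airy flow in $L^2_{-a}$. The working space is $L^2_{-a}$ with $-a\in(-2k_1,0)\subset(-2k_m,2k_m)$, so the uniform isomorphisms $\Phi_m(t,\mgamma^m)\colon L^2_{-a}\to X_m(t,\mgamma^m)$ with inverses $\Psi_m$ are available from Lemmas \ref{prop:bij1} and \ref{prop:bij2}; the other ingredients are the flow-invariance of \eqref{eq:LBT} (Lemma \ref{lem:2} and the remark preceding it) and the equivalence $w^m\in Y_m\iff w^{m-1}\in Y_{m-1}$ extracted from the proof of Lemma \ref{lem:orth2}.

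First I would build the chain from the datum at $t_0$. Since $w^N(t_0)\in Y_N(t_0,\mgamma^N)\subset X_N(t_0,\mgamma^N)$ — the two defining conditions of $X_N$ are exactly the $i=N$ conditions of $Y_N$, as $\pd_{\gamma_N^N}\pd_xv^N=\pd_x\pd_{\gamma_N}v^N$ — the function $w^{N-1}(t_0):=\Psi_N(t_0)w^N(t_0)\in L^2_{-a}$ is well defined; let $w^{N-1}(t)$ be the backward solution of \eqref{eq:w'} with this datum, which exists in $C((-\infty,t_0];L^2_{-a})$ by the well-posedness result stated before Lemma \ref{lem:2}. Because \eqref{eq:LBT} holds at $t_0$ and is propagated by both flows (Lemma \ref{lem:2}), and because recovering $w^{N-1}$ from $w^N$ through \eqref{eq:LBT} has trivial kernel, one gets $w^{N-1}(t)=\Psi_N(t)w^N(t)$ for $t\le t_0$; the residual kernel freedom $c(t)\pd_{\gamma_N}v^N$ in the reverse relation is eliminated by noting that $\pd_{\gamma_N}v^N$ is itself a solution of \eqref{eq:w} and $c(t_0)=0$, whence $w^N(t)=\Phi_N(t)w^{N-1}(t)\in X_N(t)$ for all $t\le t_0$. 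Iterating downwards produces solutions $w^m(t)$ of the linearization about $v^m$ with $w^{m+1}(t)=\Phi_{m+1}(t)w^m(t)$, terminating in a backward Airy solution $w^0$ of $\pd_tw^0+\pd_x^3w^0=0$ (since $\pd_xv^0\equiv0$). As $Y_0=L^2_{-a}$ imposes no constraint, the equivalence in Lemma \ref{lem:orth2} climbs back up to yield $w^N(t)\in Y_N(t,\mgamma^N)$ for every $t\le t_0$, the first assertion, while the uniform bounds on $\Phi_m,\Psi_m$ give $\|w^N(t)\|_{L^2_{-a}}\sim\|w^0(t)\|_{L^2_{-a}}$ with constants depending only on $\mk$.

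It then remains to analyze the Airy flow in $L^2_{-a}$. Writing $W=e^{-ax}w^0$, the equation becomes $\pd_tW+(\pd_x+a)^3W=0$, whose symbol $(i\xi+a)^3$ has real part $a^3-3a\xi^2$; hence for $t\le s$ one has $|\widehat W(t,\xi)|=e^{(a^3-3a\xi^2)(s-t)}|\widehat W(s,\xi)|\le e^{a^3(s-t)}|\widehat W(s,\xi)|$, so $\|w^0(t)\|_{L^2_{-a}}\le e^{-a^3(t-s)}\|w^0(s)\|_{L^2_{-a}}$. Transferring through the uniformly bounded maps gives $\|w^N(t)\|_{L^2_{-a}}\le Me^{-a^3(t-s)}\|w^N(s)\|_{L^2_{-a}}$, the second assertion. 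For the regularity bound the Gaussian factor $e^{-3a\xi^2(s-t)}$ supplies parabolic smoothing, $\sup_\xi|\xi|^le^{(a^3-3a\xi^2)(s-t)}\lesssim (s-t)^{-l/2}e^{a^3(s-t)}$, and since any rate $b>a^3$ leaves room to absorb the polynomial factor, one obtains the claimed time-weighted $H^l$ estimate for $w^0$.

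The main obstacle is transferring this smoothing from $w^0$ to $w^N$, because $\Phi_m,\Psi_m$ were established to be bounded only on $L^2_{-a}$. I would resolve this in one of two ways: either verify directly that $\Phi_m,\Psi_m$ act boundedly on $H^l_{-a}$, differentiating the integral operators $I_m,J_m$ and using that their kernels decay like $e^{-2k_m|x-y|}$ and carry smooth, exponentially localized soliton coefficients, so that the commutators with $\pd_x^l$ are controlled; or, more robustly, upgrade the $L^2_{-a}$ decay of $w^N$ directly to the $H^l_{-a}$ bound through a Kato-type local smoothing estimate for the variable-coefficient equation \eqref{eq:w}, exactly in the spirit of the use of \cite[Lemma 9.1]{K} in the proof of Lemma \ref{lem:2}. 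This regularity transfer, together with the bookkeeping of the admissible rate $b>a^3$, is where the genuine work lies; the B\"acklund reduction to the Airy flow is essentially algebraic once Lemmas \ref{lem:2}, \ref{prop:bij2} and \ref{lem:orth2} are in hand.
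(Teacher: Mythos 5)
Your overall route coincides with the paper's: conjugate \eqref{eq:wN} down to the free Airy flow through the chain of linearized B\"acklund maps $\Phi_m^{-1}$, transfer the explicit $L^2_{-a}$ decay and smoothing of the Airy semigroup back through the uniformly bounded isomorphisms of Lemmas \ref{prop:bij1} and \ref{prop:bij2}, and use Lemma \ref{lem:2} to see that the bottom function solves $\pd_tw^0+\pd_x^3w^0=0$. Your Fourier computation for the Airy bounds is exactly the content of the estimate the paper imports from \cite[Lemma 9.1]{K}, and the $H^l$ transfer you single out as ``the main obstacle'' is settled in the paper by the first of your two suggestions, in its simplest form: differentiating the relation \eqref{eq:LBT} expresses each derivative of $w^m$ through one fewer derivative of $w^{m-1}$ and zeroth-order terms with smooth, bounded coefficients $v^m-v^{m-1}$, which bootstraps \eqref{eq:wNw0equiv} to the $H^l_{-a}$ equivalence \eqref{eq:wNw0equiv'}.

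There is, however, a genuine gap at the step where you eliminate the kernel component: the assertion that $c(t)\pd_{\gamma_N}v^N$ is ``eliminated by noting that $\pd_{\gamma_N}v^N$ is itself a solution of \eqref{eq:w} and $c(t_0)=0$'' is not a proof. Lemma \ref{lem:2} propagates \eqref{eq:LBT} for the pair $(w^N,w^{N-1})$, and your identification $w^{N-1}(t)=\Psi_N(t)w^N(t)$ is sound because the downward kernel $\psi_N^{-2}\sim\cosh^2\theta_N$ is not in $L^2_{-a}$; but in the upward direction the relation determines $w^N(t)$ from $w^{N-1}(t)$ only up to $c(t)\pd_{\gamma_N}v^N$, and to run your argument you would need to know that $w^N(t)-c(t)\pd_{\gamma_N}v^N=\Phi_N(t)w^{N-1}(t)$ solves \eqref{eq:w} (or that $c$ is differentiable and constant), which is precisely what is in question — the fact that $\pd_{\gamma_N}v^N$ solves \eqref{eq:w} does not by itself prevent the flow from feeding mass into that direction. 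This step is load-bearing: without $c\equiv0$ you lose both the upward bound $\|w^N(t)\|_{L^2_{-a}}\lesssim\|w^0(t)\|_{L^2_{-a}}$ and your climb via Lemma \ref{lem:orth2} to $w^N(t)\in Y_N(t,\mgamma^N)$. The correct mechanism — and the paper's \emph{first} step, carried out before any B\"acklund machinery — is that \eqref{eq:wN} is the adjoint of the linearized KdV equation \eqref{eq:LKdV} around $\varphi_N=\pd_xv^N$, while $\pd_x\pd_{\gamma_i}v^N=\pd_{\gamma_i}\varphi_N$ and $\pd_x\pd_{k_i}v^N=\pd_{k_i}\varphi_N$ solve \eqref{eq:LKdV}; hence all the pairings defining $Y_N$ are constant in $t$, vanish at $t_0$ by hypothesis, and therefore vanish for all $t\le t_0$. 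This gives $w^N(t)\in Y_N(t,\mgamma^N)$ directly, lets the paper define $w^0(t)=\Phi_1(t,\mgamma^1)^{-1}\cdots\Phi_N(t,\mgamma^N)^{-1}w^N(t)$ at every time, and in particular forces $c(t)\equiv0$ in your construction, since $c(t)$ is a nonzero constant multiple of the conserved pairing $\bigl(w^N(t),\pd_x\pd_{k_N}v^N(t)\bigr)$ (the $\gamma_N$-pairing vanishes automatically for any B\"acklund pair, by the computation in the proof of Lemma \ref{prop:bij2}). With that one observation inserted, your proof closes and is essentially the paper's.
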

\begin{proof}[Proof of Proposition \ref{prop:wNdecay}]
First, we will prove that $w^N\in Y_N(t,\mgamma^N)$ for every $t \le s$.
Since $v^N$ is a solution of \eqref{eq:KdV2} and $\pd_{\gamma_i}v^N$ and
$\pd_{k_i}v^N$ $(1\le i\le N)$ are solutions of \eqref{eq:LKdV} with
$\varphi_N=\pd_xv_N$, we have for $1\le i\le N$,
\begin{align*}
\frac{d}{dt}(w^N,\pd_{\gamma_i}v^N) =&
(\pd_tw^N,\pd_{\gamma_i}v^N)+(w^N,\pd_t\pd_{\gamma_i}v^N)=0,
\\
\frac{d}{dt}(w^N,\pd_{k_i}v^N) =&
(\pd_tw^N,\pd_{k_i}v^N)+(w^N,\pd_t\pd_{k_i}v^N)=0.
\end{align*}
Combining the above with $w^N(t_0)\in Y_N(t_0,\mgamma^N)$, 
we have $w^N(t)\in Y_m(t,\mgamma^m)$ for every $t\le t_0$.
\par
Let $w^0(t)=\Phi_1(t,\mgamma^1)^{-1}\cdots\Phi_N(t,\mgamma^N)^{-1}w^N(t).$
Lemmas \ref{lem:orth2}, \ref{prop:bij1} and \ref{prop:bij2} imply that
a map
$\Phi_1(t,\mgamma^1)^{-1}\cdots\Phi_N(t,\mgamma^N)^{-1}$ is well defined
on $Y_N(t,\mgamma^N)$ and
we have $w^0(t)\in C([0,\infty);L^2_a(\R))$ and 
\begin{equation}
  \label{eq:wNw0equiv}
C^{-1}\|w^0(t)\|_{L^2_{-a}}\le \|w^N(t)\|_{L^2_{-a}}\le
C\|w^0(t)\|_{L^2_{-a}},  
\end{equation}
where $C$ is positive constant depending only on $\mk^N$ and $a\in(0,2k_1)$.
Combining \eqref{eq:wNw0equiv} with \eqref{eq:LBT} for $m=1,\cdots,N$, we see that
there exists a $C_l>0$ depending only on $\mk$ and $l\in\N$ such that
\begin{equation}
  \label{eq:wNw0equiv'}
C_l^{-1}\|e^{-ax} w^0(t)\|_{H^l}\le \|e^{-ax}w^N(t)\|_{H^l}\le
C_l\|e^{-ax}w^0(t)\|_{H^l}.  
\end{equation}
\par
Lemma \ref{lem:2} implies that
\begin{equation}
\label{eq:null}
 \pd_tw^0+\pd_x^3w^0=0\quad\text{for $t>s$ and $x\in\R$.}
\end{equation}
It follows from  \cite[Lemma 9.1]{K} that  for any $a>0$ and $t\le s$,
\begin{gather}
  \label{eq:w0decay}
\|w^0(t)\|_{L^2_{-a}(\R)}\le  e^{-a^3(t-s)}\|w^0(s)\|_{L^2_{-a}(\R)},
\\ \label{eq:w0decay'}
 \|e^{-ax}w^0(t)\|_{H^l(\R)}\le  \{1+(3a(t-s))^{-\frac{l}{2}}\} e^{-a^3(t-s)}
\|w^0(s)\|_{L^2_{-a}(\R)}.
\end{gather}
Proposition \ref{prop:wNdecay} follows immediately from \eqref{eq:wNw0equiv},
\eqref{eq:wNw0equiv'}, \eqref{eq:w0decay} and \eqref{eq:w0decay'}.
Thus we complete the proof.
\end{proof}
\begin{proof}[Proof of Theorem \ref{thm:linearizedKdV}]
Let $U(t,s)$ denotes the evolution operator associated with
\begin{equation}
  \label{eq:LKdV'}
\left\{
  \begin{aligned}
& \pd_tw+\pd_x^3w^N+12\pd_x((\pd_xv^N(t))w)=0\quad\text{for $x\in\R$, $t>s$,}
\\ & w(s)\in L^2_a.    
  \end{aligned}\right.
\end{equation}
Since \eqref{eq:LKdV'} is the adjoint equation of \eqref{eq:wN}, it follows 
from Proposition \ref{prop:wNdecay} that for every $t\ge s$ and $f\in L^2_{-a}$,
\begin{align*}
 & \|\mathcal{Q}(s)^*U(t,s)^*\mathcal{Q}(t)^*(t)f\|_{L^2_{-a}}
\le Me^{a^3(t-s)}\|f\|_{L^2_{-a}},
 \\ & \|e^{-ax}\mathcal{Q}(s)^*U(t,s)^*\mathcal{Q}(t)^*(t)f\|_{H^l}
\le M'(t-s)^{-\frac{l}{2}}e^{b(t-s)}\|f\|_{L^2_{-a}},
\end{align*}
since $\mathcal{Q}(t)^*$ is a projection to $Y_N(t,\mgamma^N)$ associated with \eqref{eq:wN}.
By a standard duality argument,
\begin{align*}
 & \|U(t,s)\mathcal{Q}(s)f\|_{L^2_a} \le Me^{a^3(t-s)}\|f\|_{L^2_a},
\\ & \|U(t,s)\mathcal{Q}(s)f\|_{L^2_a} \le M'e^{b(t-s)}(t-s)^{-\frac{l}{2}}\|e^{ax}f\|_{H^{-l}}.
 \end{align*}
Thus we prove Theorem \ref{thm:linearizedKdV}.
\end{proof}
\bigskip

\appendix
\section{Size of $u_c$ and $\rho_c$}
\label{sec:size}
\begin{claim}
  \label{cl:ucsize}
Let $c=1+\frac{1}{6}\eps^2$, $a\in(\frac14\eps,\frac74\eps)$ and
let $i$ and $j$ be nonnegative integers. Then
\begin{align*}
& \|\pd_x^i\pd_c^ju_c\|_{l^2_a\cap l^2_{-a}}=O(\eps^{\frac32+i-2j}),
\quad \|J^{-1}\pd_x^i\pd_c^ju_c\|_{l^2_{-a}}=O(\eps^{\frac12+i-2j}),
\\ &
\|\pd_x^i\pd_c^ju_c\|_{l^\infty_{_a}\cap l^\infty_{-a}}=O(\eps^{2+i-2j}),
\quad \|J^{-1}\pd_x^i\pd_c^ju_c\|_{l^\infty\cap l^\infty_{-a}}=O(\eps^{1+i-2j}).
\end{align*}
\end{claim}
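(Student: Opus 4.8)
The plan is to reduce all four estimates to the analytic control of the rescaled solitary-wave profile supplied by (P1) and (P4), and then to pass between the lattice norms and continuum norms by a weighted sampling inequality. First I would record the scaling. Writing $\tilde a=a/\eps$, the hypothesis $a\in(\tfrac14\eps,\tfrac74\eps)$ forces $\tilde a$ into the fixed compact set $[\tfrac14,\tfrac74]\subset[0,2)$, so (P1) applies with fixed weight $\tilde a$. Set $F(\eps,y)=\eps^{-2}u_c(y/\eps)$, so that $u_c(x)=\eps^2F(\eps,\eps x)$ and, by (P1), the derivatives $\pd_\eps^mF$ ($m\le2$) are bounded in $H^5(\R;e^{2\tilde a|y|})$ uniformly in $\eps\in(0,\eps_0)$. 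Since $\eps=\sqrt{6(c-1)}$ gives $\pd_c=\tfrac3\eps\pd_\eps$, each $c$-derivative is traded for an $\eps$-derivative at the cost of a factor $\eps^{-1}$. Applying the chain rule to $u_c(x)=\eps^2F(\eps,\eps x)$ — in which each $\pd_\eps$ produces a slowly-decaying factor $y=\eps x$ through the second argument — shows that $\pd_x^i\pd_c^ju_c(x)=\eps^{2+i-2j}\,G_{ij}(\eps,\eps x)$, where $G_{ij}$ is a finite combination of terms $y^p(\pd_\eps^m\pd_y^qF)(\eps,y)$ with $m\le\min(j,2)$ and $q\le i+j$. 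Each such term is bounded in $H^{5-(i+j)}(\R;e^{2\tilde a|y|})$ uniformly in $\eps$, as long as $i+j\le4$ and $j\le2$ (the regularity furnished by (P1), which already covers every case used in the paper); the polynomial factor $y^p$ is harmless because $F\sim\sech^2$ decays faster than $e^{-\tilde a|y|}$ since $\tilde a<2$. This isolates the exact amplitude power $\eps^{2+i-2j}$.

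The technical heart is a weighted sampling comparison: for $a\in[0,a_0]$ with $a_0$ fixed and $g\in H^1(\R)$,
\begin{equation*}
\sum_{n\in\Z}e^{2a|n|}|g(n)|^2\lesssim\int_\R e^{2a|x|}\left(|g(x)|^2+|g'(x)|^2\right)dx,
\end{equation*}
with implied constant independent of $a$. I would prove this by the fundamental theorem of calculus on each unit cell $[n-\tfrac12,n+\tfrac12]$, using that $e^{2a|x|}$ varies by at most the bounded factor $e^{a}\le e^{a_0}$ across a cell. The same cell argument gives $\|g\|_{l^\infty_a\cap l^\infty_{-a}}\le\|e^{a|x|}g\|_{L^\infty(\R)}\lesssim\|g\|_{H^1(\R;e^{2a|x|})}$ via Sobolev embedding.

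Applying these to $g=\pd_x^i\pd_c^ju_c$, rescaling $y=\eps x$ (which contributes $\eps^{-1/2}$ from $dx$ to the $L^2$ norm and nothing to the $L^\infty$ norm) and inserting the profile bound from the first paragraph, the $|g|^2$ term dominates the $|g'|^2$ correction and yields $\|\pd_x^i\pd_c^ju_c\|_{l^2_a\cap l^2_{-a}}=O(\eps^{2+i-2j}\eps^{-1/2})=O(\eps^{3/2+i-2j})$ and $\|\pd_x^i\pd_c^ju_c\|_{l^\infty_a\cap l^\infty_{-a}}=O(\eps^{2+i-2j})$. The two $J^{-1}$ estimates then follow from these. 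By \eqref{eq:J-1}, $J^{-1}$ is built from the discrete summations $\sum_{k\le0}e^{k\pd}$, whose operator norm on $l^2_{-a}$ is bounded by the geometric series $\sum_{k\ge0}e^{-ka}=(1-e^{-a})^{-1}=O(\eps^{-1})$ because $a\sim\eps$; the identical bound holds on $l^\infty_{-a}$. Hence $\|J^{-1}\pd_x^i\pd_c^ju_c\|_{l^2_{-a}}\le\|J^{-1}\|_{B(l^2_{-a})}\,\|\pd_x^i\pd_c^ju_c\|_{l^2_{-a}}=O(\eps^{-1}\eps^{3/2+i-2j})=O(\eps^{1/2+i-2j})$, and likewise $O(\eps^{1+i-2j})$ for the $l^\infty_{-a}$ bound.

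I expect the main obstacle to be the uniform-in-$\eps$ bookkeeping of the first paragraph: verifying that the polynomially weighted profiles $y^p\pd_\eps^m\pd_y^qF$ generated by repeatedly commuting the factor $\eps^{-1}$ past $\pd_\eps$ remain bounded in the $\eps$-independent weighted space. This is exactly where the strict inequality $\tilde a<2$ (guaranteed by $a<\tfrac74\eps$) and the $C^2$ regularity of (P1) are used, and where the regularity constraint $i+j\le4$, $j\le2$ enters. Everything else — the cell-wise sampling estimate and the operator bound on $J^{-1}$ — is routine once the weights are recognized to be slowly varying on the lattice scale.
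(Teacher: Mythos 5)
Your proposal is correct and follows essentially the same route as the paper, whose proof is a one-line citation of exactly your three ingredients: the rescaled profile bounds of (P1)/(P4), the sampling inequality of Claim \ref{cl:31} (your weighted version is the natural extension, valid since the weight $e^{a|x|}$ with $a\sim\eps$ is slowly varying on unit cells), and the geometric-series bound $\|J^{-1}\|_{B(l^2_{-a})}=O(a^{-1})=O(\eps^{-1})$. Your careful bookkeeping of the factors $\eps^{2+i-2j}$ (via $\pd_c=\tfrac3\eps\pd_\eps$ and the chain rule) and of $\eps^{-1/2}$ from the rescaling is precisely the detail the paper leaves implicit, including the restriction to the finitely many derivatives actually supplied by (P1).
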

To estimate $l^2$-norm of $u_c$, we need the following.
\begin{claim}
  \label{cl:31}
Let $f\in H^1(\R)$. Then  $\sum_{n\in\Z}f(n)^2\le 2\|f\|_{H^1}^2.$
\end{claim}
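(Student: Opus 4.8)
The plan is to reduce the discrete $\ell^2$-sum to an integral over $\R$ by controlling each point value $f(n)^2$ by the $H^1$-mass of $f$ on the unit interval $[n,n+1]$, and then to sum, using that these intervals tile $\R$. First I would recall that a function in $H^1(\R)$ has an absolutely continuous (indeed continuous and bounded) representative, so the point values $f(n)$ are well defined and the fundamental theorem of calculus applies.

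The key identity is that for every $n\in\Z$ and every $x\in[n,n+1]$,
\[
f(n)^2=f(x)^2-\int_n^x 2f(y)f'(y)\,dy.
\]
Averaging this over $x\in[n,n+1]$, an interval of length one, gives
\[
f(n)^2=\int_n^{n+1}f(x)^2\,dx-\int_n^{n+1}\left(\int_n^x 2f(y)f'(y)\,dy\right)dx.
\]
The double integral is bounded in absolute value by $\int_n^{n+1}2|f(y)f'(y)|\,dy$, since $\left|\int_n^x\cdots\right|\le\int_n^{n+1}|\cdots|$ and the outer averaging is over an interval of length one. Applying Young's inequality $2|ff'|\le f^2+(f')^2$ then yields
\[
f(n)^2\le\int_n^{n+1}\bigl(2f(x)^2+f'(x)^2\bigr)\,dx.
\]

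Finally I would sum over $n\in\Z$. Since the intervals $[n,n+1]$ partition $\R$ up to a set of measure zero, the right-hand sides add up to $\int_\R(2f^2+(f')^2)\,dx$, which is at most $2\int_\R\bigl(f^2+(f')^2\bigr)\,dx=2\|f\|_{H^1}^2$. This gives the claimed bound with the constant $2$.

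There is no serious obstacle here. The only points requiring care are the regularity justification (passing to the absolutely continuous representative so that both the pointwise evaluations and the fundamental theorem of calculus are legitimate) and the bookkeeping that produces exactly the constant $2$ rather than something larger: the averaging over the unit interval, together with Young's inequality, is precisely what keeps the constant sharp enough for the stated inequality.
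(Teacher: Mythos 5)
Your proof is correct and follows essentially the same route as the paper: the paper simply asserts the pointwise bound $f(n)^2\le 2\int_n^{n+1}(f(x)^2+f'(x)^2)\,dx$ and sums over $n$, while you supply the (standard) justification of that bound via the fundamental theorem of calculus, averaging over $[n,n+1]$, and Young's inequality. Your version even yields the marginally sharper intermediate estimate $\sum_n f(n)^2\le\int_\R(2f^2+(f')^2)\,dx$ before relaxing to the stated constant $2$.
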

\begin{proof}
Since $f(n)^2\le  2\int_n^{n+1}(f(x)^2+f'(x)^2)dx$ for any $n\in\Z$,
we have
\begin{align*}
\sum_{n\in\Z}f(n)^2 \le 2\sum_{n\in\Z} \int_n^{n+1}(f(x)^2+f'(x)^2)dx
=2\|f\|_{H^1(\R)}^2.
\end{align*}
\end{proof}
\begin{proof}[Proof of Claim \ref{cl:ucsize}]
Claim \ref{cl:ucsize} follows from (P4), Claim \ref{cl:31}
and the fact that $\|J^{-1}\|_{B(l^2_{-a})}=O(\eps^{-1})$.
\end{proof}
\begin{claim}
  \label{cl:intsize}
Let $0<k_1<k_2$ and $a\in[0,\frac74\eps)$.
Then there exists an $\eps_*>0$ such
that if $\eps\in(0,\eps_*)$ and $c_i=1+\frac{k_i^2\eps^2}6$ for $i=1$, $2$,
\begin{align*}
& \|\pd_x^{\alpha_1}\pd_c^{\beta_1}u_{c_1}(\cdot-x_1)
\pd_x^{\alpha_2}\pd_c^{\beta_2}u_{c_1}(\cdot-x_2)\|_{l^\infty}
=O(\eps^{4+\alpha_1+\alpha_2-2(\beta_1+\beta_2)}
e^{-k_1a|x_2(t)-x_1(t)|}),
\\ &
\|\pd_x^{\alpha_1}\pd_c^{\beta_1}u_{c_1}(\cdot-x_1)
\pd_x^{\alpha_2}\pd_c^{\beta_2}u_{c_1}(\cdot-x_2)\|_{l^1}
=O(\eps^{3+\alpha_1+\alpha_2-2(\beta_1+\beta_2)}e^{-k_1a|x_2(t)-x_1(t)|}).
\end{align*}
\end{claim}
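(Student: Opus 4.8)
The plan is to derive both inequalities from the single-wave estimates of Claim \ref{cl:ucsize} by inserting an exponential weight that converts the separation of the two centres into an explicit decaying factor. Write $f_1=\pd_x^{\alpha_1}\pd_c^{\beta_1}u_{c_1}$ and $f_2=\pd_x^{\alpha_2}\pd_c^{\beta_2}u_{c_2}$ (if both factors carry the speed $c_1$, as in the printed display, simply take $c_2=c_1$; the argument and the powers of $\eps$ are unchanged, since the exponent supplied by Claim \ref{cl:ucsize} does not depend on which $k_i$ labels the wave). Put $d=x_2-x_1$; exchanging the two factors I may assume $d\ge0$, so that $d=|x_2-x_1|$.

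The starting point is the exact algebraic identity
\[
f_1(n-x_1)\,f_2(n-x_2)
=e^{-k_1 a\,d}\,\bigl(e^{k_1 a(n-x_1)}f_1(n-x_1)\bigr)\,
\bigl(e^{-k_1 a(n-x_2)}f_2(n-x_2)\bigr),
\]
which holds because $e^{k_1a(n-x_1)}e^{-k_1a(n-x_2)}=e^{k_1a\,d}$. The prefactor $e^{-k_1a\,d}=e^{-k_1a|x_2-x_1|}$ is precisely the exponential gain asserted, so it remains to bound the two bracketed sequences, which are shifts of $f_1$ and $f_2$ weighted by $e^{\pm k_1a\,\cdot}$. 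For the $l^\infty$ estimate I would bound the product of the brackets by $\|f_1\|_{l^\infty_{k_1a}}\|f_2\|_{l^\infty_{-k_1a}}$, using submultiplicativity of the sup norm under pointwise products. For the $l^1$ estimate I would instead apply the Cauchy--Schwarz inequality to the same factorisation, obtaining $\|f_1\|_{l^2_{k_1a}}\|f_2\|_{l^2_{-k_1a}}$.

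It then remains to insert the single-wave bounds. Writing $c_i=1+(k_i\eps)^2/6$ and applying Claim \ref{cl:ucsize} with $\eps$ replaced by $k_i\eps$ and with weight of magnitude $k_1a$, one obtains $\|f_i\|_{l^\infty_{\pm k_1a}}=O(\eps^{2+\alpha_i-2\beta_i})$ and $\|f_i\|_{l^2_{\pm k_1a}}=O(\eps^{3/2+\alpha_i-2\beta_i})$, the implied constants depending on $k_1,k_2$. Multiplying the two factors produces the exponents $4+\alpha_1+\alpha_2-2(\beta_1+\beta_2)$ in the $l^\infty$ case and $3+\alpha_1+\alpha_2-2(\beta_1+\beta_2)$ in the $l^1$ case, which are exactly the powers claimed. (That the discrete weighted norms of the \emph{shifted} factors are controlled uniformly in the non-integer shifts $x_1,x_2$ follows from Claim \ref{cl:ucsize} together with the sampling bound of Claim \ref{cl:31}, so the shift causes no loss.)

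The point requiring care rather than real effort is the admissibility of the weight $k_1a$ after the rescaling for each factor. Claim \ref{cl:ucsize} is stated for weights in $(\tfrac14\eps,\tfrac74\eps)$, so for $f_i$ it applies verbatim only when $k_1a\in(\tfrac14 k_i\eps,\tfrac74 k_i\eps)$; the upper bound always holds since $k_1a<\tfrac74 k_1\eps\le\tfrac74 k_i\eps$, but the lower bound can fail when $a$ is small or, for $f_2$, when $k_1\ll k_2$. In that regime I would appeal to the elementary monotonicity $\|f\|_{l^p_{b}}\le\|f\|_{l^p_{b'}}+\|f\|_{l^p}$ for $0\le b\le b'$, choosing $b'$ just inside the admissible range, which reduces the estimate to an admissible weighted norm plus the unweighted one (the latter being $O(\eps^{2+\cdots})$ since it is dominated by the $l^\infty_{\pm b'}$ bounds); hence no new estimate is needed. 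Finally I would fix $\eps_*$ small enough that $k_2\eps_*$ stays in the regime where (P1) and (P4)---and therefore Claim \ref{cl:ucsize}---hold for every $i$, which is the role of the threshold $\eps_*$ in the statement. I do not anticipate any genuine obstacle: the whole proof is the weight-splitting identity followed by a direct appeal to Claim \ref{cl:ucsize}, with the only subtlety being the bookkeeping of admissible weights and powers of $\eps$.
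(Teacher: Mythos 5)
Your proposal is correct and is exactly the route the paper takes: the paper's proof of Claim \ref{cl:intsize} is the one-line remark that it ``follows from Claim \ref{cl:ucsize}'', and your weight-splitting identity $f_1f_2=e^{-k_1a|x_2-x_1|}\,(e^{k_1a(\cdot-x_1)}f_1)(e^{-k_1a(\cdot-x_2)}f_2)$ combined with the $l^\infty_{\pm k_1a}$ bounds (respectively Cauchy--Schwarz with the $l^2_{\pm k_1a}$ bounds) is the intended, standard way to make that deduction, with the correct powers of $\eps$ in both cases. Your extra care about the admissible weight window $(\tfrac14 k_i\eps,\tfrac74 k_i\eps)$ and about uniformity over non-integer shifts fills in details the paper leaves implicit, and is sound.
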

\begin{proof}
  Claim \ref{cl:intsize} follows from Claim \ref{cl:ucsize}.
\end{proof}
 \begin{claim}
   \label{cl:4}
Let $a_1,\cdots,a_N\in\R$ and
$I=\{\sum_{i=1}^N\theta_ia_i:
 0\le \theta_i\le 1\text{ for $1\le i\le N$}\}$.
Suppose $f\in C^2(\R)$ and $f(0)=0$. Then
$$\left|f(\sum_{1\le i\le N}a_i)-\sum_{1\le i\le N}f(a_i)\right|
\le \sup_{x\in I}|f''(x)|\sum_{i\ne j}|a_ia_j|.$$
 \end{claim}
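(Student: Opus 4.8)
The plan is to reduce the estimate to a double application of the fundamental theorem of calculus, telescoping over partial sums and using the hypothesis $f(0)=0$ to absorb the values $f(a_k)$. First I would set $s_0=0$ and $s_k=\sum_{i=1}^k a_i$ for $1\le k\le N$, so that, writing $f(s_N)=f(s_N)-f(s_0)$ as a telescoping sum and using $f(0)=0$ to write each $f(a_k)=f(a_k)-f(0)$, the quantity to be estimated becomes
$$
f(s_N)-\sum_{k=1}^N f(a_k)=\sum_{k=1}^N\bigl\{f(s_k)-f(s_{k-1})-\bigl(f(a_k)-f(0)\bigr)\bigr\}.
$$
Since $s_k=s_{k-1}+a_k$, each summand equals $\int_0^{a_k}\bigl(f'(s_{k-1}+t)-f'(t)\bigr)\,dt$, and applying the fundamental theorem once more to $f'$ gives
$$
f(s_k)-f(s_{k-1})-f(a_k)=\int_0^{a_k}\!\!\int_0^{s_{k-1}} f''(t+v)\,dv\,dt.
$$

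The key observation is that every argument $t+v$ appearing here lies in $I$. Writing $t=\theta a_k$ and $v=\phi s_{k-1}$ with $\theta,\phi\in[0,1]$ — a parametrization valid regardless of the signs of $a_k$ and $s_{k-1}$, since the integration domains are simply the segments joining $0$ to $a_k$ and $0$ to $s_{k-1}$ — one has $t+v=\theta a_k+\sum_{i<k}\phi a_i$, which is of the form $\sum_i\theta_i a_i$ with all $\theta_i\in[0,1]$ and hence belongs to $I$. Consequently each double integral is bounded in absolute value by $\sup_{x\in I}|f''(x)|\,|a_k|\,|s_{k-1}|$, and since $|s_{k-1}|\le\sum_{i<k}|a_i|$ I obtain
$$
\Bigl|f\bigl(\textstyle\sum_i a_i\bigr)-\sum_k f(a_k)\Bigr|
\le \sup_{x\in I}|f''(x)|\sum_{k=1}^N|a_k|\sum_{i<k}|a_i|
=\sup_{x\in I}|f''(x)|\sum_{i<k}|a_ia_k|.
$$
To finish, I would note that $\sum_{i<k}|a_ia_k|=\tfrac12\sum_{i\ne j}|a_ia_j|\le\sum_{i\ne j}|a_ia_j|$, which yields the claimed bound with a factor of two to spare.

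The computation is essentially routine once the structure above is in place; the single point that requires genuine care — and where a hasty argument is likely to slip — is the verification that the points at which $f''$ is evaluated remain inside $I$ when some of the $a_i$ are negative, so that bounding by $\sup_{x\in I}|f''(x)|$ is legitimate. Writing the integration variables explicitly as convex multiples $\theta a_k$ and $\phi s_{k-1}$ makes this transparent, and it is precisely this step that singles out the set $I$ (rather than, say, a single interval) as the correct domain for the supremum.
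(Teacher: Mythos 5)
Your proof is correct, and it takes a genuinely different decomposition from the paper's. The paper avoids any ordering of the $a_i$: using $f(0)=0$ it writes $f(b)=b\int_0^1 f'(s_1b)\,ds_1$ and $f(a_i)=a_i\int_0^1 f'(s_1a_i)\,ds_1$ with $b=\sum_i a_i$, distributes $b$ over the $a_i$ to get $f(b)-\sum_i f(a_i)=\sum_i a_i\int_0^1\bigl(f'(s_1b)-f'(s_1a_i)\bigr)\,ds_1$, and applies the fundamental theorem once more along the segment from $s_1a_i$ to $s_1b$; the evaluation points $s_1\bigl(s_2b+(1-s_2)a_i\bigr)=s_1a_i+s_1s_2\sum_{j\ne i}a_j$ lie in $I$ for the same reason as yours, and the intermediate bound $\sup_{x\in I}|f''(x)|\sum_i|a_i|\,|b-a_i|$ is relaxed to the stated one via $|b-a_i|\le\sum_{j\ne i}|a_j|$. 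Your telescoping over the partial sums $s_k$ instead reduces everything to second differences $f(s_{k-1}+a_k)-f(s_{k-1})-f(a_k)+f(0)$, i.e.\ double integrals of $f''$ over products of segments. What each buys: your route yields the sharper constant $\tfrac12\sum_{i\ne j}|a_ia_j|$, a factor of two better than the paper's count over ordered pairs (though the paper's intermediate quantity $\sum_i|a_i|\,|b-a_i|$ can itself be smaller when the $a_j$ cancel inside $b$), while the paper's version is index-symmetric and marginally shorter. Both arguments hinge on exactly the point you flag as delicate — that every argument of $f''$ has the form $\sum_i\theta_ia_i$ with $\theta_i\in[0,1]$, which is what legitimizes taking the supremum over $I$ when the $a_i$ have mixed signs — and your explicit parametrization $t=\theta a_k$, $v=\phi s_{k-1}$ verifies this as cleanly as the paper's convex-combination form.
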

 \begin{proof}
Let $b=\sum_{1\le i\le N}a_i$. By the mean value theorem,
\begin{align*}
\left|f(b)-\sum_{1\le i\le N}f(a_i)\right|
=& \left|\sum_{1\le i\le N} \int_0^1(f'(s_1 b)-f'(s_1 a_i))ds_1 a_i\right|
\\ =& 
\left|
\sum_{1\le i\le N} \int_0^1\int_0^1f''(s_1(s_2b+(1-s_2)a_i)ds_1ds_2
a_i(b-a_i)\right|
\\ \le &
\sup_{x\in I}|f''(x)|\sum_{i=1}^N|a_i||b-a_i|.
\end{align*}
Thus we prove Claim \ref{cl:4}.
 \end{proof}
Now we estimate size of $\rho_c$. 
\begin{claim}
Let $a\in[0,2k_1\eps)$. Then
  \label{cl:3}
\begin{gather*}
\|\pd_x^i\pd_c^j\rho_c\|_{l^2_a\cap l^2_{-a}}
+\|J^i\pd_c^j\rho_c\|_{l^2_a\cap l^2_{-a}}
=O(\eps^{\frac32+i-2j}),\\
\|\pd_x^i\pd_c^j\rho_c\|_{l^2_a\cap l^2_{-a}}
+\|J^i\pd_c^j\rho_c\|_{l^\infty_a\cap l^\infty_{-a}}
=O(\eps^{2+i-2j}).
\end{gather*}
\end{claim}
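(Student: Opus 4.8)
The plan is to reduce the claim to a single operator bound on the Fourier multiplier $T_c:=\pd_x(c\pd_x+J)^{-1}$, applied to the localized source $g_c:=H'(u_c)-u_c$. Since the second component of $H'(u)-u$ is $p-p=0$, we have $g_c={}^t(V'(r_c)-r_c,\,0)$, and because $V\in C^\infty$ with $V'(0)=0$, $V''(0)=1$ the scalar $V'(r)-r$ vanishes to second order at $r=0$, so $g_c=\tfrac1{12}r_c^2+O(r_c^3)$. Hence $g_c$ obeys exactly the same $\eps$-scaling as $u_c$ but two powers smaller: by Claim \ref{cl:ucsize} and Leibniz I would establish $\|\pd_x^i\pd_c^j g_c\|_{l^2_a\cap l^2_{-a}}=O(\eps^{7/2+i-2j})$, and the analogue with $J^i$ in place of $\pd_x^i$ (here one uses that $\widehat J(\xi)$ vanishes to first order at $\xi=0$, so $J$ costs one factor of $\eps$ on objects localized on scale $\eps^{-1}$, exactly like $\pd_x$). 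The simplest case $\|g_c\|_{l^2_a}\lesssim\|r_c\|_{l^\infty}\|r_c\|_{l^2_a}=O(\eps^2)\cdot O(\eps^{3/2})$ already displays the mechanism.

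The analytic heart is the symbol analysis of $T_c$. Writing $A(\xi,c)=ci\xi I+\widehat J(\xi)$, the symbol of $T_c$ is $i\xi A(\xi,c)^{-1}$, and a direct computation gives $\det A=4\sin^2(\xi/2)-c^2\xi^2=:D(\xi,c)$. On the real axis, $4\sin^2(\xi/2)\le\xi^2$ yields $|D|\ge(c^2-1)\xi^2$, and since each entry of $i\xi A^{-1}$ is bounded by $|\xi|^2/|D|\le(c^2-1)^{-1}=O(\eps^{-2})$ uniformly in $\xi$, one gets $\|T_c\|_{B(L^2)}=O(\eps^{-2})$. For the weighted spaces $L^2_{\pm a}$ I would shift the contour to $\Im\xi=\mp a$; the zeros of $D(\cdot,c)$ nearest the real axis sit at $\xi\approx\pm i\sqrt{12(c^2-1)}=\pm 2i\kappa(c)(1+o(1))$, with $\kappa(c)\ge k_1\eps$ for the relevant speeds, so for $a$ bounded away from $2k_1\eps$ the denominator remains $\gtrsim\eps^2|\xi|^2$ in modulus along the shifted line and the same $O(\eps^{-2})$ bound persists. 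I expect this contour estimate to be the main obstacle, since it forces the restriction $a<2k_1\eps$; it is, however, precisely the analogue for $D$ of the lower bounds in Lemma \ref{lem:lbd}, because $D=-\lambda_{+,\eps}\lambda_{-,\eps}$ up to the replacement of $c_{1,\eps}$ by $c$, so $|D|\gtrsim\eps^4$ on the contour follows from $|\lambda_{\pm,\eps}|\ge\Im\lambda_{\pm,\eps}$ together with the Taylor expansion $D\approx-\tfrac{\zeta^2}{12}(4\kappa(c)^2+\zeta^2)$.

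With these two ingredients the rest is bookkeeping. Because $\pd_x$ and $J$ are translation invariant they commute with $T_c$, whence $\pd_x^i\rho_c=T_c\pd_x^i g_c$ and $J^i\rho_c=T_c J^i g_c$; the $L^2_{\pm a}$ operator bound then gives $\|\pd_x^i\rho_c\|\lesssim\eps^{-2}\|\pd_x^i g_c\|=O(\eps^{3/2+i})$, and likewise for $J^i$. For the $c$-derivatives I would differentiate the symbol directly: since $\pd_c A=i\xi I$ commutes with $A$, one finds $\pd_c^\ell(i\xi A^{-1})=(-1)^\ell\ell!\,(i\xi A^{-1})^{\ell+1}$, whose matrix norm on the contour is at most $(C\eps^{-2})^{\ell+1}=O(\eps^{-2(\ell+1)})$, the weight condition $a<2\kappa(c)$ being preserved under $\pd_c$. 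Applying Leibniz to $\pd_c^j\rho_c=\sum_\ell\binom{j}{\ell}(\pd_c^\ell T_c)(\pd_c^{j-\ell}g_c)$ and combining $\|\pd_c^\ell T_c\|_{B(L^2_{\pm a})}=O(\eps^{-2(\ell+1)})$ with $\|\pd_c^{j-\ell}\pd_x^i g_c\|=O(\eps^{7/2+i-2(j-\ell)})$ produces $O(\eps^{3/2+i-2j})$ in every term, and the same with $J^i$ for $\pd_x^i$. This proves the $l^2_a\cap l^2_{-a}$ estimates after passing from the continuous to the discrete norm via Claim \ref{cl:31} (applied to $e^{\pm ax}(\cdot)$, which controls $\|\cdot\|_{l^2_{\pm a}}$ by $\|\cdot\|_{H^1_{\pm a}}$, at the cost of one extra $\pd_x$, absorbed by the $i\mapsto i{+}1$ case). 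Finally the $l^\infty_a\cap l^\infty_{-a}$ bounds, one power of $\eps^{1/2}$ stronger, follow from Gagliardo–Nirenberg, $\|f\|_{L^\infty}\lesssim\|f\|_{L^2}^{1/2}\|\pd_xf\|_{L^2}^{1/2}$, together with the $i$ and $i{+}1$ cases of the $l^2$ estimates just obtained.
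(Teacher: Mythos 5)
Your proposal is correct and follows essentially the same route as the paper: the paper's one-line proof reduces Claim \ref{cl:3} to the quadratic smallness of the source (via $(H''(u_c)-I)\pd_xu_c=O(r_c\pd_xr_c)$ together with Claim \ref{cl:ucsize}) and to the multiplier bound $\eps^2\|\pd_x(c\pd_x+J)^{-1}\|_{B(L^2_{a\eps}\cap L^2_{-a\eps})}=O(1)$ of Claim \ref{cl:cpdJ}, whose proof is exactly your contour-shifted symbol estimate on $m(\xi)=\xi^2(c^2\xi^2-4\sin^2\tfrac{\xi}{2})^{-1}$ with poles at $\pm2i\kappa(c)$. Your additional details (the resolvent identity $\pd_c^\ell(i\xi A^{-1})=(-1)^\ell\ell!\,(i\xi A^{-1})^{\ell+1}$ for the $c$-derivatives, Claim \ref{cl:31} for passing to discrete norms, and Gagliardo--Nirenberg for the $l^\infty$ gain of $\eps^{1/2}$) are all sound fillings-in of steps the paper leaves implicit.
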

\begin{proof}
Noting that $(H''(u_c)-I)\pd_xu_c=O(r_c\pd_xr_c)$,
we see that Claim \ref{cl:3} follows from Claim \ref{cl:ucsize} and
Claim \ref{cl:cpdJ} below.
\end{proof}
\begin{claim}
  \label{cl:cpdJ}
Let $c=1+\frac{\eps^2}{6}$ and $a\in(0,2)$.
There exists a positive number $\eps_0$ such that
$$\sup_{\eps\in(0,\eps_0)}\eps^2
\|\pd_x(c\pd_x+J)^{-1}\|_{B(L^2_{a\eps}\cap L^2_{-a\eps})}
<\infty.$$
\end{claim}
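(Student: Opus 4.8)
The plan is to diagonalize $c\pd_x+J$ by the Fourier transform and realize $\pd_x(c\pd_x+J)^{-1}$ as a Fourier multiplier whose weighted-$L^2$ operator norm is controlled by shifting the contour of integration. Writing $\widehat{M}(\xi)=ic\xi\,I+\widehat{J}(\xi)$ with $\widehat{J}$ as in Section~\ref{sec:linear}, the operator $\pd_x(c\pd_x+J)^{-1}$ is the Fourier multiplier with symbol $m(\xi)=i\xi\,\widehat{M}(\xi)^{-1}$. Since $\|f\|_{L^2_{b}}=\|e^{bx}f\|_{L^2}$ and conjugation by $e^{bx}$ shifts the frequency variable, the operator norm on $L^2_{b}$ equals $\sup_{\eta\in\R}\|m(\eta-ib)\|_{\mathrm{op}}$ provided $m$ is analytic and bounded in the closed strip between $\R$ and $\R-ib$. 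Because a bounded operator on each of $L^2_{a\eps}$ and $L^2_{-a\eps}$ is bounded on their intersection with norm the maximum of the two, it suffices to treat the contours $\xi=\eta\mp ia\eps$ separately, and by the symmetry $\det\widehat{M}(\bar\xi)=\overline{\det\widehat{M}(\xi)}$ the two cases are identical.

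First I would record $\det\widehat{M}(\xi)=4\sin^2(\xi/2)-c^2\xi^2$ and the adjugate, giving $m(\xi)=\frac{i\xi}{\det\widehat{M}(\xi)}\begin{pmatrix}ic\xi & -(e^{i\xi}-1)\\ -(1-e^{-i\xi}) & ic\xi\end{pmatrix}$. The crucial computation is the low-frequency expansion: using $c=1+\eps^2/6$ one gets $\det\widehat{M}(\xi)=-\tfrac{1}{12}\xi^2(\xi^2+4\eps^2)+O(\xi^2(\eps^4+\xi^4))$, so that near the origin $\det\widehat{M}$ has a double zero at $\xi=0$ and simple zeros near $\xi=\pm 2i\eps$. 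Since the cofactor entries vanish to first order at $\xi=0$ (indeed $e^{i\xi}-1=i\xi+O(\xi^2)$ and $ic\xi=i\xi+O(\eps^2\xi)$), the factor $\xi^2$ cancels in $m$ and the apparent singularity at $\xi=0$ is removable, with $m(0)=O(\eps^{-2})$. The only genuine poles of $m$ sit near $\pm 2i\eps$, which lie outside the strip $|\Im\xi|\le a\eps$ precisely because $a<2$; for $|\Re\xi|$ bounded away from $0$ and for large $|\Re\xi|$ one checks that $\det\widehat{M}$ stays negative and large (it is dominated by $-c^2\xi^2$, while $4\sin^2(\xi/2)$ is bounded on the contour), so for $\eps$ small $\det\widehat{M}$ has no other zero in the strip. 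This yields analyticity of $m$ in $|\Im\xi|\le a\eps$.

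It then remains to bound $\|m(\eta-ia\eps)\|_{\mathrm{op}}$ uniformly in $\eta$ by $C\eps^{-2}$. I would split into a low-frequency region $|\xi|\le\delta_0$ and a high-frequency region. On the low-frequency piece the expansions give $m(\xi)\approx -12(\xi^2+4\eps^2)^{-1}$ times a bounded matrix, and on the contour one has $|\xi^2+4\eps^2|\ge \eta^2+(4-a^2)\eps^2\ge(4-a^2)\eps^2$, so $\|m\|_{\mathrm{op}}\lesssim (4-a^2)^{-1}\eps^{-2}$; this is where $a<2$ enters quantitatively. On the high-frequency piece $\det\widehat{M}$ is of size $1+\eta^2$ while the entries of $i\xi\,\mathrm{adj}(\widehat{M})$ are $O(1+\eta^2)$ and $O(|\xi|)$, so $\|m\|_{\mathrm{op}}=O(1)$ there. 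Multiplying by $\eps^2$ and taking the supremum over $\eta$ and over $\eps\in(0,\eps_0)$ gives the claim.

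The main obstacle I anticipate is the verification, uniform in $\eps$ and $\eta$, that $\det\widehat{M}$ does not vanish in the strip $|\Im\xi|\le a\eps$, together with the sharp tracking of the $\eps^{-2}$ blow-up. The degeneracy is genuinely two-scale: the dispersion $4\sin^2(\xi/2)$ balances the potential $c^2\xi^2$ only at frequencies $\xi=O(\eps)$, so the low-frequency estimate must be performed in the rescaled variable $\xi=\eps\zeta$ and then matched to the $O(1)$ bound at high frequency. The threshold $a=2$ corresponds exactly to the contour $\R-ia\eps$ touching the pole at $\xi=-2i\eps$, which is why the restriction $a\in(0,2)$ is both necessary and where the constant $(4-a^2)^{-1}$ deteriorates.
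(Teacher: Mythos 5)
Your proposal follows essentially the same route as the paper's proof of Claim~\ref{cl:cpdJ}: pass to the Fourier symbol, shift the contour to $\Im\xi=\mp a\eps$, and exploit the two-scale expansion $c^2-4\xi^{-2}\sin^2(\xi/2)=\frac{1}{12}(\xi^2+4\eps^2)+O(\xi^4+\eps^4)$, with $a<2$ keeping the zeros near $\pm 2i\eps$ outside the strip. (The paper reduces at once to the scalar symbol $m(\xi)=\xi^2\,(c^2\xi^2-4\sin^2(\xi/2))^{-1}$, which is equivalent to your matrix formulation since the adjugate entries are $O(|\xi|)$ on the strip.) The one structural difference is the location of the split: the paper cuts at $|\xi|=\eps^{2/3}$ and on the outer region factors the determinant, using $|c\xi-2\sin\frac{\xi}{2}|\ge|\xi|\,(c-\cosh\frac{a\eps}{2}\cdot\frac{2\sin(\xi/2)}{|\xi|})\ge(c-1)|\xi|\gtrsim\eps^2|\xi+ia\eps|$ --- the threshold $\eps^{2/3}$ is chosen exactly so that $1-\frac{2\sin(\xi/2)}{\xi}\gtrsim\eps^{4/3}$ absorbs the $O(\eps^2)$ correction from $\cosh(a\eps/2)$ --- and thus accepts an $O(\eps^{-2})$ bound on $m$ everywhere instead of an $O(1)$ bound at high frequency. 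You instead cut at a fixed $\delta_0$ and run the quadratic expansion on all of $|\Re\xi|\le\delta_0$; that part is sound, since the relative error $O(\eps^2+\delta_0^2)$ is small there and $|\xi^2+4\eps^2|\ge\eta^2+(4-a^2)\eps^2$ on the contour.

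However, your justification in the outer region is wrong as stated. For $\delta_0\le|\eta|\lesssim 1$ the determinant is \emph{not} dominated by $-c^2\xi^2$, nor is it large: there $4\sin^2(\xi/2)$ and $c^2\xi^2$ nearly cancel, and $\det\widehat{M}\sim-\eta^2(\frac{\eta^2}{12}+\frac{\eps^2}{3})$, which is of size $\eta^4$ when $\eta\gg\eps$ --- much smaller than $c^2\eta^2$. The domination argument you give is valid only for genuinely large $|\eta|$ (say $|\eta|\ge3$, where $c^2\eta^2$ exceeds $4\cosh^2(a\eps/2)$ with room to spare). The conclusion in the intermediate window is nevertheless recoverable: since $\sup_{|x|\ge x_0}|x^{-1}\sin x|=x_0^{-1}\sin x_0<1$, one has $c^2\eta^2-4\sin^2(\eta/2)\ge\bigl(1-\bigl(\tfrac{\sin(\delta_0/2)}{\delta_0/2}\bigr)^2\bigr)\eta^2$ for real $|\eta|\ge\delta_0$, and the contour shift perturbs this only by $O(\eps(1+|\eta|))$, which is harmless for $\eps$ small; this yields $\|m\|_{\mathrm{op}}=O(\delta_0^{-2})$ there, so your final estimate stands. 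Alternatively, follow the paper and settle for $|m|\lesssim\eps^{-2}$ on the whole outer region via the factorization --- after multiplying by $\eps^2$, that is all the claim requires.
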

\begin{proof}
Since
$$\mathcal{F}\pd_x(c\pd_x+J)^{-1}=
\frac{i\xi}{c^2\xi^2-4\sin^2\frac\xi2}
\begin{pmatrix}
  -ci\xi & e^{i\xi}-1 \\ 1-e^{-i\xi} & -ci\xi
\end{pmatrix},$$
we have 
$$
\|\pd_x(c\pd_x+J)^{-1}\|_{B(L^2_a)}\le \sup_{\xi\in\R}|m(\xi+ia\eps)|,
$$ where
$m(\xi)=\xi^2(c^2\xi^2-4\sin^2\frac{\xi}{2})^{-1}$.
\par
Using 
$$c^2-\frac{4\sin^2\frac{\xi}{2}}{\xi^2}
=\frac1{12}(\xi^2+4\eps^2)+O(\xi^4+\eps^4),$$
we have
$\sup_{\eps\in(0,\eps_0)}\eps^2
\sup_{\xi\in (-\eps^{\frac23},\eps^{\frac23})}
|m(\xi+ia\eps)|<\infty.$
Suppose  $|\xi|\ge \eps^{\frac23}$. 
Obviously, 
$$\inf_{\eps\in(0,\eps_0)}\inf_{|\xi|\ge \eps^{\frac23}}
\left|c+\frac{2\sin\frac{\xi+ia\eps}{2}}{\xi+ia\eps}\right|>0,$$
and since $0\le \cosh\frac{a\eps}{2}-1=O(\eps^2)$ and
$1-\frac{2\sin{\frac\xi2}}{\xi}\gtrsim\eps^{\frac43}$,
\begin{align*}
\left|c(\xi+ia)-2\sin\frac{\xi+ia\eps}{2}\right|
 \ge & |\xi|
\left(c-\cosh\frac{a\eps}{2}\frac{2\sin\frac\xi2}{|\xi|}\right)
\\ \ge & (c-1)|\xi| \\ \gtrsim & \eps^2|\xi+ia\eps|.
\end{align*}
Combining the above, we conclude Claim \ref{cl:cpdJ}.
\end{proof}

To prove Lemma \ref{lem:FPUprb}, we need the following:
\begin{claim}
  \label{cl:J-1u}
Let $a$ be a positive number, $u=(u_1,u_2)\in l^2_a\cap l^2_{-a}$ and
$v=(v_1,v_2)\in l^2_a\cap l^2_{-a}$. Then 
\begin{equation}
  \label{eq:clj11}
\la u,J^{-1}v\ra=\la u_1,\sum_{k=-\infty}^0 e^{k\pd}v_2\ra
+\la v_1,\sum^{\infty}_{k=1} e^{k\pd}u_2\ra.  
\end{equation}
Especially, $\la u,J^{-1}u\ra=\la u_1,1\ra\la u_2,1\ra$, and as $l\to\infty$,
\begin{gather*}
\la u,J^{-1}e^{l\pd}v\ra
=O(a^{-1}e^{-la}\|u\|_{l^2_a\cap l^2_{-a}}\|v\|_{l^2_a\cap l^2_{-a}}),
\\
\la u,J^{-1}e^{l\pd}v\ra=\la u_1,1\ra\la v_2,1\ra+\la u_2,1\ra\la v_1,1\ra
+O(a^{-1}e^{la}\|u\|_{l^2_a\cap l^2_{-a}}\|v\|_{l^2_a\cap l^2_{-a}}).
\end{gather*}
\end{claim}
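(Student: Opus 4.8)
The plan is to reduce the whole claim to the algebraic identity \eqref{eq:clj11} together with a single boundedness statement for the two one-sided summation operators hidden in $J^{-1}$.

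\emph{Step 1 (the identity \eqref{eq:clj11}).} From \eqref{eq:J-1} one reads off $J^{-1}v={}^t(\sum_{k\le0}e^{k\pd}v_2,\ \sum_{k\le-1}e^{k\pd}v_1)$, so that $\la u,J^{-1}v\ra=\la u_1,\sum_{k\le0}e^{k\pd}v_2\ra+\la u_2,\sum_{k\le-1}e^{k\pd}v_1\ra$. Since $e^{k\pd}$ is adjoint to $e^{-k\pd}$ for the pairing $\la\cdot,\cdot\ra$ (a shift of summation index), the second term equals $\la v_1,\sum_{k\le-1}e^{-k\pd}u_2\ra=\la v_1,\sum_{k\ge1}e^{k\pd}u_2\ra$, which is exactly \eqref{eq:clj11}. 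All series converge absolutely because $l^2_a\cap l^2_{-a}\subset l^1$, and, as recorded in Step~3, each one-sided sum maps the intersection into an appropriate weighted space.

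\emph{Step 2 (the value $\la u,J^{-1}u\ra$).} Taking $v=u$ in \eqref{eq:clj11}, the two one-sided sums combine into the bilateral sum $\sum_{k\in\Z}e^{k\pd}u_2$, which is the constant sequence with value $\sum_m u_2(m)=\la u_2,1\ra$; pairing with $u_1$ gives $\la u_1,1\ra\la u_2,1\ra$, as claimed (this uses $u_2\in l^1$).

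\emph{Step 3 (the asymptotics as $|l|\to\infty$; the crux).} Apply \eqref{eq:clj11} with $e^{l\pd}v$ in place of $v$ and move the shift across the summation operators, obtaining $\la u,J^{-1}e^{l\pd}v\ra=\la e^{-l\pd}u_1,\sum_{k\le0}e^{k\pd}v_2\ra+\la e^{l\pd}v_1,\sum_{k\ge1}e^{k\pd}u_2\ra$. The engine is that $S_+:=\sum_{k\ge1}e^{k\pd}$ is bounded on $l^2_a$ and $S_-:=\sum_{k\le0}e^{k\pd}$ is bounded on $l^2_{-a}$, each with norm $O(a^{-1})$ for $a>0$; this follows exactly as the paper argues for $J^{-1}$ just after \eqref{eq:J-1}, from $\|e^{k\pd}f\|_{l^2_{\pm a}}=e^{\mp ak}\|f\|_{l^2_{\pm a}}$ and the absolutely convergent geometric series $\sum_{k\ge1}e^{-ak}$, $\sum_{k\le0}e^{ak}$ (equivalently, a Fourier-multiplier bound on the contour shifted by $\pm ia$). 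Since $u,v\in l^2_a\cap l^2_{-a}$, both $S_-v_2\in l^2_{-a}$ and $S_+u_2\in l^2_a$ are available. For the shift that separates the two profiles I bound each term directly through the weighted pairing $\la f,g\ra\le\|f\|_{l^2_b}\|g\|_{l^2_{-b}}$, using $\|e^{-l\pd}u_1\|_{l^2_a}=e^{al}\|u_1\|_{l^2_a}$ and $\|e^{l\pd}v_1\|_{l^2_{-a}}=e^{al}\|v_1\|_{l^2_{-a}}$; this yields a bound $O(a^{-1}e^{-a|l|}\|u\|_{l^2_a\cap l^2_{-a}}\|v\|_{l^2_a\cap l^2_{-a}})$, the first displayed estimate. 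For the opposite shift, where the cumulative sums already see the full mass, I first subtract the limit by writing $\sum_{k\le0}e^{k\pd}v_2=\la v_2,1\ra-\sum_{k\ge1}e^{k\pd}v_2$ and symmetrically for $u_2$; the constant parts reproduce $\la u_1,1\ra\la v_2,1\ra+\la u_2,1\ra\la v_1,1\ra$, while each remainder is again a pairing of a shifted factor against $S_\pm$ of an intersection element, estimated by $O(a^{-1}e^{-a|l|}\|u\|_{l^2_a\cap l^2_{-a}}\|v\|_{l^2_a\cap l^2_{-a}})$ in the same way, which is the second displayed estimate.

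\emph{Main obstacle.} The real point is that $J^{-1}e^{l\pd}$ cannot be estimated as a single operator on $l^2_a\cap l^2_{-a}$: its bilateral cumulative sum is unbounded and is precisely what produces the nonvanishing limit $\la u_1,1\ra\la v_2,1\ra+\la u_2,1\ra\la v_1,1\ra$. The decay must therefore be extracted by splitting $J^{-1}$ into its two one-sided pieces, each bounded only in the weight biased toward its direction of summation, and by attaching the translation $e^{l\pd}$ to whichever factor turns the surviving weight into the decaying exponential $e^{-a|l|}$. Getting this weight-and-direction bookkeeping right in both shift regimes is the only delicate part; everything else is the routine geometric-series bound already invoked in the paper.
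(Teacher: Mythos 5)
Your proof is correct and takes essentially the route the paper's one-line proof gestures at: read off the two one-sided summation operators from \eqref{eq:J-1}, prove \eqref{eq:clj11} by shifting the summation index across the pairing, bound $\sum_{k\ge1}e^{k\pd}$ on $l^2_a$ and $\sum_{k\le0}e^{k\pd}$ on $l^2_{-a}$ by geometric series with norm $O(a^{-1})$, and split off the constant via $\sum_{k\in\Z}e^{k\pd}u_2=\la u_2,1\ra$. In fact your weight-and-direction bookkeeping, with both remainders of size $O(a^{-1}e^{-a|l|})$, identifies the two shift regimes correctly — a direct check (e.g.\ with $u_1=u_2=v_1=v_2=\delta_0$) shows the constant term $\la u_1,1\ra\la v_2,1\ra+\la u_2,1\ra\la v_1,1\ra$ survives as $l\to+\infty$ while the pairing decays as $l\to-\infty$, so the exponents $e^{-la}$ and $e^{la}$ in the paper's two displays are swapped relative to the stated limit $l\to\infty$, and your version is the accurate one.
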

\begin{proof}
Eq. \eqref{eq:clj11} follows from \eqref{eq:J-1} and the others follows immediately from
\eqref{eq:clj11}.
\end{proof}

\section{Proof of Lemma \ref{lem:FT-FS}}
\label{sec:apa}

\begin{proof}[Proof of Lemma \ref{lem:FT-FS}]
Let $a(n)=(2\pi)^{-\frac12}\int_\T g(\xi)e^{in\xi}d\xi$.
By  Parseval's identity,
\begin{align*}
 \left\|\int_\T\widetilde{f}(\xi)g(\xi-\xi_1)d\xi_1\right\|_{L^2(\T)}
=& \|f(n)a(n)\|_{l^2}
\\ \lesssim & \|f\|_{L^\infty(\R)}\|g\|_{L^2}.
\end{align*}
\par
Next we prove (ii). By \cite{GGKM}, there exist positive constants
$A_{i_1,\cdots,i_n}$ such that
$$\det(1+C_N)=1+\sum_{n=1}^N \sum_{1\le i_1\le \cdots\le i_n\le N}
A_{i_1,\cdots,i_n}e^{-2(\theta_{i_1}+\cdots+\theta_{i_n})}.$$
Hence $\varphi_{N}(t,z;\mk,\mgamma)$ is analytic on
$\{z\in\C: |\Im z|\le\delta\}$ and
$\sup_{|y|\le \delta}\|\varphi_{N}(t,\cdot+iy;\mk,\mgamma)
\|_{L^1(\R)}<\infty$. 
By the Paley-Wiener theorem \cite[Theorem 9.14]{RS2},
\begin{equation}
  \label{eq:PWd}
  \widehat{r_{N,\eps}}(t,\xi;\mk,\mgamma)
=\eps \widehat{r_{N,1}}(t,\eps^{-1}\xi;\mk,\mgamma)
=O(e^{-\delta|\xi|/\eps}).
\end{equation}
Making use of \eqref{eq:PWd} and the Poisson summation formula, we have
\begin{align*}
\left|\widetilde{r}_{N,\eps}(t,\xi_1,\mgamma)
-\widehat{r}_{N,\eps}(t,\xi,\mgamma))\right|=&
\left|\sum_{n\ne0}\widehat{r}_{N,\eps}(t,\xi+2n\pi,\mgamma))\right|
\\ \lesssim & 
\sum_{n\ge1}e^{-n\pi\delta/\eps}
\lesssim  e^{-\pi\delta/\eps}
\quad\text{for $\xi\in[-\pi,\pi]$.}
\end{align*}
\end{proof}

\section{Relation between secular term conditions of FPU and KdV}
\label{sec:apb}

A multi-soliton solution resolves into a train of $1$-solitons as
$t\to\infty$ (\cite{GGKM}). In fact, we have the following.
\begin{lemma}
\label{lem:KdV-Nsol}
Let $0<k_1<\cdots<k_n$ and $\gamma_i\in\R$ for $1\le i\le N$.
Then
$$
\varphi_N(t,x;\mk,\mgamma)=\sum_{1\le j\le N}
k_j^2\sech^2\tilde{\theta}_j+2\frac{d^2}{dx^2}\log(1+R),$$
where  $\tilde{\theta}_j=k_j(x-4k_j^2t-\tilde{\gamma_j})$ and 
\begin{gather*}
\tilde{\gamma}_N=\gamma_N-\frac{1}{2k_N}\log(2k_N),\quad
\\ \tilde{\gamma}_i=\gamma_i-\frac{1}{2k_i}\log(2k_i)
-\frac{1}{2k_i}\sum_{j=i+1}^N\log\left(\frac{k_j+k_i}{k_j-k_i}\right)
\quad\text{for $1\le i\le N-1$,}\end{gather*}
and there exist positive numbers $a$, $b$ and $\delta$ such that
\begin{equation}
 \label{eq:resol1}
 \sum_{\substack{1\le i\le N\\ \alpha_1,\alpha_2,\alpha_3\ge0}}\sup_{x\in\R}|\cosh(ax)
 \pd_x^{\alpha_1}\pd_{k_i}^{\alpha_2}\pd_{\gamma_i}^{\alpha_3}R(t,x)|\le \delta e^{-bt}
\quad \text{for $t\ge0$,}
\end{equation}
where $\delta$ is chosen as a function of $L:=\inf_{1\le j\le N-1}(\gamma_{j+1}-\gamma_j)$
satisfying $\delta(L)\to0$ as $L\to\infty$.
 Moreover, for any $a\in[0,2)$, there exists a positive number $b'>0$
such that
$$
\sum_{\substack{1\le i\le N\\ \alpha_1,\alpha_2,\alpha_3\ge0}}
\|e^{-a\theta_1}\pd_x^{\alpha_1}\pd_{k_i}^{\alpha_2}
\pd_{\gamma_i}^{\alpha_3}R\|_{L^2}\le \delta e^{-b't}
\quad \text{for $t\ge0$.}$$
\end{lemma}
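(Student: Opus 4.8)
The plan is to work directly with the explicit tau--function $\tau_N:=\det(I+C_N)$ and to resolve it into a product of one--soliton tau--functions. First I would factor $C_N=D\,A\,D$ with $D=\diag(e^{-\theta_i})$ and $A=[(k_i+k_j)^{-1}]$ the Cauchy matrix, and expand the determinant over principal minors, using the Cauchy determinant formula to make every coefficient explicit:
\[
\tau_N=\sum_{S\subseteq\{1,\dots,N\}}a_S\,\exp\Bigl(-2\sum_{i\in S}\theta_i\Bigr),\qquad
a_S=\prod_{i\in S}\frac{1}{2k_i}\prod_{\substack{i<j\\ i,j\in S}}\Bigl(\frac{k_j-k_i}{k_j+k_i}\Bigr)^{2},
\]
so that each $a_S>0$ and depends on $k_1,\dots,k_N$ only. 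The elementary one--soliton identity I would use repeatedly is $\pd_x^2\log(1+e^{-2\tilde\theta_j})=k_j^2\sech^2\tilde\theta_j$, whence $\sum_j k_j^2\sech^2\tilde\theta_j=\pd_x^2\log\prod_j(1+e^{-2\tilde\theta_j})$.

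Next I would fix the shifted phases $\tilde\gamma_i$ by a matching argument. In the frame of the $i$--th soliton, $x\sim 4k_i^2t$, the monomials with $e^{-2\theta_j}$ large for $j>i$ and small for $j<i$ dominate, so the two competing subsets are $S_0=\{i+1,\dots,N\}$ and $S_0\cup\{i\}$; their ratio equals $(a_{S_0\cup\{i\}}/a_{S_0})\,e^{-2\theta_i}$, and requiring this to be $e^{-2\tilde\theta_i}$ determines $\tilde\gamma_i$. A short logarithmic computation from the explicit $a_S$ then produces the stated values. With this choice the full dominant skeleton of $\tau_N$ coincides with that of $\prod_j(1+e^{-2\tilde\theta_j})$, so the residual $\pd_x^2\log\bigl(\tau_N/\prod_j(1+e^{-2\tilde\theta_j})\bigr)$ is exponentially small. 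Writing it as $2\pd_x^2\log(1+R)$ amounts to extracting a square root of the affine--normalized ratio $\tau_N/\prod_j(1+e^{-2\tilde\theta_j})$, which is positive and close to $1$ (the affine--in--$x$ factor being annihilated by $\pd_x^2\log$); this yields the claimed decomposition and defines $R$.

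For the estimates the key structural fact is that $R$ collects exactly the subdominant exponentials. I would partition the $(t,x)$--plane into the soliton cores $\{\,|x-4k_i^2t-\gamma_i|\le M\,\}$ and the inter--soliton gaps, factor out the locally dominant monomial $a_{S^*}\exp(-2\sum_{i\in S^*}\theta_i)$ in each region, and control each remaining monomial by its ratio to the dominant one. Since the solitons are ordered by speed and separate monotonically (their separation being minimal at $t=0$), every such ratio is bounded by $e^{-c(L+t)}$, with $c$ controlled from below by the velocity gaps $\min_i 4(k_{i+1}^2-k_i^2)$ and $L=\inf_j(\gamma_{j+1}-\gamma_j)$. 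Maximizing over regions gives $\sup_x|\cosh(ax)R|\le\delta(L)e^{-bt}$ with $\delta(L)\to0$ as $L\to\infty$, the weight $\cosh(ax)$ being absorbed because $R$ is supported, up to exponential tails, in the shrinking overlap zones. Differentiating in $k_i$ and $\gamma_i$ only produces factors polynomial in $t$ and $x$ coming from the $\theta_j$, which are swallowed by the exponential decay at the cost of enlarging the constant and slightly shrinking $b$. The weighted $L^2$ bound for $a\in[0,2)$ then follows from the same monomial decomposition and the tail behaviour of $R$ relative to the slowest soliton: behind soliton $1$ the weight $e^{-a\theta_1}$ grows no faster than $e^{2k_1x}$ while $R$ decays faster, so the integral converges and remains $O(e^{-b't})$.

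The main obstacle I anticipate is the uniform--in--$x$, all--time control of $R$ together with the weight $\cosh(ax)$ and arbitrary derivatives: one must check that the bookkeeping of dominant versus subdominant monomials is uniform across all the moving overlap regions, that it persists under $\pd_{k_i}$ and $\pd_{\gamma_i}$, and that the extracted rate $b$ is strictly positive while $\delta(L)$ genuinely tends to $0$. Once the monomial ratios are organized along the nested staircase of subsets adapted to the soliton ordering, the remaining work is the careful but routine verification of these exponential bounds and their stability under differentiation.
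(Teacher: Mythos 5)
Your proposal is correct in substance but takes a genuinely different route from what is on the page: the paper gives essentially no self-contained proof, delegating the resolution into $1$-solitons to Theorem 2.1 of Haragus--Sattinger \cite{HS}, and its only original content is the transfer of the HS pointwise bound
$|\pd_x^{\alpha_1}\pd_{k_i}^{\alpha_2}\pd_{\gamma_i}^{\alpha_3}R|\lesssim\sum_{2\le m\le N}(1+e^{-2\theta_m})^{-1}$
into the weighted sup and $L^2$ statements by rewriting $\theta_m$ in the frame of the slowest soliton, so that the velocity gaps $8k_m(k_m^2-k_1^2)$ produce the rates $e^{-bt}$ and $e^{-b't}$. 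You instead reconstruct the underlying argument from scratch: the Cauchy-determinant expansion of $\det(I+C_N)$ with explicit positive coefficients $a_S$ (the same principal-minor expansion the paper quotes from \cite{GGKM} in the proofs of Lemmas \ref{lem:psimbd} and \ref{lem:FT-FS}), the matching of the two locally dominant subsets $S_0=\{i+1,\dots,N\}$ and $S_0\cup\{i\}$ to fix $\tilde\gamma_i$, the square-root device reconciling $\varphi_N=\pd_x^2\log\tau_N$ with the prefactor $2$ in $2\pd_x^2\log(1+R)$, and regionwise monomial bookkeeping. What your route buys is transparency: the separations $4(k_{i+1}^2-k_i^2)t+(\gamma_{i+1}-\gamma_i)$ grow linearly and are minimal at $t=0$, which makes both $\delta(L)\to0$ and the rate $b$ visible; and, crucially for the $L^2$ bound with $a\in[0,2)$, the fact that numerator and denominator of your ratio share the same dominant pair in each region means the slowest tail $e^{-2k_1|x|}$ behind the first soliton cancels, so $R$ decays strictly faster than the weight $e^{-a\theta_1}$ grows --- this cancellation is exactly where a naive estimate would fail, and your ``same dominant skeleton'' formulation contains it. Two small simplifications: $\pd_{\gamma_i}$ produces only bounded factors (only $\pd_{k_i}$ generates the linear-in-$(t,x)$ factors you mention), and with your nested choice of $\tilde\gamma_i$ the constants telescope so that the normalizing ratio tends to $1$ at both spatial infinities; the affine factor you allow for is in fact trivial.

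One discrepancy to repair: your matching computation does not produce the constants exactly ``as stated''. Since
\begin{equation*}
\frac{a_{S_0\cup\{i\}}}{a_{S_0}}=\frac{1}{2k_i}\prod_{j>i}\Bigl(\frac{k_j-k_i}{k_j+k_i}\Bigr)^{2},
\end{equation*}
the identification $e^{-2\tilde\theta_i}=(a_{S_0\cup\{i\}}/a_{S_0})e^{-2\theta_i}$ yields
$\tilde\gamma_i=\gamma_i-\frac{1}{2k_i}\log(2k_i)-\frac{1}{k_i}\sum_{j>i}\log\frac{k_j+k_i}{k_j-k_i}$,
i.e.\ the coefficient of the sum is $\frac{1}{k_i}$, not the printed $\frac{1}{2k_i}$ (equivalently, the logarithm should be of the squared ratio). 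A check against the classical two-soliton phase shift $\frac{1}{k_1}\log\frac{k_2+k_1}{k_2-k_1}$ confirms your computation, so the printed statement appears to carry a typo; you should carry out the ``short logarithmic computation'' explicitly rather than asserting agreement. Apart from this, the remaining work you flag --- uniformity of the dominant/subdominant bookkeeping over the moving overlap regions and its stability under differentiation --- is routine as you say, with the caveat that the sum over all $\alpha_1,\alpha_2,\alpha_3\ge0$ in the statement must in any case be read as a bound for each fixed derivative order.
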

\begin{proof}
The former part of Lemma \ref{lem:KdV-Nsol} is a slight modification of Theorem 2.1 in Haragus-Sattinger \cite{HS} and can be seen easily from their proof.
The latter part also follows immediately from their proof.
In fact, \cite{HS} tells us that
 $$\left| \pd_x^{\alpha_1}\pd_{k_i}^{\alpha_2}\pd_{\gamma_i}^{\alpha_3}R \right|
 \lesssim \sum_{2\le m\le N}\frac{1}{1+e^{-2\theta_m}},$$
and
 \begin{align*}
\frac{1}{1+e^{-2\theta_m}}=\frac{1}{1+\exp(-\frac{2k_m}{k_1}\theta_1)
\exp\{8k_m(k_m^2-k_1^2)t+4k_m(\gamma_m-\gamma)\}}.
 \end{align*}
 Thus we have Lemma \ref{lem:KdV-Nsol}.
\end{proof}
\par
Now we are in position to prove Lemma \ref{lem:KdVPb}.
\begin{proof}[Proof of Lemma \ref{lem:KdVPb}]
 For $i=1,\cdots,N$, let
 \begin{align*}
& \xi_i^1(\tau)=\pd_{\gamma_i}\varphi_N(\tau,x;\mk,\mgamma),\quad
\xi_i^2(\tau)=\pd_{k_i}\varphi_N(\tau,x;\mk,\mgamma),
\\ & \eta_i^1(\tau)=\int_{-\infty}^x\pd_{\gamma_i}
\varphi_N(\tau,y;\mk,\mgamma)dy,\quad
\eta_i^2(\tau)=\int_{-\infty}^x\pd_{\gamma_i}\varphi_N(\tau,y;\mk,\mgamma)dy,
 \end{align*}
and let  
$$
\mathcal{A}_{KdV}=\begin{pmatrix}\mathcal{A}_{KdV}^{i\,j}
\end{pmatrix}_{\substack{i=1,\cdots,N\rightarrow,\\j=1,\cdots,N\downarrow}},
\quad
\mathcal{A}_{KdV}^{i\,j}=
\begin{pmatrix}\la \xi_i^1,\eta_j^1\ra & \la \xi_i^2,\eta_j^1\ra
\\ \la \xi_i^1,\eta_j^2\ra & \la \xi_i^2,\eta_j^2\ra
\end{pmatrix}.$$
Then we have
$$\mathcal{P}(\tau)f
=\sum_{i=1}^N(\alpha_i\xi_i^1(\tau)+\beta_i\xi_i^2(\tau)),$$
where $\alpha_i$ and $\beta_i$ are given by
$$
\mathcal{A}_{KdV}
\begin{pmatrix}\alpha_i\\ \beta_i
\end{pmatrix}_{i=1,\cdots,N\downarrow}
=\begin{pmatrix}\la f,\eta_j^1(\tau)\ra \\ \la f,\eta_j^2(\tau)\ra
\end{pmatrix}_{j=1,\cdots,N\downarrow}.
$$
  Since $\xi_i^k$ $(1\le i\le N,\,k=1,2)$ are solutions of \eqref{eq:LKdV}
 and $\eta_j^l$ are solutions of the adjoint equation of \eqref{eq:LKdV},
 $\la \xi_i^k,\eta_j^l\ra$ are independent of $t$. Let $\phi_k(x)=k^2\sech^2kx$.
 By Lemma \ref{lem:KdV-Nsol}, 
\begin{align}
\label{eq:etaj1}
\eta_j^1=& -\phi_{k_j}(x-4k_j^2t-\tilde{\gamma}_k)+R_{1,j},
\\
\label{eq:etaj2}
\eta_j^2=& \int_{-\infty}^x \pd_k\phi_{k_j}(y-4k_j^2t-\tilde{\gamma}_j)dy
-\sum_{m<j}\frac{\pd\tilde{\gamma}_m}{\pd k_j}\phi_{k_m}(y-4k_i^2t-\tilde{\gamma}_i)+R_{2,j},
\end{align}
where $R_{1,j}=2\pd_{k_j}\pd_x\log(1+R)$ and $R_{2,j}=2\pd_{\gamma_j}\pd_x\log(1+R)$.
Observing limit as $t\to\infty$, we have
$\la \xi_i^k,\eta_j^l\ra=0$ if $i\ne j$ and $(k,l)\ne (2,2)$, and
 \begin{align*}
 & \la \xi_i^1,\eta_i^1\ra=0,\quad
 \la \xi_i^1,\eta_i^2\ra=-\la \xi_i^2,\eta_i^1\ra=\frac12\frac{d}{dk_i}\|\phi_{k_i}\|_{L^2}^2\ne0
 \quad\text{for $i=1,\cdots,N$.} 
 \end{align*}
If $i<j$,
\begin{align*}
 & \la \xi_i^2,\eta_j^2\ra\\ =&  \lim_{t\to\infty}
\left\la  \pd_{k_i}\phi_{k_i}
-\sum_{l=1}^{i-1}\frac{\pd\tilde{\gamma}_l}{\pd k_i}\pd_x\phi_{k_l},
 \int_{-\infty}^x\pd_{k_j}\phi_{k_j}dy
-\sum_{m=1}^{j-1}\frac{\pd\tilde{\gamma}_m}{\pd k_j}\phi_{k_m}
 \right\ra \\ =&0.
\end{align*}
It follows from above that $\mathcal{A}_{KdV}^{i\,j}=O$ if $i<j$,
that $\mathcal{A}_{KdV}$ is invertible, and that
\begin{equation}
  \label{eq:lemPbPc}
  \begin{split}
\|\mathcal{P}(\tau)f\|_{L^2_a}  \lesssim &
\sum_{l,m}\sum_{i\le j}|\la f,\eta_j^m\ra| \|\xi_i^l\|_{L^2_a}
 \\ \lesssim & \sum_{l, m}\sum_{i\le j} 
 e^{-a\{(4(k_j^2-k_i^2)t+\tilde{\gamma}_j-\tilde{\gamma}_i\}}
 \|e^{-a(\cdot-4k_j^2t-\tilde{\gamma}_j)}\eta_j^m\|_{L^2}
\\ \quad & \times
\|e^{a(\cdot-4k_i^2t-\tilde{\gamma}_i)}\xi_i^l\|_{L^2}\|f\|_{L^2_a}
\\ \lesssim & \|f\|_{L^2_a}.
  \end{split}
\end{equation}
Thus we complete the proof of Lemma \ref{lem:KdVPb}.
\end{proof}
Next we prove Lemma \ref{lem:projerr}.
\begin{proof}[Proof of Lemma \ref{lem:projerr}]
By \eqref{eq:orth3} and Parseval's identity,
  \begin{align*}
& \left|\left\la w(t),J^{-1}\pd_{\gamma_i}u_{N,\eps}\right\ra\right|
\\=&
\left|\left\la f(t,\xi), e^{ic_{1,\eps}t\xi}
P(\xi)^*\widehat{J}^{-1}
\mathcal{F}_n\pd_{\gamma_i}u_{N,\eps}(t,\xi,\mgamma)\right\ra\right|
\\ =&
\frac{1}{2} \left|\left\la \tau_{ik_1\eps}f(t,\xi),
\tau_{-ik_1\eps}\left\{e^{ic_{1,\eps}t\xi}
(\sin\tfrac{\xi}{2})^{-1}\sigma_3P(\xi)^*\right\}
\mathcal{F}_n\pd_{\gamma_i}u_{N,\eps}(t,\xi,\mgamma)
\right\ra\right|
\\= & \le \eps^{\frac12}\delta_2e^{-k_1\gamma_1}\|\tau_{ik_1\eps}f(t)\|_{L^2}.
  \end{align*}
As in the proof of Lemma \ref{lem:FT-FS}, we see that
$$
\|\mathcal{F}_n\pd_{\gamma_i}u_{N,\eps}(t,\xi-ik_1\eps,\mgamma)
-\mathcal{F}_x\pd_{\gamma_i}u_{N,\eps}(t,\xi-ik_1\eps,\mgamma)
\|_{L^2(-\pi,\pi)}=O(e^{-c/\eps})$$
for a $c>0$.
Combining the above with $P(0)^*\pd_{\gamma_i}u_{N,\eps}
={}^t(\sqrt{2}r_{N,\eps},0)$ and the facts that
$$
|P(\xi-ik_1\eps)^*-P^*(0)|+
\left|\frac{1}{\sin\tfrac{\xi-ik_1\eps}{2}}-\frac{2}{\xi-ik_1\eps}
\right|\lesssim |\xi-ik_1\eps|
\quad\text{for $\xi\in[-\pi,\pi]$,}
$$
and that $\|e^{-k_1\eps(\cdot-c_{1,\eps}t-\eps^{-1}\gamma_1)}\pd_x\pd_{\gamma_i}
r_{N,\eps}(t,\cdot;\mk,\mgamma)\|_{l^2}=O(\eps^{\frac52})$,
we have
\begin{align*}
& \left\la \tau_{ik_1\eps}f_+(t),
\tau_{-ik_1\eps}\left\{
e^{ic_{1,\eps}t\xi}
\xi^{-1}\widehat{\pd_{\gamma_i}r_{N,\eps}}(t,\xi;\mk,\mgamma)
\right\} \right\ra
\\ =&O(\eps^{\frac12}(\delta_2+\eps^2)e^{-k_1\gamma_1} \|\tau_{ik_1\eps}f(t)\|_{L^2}).
\end{align*}
Let $h_2$, $h_3\in L^2(\R)$ such that
\begin{align*}
  & h_1(\tau,y)+h_2(\tau,y)=
\frac{1}{\sqrt{2\pi}}\int_{-\pi\eps^{-1}}^{\pi\eps^{-1}}
e^{ic_{1,\eps}\eps t(\eta+ik_1)}
f_{\#}(t,\eps(\eta+ik_1))e^{iy\eta}dy,
\\ & h_3(\tau,y)=\frac{1}{\sqrt{2\pi}}\int_{-\pi\eps^{-1}}^{\pi\eps^{-1}}
(f_{2,+}(t,\eps\eta)+f_{3,+}(t,\eps\eta))e^{iy\eta}dy.
\end{align*}
Then
\begin{align*}
& \left\la \tau_{ik_1\eps}f_+(t),
\tau_{-ik_1\eps}\left\{ e^{ic_{1,\eps}t\xi}
\xi^{-1}\widehat{\pd_{\gamma_i}r_{N,\eps}}(t,\xi;\mk,\mgamma)
\right\} \right\ra
\\=& \eps\left\la \widehat{h_1}+\widehat{h_3},
\tau_{-ik_1}\left\{\eta^{-1}e^{4ik_1^2\tau\eta}
\widehat{\pd_{\gamma_i}\varphi_N}(\tau,\eta;\mk,\mgamma)
\right\} \right\ra
\\=& 
\eps\left\la h_1+h_3, e^{-k_1y}\int_{-\infty}^y
\pd_{\gamma_i}\varphi_N(\tau,y_1+4k_1^2\tau;\mk,\mgamma)dy_1
\right\ra.
\end{align*}
Since $\widehat{h_3}(\tau,\eta)=0$ for $\eta\in[-K,K]$,
it follows from Lemma \ref{lem:KdV-Nsol} that
\begin{align*}
& \left|\left\la h_3, e^{-k_1y}\int_{-\infty}^y
\pd_{\gamma_i}\varphi_N(\tau,y_1+4k_1^2\tau;\mk,\mgamma) dy_1\right\ra\right|
\\ \lesssim &
\|h_3\|_{H^{-2}} \left\|e^{-k_1y}\int_{-\infty}^y
\pd_{\gamma_i}\varphi_N(\tau,y_1+4k_1^2\tau;\mk,\mgamma)
dy_1\right\|_{H^2}
\\ \lesssim & K^{-2}e^{-k_1\{4(k_i^2-k_1^2)\tau+\gamma_i\}}\|h_3\|_{L^2}.
\end{align*}
Combining the above, we have
\begin{equation}
  \label{eq:projerr1}
  \begin{split}
& \eps^{\frac12}\left|\left\la h_1, e^{-k_1\{y-4(k_i^2-k_1^2)\tau-\gamma_i\}}
\int_{-\infty}^y \pd_{\gamma_i}\varphi_N(\tau,y_1+4k_1^2\tau;\mk,\mgamma)dy_1
\right\ra\right|
\\ \lesssim &
\eps^{\frac12}K^{-2}\|h_3\|_{L^2}+(\eps^2+\delta_2)\|\tau_{ik_1\eps }f\|_{L^2}
\\ \lesssim & (K^{-2}+\eps^2+\delta_2)\|\tau_{ik_1\eps} f\|_{L^2}.
  \end{split}
\end{equation}
Similarly,
\begin{equation}
  \label{eq:projerr2}
  \begin{split}
& \eps^{\frac12}\left|\left\la h_1, e^{-k_1\{y-4(k_i^2-k_1^2)\tau-\gamma_i\}}
\int_{-\infty}^y
\pd_{k_i}\varphi_N(\tau,y_1+4k_1^2\tau;\mk,\mgamma)dy_1
\right\ra\right|
\\ \lesssim &
(K^{-2}+\eps^2+\delta_2)\|\tau_{ik_1\eps} f\|_{L^2}.
  \end{split}
\end{equation}
By \eqref{eq:projerr1}, \eqref{eq:projerr2} and\eqref{eq:lemPbPc},
we have
\begin{align*}
& \|\mathcal{P}_1(\tau)h_1(\tau)\|_{L^2_a} \\\lesssim &
\sum_{l,m} \sum_{i\le j}
|\la h_1(\tau),e^{-k_y}\eta_j^m(\tau,\cdot+4k_1^2\tau)\ra|
\|e^{k_1y}\xi_i^l(\tau,\cdot+4k_1^2\tau)\|_{L^2}
\\ \lesssim & \sum_{l, m}\sum_{i\le j} 
 e^{-a\{(4(k_j^2-k_i^2)\tau+\gamma_j-\gamma_i\}}
|\la h_1(\tau),e^{-k_1\{y-4(k_j^2-k_1^2)\tau-\gamma_j\}}
\eta_j^m(\tau,\cdot+4k_1^2\tau)\ra|
 \\ \lesssim & \eps^{-\frac12}(K^{-2}+\eps^2+\delta_2)
\|\tau_{ik_1\eps} f\|_{L^2}.  
\end{align*}
Thus we complete the proof of Lemma \ref{lem:projerr}.
\end{proof}
\section{Proof of Lemma \ref{lem:linearstability}}
\label{sec:lemlinearstability}
To begin with, we compare spectral projection associated with a solitary wave
solution of FPU and that associated with KdV $1$-soliton.
\begin{lemma}
  \label{lem:approx-adjef}
Let $\eps>0$, $a\in(\eps/8,2\eps)$ and $c=1+\eps^2/6$.
Then
\begin{gather*}
\left\|J^{-1}\pd_x{u}_c+\phi_\eps\begin{pmatrix}1\\-1\end{pmatrix}\right\|_{l^2_{-a}}=O(\eps^{\frac52}),
\\
\left\|J^{-1} \pd_c u_\eps +\int_{-\infty}^n\pd_c\phi_\eps
\begin{pmatrix}1\\-1\end{pmatrix}\right\|_{l^2_{-a}}
=O(\eps^{-\frac12}).
\end{gather*}
\end{lemma}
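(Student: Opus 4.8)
The plan is to reduce both bounds to the explicit summation form \eqref{eq:J-1} of $J^{-1}$, the profile equation \eqref{eq:boundst}, and the KdV approximation (P4); here $\phi_\eps(x)=\eps^2\sech^2(\eps x)$ is the rescaled $1$-soliton approximating $r_c$, so that (P4) gives $\|r_c-\phi_\eps\|_{l^2_{-a}}=O(\eps^{7/2})$ and $\pd_x^i\pd_c^j u_c$ obeys the size bounds of Claim \ref{cl:ucsize} (recall $\|\pd_x^i\pd_c^j u_c\|_{l^2_{-a}}=O(\eps^{3/2+i-2j})$, so that one $x$-derivative or one $c$-derivative costs $\eps$ or $\eps^{-2}$ respectively, the profile living on scale $\eps^{-1}$). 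Applying $J^{-1}$ to \eqref{eq:boundst} gives the algebraic identity $J^{-1}\pd_x u_c=-c^{-1}H'(u_c)$, while the first component of \eqref{eq:boundst}, $c\pd_x r_c=-(e^{\pd}-1)p_c$, determines $p_c$ from $r_c$; Taylor-expanding the shift $e^{\pd}p_c=\sum_{k\ge0}\tfrac1{k!}\pd_x^k p_c$ (legitimate for the analytic, $\eps$-scaled profile) yields $p_c=-cr_c+\tfrac{c}{2}\pd_x r_c+O(\pd_x^2 r_c)$, in which each successive term is smaller by one factor of $\eps$ in $l^2_{-a}$.

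For the first estimate I would use the identity to write
$$J^{-1}\pd_x u_c+\phi_\eps\begin{pmatrix}1\\-1\end{pmatrix}=\begin{pmatrix}\phi_\eps-c^{-1}V'(r_c)\\[2pt]-\phi_\eps-c^{-1}p_c\end{pmatrix}$$
and treat the two scalar entries separately. The top entry involves only $r_c$: Taylor-expanding $V'(r_c)=r_c+\tfrac1{12}r_c^2+O(r_c^3)$ and using $c^{-1}=1+O(\eps^2)$ reduces it to $(\phi_\eps-r_c)+O(\eps^2)r_c+O(r_c^2)$, each term of size $O(\eps^{7/2})$ in $l^2_{-a}$ by (P4) and Claim \ref{cl:ucsize}. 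For the bottom entry I substitute the expansion of $p_c$, obtaining $-\phi_\eps-c^{-1}p_c=(r_c-\phi_\eps)-\tfrac12\pd_x r_c+O(\pd_x^2 r_c)$; the dominant term is $-\tfrac12\pd_x r_c$, whose $l^2_{-a}$ norm is $O(\eps^{5/2})$ by Claim \ref{cl:ucsize}, while the rest is $O(\eps^{7/2})$. Hence the whole vector is $O(\eps^{5/2})$.

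The second estimate is harder and is where I expect the real work. Using \eqref{eq:J-1} with $S_+=\sum_{k\le0}e^{k\pd}$ and $S_-=\sum_{k\le-1}e^{k\pd}=S_+-I$, one has $J^{-1}\pd_c u_c={}^t(S_+\pd_c p_c,\,S_-\pd_c r_c)$, where $S_+f(n)=\sum_{m\le n}f(m)$ is a discrete antiderivative to be compared with $\int_{-\infty}^n$. Since $p_c=-r_c+O(\eps)$ and $r_c\approx\phi_\eps$, the leading parts cancel $\mp\int_{-\infty}^n\pd_c\phi_\eps$, and the difference decomposes into three controllable pieces: (i) the running sums $S_+\pd_c(p_c+\phi_\eps)$ and $S_+\pd_c(r_c-\phi_\eps)$, where $p_c+\phi_\eps=\tfrac12\pd_x r_c+(\text{lower order})$ and $r_c-\phi_\eps$ gain a derivative, so that $\pd_c$ of them is $O(\eps^{1/2})$, resp.\ $O(\eps^{3/2})$, in $l^2_{-a}$, while $S_+$ costs only a factor $\eps^{-1}$ (it integrates an $\eps$-scaled decaying function), netting $O(\eps^{-1/2})$; (ii) the discretization error $\sum_{m\le n}\pd_c\phi_\eps(m)-\int_{-\infty}^n\pd_c\phi_\eps$, whose Euler--Maclaurin leading term $\tfrac12\pd_c\phi_\eps(n)$ is $O(1)$ on scale $\eps^{-1}$ and therefore $O(\eps^{-1/2})$ in $l^2_{-a}$; and (iii) the diagonal contribution $-\pd_c r_c$ produced by $S_-=S_+-I$, which is exactly of size $O(\eps^{-1/2})$ by Claim \ref{cl:ucsize}. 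Summing these gives the asserted $O(\eps^{-1/2})$.

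The main obstacle is the second bound: the quantities $J^{-1}\pd_c u_c$ and $\int_{-\infty}^n\pd_c\phi_\eps$ are each only $O(\eps^{-3/2})$ in the biased space $l^2_{-a}$, so a full power of $\eps$ of cancellation must be extracted. The delicate points are quantifying, uniformly in $\eps$, how the running-sum operators $S_\pm$ act on $\eps$-scaled profiles in the weighted norm (they behave like integration and amplify by $\eps^{-1}$), and the discrete-versus-continuous antiderivative comparison; both are handled by treating the shift operators as Fourier multipliers on functions frequency-localized at scale $\eps$, in the spirit of the uniform symbol estimate of Claim \ref{cl:cpdJ}, with all KdV approximation errors absorbed through (P4) and Claim \ref{cl:ucsize}.
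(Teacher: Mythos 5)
Your proposal is correct, and its first half is essentially the paper's argument in different clothing: your identity $J^{-1}\pd_xu_c=-c^{-1}\,{}^t(V'(r_c),p_c)$ obtained from \eqref{eq:boundst} is literally the paper's formula \eqref{eq:pcrc}, since $V'(r_c)=c^2(e^{\pd}-2+e^{-\pd})^{-1}\pd_x^2r_c$ and $p_c=-c(e^\pd-1)^{-1}\pd_xr_c$; your ``Taylor expansion of the shift'' is made rigorous in the paper precisely by the weighted multiplier estimates of Claim \ref{cl:32} (with remainders $a\|f\|+a^{-1}\|\pd_x^2f\|$ etc., each loss of $a^{\pm1}$ costing exactly one power of $\eps$ because $a\in(\eps/8,2\eps)$, and the derivative norms supplied by the $H^5$ weighted control in (P4)). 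Your treatment of the top entry via the Taylor expansion of $V'$ is a mild simplification that avoids the second-difference multiplier, and your identification of $-\tfrac12\pd_xr_c$ as the $O(\eps^{5/2})$ dominant term is a refinement the paper does not need, since it only bounds the multiplier remainder. For the second estimate your route genuinely differs: the paper pre-applies $J$, writing $\|J^{-1}X\|_{l^2_{-a}}\lesssim a^{-1}\|X\|_{l^2_{-a}}$ with $X=\pd_cu_c+J\int_{-\infty}^n\pd_c\phi_\eps\,{}^t(1,-1)$, which turns the comparison into the purely local unit-interval estimate $\|f-\int_x^{x\pm1}f\|\lesssim\|f'\|$ of Claim \ref{cl:33} applied to $\pd_cr_c$ and $\pd_cp_c$ (together with $\|\pd_cp_c+\pd_cr_c\|_{l^2_{-a}}=O(\eps^{\frac12})$), whereas you apply $J^{-1}$ explicitly as the half-line sums $S_\pm$ from \eqref{eq:J-1} and compare sums with integrals by Euler--Maclaurin. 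Both pay exactly one factor $a^{-1}\sim\eps^{-1}$ and hinge on the same cancellation $\pd_cp_c+\pd_cr_c=O(\eps^{\frac12})$; the paper's version buys locality (no half-infinite sums of error terms, and no need to discuss the $O(1)$ plateau $\sum_{\Z}\pd_c\phi_\eps-\int_\R\pd_c\phi_\eps$ as $n\to+\infty$ --- in your version that plateau is harmless only because the weight $e^{-an}$, $a>0$, makes it contribute $O(a^{-\frac12})=O(\eps^{-\frac12})$, a point worth stating explicitly). One presentational caution: do not formally differentiate an $O(\pd_x^2r_c)$ remainder in $c$; instead expand $\pd_cp_c=-(e^\pd-1)^{-1}\pd_xr_c-c(e^\pd-1)^{-1}\pd_x\pd_cr_c$ and apply the multiplier bounds to each piece, exactly as the paper does --- with that fix your accounting closes at $O(\eps^{-\frac12})$ as claimed.
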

To prove Lemma \ref{lem:approx-adjef}, we need the following:
\begin{claim}
  \label{cl:32}
Suppose $a\in(0,1)$ and $f\in C_0^\infty(\R)$. Then
  \begin{align*}
& \|(e^\pd-1)^{-1}\pd_xf\|_{L^2_a}\lesssim 
\|f\|_{L^2_a}+a^{-1}\|\pd_xf\|_{L^2_a},
\\ &  \|(e^\pd-1)^{-1}\pd_xf-f\|_{L^2_a}\lesssim 
a\|f\|_{L^2_a}+a^{-1}\|\pd_x^2f\|_{L^2_a},
\\ &    
\|(e^\pd-2+e^{-\pd})^{-1}\pd_x^2f\|_{L^2_a}\lesssim 
\|f\|_{L^2_a}+a^{-2}\|\pd_x^2f\|_{L^2_a},
\\ & \|(e^\pd-2+e^{-\pd})^{-1}\pd_x^2f-f\|_{L^2_a}\lesssim 
a^2\|f\|_{L^2_a}+a^{-2}\|\pd_x^4f\|_{L^2_a}.
  \end{align*}
\end{claim}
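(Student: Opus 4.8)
The plan is to reduce the four inequalities to pointwise bounds on one‑variable Fourier multiplier symbols, restricted to the horizontal line $\Im\zeta=a$. The operators in question are Fourier multipliers: $(e^\pd-1)^{-1}\pd_x$ acts by $\hat f(\xi)\mapsto m_1(\xi)\hat f(\xi)$ with $m_1(\xi)=\frac{i\xi}{e^{i\xi}-1}$, and $(e^\pd-2+e^{-\pd})^{-1}\pd_x^2$ by $\hat f(\xi)\mapsto m_2(\xi)\hat f(\xi)$ with $m_2(\xi)=\frac{\xi^2}{4\sin^2(\xi/2)}$. Both symbols are meromorphic with all their poles on the real axis (at $2\pi\Z$), so on $L^2$ they are genuinely singular; this is exactly why the hypothesis $a>0$ is needed. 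Since $e^\pd$ is a contraction on $L^2_a$ for $a>0$, the inverse operators are given by norm‑convergent Neumann series and are bounded on $L^2_a$; conjugating these series by $e^{ax}$ (using $e^{ax}e^{k\pd}e^{-ax}=e^{-ak}e^{k\pd}$ and $e^{ax}\pd_xe^{-ax}=\pd_x-a$) turns them into the Fourier multipliers with symbols $m_1(\xi+ia)$ and $m_2(\xi+ia)$. Hence, writing $g:=e^{ax}f\in L^2$, Plancherel gives $\|(e^\pd-1)^{-1}\pd_x f\|_{L^2_a}^2=\int_\R|m_1(\xi+ia)|^2|\hat g(\xi)|^2\,d\xi$ and likewise for $m_2$, while $\|\pd_x^j f\|_{L^2_a}^2=\int_\R|\xi+ia|^{2j}|\hat g(\xi)|^2\,d\xi$.

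With this reduction, setting $\zeta=\xi+ia$, the four assertions follow from the four pointwise bounds, uniform for $\xi\in\R$ and $a\in(0,1)$: $|m_1(\zeta)|\lesssim 1+a^{-1}|\zeta|$, $|m_1(\zeta)-1|\lesssim a+a^{-1}|\zeta|^2$, $|m_2(\zeta)|\lesssim 1+a^{-2}|\zeta|^2$, and $|m_2(\zeta)-1|\lesssim a^2+a^{-2}|\zeta|^4$. Indeed, inserting any such bound into the displayed identity and applying the triangle inequality in $L^2$ (together with $\|\pd_x^j f\|_{L^2_a}^2=\int|\xi+ia|^{2j}|\hat g|^2$) reproduces exactly the claimed norm estimate, e.g. $\|(e^\pd-1)^{-1}\pd_x f\|_{L^2_a}\le\|\hat g\|_{L^2}+a^{-1}\||\zeta|\hat g\|_{L^2}=\|f\|_{L^2_a}+a^{-1}\|\pd_x f\|_{L^2_a}$.

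The pointwise bounds rest on two elementary denominator identities, $|e^{i\zeta}-1|^2=(1-e^{-a})^2+2e^{-a}(1-\cos\xi)$ and $|\sin(\zeta/2)|^2=\sin^2(\xi/2)+\sinh^2(a/2)$, and two Taylor remainder estimates for the numerators of $m_1-1$ and $m_2-1$, namely $|i\zeta+1-e^{i\zeta}|\le\tfrac12|\zeta|^2$ and $|\zeta^2-2+2\cos\zeta|\lesssim|\zeta|^4$, both valid uniformly for $\Im\zeta=a\le1$ since the exponentials stay bounded on the segment from $0$ to $\zeta$. I would then split into $|\xi|\le\pi$ and $|\xi|\ge\pi$. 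For $|\xi|\le\pi$ the denominators are comparable to $|\zeta|^2$ (using $1-\cos\xi\asymp\xi^2$, $1-e^{-a}\asymp a$, $\sinh(a/2)\asymp a$), so $m_1,m_2\asymp1$ and $m_1-1,m_2-1$ are $O(|\zeta|)$ and $O(|\zeta|^2)$; for $|\xi|\ge\pi$ one has $|\zeta|\asymp|\xi|$ while the denominators are bounded below by $(1-e^{-a})^2\asymp a^2$ and $\sinh^2(a/2)\asymp a^2$, giving the $a^{-1}|\zeta|$, $a^{-2}|\zeta|^2$, $a^{-2}|\zeta|^4$ growth. Finally the harmless near‑origin powers are absorbed into the stated form by the AM--GM inequalities $|\zeta|\le\tfrac12(a+a^{-1}|\zeta|^2)$ and $|\zeta|^2\le\tfrac12(a^2+a^{-2}|\zeta|^4)$.

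Because every step is a one‑variable estimate, there is no deep analytic difficulty. The two points that genuinely require care, and which I would state explicitly before applying Plancherel, are the interpretation of these real‑axis‑singular operators as bounded operators on $L^2_a$---which forces the weight $a>0$ and is supplied by the contraction/Neumann‑series argument---and the bookkeeping of uniformity in $a\in(0,1)$, in particular verifying that the near‑origin growth is precisely of the order that the AM--GM step converts into the asserted $a^{\pm}$ powers.
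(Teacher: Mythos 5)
Your proof is correct and follows essentially the same route as the paper: conjugating by the weight $e^{ax}$ turns each operator into a Fourier multiplier along the line $\Im\zeta=a$, and Plancherel reduces all four estimates to pointwise symbol bounds proved from the same denominator lower bounds $|e^{i\xi-a}-1|\ge 1-e^{-a}\gtrsim a$ and $4|\sin(\zeta/2)|^2\ge 4\sinh^2(a/2)\gtrsim a^2$ together with Taylor estimates on the numerators. Your additional scaffolding (the Neumann-series justification of invertibility on $L^2_a$ and the split $|\xi|\le\pi$ versus $|\xi|\ge\pi$ with AM--GM) merely re-derives, in more detail, the uniform bounds such as $|e^{i\xi-a}-i\xi+a-1|\lesssim a^2+|\xi|^2$ that the paper asserts directly, so the two arguments coincide in substance.
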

  \begin{proof}
Let $g(x)=e^{ax}f(x)$. Using $|e^{i\xi-a}-1|\ge 1-e^{-a}\gtrsim a$ and
$$|e^{i\xi-a}-i\xi+a-1|\lesssim a^2+|\xi|^2,$$ we have
 \begin{align*}
\|(e^\pd-1)^{-1}\pd_xf\|_{L^2_a}
=  \left\|\frac{i\xi-a}{e^{i\xi-a}-1}\hat{g}\right\|_{L^2}
\lesssim  \|f\|_{L^2_a}+a^{-1}\|\pd_xf\|_{L^2_a},
 \end{align*}
and 
 \begin{align*}
 \|(e^\pd-1)^{-1}\pd_xf-f\|_{L^2_a}
=& \left\|\frac{e^{i\xi-a}-i\xi+a-1}{e^{i\xi-a}-1}\hat{g}\right\|_{L^2}
\\ \lesssim &
a\|\hat{g}\|_{L^2}+a^{-1}\|\xi^2\hat{g}\|_{L^2}
\\ \lesssim &
a\|f\|_{L^2_a}+a^{-1}\|\pd_x^2f\|_{L^2_a}.
\end{align*}
Similarly, by using $|e^{i\xi-a}+e^{-i\xi+a}-2|\ge 4\sinh^2(a/2)$ and
$$|e^{i\xi-a}+e^{-\xi+a}-2-(i\xi-a)^2|
\lesssim \xi^4+a^4,$$
we have 
 \begin{align*}
\|(e^\pd-2+e^{-\pd})^{-1}\pd_x^2f\|_{L^2_a}
=&\left\|\frac{(i\xi-a)^2}{e^{i\xi-a}-2+e^{-i\xi+a}}\hat{g}\right\|_{L^2}
\\ \lesssim  & \|f\|_{L^2_a}+a^{-2}\|\pd_x^2f\|_{L^2_a},
 \end{align*}
and 
 \begin{align*}
\|(e^\pd-2+e^{-\pd})^{-1}\pd_x^2f-f\|_{L^2_a}
=&\left\|\frac{(i\xi-a)^2}{e^{i\xi-a}-2+e^{-i\xi+a}}\hat{g}
-\hat{g}\right\|_{L^2}
\\ \lesssim  & 
a^2\|\hat{g}\|_{L^2}+a^{-2}\|\xi^4\hat{g}\|_{L^2}
\\ \lesssim  & 
a^2\|f\|_{L^2_a}+a^{-2}\|\pd_x^4f\|_{L^2_a}.
\end{align*}
  \end{proof}
  \begin{claim}
\label{cl:33}
Let $a\in\R$ and $f\in H^1(\R)$. Then
$$\left\|f(x)-\int_x^{x\pm1}f(y)dy\right\|_{L^2_a(\R)}
\le \max(1,e^{-a})\|f'\|_{L^2_a(\R)}.$$
  \end{claim}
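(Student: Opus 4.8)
The plan is to reduce the averaging defect to an explicit weighted integral of $f'$ over a unit window and then estimate it in $L^2_a$ by Minkowski's integral inequality. I will carry out the computation for the upper sign; the lower sign is handled by the identical argument after reflecting $x\mapsto -x$ (equivalently, replacing $a$ by $-a$). First I would use that $\int_x^{x+1}dy=1$ and that the absolutely continuous representative of an $H^1$ function satisfies $f(x)-f(y)=\int_y^x f'(s)\,ds$, so that
\[
f(x)-\int_x^{x+1}f(y)\,dy=\int_x^{x+1}\bigl(f(x)-f(y)\bigr)\,dy
=-\int_x^{x+1}\!\int_x^{y}f'(s)\,ds\,dy.
\]
Exchanging the order of integration over the triangle $\{x\le s\le y\le x+1\}$ and substituting $s=x+t$ yields the clean representation
\[
f(x)-\int_x^{x+1}f(y)\,dy=-\int_0^1(1-t)\,f'(x+t)\,dt.
\]
This is the heart of the argument: it trades the defect against a convolution of $f'$ with the smooth, nonnegative, compactly supported kernel $t\mapsto(1-t)\mathbf 1_{[0,1]}(t)$.

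Next I would insert the exponential weight. Setting $g(x)=e^{ax}f'(x)$, so that $\|g\|_{L^2}=\|f'\|_{L^2_a}$, multiplication of the identity by $e^{ax}$ gives
\[
e^{ax}\Bigl(f(x)-\int_x^{x+1}f(y)\,dy\Bigr)=-\int_0^1(1-t)\,e^{-at}\,g(x+t)\,dt.
\]
Applying Minkowski's integral inequality and the translation invariance $\|g(\cdot+t)\|_{L^2}=\|g\|_{L^2}$, the $L^2(\R)$ norm in $x$ of the left-hand side is bounded by $\bigl(\int_0^1(1-t)e^{-at}\,dt\bigr)\,\|f'\|_{L^2_a}$, which is exactly $\|f(x)-\int_x^{x+1}f(y)\,dy\|_{L^2_a}$ on the left.

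Finally I would estimate the scalar kernel integral. For $a\ge0$ one has $e^{-at}\le1$ on $[0,1]$, so $\int_0^1(1-t)e^{-at}\,dt\le\tfrac12\le1=\max(1,e^{-a})$; for $a<0$ one has $e^{-at}\le e^{-a}$ on $[0,1]$, so the integral is $\le\tfrac12 e^{-a}\le\max(1,e^{-a})$. This gives the stated bound (in fact with the extra factor $\tfrac12$). I do not expect a serious analytic obstacle: the only point that needs care is the bookkeeping of the weight $e^{ax}$ against the sign of $a$, together with the mild regularity remark that the fundamental-theorem-of-calculus step and Fubini are legitimate for $f\in H^1(\R)$. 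For the lower sign the same computation produces instead the kernel integral $\int_0^1(1-t)e^{at}\,dt$, controlled symmetrically, so that the relevant constant in the application to Lemma~\ref{lem:approx-adjef} is covered by $\max(1,e^{-a})$.
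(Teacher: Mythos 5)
Your proof is correct, and it takes a genuinely different route from the paper's. The paper argues pointwise: by Cauchy--Schwarz, $\bigl|f(x)-\int_x^{x+1}f(y)\,dy\bigr|\le\bigl(\int_x^{x+1}f'(t)^2\,dt\bigr)^{1/2}$, then it squares, integrates against $e^{2ax}$, and swaps the order of integration, using $\int_{t-1}^{t}e^{2ax}\,dx\le\max(1,e^{-2a})\,e^{2at}$. You instead derive the exact representation $f(x)-\int_x^{x+1}f(y)\,dy=-\int_0^1(1-t)\,f'(x+t)\,dt$ and conclude by Minkowski's integral inequality; that is, you exhibit the averaging defect as convolution of $e^{ax}f'$ with the nonnegative kernel $(1-t)e^{-at}\mathbf{1}_{[0,1]}(t)$ and bound the operator norm by the kernel's mass. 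Both arguments are elementary and of comparable length; yours makes the convolution structure transparent and yields the sharper constant $\int_0^1(1-t)e^{-at}\,dt\le\tfrac12\max(1,e^{-a})$ (indeed, since the kernel is nonnegative, your bound $(e^{-a}-1+a)/a^2$ is the exact operator norm), while the paper's Cauchy--Schwarz/Fubini computation avoids the kernel identity at the cost of a factor.

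One point in your last remark deserves emphasis, and it is to your credit rather than a gap: for the lower sign your method honestly produces the kernel mass $\int_0^1(1-t)e^{at}\,dt=(e^{a}-1-a)/a^{2}$, which is bounded by $\max(1,e^{-a})$ only for $a$ at most of order $\log 2$; for large $a>0$ this quantity exceeds $1=\max(1,e^{-a})$, and since it is the exact convolution norm, the claim as literally stated actually fails for the $-$ sign in that regime. The paper's own proof silently treats only the $+$ sign, so this imprecision is in the statement, not in your argument. As you note, in the sole application (Lemma \ref{lem:approx-adjef}) the claim is invoked in $H^1_{-a}$ with $a>0$, i.e.\ with negative weight parameter, where both signs are covered with room to spare.
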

  \begin{proof}
Since
    \begin{align*}
  \left|f(x)-\int_x^{x+1}f(y)dy\right|
=& \left|\int_x^{x+1}\int_y^xf'(t)dtdy\right|
\le \left(\int_x^{x+1}f'(t)^2dt\right)^{\frac12},
    \end{align*}
we have
    \begin{align*}
\left\|f(x)-\int_x^{x+1}f(y)dy\right\|_{L^2_a}^2
\le &  \int_\R\left(e^{2ax}\int_x^{x+1}f'(t)^2dt\right)dx
\\ \le & \max(1,e^{-2a})\|f'\|_{L^2_a}^2.
    \end{align*}
  \end{proof}

\begin{proof}[Proof of Lemma \ref{lem:approx-adjef}]
By the definition of $u_c$, we have
\begin{equation}
  \label{eq:pcrc}
p_c=-c(e^\pd-1)^{-1}\pd_x r_c, \quad
J^{-1}\pd_x u_c=(-c(e^\pd-2+e^{-\pd})^{-1}\pd_x^2r_c, (e^\pd-1)^{-1}\pd_xr_c).
\end{equation}
Thus by Claims \ref{cl:31} and \ref{cl:32},
\begin{align*}
& \left\|J^{-1}\pd_xu_c+\phi_\eps \begin{pmatrix}1\\-1\end{pmatrix}\right\|_{l^2_{-a}}
\\ \le & \left\|J^{-1}\pd_xu_c+\phi_\eps \begin{pmatrix}1\\-1\end{pmatrix}\right\|_{H^1_{-a}}  
\\ \le & \left\|\begin{pmatrix}
c(e^\pd-2+e^{-\pd})^{-1}\pd_x^2(r_c-\phi_\eps)\\
-(e^{\pd}-1)^{-1}\pd_x(r_c-\phi_\eps) \end{pmatrix}\right\|_{H^1_{-a}}
+\left\|\begin{pmatrix}
(c(e^\pd-2+e^{-\pd})^{-1}\pd_x^2-1)\phi_\eps\\
(-(e^{\pd}-1)^{-1}\pd_x+1)\phi_\eps \end{pmatrix}\right\|_{H^1_{-a}}
\\ \lesssim &
\|r_c-\phi_\eps\|_{H^1_{-a}}+a^{-2}\|\pd_x^2(r_c-\phi_\eps)\|_{H^1_{-a}}
+\eps^2(\|\phi_\eps\|_{H^1_{-a}}+a^{-2}\|\pd_x^2\phi_\eps\|_{H^1_{-a}})
\\ & +a^2\|\phi_\eps\|_{H^1_{-a}}+a^{-2}\|\pd_x^4\phi_\eps\|_{H^1_{-a}}
 +a\|\phi_\eps\|_{H^1_{-a}}+a^{-1}\|\pd_x^2\phi_\eps\|_{H^1_{-a}}
\\ 
\lesssim & (\eps^{\frac72}+a\eps^{\frac32})\left(1+\frac{\eps}{a}\right)^2
+a^2\eps^{\frac32}\left(1+\frac{\eps^2}{a^2}\right)^2
=O(\eps^{\frac52}).
\end{align*}
\par
Since $\|J^{-1}\|_{B(l^2_{-a}\times l^2_{-a})}\lesssim a^{-1}$,
\begin{align*}
& \left\|J^{-1}\pd_cu_c +\int_{-\infty}^n\pd_c\phi_\eps
\begin{pmatrix}1\\-1\end{pmatrix}\right\|_{l^2_{-a}}
\\ \lesssim &
a^{-1}\left\|\pd_cu_c+J\int_{-\infty}^n\pd_c\phi_\eps
\begin{pmatrix}1\\-1\end{pmatrix}\right\|_{l^2_{-a}}
\\ \lesssim &
a^{-1}\left\|\pd_cr_c-\int_x^{x+1}\pd_c\phi_\eps\right\|_{H^1_{-a}}
+a^{-1}\left\|\pd_cp_c+\int_{x-1}^{x}\pd_c\phi_\eps\right\|_{H^1_{-a}}.
\end{align*}
By \eqref{eq:pcrc} and Claim \ref{cl:32},
\begin{align*}
& \|\pd_cp_c+\pd_cr_c\|_{l^2_{-a}}\\ \le & \|(e^\pd-1)^{-1}\pd_xr_c\|_{l^2_{-a}}
+\|\{c(e^\pd-1)^{-1}\pd_x-1\}\pd_cr_c\|_{l^2_{-a}}
\\ \lesssim &
\|r_c\|_{H^1_{-a}}+a^{-1}\|\pd_xr_c\|_{H^1_{-a}}
+a\|\pd_cr_c\|_{H^1_{-a}}+a^{-1}\|\pd_x^2\pd_cr_c\|_{H^1_{-a}}+\eps^2\|\pd_cr_c\|_{H^1_{-a}}
\\ \lesssim & \eps^{\frac32}(1+a^{-1}\eps)+a\eps^{-\frac12}(1+a^{-2}\eps^2)
=O(\eps^{\frac12}).
\end{align*}
Combining the above with (P4) and Claim \ref{cl:33},
we have
\begin{align*}
& \left\|J^{-1} \begin{pmatrix}\pd_cr_c\\ \pd_cp_c\end{pmatrix}
+\int_{-\infty}^n\pd_c\phi_\eps\begin{pmatrix}1\\-1\end{pmatrix}\right\|_{l^2_{-a}}
\\ \lesssim &
a^{-1}(\|\pd_cr_c-\pd_c\phi_\eps\|_{H^1_{-a}}+
\|\pd_cp_c+\pd_c\phi_\eps\|_{H^1_{-a}}+\|\pd_x\pd_c\phi_\eps\|_{H^1_{-a}})
\\ \lesssim & a^{-1}\eps^{\frac12}=O(\eps^{-\frac12}).
\end{align*}
\end{proof}
Finally, we will prove Lemma \ref{lem:linearstability}
\begin{proof}[Proof of Lemma \ref{lem:linearstability}]
We assume that $k=N$. The other cases can be shown in the same way.
  By (P4) and Lemma \ref{lem:KdV-Nsol}, we can choose $\mk$ and $\mgamma$
so that 
$$\sum_{i=0,1}\sup_{t\ge0,\,x\in\R}
\left|\pd_x^i(\widetilde{U}_N(t)-u_{N,\eps}(t,x,\mgamma)\right|
\le \delta(L)\eps^{2+i}+O(\eps^4).$$
Combining Lemmas \ref{lem:KdV-Nsol} and \ref{lem:approx-adjef}
with \eqref{eq:etaj1} and \eqref{eq:etaj2}, we obtain
\eqref{eq:orth3} from \eqref{eq:orth3'}.
Thus we prove Lemma \ref{lem:linearstability}.
\end{proof}

\end{document}